\newcommand{\cmark}{\ding{51}}%
\newcommand{\xmark}{\ding{55}}%
\newcommand{\kgnote}[1]{{\color{orange}{#1}}}
\newtheorem{lemma}{Lemma}
\newtheorem{theorem}{Theorem}
\newtheorem{corollary}[theorem]{Corollary}
\newtheorem{proposition}[theorem]{Proposition}
\theoremstyle{definition}
\newtheorem{definition}{Definition}
\newtheorem*{claim}{Claim}
\newtheorem{observation}{Observation}
\newtheorem{thm}{Theorem}
\newtheorem{remark}[thm]{Remark}
\newcommand{\R}{\mathbb{R}}
\newcommand{\argmax}{\operatorname{arg\,max}}
\newcommand{\argmin}{\operatorname{arg\,min}}
\newcommand{\menu}{M}
\newcommand{\menuseq}{M}
\newcommand{\E}{\mathbb{E}}
\newcommand{\vareps}{\varepsilon}
\newcommand{\sgn}{\ensuremath{\mathrm{sgn}}}
\def\P{\ensuremath{\mathcal{P}}}
\def\R{\ensuremath{\mathbb{R}}}
\def\p{\ensuremath{{\bf p}}}
\def\t{\ensuremath{\tilde}}
\def\L{\ensuremath{\mathcal{L}}}
\def\children{\ensuremath{N^+}}
\def\N{\ensuremath{N^+}}
\def\parents{\ensuremath{N^-}}
\def\G{\ensuremath{\mathcal{G}}}
\def\u{\ensuremath{u}}
\def\z{\ensuremath{r}}
\def\items{\mathsf{Items}}
\def\types{\mathsf{Types}}
\def\UC{\mathtt {Uncountable}}
\def\C{\mathtt {Countable}}
\def\v{\mathfrak {v}}
\def\I{\mathcal {I}}
\def\p{\mathfrak {p}}
\def\D{\mathscr {D}}
\def\d{\mathfrak {d}}
\def\v{\mathfrak {v}}
\def\t{\mathfrak {t}}
\def\setname{coordinated items } 
\DeclareMathOperator{\MC}{\mathsf{MC}}
\newenvironment{numberedtheorem}[1]{%
\begin{theorem}}{\end{theorem}\addtocounter{theorem}{-1}}
\begin{document}
\title{{Optimal Mechanism Design for Single-Minded Agents}}
\author{Nikhil R. Devanur%
\thanks{%
    {Amazon (\url{Iam@nikhildevanur.com}).}}
\and Kira Goldner%
\thanks{%
    {Columbia University (\url{kgoldner@cs.columbia.edu}).  Supported in part by NSF CCF-1420381 and by a Microsoft Research PhD Fellowship. Supported in part by NSF award DMS-1903037 and a Columbia Data Science Institute postdoctoral fellowship.}}
\and Raghuvansh R. Saxena%
\thanks{%
    {Princeton University (\url{rrsaxena@princeton.edu}).  Supported by NSF CAREER award CCF-1750443.}}
\and Ariel Schvartzman%
\thanks{%
    {Princeton University (\url{acohenca@princeton.edu})}. Supported by NSF CCF-1717899.}    
\and S. Matthew Weinberg%
\thanks{%
    {Princeton University (\url{smweinberg@princeton.edu}). Supported by NSF CCF-1717899.}} }

\date{}
\maketitle

\begin{abstract}

We consider optimal (revenue maximizing) mechanism design in the interdimensional setting, where one dimension is the `value' of the buyer, 
and the other is a `type' that captures some auxiliary information. 
A prototypical example of this is the FedEx Problem, 
for which ~\citet{FGKK} characterize the optimal mechanism for a single agent. 
Another example of this is when the type encodes the buyer's budget \citepalias{DW}. 
The question we address is \emph{how far can such characterizations go?}
In particular, we consider the setting of single-minded agents. A seller has heterogenous items. A buyer has a valuation $v$ for a specific subset of items $S$, and obtains value $v$ if and only if he gets all the items in $S$ (and potentially some others too). 

We show the following results.  
\begin{enumerate}
	\item Deterministic mechanisms (i.e. posted prices) are optimal for distributions that satisfy the ``declining marginal revenue'' (DMR) property. 
	 In this case we give an explicit construction of the optimal mechanism.
	\item Without the DMR assumption, the result depends on the structure of the minimal directed acyclic graph (DAG) representing the partial order among types. 
	When the DAG has out-degree at most 1, we characterize the optimal mechanism \`{a} la FedEx; this can be thought of as a generalization of the FedEx characterization since FedEx corresponds to a DAG that is a line. 
	\item Surprisingly, without the DMR assumption \emph{and} when the DAG has at least one node with an out-degree of at least 2, then we show that there is no hope of such a characterization.  
	The minimal such example happens on a DAG with 3 types. 
	We show that in this case the menu complexity is {\em unbounded} in that for any $\menu$, there exist distributions over $(v,S)$ pairs such that the menu complexity of the optimal mechanism is at least $\menu$. 
	\item  For the case of 3 types, we also show that for all distributions there exists an optimal mechanism of {\em finite} menu complexity.
	This is in contrast to the case where you have 2 heterogenous items with additive utilities for which the menu complexity could be uncountably infinite~\citepalias{MV, DDT15}. 
\end{enumerate}

In addition, we prove that optimal mechanisms for Multi-Unit Pricing (without a DMR assumption) can have unbounded menu complexity as well, and we further propose an extension where the menu complexity of optimal mechanisms can be countably infinite, but not uncountably infinite.

Taken together, these results establish that optimal mechanisms in interdimensional settings are both surprisingly richer than single-dimensional settings, yet also vastly more structured than multi-dimensional settings.

\end{abstract}

\addtocounter{page}{-1}
\newpage


\section{Introduction}
Consider the problem of selling multiple items to a unit-demand buyer. The fundamental problem underlying much of mechanism design asks how the seller should maximize their revenue. If the items are identical, then the setting is considered \emph{single-dimensional.}  In this case, seminal work of \citet{Myerson} completely resolves this question with an exact characterization of the optimal mechanism.  The optimal mechanism is a simple take-it-or-leave-it price, and the fact that there are multiple items versus just one is irrelevant. In contrast, if the items are heterogenous, then the setting is \emph{multi-dimensional} and, unlike the single-dimensional setting, optimal mechanisms are no longer tractable in any sense:~\citep{MV,BCKW15, HN13, HR12, DDT13, DDT15}.



Very recently,  \citet{FGKK} 
identify a fascinating middle-ground. Imagine that the items are neither identical nor heterogeneous, but are instead varying qualities of the same item. To have an example in mind, imagine that you're shipping a package and the  items are one-day, two-day, or three-day shipping. You obtain some value $v$ for having your package shipped, but only if it arrives by your deadline (which is one, two, or three days from now). 
We can think of the input  as being a (correlated) two-dimensional distribution over (value, deadline) pairs. 

{The FedEx Problem is a special case of single-minded valuations: a buyer has a valuation $v$ for a specific subset of items $S$, and obtains value $v$ if  he gets any superset of $S$, and 0 otherwise. 
To have an example in mind, imagine that a company offers internet, phone service, and cable TV.  You have a value, $v$, and are interested in getting internet service. So you value options such as exclusively internet service, internet/phone service, or internet/cable, and so on, at $v$. For any option that does not include internet you get a value of zero (so we again think of the input distribution as a two-dimensional distribution over (value, interest) pairs).}

{An alternative perspective to single-minded valuations is that there is a partial order on the set of possible interests a buyer may have. The partial order is just the one induced by set inclusion. The FedEx problem has \emph{totally-ordered} items: one-day shipping is at least as good as two-day shipping is at least as good as three-day shipping, and every buyer agrees.
In fact, any partial order can be induced from set inclusion, so the two settings are equivalent (see Observation~\ref{obs:SMequiv} in \Cref{sec:singleminded}).
It turns out that the partial order view is more useful from a mechanism design perspective, therefore we will use that view for the rest of the paper. 
}

The following problem can also be interpreted as a partially-ordered setting: Suppose that each buyer has a publicly visible attribute which the seller can use to price discriminate. E.g., the buyer could be a student, a senior, or general-admission. Or, the buyer could be a ``prime member'' or a ``non-prime member.''  However, buyers with certain attributes can disguise themselves as having other attributes, given by a partial order. For example, a prime member could disguise as a non-prime member, but not vice-versa. Then if item $i$ is a movie ticket redeemable by anyone who can disguise themselves as having attribute $i$, the items are partially-ordered.

\subsection{Main Results} 

\citet{FGKK} give a characterization of an optimal mechanism for the FedEx problem, and our goal is to understand the generalizability of 
this characterization, in particular to the partially ordered setting. 
Towards this, we first describe the FedEx characterization 
and what a generalization could look like. 
A \emph{deterministic} mechanism sets a posted price $p_i$ for each shipping option, and the buyer picks the option he prefers (if any).
Clearly, it makes sense for the prices to be non-increasing in $i$-day shipping. 
The FedEx solution is recursive:
start with the price on day 1 (as a variable),
and constrain the price on day 2 to be weakly lower, 
and so on. 
When the distributions satisfy the \emph{Declining Marginal Revenue (DMR)}\footnote{A one-dimensional distribution $F$ satisfies Declining Marginal Revenues  if $v(1-F(v))$ is concave. 
	See \cite{DHP} for examples and more discussion. For example, uniform distributions are DMR, along with any distribution of bounded support and monotone non-decreasing density.}
property, this strategy actually results in deterministic prices that are optimal. 
Without any distributional assumption, one might have to resort to \emph{lotteries}: the buyer gets the item only with some probability. 
The first day price is still deterministic, 
but for the second day, the mechanism offers a lottery such that the \emph{expected} price for full service
is weakly lower. 
It turns out that we only need to randomize between two options. 
Recursively, every option on day $i$ may split into two options on day $i+1$, 
so we might have at most $2^{m-1}$ options on day $m$, and $2^m -1$ options overall (and examples exist where $2^m-1$ options are necessary~\citep{SSW17}). 

So our starting point is a hope that similar recursive ideas can characterize optimal auctions beyond the totally-ordered FedEx setting. 
Some terminology is useful here to understand precisely what this might mean. We use the directed acyclic graph (DAG) 
representation of a partial order: an edge from $i$ to $j$ implies 
$j$ \emph{is preferred over} $i$. 
The DAG is \emph{minimal}: if $(i,j)$ and $(j,k)$ are edges then $(i,k)$ is \emph{not} an edge.
The DAG for the FedEx problem goes \emph{right to left}, i.e., it has edges $(i+1,i)$ for $i$ from 1 to $m-1$.  
A recursive approach for a DAG would look like this: 
start with a \emph{sink}, set a deterministic price, and use this 
to constrain the prices (either deterministically or in expectation, based
on the distributional assumption) for its predecessors and so on. 
The goal of this paper is to understand \emph{Will something like this work for partially-ordered items?} 

\paragraph{DMR:} Under the DMR assumption, this strategy for pricing works in any DAG (Theorem~\ref{thm:dmr} in \Cref{app:DMR}). 
We start from the sink nodes and recursively constrain the price of a node 
to be at most the minimum among  the prices of all its successors. Our proof that this procedure works employs LP duality, and a significantly more involved procedure to set appropriate dual variables than in~\citepalias{FGKK}. The fact that optimal mechanisms are deterministic subject to DMR matches prior work for totally-ordered settings \citepalias{CheGale2000,FGKK,DW,DHP}. 


\paragraph{Out-degree 1:}
The FedEx strategy still works when the minimal DAG has out-degree at most 1, 
without any distributional assumptions (Theorem \ref{thm:fedexext} in \Cref{sec:fedexext}).
Compared to the DMR case, we now have to deal with lotteries but when we process a node, 
there is exactly one successor that constrains the lotteries for this node, 
in exactly the same way as in FedEx.

\paragraph{3 node DAG:}
The minimal example where the out-degree is 2 is a three-node DAG with nodes $A,B$, and $C$, and edges $(C,A)$ and $(C,B)$. 
One might hope that the following recursive strategy would work (after all, the graph is still a DAG, and should be amenable to recursive arguments): set prices deterministically 
for $A$ and $B$ and use the minimum of the two to constrain the 
expected price for $C$. Note that if there were no item $B$, this would match precisely the FedEx solution.

It turns out that this idea fails horribly, for the following (very high-level) reason. With just two items ($C$ and $A$), the price of $A$ transparently constrains what prices we can set for $C$ (the expected price for $C$ must be lower). So when optimizing the price of $A$, we can take this into account. With three items, it's no longer clear how the price of $A$ constrains the price of $C$. Certainly, the expected price for $C$ must be lower, but perhaps a stronger constraint is already implied by the price of $B$. Therefore, one cannot separately optimize the price of $A$ without knowing the price of $B$.

Indeed, this intuition actually manifests into a lower bound: it is not only challenging to jointly optimize the prices of $A,B$ together, but the optimum may no longer be deterministic at all! Specifically, for any integer $M$, there exist value distributions for this 3-node DAG for which the unique optimal mechanism presents $M$ different lotteries to the buyer (Theorem~\ref{thm:unbounded}). Essentially, \emph{there is no hope for a FedEx style solution even for this minimal case}. We focus the technical presentation of our paper on this result.



\paragraph{Finiteness of menu complexity:}
 The use of menu complexity lower bounds to ascertain complexity of 
 mechanisms is not new: 
 \citet{MV,DDT15} show that the optimal mechanism for the multi-dimensional setting might have \emph{uncountable} menu complexity---this holds even for just two items with \emph{additive} valuations, and even when the item values are drawn independently from absolutely bounded distributions. This dichotomy serves as one fundamental difference between 
single-dimensional and multi-dimensional settings.

Within the context of these results, we ask if we can get an infinite (uncountable or countable) menu complexity for the partially-ordered setting as well. 
A natural strategy is to take the limit of our construction 
as the number of randomizations goes to infinity. 
Somewhat surprisingly, the example then collapses 
and has a deterministic price as optimal. 
We show that this is no coincidence: that the menu complexity for the three item case is \emph{always finite} (\Cref{thm:allocalg}).

\paragraph{Summary:} 
The main technical takeaway from our results is a thorough understanding of optimal mechanisms in interdimensional settings beyond FedEx through broadly applicable tools. Our theorem statements use the language of menu complexity, but only to distinguish among mechanisms with bounded, unbounded, or infinite menu complexity.  The main conceptual takeaway is that optimal auctions for single-minded valuations lie in a space of their own: significantly more complex than optimal single-dimensional auctions, or even optimal auctions for totally-ordered valuations, yet  more structured than optimal multi-dimensional auctions.

\begin{figure}
{\footnotesize{
\begin{center}
Known Menu Complexity Results for Optimal Mechanisms with One Buyer \\
\begin{tabular}{|c|c|c|c|c|c|c|} \hline
& One Item & FedEx & Single-Minded, 3 Items & Multi-Unit & Coordinated, 3 Items & Additive \\ \hline
Det. under DMR & N/A & \checkmark & \textbf{\cmark} & \checkmark & \xmark & N/A \\ \hline
Lower Bound &  1 & $2^m-1$  & \textbf{unbounded} & \textbf{unbounded} & \textbf{countably infinite} & uncountable \\ \hline
Upper Bound &  1 & $2^m-1$ & \textbf{finite} & --- & \textbf{countably infinite} & uncountable  \\ \hline
\end{tabular} \\
Bold results are from this paper.
\end{center}}}
\end{figure}

\subsection{Additional results} 
We postpone all details about our proofs to the technical sections, but highlight one result of independent interest that we develop en route. Our problem can be phrased as a continuous linear program, and all of our proofs require reasoning about the dual. In particular, developing our lower bound construction (instances with unbounded menu complexity) consists of two parts: First, we construct a candidate dual $\lambda$ for which a primal exists satisfying complementary slackness, and for which \emph{every} primal satisfying complementary slackness has menu complexity $\geq \menu$. Second, we prove that there exists a distribution for which $\lambda$ is a feasible dual (and combining these two claims means that every optimal mechanism for this input has menu complexity $\geq \menu$).  Analyzing $\lambda$ through complementary slackness is technically interesting, and captures all of the insight one would hope to gain from the construction. Reverse engineering an instance for which $\lambda$ is feasible, however, is technically challenging yet unilluminating. On this front, we prove a ``Master Theorem,'' stating essentially that every candidate dual is feasible for some input distribution (Theorem~\ref{thm:multimaster}). This allows the user (of the theorem) to reason exclusively about primals and duals, letting the Master Theorem map the candidate pair back to an instance for which they are feasible. In some sense, the Master Theorem formally separates the insightful analysis from the tedious parts.

Of course, one should not expect this theorem to hold in general multi-dimensional settings (in particular, one key property that enables our Master Theorem is a ``payment identity,'' which general multi-dimensional settings notoriously lack---this is a further example of how our setting lies in-between single- and multi-dimensional), but the Master Theorem is quite generally applicable for problems in this intermediate range. In addition, because the Master Theorem takes care of guaranteeing that distributions corresponding to some dual will exist, this result also emphasizes the strength of reasoning about duals in similar settings.

Finally, beyond our main results, we prove {two} additional results using the same tools. 
First, we apply our lower bound techniques to show that the menu complexity of the Multi-Unit Pricing problem~\citepalias{DHP} is also unbounded (Theorem~\ref{thm:MUPunbounded} in Appendix~\ref{sec:MUPLB}). Multi-Unit Pricing is also a totally-ordered setting, where the items correspond to copies of a good (item one is one copy, item two is two copies, item three is three copies). The difference from FedEx is that if the buyer is interested in two copies but gets one, they get half their value (versus zero). {Second, we propose a generalization beyond totally-ordered settings which we call \emph{coordinated} valuations, and again characterize the menu complexity of optimal mechanisms for one instance of three items (which can be countably infinite, but not uncountable, see Appendix~\ref{sec:raghuvansh}).

\subsection{Related Work}\label{sec:related}


Single-minded valuations are a well-known model (e.g. \citep{LOS02}).  Most work in this model pertains to welfare maximization in more complex settings, such as combinatorial auctions.  Other work assumes that the buyer's interest is publicly known; in this case, the buyer is single-parameter, and a single-buyer revenue maximization problem reduces to Myerson.  

The most related line of works has already mostly been discussed. The FedEx Problem considers totally-ordered items (in our language), as does Multi-Unit Pricing and Budgets~\citep{CheGale2000,FGKK, DHP, DW}. The present paper is the first to consider partially-ordered items. In terms of techniques, we indeed draw on tools from prior work. All three prior works employ some form of duality. Our approach is most similar to that of~\citet{DW} in that (1) both are the only works to use  the analysis from \citepalias{CDW} to characterize optimal mechanisms rather than obtain approximations, and (2) we also perform ``dual operations'' rather than search for a closed form.  However, as the single-minded setting is much more complicated, we extend the techniques to handle this setting.

Also related is a long line of work which aims to characterize optimal mechanisms beyond single-dimensional settings. Owing to the inherent complexity of mechanism design for heterogeneous items, results on this front necessarily consider restricted settings~\citep{LMR,GK-uniform,MM,DDT13,DDT15,HH,malakhov2009optimal}. From this set, the most related are~\cite{HH, malakhov2009optimal}, who also considered settings where all consumers prefer (e.g.) item $a$ to item $b$, but there are no substantial technical connections.




There is also a quickly growing body of work regarding the menu complexity of multi-item auctions. Much of this work focuses on settings with heterogeneous items~\citep{BCKW15,HN13,BGN17,WangT14,DDT15, Gonczarowski18}. Very recent work of~\citepalias{SSW17} considers the menu complexity of approximately optimal mechanisms for the FedEx Problem (for which~\citepalias{FGKK} already characterized the menu complexity of exactly optimal mechanisms). On this front, our work places partially-ordered items (where the menu complexity is finite but unbounded)  distinctly between totally-ordered items (where the menu complexity is bounded)~\citepalias{FGKK}, and heterogeneous items (uncountable)~\citepalias{DDT15}. Previously, no settings with this property were known.

\subsection{Roadmap}
Our paper contains four main results, although we view the primary contributions as (3) and (4):
\begin{enumerate}
\item In~\Cref{app:DMR}, we prove Theorem~\ref{thm:dmr}, which explicitly constructs a deterministic optimal auction for partially-ordered items when all marginals are DMR.
\item In~\Cref{sec:fedexext}, we prove Theorem~\ref{thm:fedexext}, which extends the recursive FedEx algorithm for totally-ordered items to partially-ordered items when minimal DAGs with outdegree at most one.
\item We focus our technical presentation on the ideas necessary for Theorem~\ref{thm:unbounded}, which establishes that any partially-ordered instance for which some node in the minimal DAG has outdegree at least two, the menu complexity of the optimal mechanism may be unbounded. In~\Cref{sec:prelim} we provide the minimal preliminaries to understand the main ideas behind our proof of this result (full preliminaries in~\Cref{sec:addtlprelims}). In~\Cref{sec:highlevel} we overview the key duality aspects. In~\Cref{sec:menucomplexity} we give a brief overview of the proof of Theorem~\ref{thm:unbounded}. The full proof is in~\cref{app:LB}.
\item Finally, we also establish that the menu complexity of optimal mechanisms for this minimal $3$-item instance is always finite. The main ideas appear in~\Cref{sec:menucomplexity}, and a full proof of Theorem~\ref{thm:allocalg} appears in~\Cref{app:optvars}. 
\end{enumerate}

Outside of our main results,~\Cref{sec:masterthmApp} presents our ``Master Theorem'' (Theorem~\ref{thm:multimaster}), which is of independent interest for future work on mechanism design with totally- or partially-ordered items. In \Cref{sec:MUPLB} and \Cref{sec:raghuvansh} we display the applicability of our techniques for related settings such as Multi-Unit Pricing (Theorem~\ref{thm:MUPunbounded}) and coordinated values (Theorems~\ref{thm:LB1}, \ref{thm:LB2}, \ref{thm:LB-gen}), respectively.~\Cref{sec:conclusion} presents our conclusions and discusses future work.


\section{Preliminaries} \label{sec:prelim}
In the interest of presentation, we'll provide the minimum preliminaries here for the reader to understand the key ideas. In \Cref{sec:addtlprelims}, we provide full preliminaries, including additional intuition, and covering prior work (such as~\citepalias{FGKK, DW}). Many of the facts we will use are stated here without proof (proofs are given in \Cref{sec:addtlprelims}). 
\subsection{A Minimal Instance}

We focus on the three-item case with items $\G = \{A,B,C\}$ where $A \succ C$ and $B \succ C$, but $A \not \succ B$ and $B \not \succ A$.  That is, if a buyer is interested in item $C$, they are content with $A$ or $B$. If they are interested in $A$, they are content only with $A$ (ditto for $B$). There is a single buyer with a (value, interest) pair $(v,G)$, who receives value $v$ if they are awarded an item $\succeq G$ (that is, $G' \succ G$ or $G' = G$). This is the minimal non-trivial example of a partially-ordered setting.  A menu-complexity lower bound for this example applies to any partially-ordered setting that contains an item $G$ with at least two incomparable items that dominate $G$ (which includes every single-minded valuation setting with at least $3$ items). 

An instance of the problem consists of a joint probability distribution over $[0,H]\times \G$, where $H$ is the maximum possible value of any bidder for any item.\footnote{Note that the multi-dimensional instances with uncountable menu complexity are also supported on a compact set: $[0,H]^2$. So our results are not merely a product of compactness.} We will use $f$ to denote the density of this joint distribution, with $f_G(v)$ denoting the density at $(v, G)$. We will also use $F_G(v)$ to denote $\int_0^v f_G(w)dw$, and $q_G$ to denote the probability that the bidder's interest is $G$. 

We'll consider (w.l.o.g.) direct truthful mechanisms, where the bidder reports a (value, interest) pair and is awarded a (possibly randomized) item. Further, as observed in~\citet{FGKK}, it is without loss of generality to only consider mechanisms that award bidders their declared item of interest with probability in $[0,1]$, and all other items with probability $0$.
\footnote{To see this, observe that the bidder is just as happy to get nothing instead of an item that doesn't dominate their interest. See also that they are just as happy to get their interest item instead of any item that dominates it. It will also make this option no more attractive to any bidder considering misreporting. So starting from a truthful mechanism, modifying it to only award the item of declared interest or nothing cannot possibly violate truthfulness. Note also that this modification maintains optimality, but could impact the menu complexity up to a factor of \# items. As we only consider distinctions between bounded, unbounded, and infinite, this is still w.l.o.g.}
For a direct mechanism, we'll define $a_G(v)$ to be the probability that  item $G$ is awarded to a bidder who reports $(v,G)$. Our goal is to find the revenue-optimal allocation rule---$a_G(v)$ defined for all $G \in \G, v \in [0,H]$ with payment determined by the allocation rule---such that the mechanism is incentive-compatible. The menu complexity of a direct mechanism refers to the number of distinct  pairs $(G,q)$ such that there exists a $v$ with $a_G(v) = q$. 



\subsection{Incentive Compatibility, Revenue Curves, and Ironing} \label{subsec:rev-iron}


As observed in~\citepalias{FGKK}, it is without loss of generality to only consider mechanisms that award bidders their declared item of interest with probability in $[0,1]$, and all other items with probability $0$. Also observed in~\citepalias{FGKK} is that Myerson's payment identity holds in this setting as well, and any truthful mechanism must satisfy $p_G(v) = v a_G(v) - \int_0^v a_G(w)dw$ (this also implies that the bidder's utility when truthfully reporting $(v,G)$ is $u_G(v) = \int_0^v a_G(w)dw$). This allows us to drop the payment variables, and follow Myerson's analysis.  Fiat et al. observe that many of the truthfulness constraints are redundant, and in fact it suffices to only make sure that when the bidder has (value, interest) pair $(v, G)$ they:
\begin{itemize}
\item Prefer to tell the truth rather than report any other $(v', G)$. This is accomplished by constraining $a_G(\cdot)$ to be monotone non-decreasing (exactly as in the single-item setting).
\item Prefer to tell the truth rather than report any other $(v, G' \in \children(G))$. By $N^+(G)$, we mean all items $G'$ such that $G' \succeq G$, but there does not exist a $G''$ with $G' \succeq G''\succeq G$. This is accomplished by constraining $\int_0^v a_G(w)dw \geq \int_0^v a_{G'}(w)dw$ (as the LHS denotes the utility of the buyer for reporting $(v,G)$ and the RHS denotes the utility of the buyer for reporting $(v, G')$).  Note that this is equivalent to saying that the area under $G$'s allocation curve should be at least as large \emph{at every $v$} as the area under $G'$'s allocation curve.
\end{itemize}
All of these constraints together imply that $(v,G)$ also does not prefer to report any other $(v', G')$.\footnote{
For example, if $(v,G)$ prefers truthful reporting to reporting $(v,G')$ where $G' \succ G$, and $(v,G')$ prefers truthful reporting to reporting $(v',G')$, then since $(v,G)$ gets the same utility for reporting $(v,G')$ as type $(v,G')$ does for truthfully reporting, $(v,G)$ prefers truthful reporting to reporting $(v',G')$.} We conclude this section with some standard definitions and observations. 
\begin{definition}[Revenue Curve]
	\label{def:revcurve}
	The \emph{revenue curve} for an item $G$ with CDF $F_G$ is a function $R_G$ that maps a value $v$ to the revenue obtained by posting a price of $v$, for a single item, when buyer values are drawn from the distribution $F_G.$  
	Formally,  $R_G (v) := v \cdot [1-F_G(v)]\cdot q_G$. We say that a revenue curve is feasible if there exists a distribution that induces it.   The \emph{monopoly reserve price} $r_G$ of the revenue curve is $r_G \in \argmax _p R_G(p)$.
\end{definition}

\begin{definition}[Virtual Value] \label{def:myevval} Myerson's virtual valuation function $\varphi_G(\cdot)$ is defined so that $\varphi_G(v):= v - \frac{1-F_G(v)}{f_G(v)}$. Observe that $R'_G(v) = 1 - F_G(v) - v f_G(v) = -\varphi_G(v)f_G(v)$. When clear from context we will omit the subindex $G$. 

\end{definition}

\begin{definition}[DMR] We say that a marginal distribution of values $F_G$ satisfies \emph{declining marginal revenues} (DMR) if $R_G (v)$ is concave, or equivalently, if $\varphi_G(v) f_G(v)$ is monotone non-decreasing. \end{definition}

When the marginal distributions do not all satisfy the DMR assumption, we instead need to iron the distribution, an analogue to Myersonian ironing.

\begin{definition}[Ironing]\label{def:ironprelim} The \emph{ironed revenue curve} denoted $\hat{R}(\cdot)$ for a revenue curve $R(\cdot)$ is the least concave upper bound on the revenue curve $R(\cdot)$.\footnote{We emphasize that this work irons the revenue curve with \emph{values} on the $x$-axis. Classical one-dimensional ironing (to yield Myersonian ironed virtual values) is done on the revenue curve with quantiles on the $x$-axis.}  A point $v$ is \emph{ironed} if $\hat{R}(v) \neq R(v)$.  We say that $[a,b]$ is an \emph{ironed interval} if $\hat{R}(a) = R(a)$, $\hat{R}(b) = R(b)$, and $\hat{R}(v) \neq R(v)$ for all $v \in (a,b)$, where if $v \in (a,b)$, then $a$ and $b$ are the lower and upper endpoints of the ironed interval, respectively.
\end{definition}

An ironed revenue curve is depicted in Figure~\ref{fig:ironedrevcurve}.  By the definition of concavity, if $z$ is ironed, then $\hat{R}(z) = \beta R(a) + (1-\beta) R(b)$ where $z \in (a,b)$, $\beta a + (1- \beta) b = z$, and $a,b$ are unironed. Importantly, observe that setting price $z$ to a consumer drawn from $F_G$ yields revenue $R(z) < \hat{R}(z)$. Yet, if we set price $a$ with probability $\beta$ and $b$ with probability $(1-\beta)$, we will get revenue $\beta R(a) +(1-\beta)R(b) = \hat{R}(z)$. One can check that this is precisely the allocation and payment
$$a(v) = \begin{cases} 0 & v < a \\ \beta & v \in [a, b) \\ 1 & v \geq b \end{cases} \quad\quad \text{and} \quad \quad p(v) = \begin{cases} 0 & v < a \\ \beta a & v \in [a, b) \\ \beta a + (1-\beta) b & v \geq b \end{cases}. $$

\begin{figure}[h!]
\centering
\includegraphics[scale=.42]{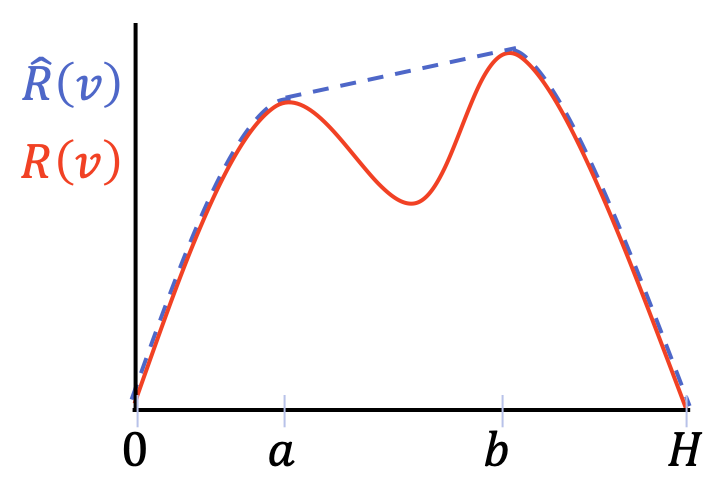}  
\caption{For some implicit distribution $F$, the revenue curve $R(v) = v \cdot [1-F(v)]$ is depicted, as is the ironed revenue curve, or the revenue curve's least concave upper bound. }
\label{fig:ironedrevcurve}
\end{figure}

\vspace{-.5cm}

\section{Duality}
\label{sec:highlevel}

In this section, we briefly overview the bare minimum duality preliminaries required.  Full duality preliminaries are provided in \Cref{subsec:dualderiv}-\ref{subsec:regvars}.


\subsection{Dual Terminology.} \label{sec:dualtermsapp}

In this section, we introduce pictorial representations (Figures~\ref{fig:dualvarsdef} and \ref{fig:dualvarsdefalpha})
of key aspects of a dual solution and define terminology relevant to the dual.

\begin{figure}[h!] 
	\centering
	\includegraphics[scale=.3]{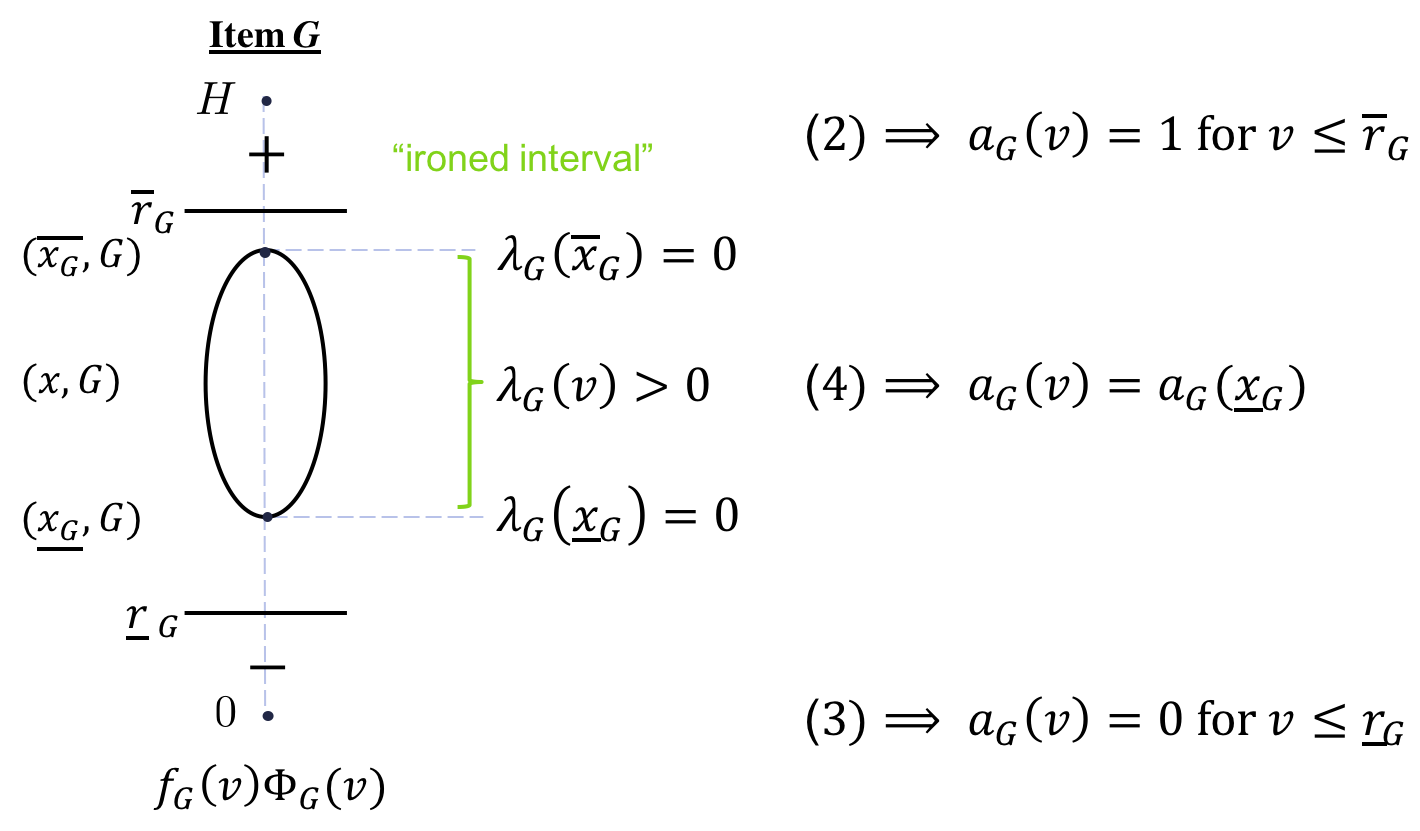} 
	\caption{A pictorial interpretation of virtual values $f_G(v) \Phi_G^{\lambda,\alpha}(v)$ and the dual variable $\lambda_G(v)$, in addition to the concepts of endpoints of the zero region, ironing, an ironed interval, and the allocation in response.}
	\label{fig:dualvarsdef} 
\end{figure}

The primal variables are $a_G(v)$ for all $G \in \G$, $v \in [0,H]$. 
Recall that we use $u_G(v) = \int_0^v a_G(w)dw$ to refer to the utility of $(v,G)$.
The dual variables are $\lambda_G(v)$, $\alpha_{G, G'}(v)$
 for all $G, G' \in \G$, and $v \in [0,H]$.
We first explain the role of these dual variables, 
 and then describe the Lagrangian relaxation 
 obtained using these dual variables. 
\paragraph{Dual Variable $\lambda$.}
The $\lambda$ dual variables correspond to incentive constraints   
 between types of the same interest but different value. 
This dual controls \emph{ironing}, as explained below. 
This really does correspond to ironing in the classical Myerson sense, only in value space. 
	
An oval (as depicted in Figure~\ref{fig:dualvarsdef}) represents an \emph{ironed interval}, a region where the dual variable $\lambda_G(\cdot)$ is non-zero.

\begin{itemize}
	\item (Ironing) \label{def:ironing} We say a type $(v,G)$ is \emph{ironed}, or that $v$ is ironed in item $G$, if $\lambda_G(v) > 0$.
	
	\item (Ironed Intervals) \label{def:ironedintervalendpoints} For any type $(x,G)$, the ironed interval containing $x$ in $G$ is defined by the bottom end point $\underline{x}_G = \sup \{v \leq x \mid \lambda_G(v) = 0\}$ and the top end point $\bar{x}_G = \inf \{v \geq x \mid \lambda_G(v) = 0\}$.  Then for all $v \in (\underline{x}_G, \bar{x}_G)$, type $(v,G)$ is ironed, $\bar{v}_G = \bar{x}_G$, and $\underline{v}_G = \underline{x}_G$.
	
\end{itemize}

As we will see later, dual best response (condition (\ref{CS3})) requires that if $\lambda_G(v) > 0$ then $a_G'(v) = 0$.  
In other words, the allocation rule $a_G$ must be constant over ironed intervals.  
For any value $x$, an optimal allocation must satisfy that $a_G(x) = a_G(\underline{x}_G)$.

\paragraph{Dual Variable $\alpha$.}
The $\alpha$ dual variables correspond to incentive constraints between types of the same value but different interest.}
\begin{figure}[h!] 
	\centering 
	\includegraphics[scale=.36]{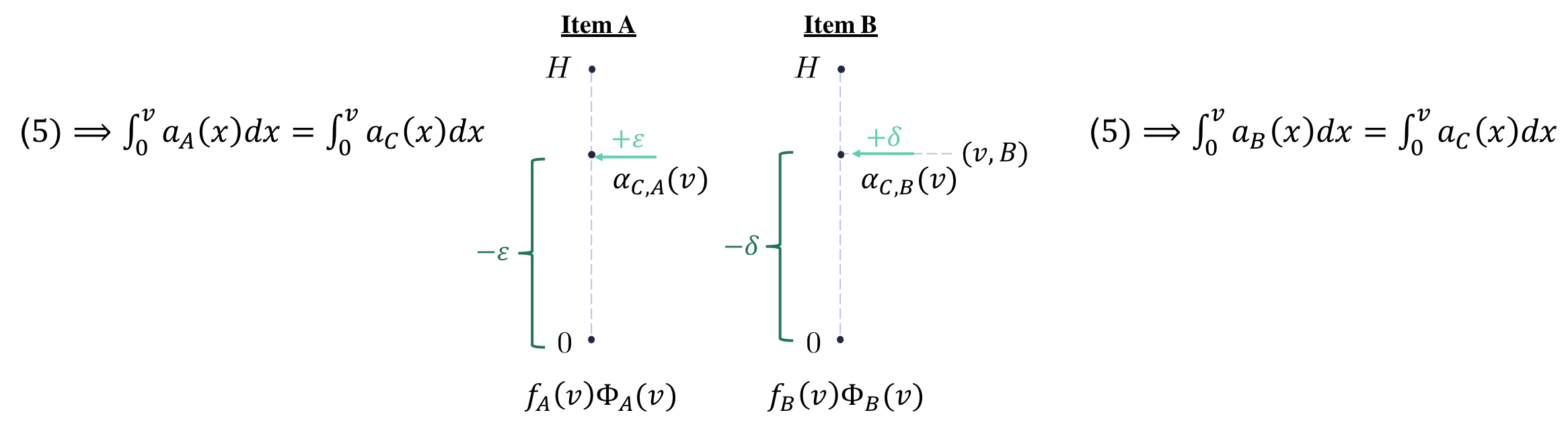} 
	\caption{A pictorial representation of the dual variable $\alpha$, in addition to the concepts of flow, preferable items, and equally preferable items.  Flow is assumed to be coming from item $C$.}  
	\label{fig:dualvarsdefalpha}
\end{figure}

In Figure~\ref{fig:dualvarsdefalpha}, a horizontal arrow into item $A$ (or $B$) at $v$ indicates that $\alpha_{C,A}(v)$ (or $\alpha_{C,B}(v)$) is non-zero.  We write the following statements for $G \in \{A,B\}$.

\begin{itemize}
	\item (Flow) \label{colloquial:flow} We will call the value of $\alpha_{G', G}(v)$ the ``flow into $(v,G)$'' or the ``flow into $G$ at $v$.''  When we focus on the minimal partial-order example, we infer that flow into $A$ or $B$ comes from $C$ in our figure.
	
\end{itemize}

Dual best response (condition (\ref{CS4})) requires that 
for $G \in \{A,B\}$, if $\alpha_{C,G}(v) > 0$ then $\int _0 ^v a_G(x) dx = \int _0 ^v a_C(x) dx$, or equivalently, 
$u_G(v) = u_C(v)$: a type with value $v$ should have the same utility in $C$ and $G$.  
Sending flow across interests forces the corresponding utilities to be the same.

\paragraph{Virtual Values.}
We will define a new variable, $\Phi^{\lambda, \alpha}(v)$ for all $v \in [0,H]$, 
and we will call the product 
$f(v) \Phi^{\lambda, \alpha} (v)$  
the \emph{virtual value}.\footnote{
	Whether we refer to $\Phi$ as the virtual value or $\Phi f$ reflects whether we iron in the quantile space or the value space. 
} 
Once again, this is a generalization of Myerson's virtual value function to this more general setting.  

Figure~\ref{fig:dualvarsdef} has a vertical axis ranging over values from $0$ (at the bottom) to $H$ (at the top), with a label of the item of focus $G$ at the top.  The point on the axis for any $v$ represents the virtual value $f_G(v) \Phi^{\lambda,\alpha}_G(v)$.

Of particular interest to us is the region where the virtual value is 0
because this is the region (and the only region) for which a primal satisfying complementary slackness can have a randomized allocation.  
This is an interval if $(f_G \Phi_G^{\lambda,\alpha})(\cdot)$ is monotone in $v$ (our solution ensures it is; details in Appendix~\ref{subsec:regvars}).
\begin{itemize}
	\item (Endpoints of Zero Region) \label{def:endpointsofzeroes} We define the bottom end point of the zero virtual value region in $G$ by $\underline{r}_G = \inf \{v \mid f_G(v) \Phi^{\lambda, \alpha}_G(v) \geq 0\}$ and the top end point $\bar{r}_G = \sup \{v \mid f_G(v) \Phi^{\lambda, \alpha}_G(v) \leq 0\}$.
\end{itemize}
In Figure \ref{fig:dualvarsdef} the horizontal black lines  and signs indicate where the virtual values shift from positive sign to zero, $\overline{r}_G$, and from zero to negative sign, $\underline{r}_G$.  Primal best response requires the allocation to satisfy $a_G(v) = 0$ for $v \leq \underline{r}_G$ (condition (\ref{CS1})) and $a_G(v) = 1$ for $v \geq \overline{r}_G$ (condition (\ref{CS2})).

\subsection{The Lagrangian Dual.} \label{subsec:lagrangian}

 The quality of a primal solution is measured by how well it solves the following Lagrangian relaxation induced by $(\lambda, \alpha)$. The quality of a dual solution is measured by the value of its induced Lagrangian relaxation. A dual is \emph{better} if the value of its induced Lagrangian relaxation is \emph{smaller}. 
\vspace{-1mm}
\begin{align*}
\noindent \text{Variables: }\quad\quad\quad &a_G(v) \quad \forall G \in \G,\ v\in [0,H]\\
      \text{Maximize } \quad\quad\quad    &\sum_{G \in \G} \int _0 ^{H} f_G(v) \cdot a_G(v) \cdot \Phi^{\lambda,\alpha}_G(v) dv \\
\text{subject to}\quad\quad\quad &a_G(v) \in [0,1] 
\end{align*}
\begin{multline} \label{def:vvalgen}
\text{ where } \quad\quad \varphi_G(v) = v - \frac{1-F_G(v)}{f_G(v)} \quad\quad  \text{ and where }  \quad\quad \Phi^{\lambda,\alpha}_G(v)  \\
:= \varphi_G(v) +\frac{1}{f_G(v)} \left[- \lambda'_G(v) + \sum _{G' \in \children(G)} \int _v ^H \alpha_{G,G'}(w) dw - \sum _{G': G \in \children(G')} \int _v ^H \alpha_{G',G}(w) dw \right].
\end{multline}

Before continuing, lets parse the Lagrangian relaxation. The only remaining constraints are that $a_G(v) \in [0,1]$, and the objective is a linear function of these variables. This immediately implies that the solution to this LP relaxation will set $a_G(v) = 1$ whenever $\Phi_G^{\lambda,\alpha}(v)> 0$, and  $a_G(v) = 0$ whenever $\Phi_G^{\lambda,\alpha}(v) < 0$.
This implies that if there is any randomization, i.e., $a_G(v) \in (0,1)$ then it must be that $\Phi_G^{\lambda,\alpha}(v) = 0$. 
The details of the definition of $\Phi$ are not so important here. 
(However, note that in the definition of $\Phi$, 
 the term $\lambda'$ refers to the derivative of $\lambda$.) 


\subsection{Complementary Slackness.}

Under strong duality, a (primal, dual) pair is optimal if and only if the primal and dual satisfy complementary slackness. In addition, if a dual $(\lambda ,\alpha)$ is optimal, i.e. satisfies complementary slackness with some primal, then \emph{any} primal is optimal if and only if it satisfies complementary slackness with $(\lambda, \alpha)$. Let's review complementary slackness in our setting. A primal $a$ and dual $(\lambda, \alpha)$ satisfy complementary slackness if and only if:\footnote{One can interpret these conditions as saying that the primal is an optimal solution to the Lagrangian relaxation, and the dual is the \emph{worst} possible Lagrangian relaxation for the primal.}

\begin{align}
\text{(Primal best response)}\quad\quad& \Phi^{\lambda, \alpha}_G(v) > 0 &&\Rightarrow \quad\quad a_G(v) =1 \label{CS1}\\
&\Phi^{\lambda, \alpha}_G(v) < 0 &&\Rightarrow \quad\quad  a_G(v) = 0 \label{CS2}\\ \nonumber\\
\text{(Dual best response)}\quad\quad &\lambda_G(v) > 0 &&\Rightarrow \quad\quad  a'_G(v) = 0 \label{CS3}\\
&\alpha_{G, G'}(v) > 0 &&\Rightarrow \quad\quad  \int_0^v a_G(x)dx - \int _0 ^v a_{G'}(x)dx = 0  \label{CS4}
\end{align}

That is, a primal is a best response to a dual if all $(v,G)$ with positive virtual value are awarded the item, and all $(v,G)$ with negative virtual value are not. A dual is a best response to a primal if whenever a dual variable is non-zero, the corresponding local IC constraint is tight. {The entire technical aspect of this paper is using the constraints imposed by complementary slackness in (\ref{CS1}-\ref{CS4}) to reason about optimal mechanisms and their menu complexity.}

\section{Menu Complexity}

We provide here the key ideas behind the construction that forms our lower bound and the proof of our upper bound.  Full details are provided in \Cref{app:LB} and \Cref{app:optvars} respectively.

\label{sec:menucomplexity}
\subsection{Menu Complexity is Unbounded: A Gadget and Candidate Instance} \label{sec:highlevelLB}
In this section, we provide a gadget that will be used in our menu complexity lower bound, and successively chain copies of it together to build our full construction. For one instance of our gadget, we provide a concrete potential dual, and prove that any allocation rule satisfying complementary slackness with it must have two distinct allocation probabilities. In order for this example to establish a menu complexity lower bound of two, we must additionally:
\begin{itemize}
\item Establish that there exists a distribution $F$ for which our dual is feasible. This is not covered in this section, and is deferred to our Master Theorem (Theorem~\ref{thm:multimaster}).
\item Establish that there exists an allocation rule which satisfies complementary slackness with this dual, thereby establishing that the dual is optimal (and any optimal allocation rule must satisfy complementary slackness with it). This is also not covered in this section, and is deferred to~\Cref{app:optvars}. 
\end{itemize}

We begin below with our gadget, then successively chain copies together to establish a menu complexity lower bound of $M$ for any $M > 0$. We recall the following facts established in the previous section:

\begin{enumerate}
\item A $+$ in any graphics at $(x,G)$ represents a strictly positive Virtual Value, which implies that $a_A(x) = 1$ in any allocation rule satisfying CS. A $-$ in any graphics at $(x,G)$ represents a strictly negative Virtual Value, which implies that $a_G(x) = 0$. (CS\ref{CS1}-\ref{CS2}) \label{pos}
\item A $\leftarrow$ in any graphics into $A$ at $x$ represents flow in. When there is flow into both $A$ and $B$ at the same point $x$, this implies that $u_A(x) = u_B(x)$. (CS\ref{CS4})  \label{alpha}
\item A point $x$ in the middle of an oval in any graphics represents that $x$ is contained in the interior of an ironed interval, and implies that $a(x) = a(y)$ where $y$ is the bottom of the oval. (CS\ref{CS3}) \label{ironed}
\end{enumerate}

\subsubsection{Step One: the base gadget and a lower bound of $M=2$.}
\label{sec:basecase}
\begin{figure}[h!] 
\centering 
\includegraphics[scale=.36]{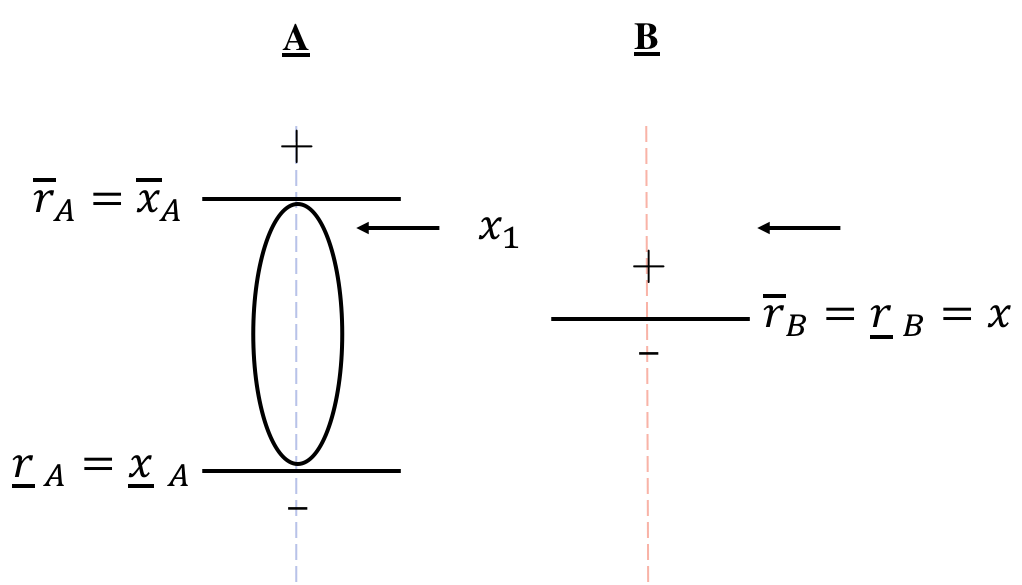} \hspace{1cm} \includegraphics[scale=.30]{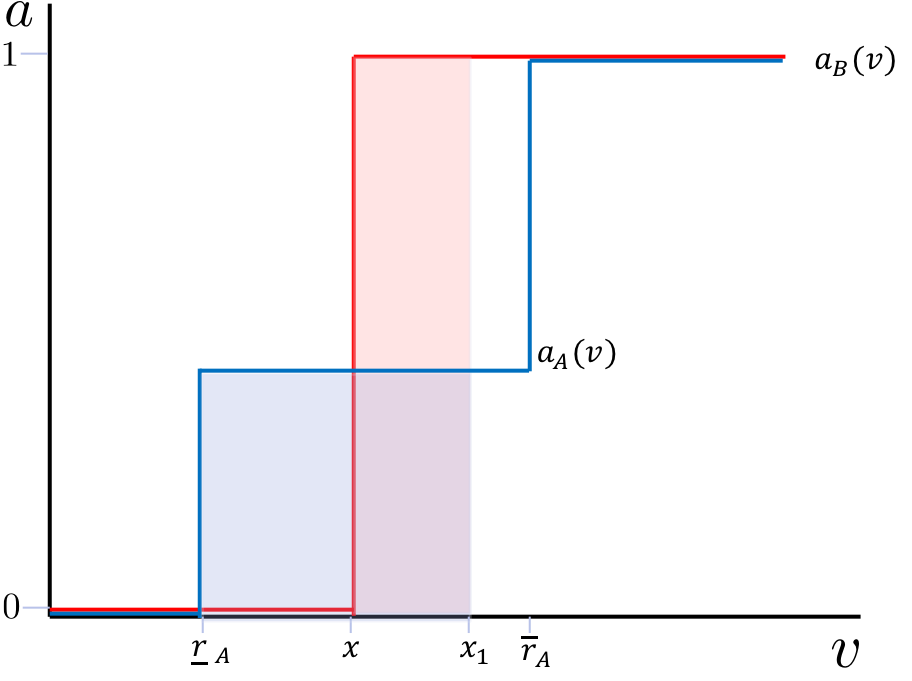} 
\caption{Left: Our first example that requires randomizing on $A$, containing an ironed interval $[\underline{r}_A, \overline{r}_A]$ (so $a_A(x_1) = a_A(\underline{r}_A)$) and flow into both $A$ and $B$ at $x_1$ (so $u_A(x_1) = u_B(x_1)$). 
Right: Primal best response dictates a price of $x$ for item $B$, while $A$'s allocation is 0 until $\underline{r}_A$ and $1$ after $\overline{r}_A$.  Equal preferability at $x_1$ forces $u_A(x_1)$ (the red area) equals $u_B(x_1)$ (the blue area); the ironed interval $[\underline{r}_A, \overline{r}_A]$ requires $a_A(\cdot)$ to be constant in this region, hence we must have $a_A(\underline{r}_A) \in  (0, 1)$.}
\centering
\label{fig:LBexample1}
\end{figure}

Our base case example is depicted in Figure~\ref{fig:LBexample1}.  We note each feature, and how it ties our hands with respect to the allocation rule via complementary slackness.
\begin{itemize}
\item In item $B$, there is a single point $x < \overline{r}_A$ for which $f_B(x) \Phi^{\lambda,\alpha}_B(x) = 0$.  That is, $\overline{r}_B = \underline{r}_B = x$.  Then (CS\ref{CS1}) implies that $a_B(v) = 1$ for $v > x$.
\item There is flow into both items $A$ and $B$ at $x_1 > x$.  That is, $\alpha_{C,A}(x_1), \alpha_{C,B}(x_1) > 0$.  (CS\ref{CS4}) implies that $A$ and $B$ must be equally preferable at $x_1$, that is, $\int _0 ^{x_1} a_A(w)dw = \int _0 ^{x_1} a_B(w)dw$.  Note that $a_B(w) > 0$ for $w \in (x, x_1]$, hence $\int _0 ^{x_1} a_B(w)dw > 0$.  Then to have $\int _0 ^{x_1} a_A(w)dw > 0$, because $a_A(\cdot)$ is monotone, it must be the case that $a_A(x_1) > 0$.
\item The point $x_1$ has $f_A(x_1) \Phi^{\lambda,\alpha}_A(x_1) = 0$ and is in an ironed interval $[\underline{r}_A, \overline{r}_A]$ where $\underline{r}_A < x$, that is, this ironed interval is the entire region of values that have virtual value zero in item $A$ and it contains both $x_1$ and $x$.  Because $x_1$ is in an ironed interval in $A$, then the allocation is constant, so $ a_A(\underline{r}_A) = a_A(x_1)$, which we have already established must be positive.
\item For whatever value that $a_A(\underline{r}_A)$ takes on, because $\underline{r}_A < x$, to satisfy equal preferability at $x_1$ (again, that $\int _0 ^{x_1} a_A(w)dw = \int _0 ^{x_1} a_B(w)dw$), we must have $a_B(x) > a_A(\underline{r}_A) (> 0)$, resulting in at least two distinct non-zero probabilities of allocation.
\end{itemize}

To complete the example, (1) there is no other flow: for all $v \neq x_1$, $\alpha_{C,A}(v) = \alpha_{C,B}(v) = 0$, and (2) item $C$ is unironed everywhere: $\lambda_C(v) = 0$ for all $v$. This base gadget forces randomization for the allocation of item $A$ because the utility of $x_1$ must be equal at $A$ and $B$, but the allocation of item $B$ must be zero below $x$, while the allocation of item $A$ must be non-zero.   

\subsubsection{Step Two: two chains and a lower bound of $M=3$.} Our second example (see Figure~\ref{fig:LBexample2}) contains the relevant features from the first example, but extends it to add an additional constraint: we replace the condition $\overline{r}_B = \underline{r}_B = x$ with an ironed interval $[\underline{r}_B, \overline{r}_B]$ where $\underline{r}_B < \underline{r}_A < \overline{r}_B < \overline{r}_A$. We claim that this example requires us to randomize on both items. Intuitively, this is because we now have two constraints on utilities that must be satisfied, so two degrees of freedom seems necessary.
\begin{itemize}
\item There is flow into both items $A$ and $B$ at $x_1 \in (\overline{r}_B, \overline{r}_A)$: $\alpha_{C,A}(x_1), \alpha_{C,B}(x_1) > 0$.  (CS\ref{CS1}) implies that $a_B(v) = 1$ for $v>\overline{r}_B$, so to satisfy equal preferability, we must have $a_A(x_1) > 0$.
\item The point $x_1$ has $f_A(x_1) \Phi^{\lambda,\alpha}_A(x_1) = 0$ and is in an ironed interval $[\underline{r}_A, \overline{r}_A]$ where $\underline{r}_A < \overline{r}_B$.
As $x_1$ is in an ironed interval in $A$, then the allocation is constant, so $a_A(\underline{r}_A) = a_A(x_1) > 0$.
\item There is flow into both items $A$ and $B$ at $x_2 \in (\underline{r}_A, \overline{r}_B)$: $\alpha_{C,A}(x_2), \alpha_{C,B}(x_2) > 0$.  Since $a_A(x_2) > 0$---it lies in the ironed interval in $A$, so $a_A(x_2) = a_A(\underline{r}_A$)---then to satisfy equal preferability at $x_2$, we must have $a_B(x_2) > 0$.
\item The point $x_2$ has $f_B(x_2) \Phi^{\lambda,\alpha}_B(x_2) = 0$ and is in an ironed interval $[\underline{r}_B, \overline{r}_B]$ where $\underline{r}_B < \underline{r}_A$.
As $x_2$ is in an ironed interval in $B$, then the allocation is constant, so $a_B(\underline{r}_B) = a_B(x_2) > 0 $.
\item For whatever value that $a_B(\underline{r}_B)$ takes on, because $\underline{r}_B < \underline{r}_A$, then to satisfy equal preferability at $x_2$ (that $\int _0 ^{x_2} a_A(w)dw = \int _0 ^{x_2} a_B(w)dw$), we must have $a_A(\underline{r}_A) > a_A(\underline{r}_B) (> 0)$.
\item For whatever value that $a_A(\underline{r}_A)$ takes on, because $\underline{r}_A < \overline{r}_B$, then to satisfy equal preferability at $x_1$ ($\int _0 ^{x_1} a_A(w)dw = \int _0 ^{x_1} a_B(w)dw$), we must have $a_B(\overline{r}_B) > a_A(\underline{r}_A) (> a_A(\underline{r}_B) > 0)$, resulting in at least \emph{three} distinct non-zero probabilities of allocation.
\end{itemize}
Again, (1) there is no other flow: for all $v \neq x_1, x_2$, $\alpha_{C,A}(v) = \alpha_{C,B}(v) = 0$, and (2) item $C$ is unironed everywhere: $\lambda_C(v) = 0$ for all $v$.

Observe that in both examples, we reason from where we have one item with positive virtual value and the other with virtual value zero downward that, in order to satisfy a number of equal preferability constraints, because ironed intervals force the allocation to be constant, then at every point, the allocation must be non-zero.  Then, we reason upward that, because the ironed intervals are interleaving between the items and never aligned, the allocation must strictly increase at each point of interest in order to satisfy equal preferability.  This is precisely the reasoning we will use to construct and prove an arbitrarily large instance and menu.

\begin{figure}
\centering
\includegraphics[scale=.36]{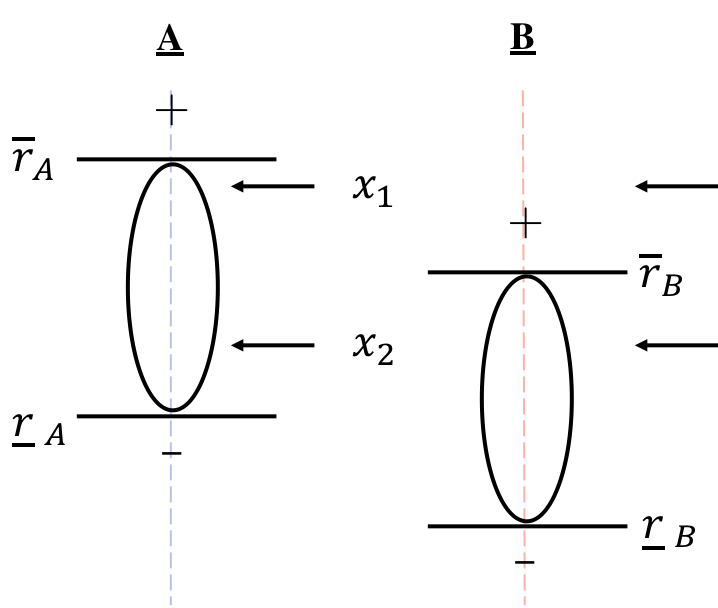} \hspace{1cm} \hspace{.25cm}\includegraphics[scale=.3]{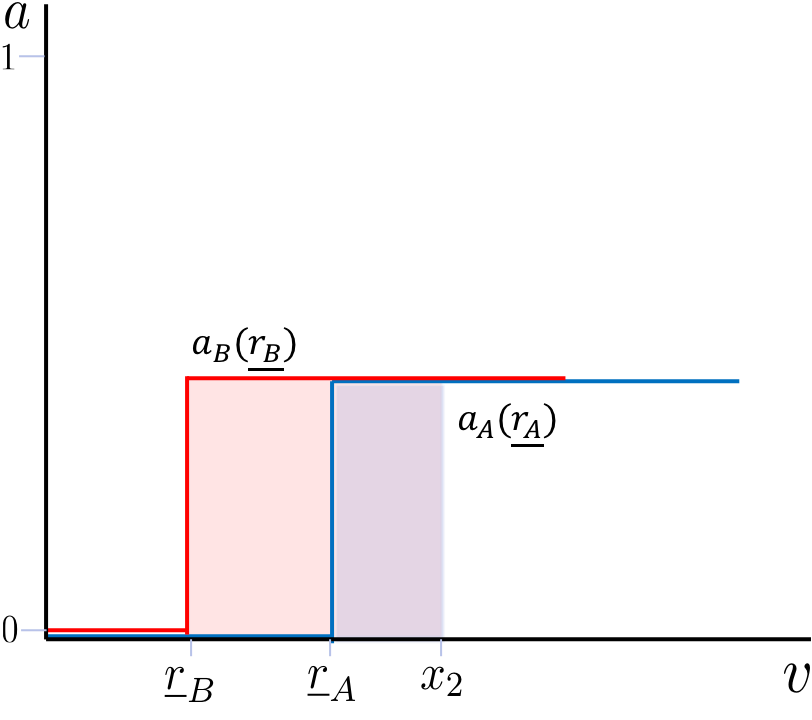} 
\caption{Left: Our second example, which requires randomization on both $A$ and $B$.  \newline Right: If $a_A(\underline{r}_A) \leq a_B(\underline{r}_B)$, then $u_A(x_2) < u_B(x_2)$ (the blue region is smaller than the red), which violates complementary slackness. }
\centering
\label{fig:LBexample2}
\end{figure}

\subsubsection{Step Three: four chains and a lower bound of $M=4$.} In this section, we take one more step towards our general construction. The first example presents our base gadget, and the second example chains two copies together. In this section, we simply confirm how the gadgets interact as we chain more and more together, bouncing back and forth from $A$ to $B$.

\begin{figure}[h!] 
\centering 
\includegraphics[scale=.47]{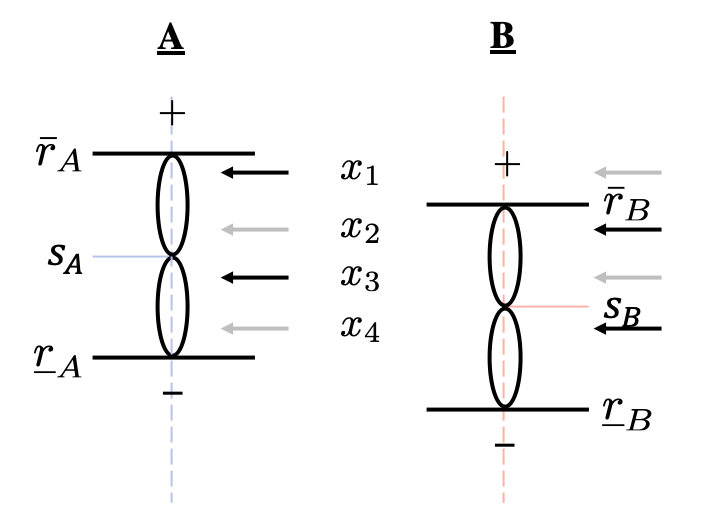} \hspace{1cm} \hspace{.25cm} \includegraphics[scale=.32]{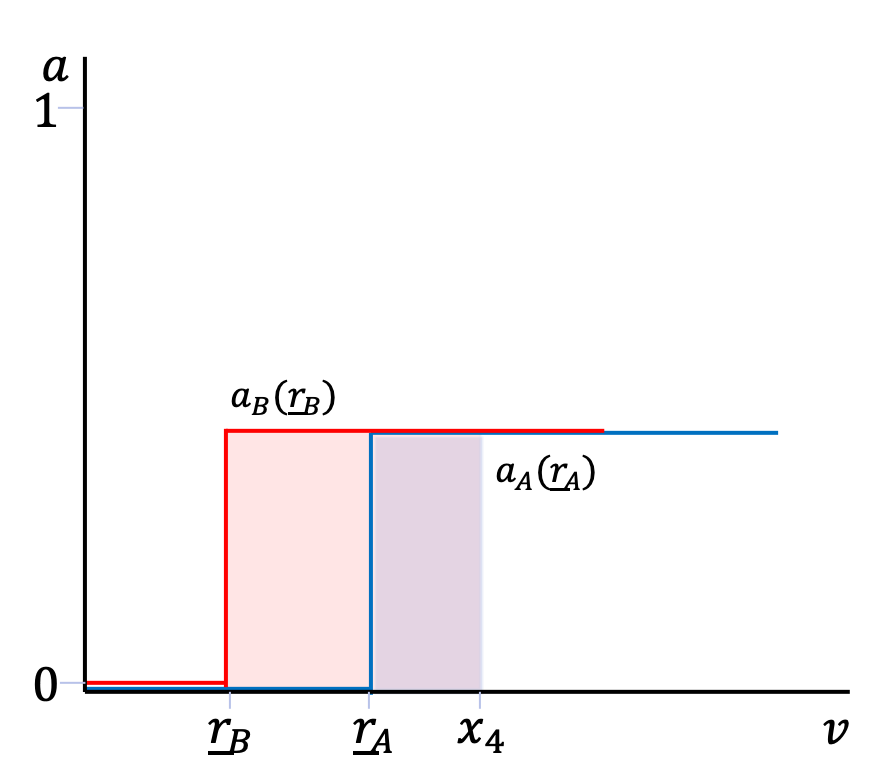} 
\caption{Left: An optimal dual for our example distributions, which will require at least 4 distinct allocation probabilities. 
\newline Right: If $a_A(\underline{r}_A) \leq a_B(\underline{r}_B)$, then $u_A(x_2) < u_B(x_2)$ (the blue region is smaller than the red), which violates complementary slackness via Fact~\ref{alpha} at $x_4$.}
\centering
\label{fig:chain-length-4}
\end{figure}

\paragraph{Nonzero allocation probabilities.} First, we see that the allocation at every ironed value $v$ such that $\Phi^{\lambda,\alpha}_G(v) = 0$ must be nonzero: $a_G(v) > 0$.  The argument holds for each of $(x_1, A), (x_2, B)$, $(x_3,A)$, and $(x_4,B)$. Below we iterate the same argument made in the two previous sections, skipping some details.
\begin{itemize}
\item Note that $a_B(x_1) > 0$ by Fact~\ref{pos}, and thus $u_B(x_1) > 0$.
\item By Fact~\ref{alpha}, $u_A(x_1) = u_B(x_1) > 0$.  Then $a_A(x_1) > 0$.
\item By Fact~\ref{ironed}, $a_A(s_A) = a_A(x_1) > 0$.  This also implies that $u_A(x_2) > 0$.
\item Now, again by Fact~\ref{alpha}, $u_B(x_2) = u_A(x_2) > 0$, so $a_B(x_2) > 0$.
\item Now, again by Fact~\ref{ironed}, $a_B(x_3) = a_B(x_2) > 0$, so $u_B(x_3) > 0$.
\item Again by Fact~\ref{alpha}, $u_A(x_3) = u_B(x_3) > 0$, so $a_A(x_3) > 0$.
\item By Fact~\ref{ironed}, $a_A(x_4) = a_A(x_3) > 0$, so $u_A(x_4) > 0$.
\item Finally by Fact~\ref{alpha}, $u_B(x_4) = u_A(x_4) > 0$.
\end{itemize}


Essentially, if any of these allocations must be positive, it forces the rest of them, working downwards, to be positive.  And, by Fact~\ref{pos}, $a_B(x_1) = 1$, so $u_B(x_1) > 0$.  Hence the rest of the implications follow, so the allocation must be nonzero throughout this region.

\paragraph{Distinct allocation probabilities.} Now, given that the allocation must be nonzero at every point in this range, we argue that it must be distinct at all of the points of interest.  Fix some nonzero $a_B(\underline{r}_B)$, and note by Fact~\ref{ironed} that $a_B(v) = a_B(\underline{r}_B)$ for all $v \in [\underline{r}_B, s_B]$.  By Fact~\ref{pos}, $a_G(v) = 0$ for $v < \underline{r}_G$.  Because $\underline{r}_B < \underline{r}_A$, then to have $u_A(x_4) = u_B(x_4)$, since $u_A(x_4) = \int _{\underline{r}_A} ^{x_4} a_A(w) dw = (x_4 - \underline{r}_A) a_A(\underline{r}_A)$ and $u_B(x_4) = \int _{\underline{r}_B} ^{x_4} a_B(w) dw = (x_4 - \underline{r}_B) a_B(\underline{r}_B)$, then we must have a distinct $a_A(\underline{r}_A) > a_B(\underline{r}_B)$.  This is depicted on the right side in Figure~\ref{fig:chain-length-4}.  Then, by Fact~\ref{ironed}, $a_A(x_3) = a_A(\underline{r}_A) > a_B(\underline{r}_B)$.

The argument extends inductively for $(x_3, B), (x_2, A)$, and $(x_1, B)$: we show it with $(x_3, B)$.  Note that $u_A(x_4) = u_B(x_4)$ and suppose the inductive hypothesis of $a_A(x_4) > a_B(x_4)$, where $a_A(x_3) = a_A(x_4)$ and $a_B(x_4) = a_B(\underline{r}_B)$ by Fact~\ref{ironed}.  Hence $u_A(s_B) > u_B(s_B)$.  Then in order to have $u_A(x_3) = u_B(x_3)$, we must have $a_B(x_3) > a_A(x_3)$.


The result is four distinct allocation probabilities in these four regions, and five in total (including the deterministic option to get the item w.p. one).  Essentially, this example only has two ironed intervals in $A$ and $B$ each with four points of interest.  Our full construction below lets the number of ironed intervals grow with $M$.

\subsubsection{Final Step: $M$ chains and a lower bound of $M$.}

\begin{figure}
\centering
\includegraphics[width=.45\linewidth]{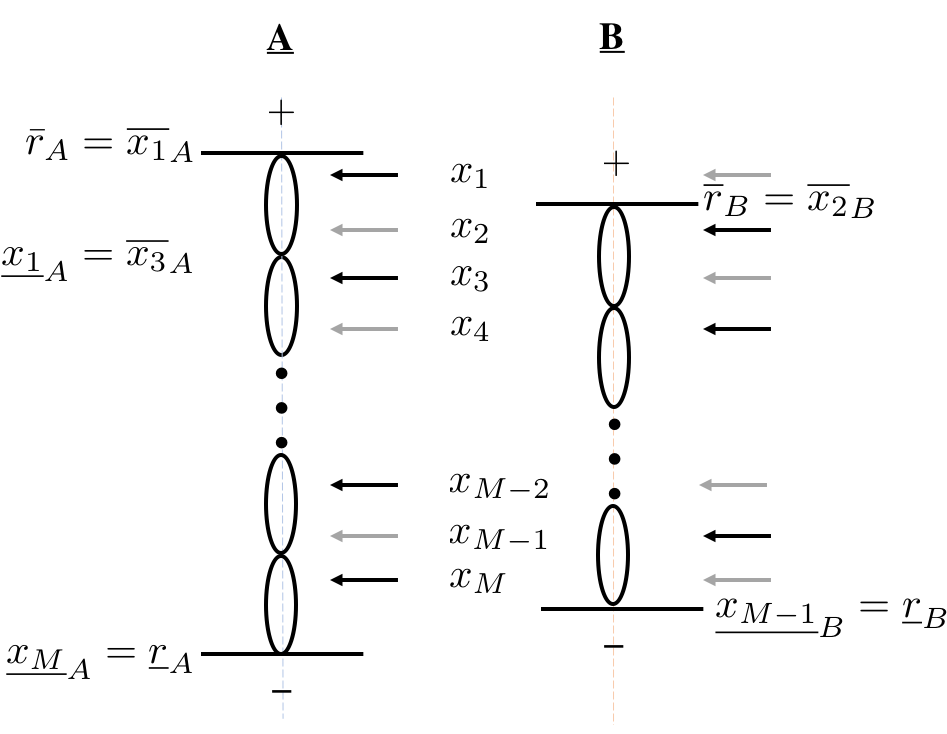}
\caption{Our candidate dual instance: a top chain that spans the entire region of zero virtual values for both $A$ and $B$ with no gaps between the ironed intervals that comprise the chain.  There is flow into $A$ and $B$ at every point $x_i$ in the chain.}
\label{fig:candidatedual8}
\centering
\end{figure}
It is possible to extend the examples above by continuing to interleave ironed intervals with flow coming in.  The combination of the equal preferability constraints and the inability to increase the allocation in the middle of an ironed interval is what requires us to randomize differently within each interval, forcing any number of menu options.  Details are given in \Cref{app:LB},
where we formally define this ``top chain" structure (Definition~\ref{def:chain}) and construct the candidate dual instance, which is depicted in Figure~\ref{fig:candidatedual8}. For example, our first example has a top chain of length one, the second of length two, and the third of length four. Theorem~\ref{thm:allocalg} proves that there exists a primal instance that satisfies  complementary slackness with the defined dual. This proves both that our dual is optimal, and thus {\em any} optimal primal must satisfy complementary slackness with it, giving us Theorem~\ref{thm:unbounded}.


\begin{theorem}\label{thm:unbounded}
Mechanisms that satisfy complementary slackness with a dual containing a top chain of length $M$ have menu complexity at least $M$. Moreover, for all $M$, there exists a distribution $F$ over three partially-ordered items for which a dual with top chain of length $M$ is feasible.
\end{theorem}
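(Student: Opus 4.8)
The plan is to establish the two halves of Theorem~\ref{thm:unbounded} separately, since they are of rather different character. The first half (``a mechanism satisfying complementary slackness with a dual containing a length-$M$ top chain has menu complexity $\geq M$'') is the purely combinatorial argument that was already rehearsed in Steps One through Three for $M \in \{1,2,4\}$; the task is to carry out the induction in full generality. First I would give the formal definition of a ``top chain'' of length $M$ (Definition~\ref{def:chain}): a sequence of ironed intervals $I_1, I_2, \dots$ that alternate between items $A$ and $B$, whose interiors interleave (consecutive intervals overlap but are never nested or aligned, i.e. $\underline{r}$ of one lies strictly inside the previous), which together cover the entire zero-virtual-value region of $A$ and of $B$ with no gaps, and flow points $x_1 > x_2 > \dots > x_M$ with $\alpha_{C,A}(x_i) > 0$ and $\alpha_{C,B}(x_i) > 0$ for each $i$, each $x_i$ lying in the relevant ironed interval. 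Then I would prove, exactly as in the Step Three narrative but by formal induction on the chain index:

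\textbf{Claim (nonzero):} $a_G(x_i) > 0$ for every $i$. This is a downward induction seeded by Fact~\ref{pos} (the topmost point $x_1$ sits above $\overline{r}$ of one item, so its allocation there is $1$, hence its utility is positive); then alternately apply Fact~\ref{alpha} (flow into both $A$ and $B$ at $x_i$ forces $u_A(x_i) = u_B(x_i)$, so positivity of the utility on one side forces a positive allocation on the other at $x_i$) and Fact~\ref{ironed} (the allocation is constant on the ironed interval containing $x_i$, so it equals the allocation at the bottom endpoint, which then feeds the next utility comparison). \textbf{Claim (distinct):} the values $a_A(\underline{r}_{I_j})$ (at bottom endpoints of the chain's ironed intervals) are pairwise distinct and in fact strictly monotone along the chain. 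This is an upward induction: at a flow point $x_i$ the equal-utility constraint $u_A(x_i) = u_B(x_i)$ together with the fact that the two relevant ironed intervals have \emph{different} bottom endpoints (interleaving, never aligned) forces the constant allocation level on the ``lower-starting'' interval to strictly exceed that on the ``higher-starting'' one, because $\int$ over a longer interval of a smaller constant can only equal $\int$ over a shorter interval of a larger constant. Since each flow point contributes a fresh strict inequality and the chain has $M$ links, we get $M$ distinct allocation probabilities, hence menu complexity $\geq M$. I would be careful to note that the allocation values compared are genuinely achieved by the mechanism at \emph{reportable} types (values in the support where density is positive), so they are legitimate distinct menu entries.

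For the second half (``for all $M$ there exists a distribution $F$ for which a dual with a top chain of length $M$ is feasible''), the strategy is \emph{not} to build $F$ by hand. Instead I would invoke the Master Theorem (Theorem~\ref{thm:multimaster}), which guarantees that every candidate dual $\lambda$ is feasible for \emph{some} input distribution. So the work reduces to: (i) exhibit, for each $M$, an explicit candidate dual $(\lambda,\alpha)$ whose virtual-value functions $f_G \Phi^{\lambda,\alpha}_G$ are monotone (so that the zero regions really are intervals, as assumed throughout and verified in Appendix~\ref{subsec:regvars}), and which contains a length-$M$ top chain in the precise sense just defined — this is the ``candidate dual instance'' depicted in Figure~\ref{fig:candidatedual8}, with the top chain of ironed intervals chosen to interleave and cover the zero regions of $A$ and $B$, and flow $\alpha_{C,A}(x_i), \alpha_{C,B}(x_i)$ placed at the $M$ points $x_i$; and (ii) check that this candidate dual meets the (mild) hypotheses of the Master Theorem. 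One must also confirm that the dual is \emph{optimal}, i.e. that some primal satisfies complementary slackness with it — this is exactly Theorem~\ref{thm:allocalg} (the finiteness/existence result), which constructs such a primal; combining optimality of the dual with the first half then yields that \emph{every} optimal mechanism for this $F$ has menu complexity $\geq M$.

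I expect the main obstacle to be the bookkeeping in part (i): one has to pin down the geometry of the top chain — the exact ordering $\underline{r}_{I_1} < \underline{r}_{I_2} < \cdots$ of interval endpoints, which $x_i$ lands in which interval, and the signs of the virtual values outside the chain — so that (a) the interleaving/no-gap/no-alignment conditions needed by the Claim(distinct) induction genuinely hold, (b) monotonicity of $f_G\Phi_G$ is preserved (this constrains how fast $\lambda'_G$ and the $\alpha$-integrals can vary), and (c) the flow into $C$ is consistent, i.e. $\Phi_C^{\lambda,\alpha}$ has the right sign structure given the inflow from $A$ and $B$ (the $C$ item is taken unironed, $\lambda_C \equiv 0$, so $\Phi_C$ is $\varphi_C$ plus the accumulated $\alpha$ terms, and one needs this to be consistent with some feasible revenue curve — which is where the Master Theorem does the heavy lifting). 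The combinatorial first half is routine once the definition of top chain is in place; the delicate part is choosing the candidate dual so cleanly that the Master Theorem applies off the shelf, and deferring the genuinely painful reverse-engineering of $F$ entirely to that theorem and to Theorem~\ref{thm:allocalg}.
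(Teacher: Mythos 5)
Your proposal is correct and follows essentially the same route as the paper: the lower bound is the downward ``nonzero'' induction plus the upward ``distinctness'' induction through the interleaved ironed intervals (the paper formalizes the latter as an alternation lemma, Lemma~\ref{lem:3}, showing $a_A(x_i) > a_B(x_i) \iff a_A(x_{i+1}) < a_B(x_{i+1})$, and then uses monotonicity of $a_A,a_B$ to extract $\Omega(M)$ distinct values, which is the same mechanism as your strict-monotonicity claim). The ``Moreover'' part is handled exactly as you propose, by feeding the candidate dual of Figure~\ref{fig:candidatedual8} to the Master Theorem and certifying optimality of that dual via Theorem~\ref{thm:allocalg}.
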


The ``Moreover, \ldots'' part of the theorem is due to our Master Theorem (Theorem~\ref{thm:multimaster}). The formal statement is a bit technical, and can be found in \Cref{sec:masterthmApp}.  

\subsection{For Three Items, Menu Complexity is Finite: Brief Highlight} \label{sec:minioptvar}

In \Cref{app:optvars}, we discuss our approach for characterizing the optimal mechanism for our 3-item minimal instance.  We prove essentially that the interleaving of ironed intervals used in the construction of the previous section is the worst case (in terms of menu complexity).  
We do this by specifying a subclass of optimal duals (that we call \emph{best duals}) using two new dual operations, \emph{double swaps} and \emph{upper swaps}.  We then leverage the structure of the best duals to give an algorithm that recovers the optimal primal from any best dual, and prove that the resulting mechanism has finite menu complexity.

\begin{theorem} \label{thm:allocalg} For any best dual solution, the primal recovery algorithm returns a primal with finite menu complexity that satisfies complementary slackness (and is therefore optimal). \end{theorem}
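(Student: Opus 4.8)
The plan is to carry out the four-step program sketched in the introduction, specialized to the three-item instance $\{A,B,C\}$ with $A,B\succ C$.

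First I would establish that best duals are optimal. By strong duality in this continuous setting (cf.~\citepalias{FGKK}) there is an optimal dual $(\lambda,\alpha)$. I would then argue that the two dual operations — double swaps and upper swaps — each preserve dual feasibility, weakly decrease the value of the induced Lagrangian relaxation (so, starting from an optimal dual, they preserve optimality), and strictly reduce some potential measuring how badly the ironed intervals of $A$ and $B$ are interleaved together with the nesting of the flow regions. Hence repeatedly applying available swaps to an optimal dual either terminates at a best dual or produces a sequence whose limit is best; either way there is an optimal dual that is best, which reduces the theorem to analyzing the primal recovery algorithm on a best dual.

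Second I would show the algorithm succeeds and outputs a primal satisfying complementary slackness. Given a best dual the virtual values $f_G\Phi^{\lambda,\alpha}_G$ are determined, so CS\ref{CS1}--CS\ref{CS2} force $a_G(v)=1$ for $v>\bar r_G$ and $a_G(v)=0$ for $v<\underline r_G$; the only freedom is in the zero-virtual-value region of each item, where CS\ref{CS3} forces $a_G$ constant on each ironed interval and CS\ref{CS4} forces $u_A(x)=u_C(x)=u_B(x)$ at every flow point. The algorithm resolves these by sweeping downward through values: each ironed interval of $A$ or $B$ it crosses introduces one new constant, and each flow point it crosses imposes one linear utility-equality that pins the next constant in terms of the earlier ones (this is exactly the degrees-of-freedom bookkeeping illustrated for chains of length $1,2,4$ in \Cref{sec:basecase}); simultaneously it adjusts $a_C$ to keep $u_C$ equal to $u_A$ or $u_B$ at the flow points, producing a decreasing sequence of cutoffs $v_1>v_2>\cdots$ for $C$. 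The substantive claim is that this sweep never gets stuck, i.e.\ the forced constants stay monotone non-decreasing and in $[0,1]$; I would prove the contrapositive: a configuration of ironed intervals and flow on which the sweep would be forced to decrease the allocation, exceed $1$, or drop below an already-fixed boundary value is precisely one admitting a double swap or an upper swap, contradicting bestness. Granting success, CS is then routine: CS\ref{CS1}--CS\ref{CS2} hold by construction and because the constants lie in $[0,1]$, CS\ref{CS3} because the allocation is constant on ironed intervals, CS\ref{CS4} because the utility equalities were imposed, and primal feasibility (monotonicity, range, and $u_C\ge u_A,u_B$) from the same sweep invariants.

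Third I would prove finiteness of the menu. Since $a_A,a_B$ equal $1$ above their zero regions and gain at most one new value per ironed interval crossed, and $a_C$ changes only at the cutoffs $v_i$, it suffices to bound the length of the sequence $v_1>v_2>\cdots$. If it is finite we are done. If it is infinite, then as all $v_i\ge 0$ the monotone convergence theorem gives $v_i\downarrow\underline v$ for some $\underline v\ge 0$, and I would show the algorithm may abort at $\underline v$: discard the entire infinite tail and instead offer $C$-interested bidders the single non-trivial option ``receive the item at price $\underline v$'' (with the correspondingly collapsed options for $A$ and $B$ on $[0,\underline v]$), and verify this modification still satisfies complementary slackness with the best dual — essentially because below the accumulation point the virtual values and the flow have ``run out,'' mirroring the collapse one sees on taking the $M\to\infty$ limit of the lower-bound construction. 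Either way the output has finitely many menu options, completing Theorem~\ref{thm:allocalg}. The main obstacle is the middle step — showing a stuck sweep certifies an applicable double or upper swap — since this is where the precise definitions of the two operations must be matched case by case against every way the sweep can fail; a secondary delicate point is justifying the abort in the infinite case without breaking complementary slackness on the whole tail below the accumulation point.
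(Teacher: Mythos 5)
Your proposal matches the paper's proof essentially step for step: best duals exclude double swaps and upper swaps (the paper's Proposition~\ref{prop:bestdual}), the primal recovery algorithm solves the triangular system of utility equalities along the top chain and its only failure modes are exactly those two swap structures, and finiteness follows from the monotone convergence of the chain points to an unironed common zero of both virtual value functions, where a single posted price suffices. The only cosmetic differences are that you organize the solve as a downward sweep and a contrapositive (``stuck implies swap'') while the paper runs an induction on chain length with an explicit rescaling factor $\lambda<1$, and that you reach a best dual by iterating swaps rather than by the paper's lexicographic-minimizer definition; neither changes the substance.
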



We conclude with one vignette regarding how the menu complexity can be unbounded but not infinite. Two crucial aspects of the ``top chain'' structure from our examples (generalized in Figure~\ref{fig:candidatedual8}) are that: (1) the ironed intervals for $A$ and $B$ are interleaving---this is what ``keeps the chain going'' and (2) the sequences for $A$ and $B$ terminate at \emph{different} bottom endpoints. The latter is a bit subtle, but the idea is that if the two chains terminate at the same bottom endpoint $v$, then this entire process can be aborted and simply setting $v$ as the reserve for all items satisfies complementary slackness. So while in principle, this top chain structure could indeed be countably infinite, it cannot also satisfy (1) and (2). This is because the monotone convergence theorem states that both chains do indeed converge to some bottom endpoint, and interleaving then guarantees that this bottom endpoint must be the same.



\subsection{One Last Example} \label{subsec:3nodmrex}

 In this section, we construct an example by applying the Master Theorem (Theorem~\ref{thm:multimaster}) to the dual in Figure~\ref{fig:chain-length-4}.  The customer prior distribution in the example consists of the marginal distributions depicted in Figure~\ref{fig:nondmrex}.  The distributions for $A$ and $B$ do not satisfy DMR, and, using the ideas from the previous subsections, we will see that the optimal mechanism is randomized.

\begin{figure}[h!] 
	\centering 
	\begin{subfigure} {.41\textwidth}
		\includegraphics[width=\textwidth]{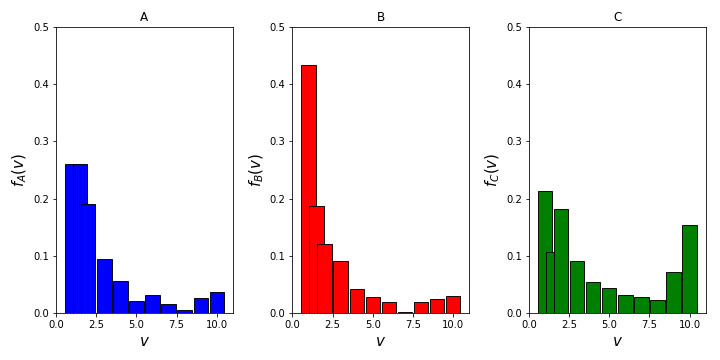}
		\caption{Probability densities}
		\label{fig:nondmrpdf} 
	\end{subfigure}
	\begin{subfigure}{.41\textwidth}
		\centering
		\includegraphics[width=\textwidth]{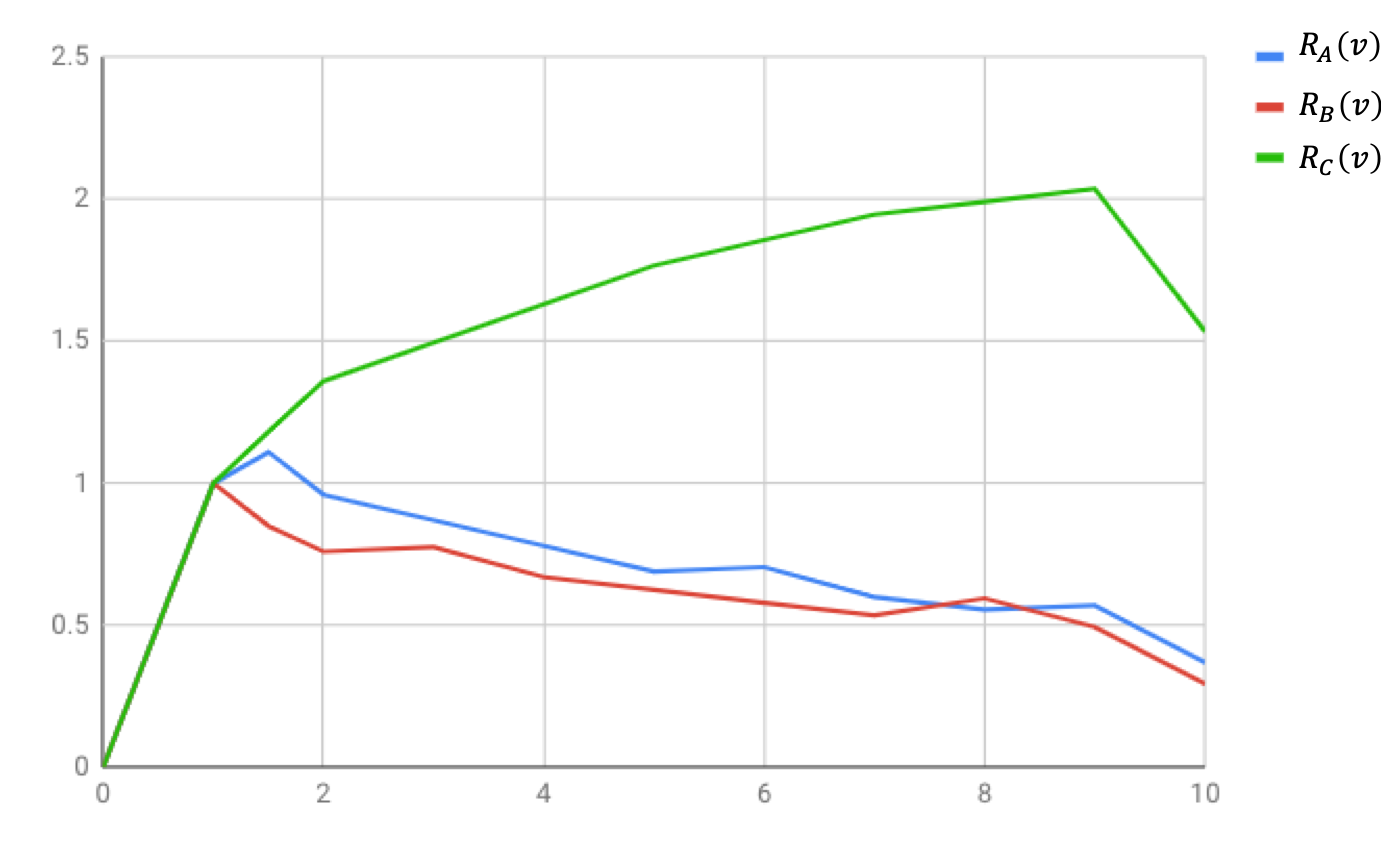}
		\caption{Revenue curves}
		\label{fig:nondmrrevcurves} 
	\end{subfigure}	
	\caption{The value distributions for items $A$, $B$, and $C$, that do not satisfy DMR.}
	\centering
	\label{fig:nondmrex}
\end{figure}

We can use the revenue curve procedure from \Cref{sec:example} to determine the optimal pricing for this example.  It produces the curves in Figure~\ref{fig:nondmrex}, telling us that 
the optimal price to set on item $C$ is $8$, which will result in prices of $9$ on item $A$ and 8 on item $B$. 
This gives $R_{ABC}(8) = 3.155$.  However, as we have seen in \Cref{sec:highlevelLB}, for the dual in Figure~\ref{fig:chain-length-4} (which corresponds to this distribution) to satisfy complementary slackness with a mechanism, the mechanism must have a good deal of randomization.

In \Cref{sec:highlevelLB}, we reasoned that the allocation probability must be distinct at each of the points $(x_1, A), (x_2, B)$, $(x_3,A)$, and $(x_4,B)$.  We also saw that if we fixed the allocation at $(x_4,B)$, there was only one way to satisfy the rest of the complementary slackness constraints, forming a system of equations.  The primal recovery algorithm described in the proof of \Cref{thm:allocalg} goes through solving this system of equations, ensuring that any other additional complementary slackness constraints are met, and that no pathological structures that might prevent a solution from existing can arise.  Applying this algorithm to our example results in the following optimal randomized mechanism:  
$$a_A(v) = \begin{cases} 0 & v < 1.5 \\ \frac{4}{7} & v \in [1.5, 6) \\ \frac{6}{7} & v \in [6, 10) \\ 1 & v \geq 10 \end{cases} \quad \quad a_B(v) = \begin{cases} 0 & v < 1 \\ \frac{2}{7} & v \in [1, 3) \\ \frac{5}{7} & v \in [3, 8) \\ 1 & v \geq 8 \end{cases} \quad\quad a_C(v) = \begin{cases} 0 & v < 1 \\ \frac{2}{7} & v \in [1, 2) \\ \frac{4}{7} & v \in [2, 5) \\ \frac{5}{7} & v \in [5, 7) \\ \frac{6}{7} & v \in [7, 9) \\ 1 & v \geq 9 \end{cases}.$$
The mechanism achieves a revenue of 3.2, which is slightly more than that of the best deterministic mechanism.

\section{Conclusions}
\label{sec:conclusion}
We study optimal mechanisms for single-minded bidders, and show that the menu complexity of optimal mechanisms is unbounded but finite for three items. Recall that for three identical items, the menu complexity is 1, for totally-ordered items the menu complexity is at most 7, and for heterogeneous items the menu complexity is uncountable. So our setting fits nicely ``in between'' totally-ordered and heterogeneous by this measure. By fuzzier measures of complexity, the same is true too: for identical items, the optimal mechanism has a clean closed-form description. For totally-ordered items, the optimal dual has a closed form, and the primal can be recovered by a simple algorithm as a function of this dual. For partially-ordered items, the optimal dual is unlikely to have a closed form, but can be characterized in terms of properties it must satisfy, and the primal can still be recovered algorithmically\footnote{Contrast this with \citep{CDW12a}, which only claims that by solving a linear program, an optimal mechanism for heterogenous settings can be found in time polynomial in the type space.} as a function of this dual. For heterogeneous items, optimal mechanisms are pure chaos.  And, like other settings that can be placed fundamentally in between single- and multi- dimensional settings (e.g., FedEx and MUP), we prove that the optimal mechanism is deterministic under DMR in the partially-ordered setting.


We also provide extensions---menu complexity of MUP (Theorem~\ref{thm:MUPunbounded}, Appendix~\ref{sec:MUPLB}) and of coordinated values (Theorems~\ref{thm:LB1}, \ref{thm:LB2}, \ref{thm:LB-gen}, \Cref{sec:raghuvansh})---proving the usefulness of our techniques beyond our setting.


Many interesting open directions remain.  First, general menu complexity upper bounds---for the single-minded setting, the Multi-Unit Pricing setting, and the coordinated valuations setting.  The techniques we use in this paper focus on characterizing the optimal dual and recovering the optimal mechanism for the three-item single-minded setting; this approach appears to be far too detailed and focused on characterizations to be extended.  We expect new ideas to be needed.

Second, the question of menu-complexity lower bounds for any of these three settings for \emph{approximately}-optimal mechanism are wide-open.  Is the separation from FedEx still as large when we only require approximately-optimal revenue?

Both directions of research would further fill out this rich spectrum, which until only recently was but thought to be a dichotomy between single-dimensional and heterogenous.

\bibliographystyle{plainnat}
\bibliography{dags}

\newpage
\appendix
\section{Full Preliminaries}
\label{sec:addtlprelims}

While this paper focuses on the three-item case, it's illustrative (and perhaps cleaner) to provide notation for general partially-ordered items. In general, there are $m$ partially-ordered items. Item $G$ can be better than, worse than, or incomparable to item $G'$, and we'll use the relation $G \succ G'$ to denote that $G$ is better than $G'$. We refer to the set of items as $\G$, and use $\children(G)$ to denote the set of items $G'\in \G$ for which $G' \succ G$, but there is no $G''$ with $G' \succ G'' \succ G$ (i.e. the items ``immediately better'' than $G$, or the 1-out-neighborhood of $G$ in a graphic representation). There is a single buyer with a (value, interest) pair $(v,G)$, who receives value $v$ if they are awarded an item $\succeq G$. An instance of the problem consists of a joint probability distribution over $[0,H]\times \G$, where $H$ is the maximum possible value of any bidder for any item. We will use $f$ to denote the density of this joint distribution, with $f_G(v)$ denoting the density at $(v, G)$. We will also use $F_G(v)$ to denote $\int_0^v f_G(w)dw$, and $q_G$ to denote the probability that the bidder's interest is $G$. 
  Note that $F_G(H) = q_G<1$, so $F_G(\cdot)$ is not the CDF of a distribution
(although $F_G(\cdot)/q_G$ is the CDF of the marginal distribution of $v$ conditioned on interest $G$). 

We'll consider (w.l.o.g.) direct truthful mechanisms, where the bidder reports a (value, interest) pair and is awarded a (possibly randomized) item. For a direct mechanism, we'll define $a_G(v)$ to be the probability that  item $G$ is awarded to a bidder who reports $(v,G)$, and $p_G(v)$ to be the expected payment charged. Then a buyer's utility for reporting any $(v', G')$ where $G'$ doesn't dominate $G$ is $- p_{G'}(v')$, and the utility for reporting any $(v', G')$ where $G'$ dominates $G$ is $v\cdot a_{G'}(v') - p_{G'}(v')$. 

At this point, one can write a primal LP that maximizes expected revenue subject to incentive constraints, manipulate the LP, and consider a Lagrangian relaxation (and all of this is done in~\citet{FGKK, DW}). 

\subsection{Formulating the Optimization Problem}


The ``default'' way to write the continuous LP characterizing the optimal mechanism would be to maximize $\sum_{G \in \G} \int_0^H f_G(v)p_G(v)dv$ (the expected revenue) such that everyone prefers to tell the truth than to report any other type.   As observed in~\citet{FGKK}, it is without loss of generality to only consider mechanisms that award bidders their declared item of interest with probability in $[0,1]$, and all other items with probability $0$.\footnote{To see this, observe that the bidder is just as happy to get nothing instead of an item that doesn't dominate their interest. See also that they are just as happy to get their interest item instead of any item that dominates it. It will also make this option no more attractive to any bidder considering misreporting. So starting from a truthful mechanism, modifying it to only award the item of declared interest or nothing cannot possibly violate truthfulness.}
Also observed in~\citet{FGKK} is that Myerson's payment identity holds in this setting as well, and any truthful mechanism must satisfy $p_G(v) = v a_G(v) - \int_0^v a_G(w)dw$ (this also implies that the bidder's utility when truthfully reporting $(v,G)$ is $u_G(v) = \int_0^v a_G(w)dw$). This allows us to drop the payment variables, and follow Myerson's analysis to recover:\footnote{For the familiar reader, this derivation is routine, so we omit it. The unfamiliar reader can refer to~\citep{Myerson,HartlineBook} for this derivation.}

$$\E[\text{revenue}] = \sum_{G \in \G}\int_0^H f_G(v)\cdot p_G(v)dv = \sum _{G \in \G} \int _0 ^H f_G(v) a_G(v) \left(v-\frac{1-F_G(v)}{f_G(v)}\right)dv$$

The experienced reader will notice that $v-\frac{1-F_G(v)}{f_G(v)}$ is exactly Myerson's virtual value for the conditional distribution $F_G(\cdot)/q_G$, which we'll denote by $\varphi_G(v)$. At this point, we still have a continuous LP with only allocation variables, but still lots of truthfulness constraints.  \citet{FGKK} observe that many of these constraints are redundant, and in fact it suffices to only make sure that when the bidder has (value, interest) pair $(v, G)$ they:
\begin{itemize}
\item Prefer to tell the truth rather than report any other $(v', G)$. This is accomplished by constraining $a_G(\cdot)$ to be monotone non-decreasing (exactly as in the single-item setting).
\item Prefer to tell the truth rather than report any other $(v, G' \in \children(G))$. This is accomplished by constraining $\int_0^v a_G(w)dw \geq \int_0^v a_{G'}(w)dw$ (as the LHS denotes the utility of the buyer for reporting $(v,G)$ and the RHS denote the utility of the buyer for reporting $(v, G')$). 
\end{itemize}
All of these constraints together imply that $(v,G)$ also does not prefer to report any other $(v', G')$.\footnote{
For example, if $(v,G)$ prefers truthful reporting to reporting $(v,G')$ where $G' \succ G$, and $(v,G')$ prefers truthful reporting to reporting $(v',G')$, then since $(v,G)$ gets the same utility for reporting $(v,G')$ as type $(v,G')$ does for truthfully reporting, $(v,G)$ prefers truthful reporting to reporting $(v',G')$.} Below, we will now formulate the Primal LP and its Lagrangian relaxation. This derivation is not a new result, but important to understanding our approach. So we'll go through some of the steps to help provide some intuition for the reader, but omit any calculations and proofs.

\subsection{The Primal} \label{subsec:primal}
With the above discussion in mind, we can now formulate our primal continuous LP.

\begin{align*}
\noindent \text{Variables: }\quad\quad\quad &a_G(v),\ \forall G \in \G,\ v\in [0,H]\\
      \text{Maximize } \quad\quad\quad    &\sum_{G \in \G} \int _0 ^{H}  f_G(v) a_G(v) \varphi_G(v) dv\\ \\
          \text{subject to}\quad\quad\quad  &a_G'(v) \geq 0  &&\forall G \in \G\ \forall v \in [0,H] \ \mbox{\rm (dual variables $\lambda_G(v)\geq 0$)} \\            
        \int_0^v &a_{G}(x) dx - \int_0^v a_{G'}(x) dx  \geq 0   &&\forall G \in \G,\ G' \in \children(G)\ \forall v \in [0,H]  \ \mbox{\rm (dual vars $\alpha_{G,G'}(v)\geq 0$)} \\
&a_G(v) \in [0,1] &&\forall G \in \G,\ \forall v \in [0,H]\ \mbox{\rm (no dual variables)}
\end{align*}

The first constraint requires that $a_G(\cdot)$ is monotone non-decreasing for all $G$. If an allocation rule is not monotone, it cannot possibly be part of a truthful mechanism. As discussed above, Myerson's payment identity combined with monotonicity guarantees that $(v,G)$ will always prefer to report $(v,G)$ instead of $(v', G)$. The second constraint directly requires that the utility of $(v,G)$ for reporting $(v,G)$ is at least as high as for reporting $(v,G')$ (also discussed above). The final constraint simply ensures that the allocation probabilities lie in $[0,1]$.

\subsection{Derivation of the Partial Lagrangian Dual} \label{subsec:dualderiv}

Moving the first two types of constraints from the primal to the objective function with multipliers $\lambda_G(v)$ and $\alpha_{G,G'}(v)$ respectively gives the partial Lagrangian primal:

$$\max _{a: a_G(v) \in [0,1] \, \forall G \in \G, \forall v \in [0,H]} \min _{\lambda, \alpha \geq 0}  \L(a; \lambda, \alpha)$$
where
\begin{multline} \L(a; \lambda, \alpha) := \\ \sum_{G \in \G} \int _0 ^{H}  \left[ f_G(v) a_G(v) \varphi_G(v) + \sum_{G' \in N^+(G)} \alpha_{G,G'}(v) \cdot \left[ \int_0 ^v a_G(x) dx - \int _0 ^v a_{G'}(x)dx \right] + \lambda_G(v) a'_G(v) \right] dv. \nonumber \end{multline}

This gives the corresponding partial Lagrangian dual of 
$$\min _{\lambda, \alpha \geq 0} \max _{a: a_G(v) \in [0,1]\, \forall G \in \G, \forall v \in [0,H]} \L(a; \lambda, \alpha).$$
Note however that we can rewrite $\L(a; \lambda, \alpha)$ by using integration by parts on the $a'_G(v)$ term to get $a_G(v)$ terms, using that $a_G(0) = 0$ and $\lambda_G(H) = 0$ without loss:
$$\int _0 ^H \lambda_G(v) a'_G(v) dv = \lambda_G(v) a_G(v) \mid _0 ^H - \int _0 ^H \lambda'_G(v) a_G(v) dv = - \int _0 ^H \lambda'_G(v) a_G(v) dv$$
As in \citepalias{FGKK}, this uses the facts that $\lambda_G(\cdot)$ is continuous and equal to 0 at any point that $a'_G(v) = \infty$, which occurs at only countably many points.  Then, collecting the $a_G(v)$ terms gives:
\begin{align*}
\L(a; \lambda, \alpha) &=  \sum_{G \in \G} \int _0 ^{H}  \bigg[ f_G(v) a_G(v) \varphi_G(v) \\
& \quad + \sum_{G' \in N^+(G)} \alpha_{G,G'}(v) \cdot \left[ \int_0 ^v a_G(x) dx - \int _0 ^v a_{G'}(x)dx \right] - \lambda'_G(v) a_G(v) \bigg] dv \\
&= \sum_{G \in \G} \int _0 ^{H}  f_G(v) a_G(v) \Phi^{\lambda, \alpha}_G(v) dv
\end{align*}
where we define
$$\Phi^{\lambda, \alpha}_G(v) := \varphi_G(v) + \frac{1}{f_G(v)} \cdot \left[ \sum_{G' \in N^+(G)} \int_v ^H \alpha_{G,G'}(x) dx - \sum_{G': G \in N^+(G')}  \int _v ^H \alpha_{G',G}(v)dx \right] - \frac{1}{f_G(v)} \lambda'_G(v).$$
Then we can write that the Lagrangian dual problem is
$$\min _{\lambda, \alpha \geq 0} \quad \max _{a: a_G(v) \in [0,1] \, \forall G \in \G, \forall v \in [0,H]} \quad \sum_{G \in \G} \int _0 ^{H}  f_G(v) a_G(v) \Phi^{\lambda, \alpha}_G(v) dv.$$

\subsection{More Dual Terminology} \label{sec:moredualtermsapp}


Minimal dual terminology is first introduced in subsection~\ref{sec:dualtermsapp}.  Here, we add a few additional terms.

Dual best response (condition (\ref{CS4}))  implies the following.

\begin{itemize}
\item (Preferable Items) To satisfy complementary slackness, for any $x$ such that $\alpha_{G,G'}(x) > 0$, we must have $\u_{G'}(x) \geq \u_{G''}(x) \quad \forall G'' \in \children(G).$
This is because (a) $\u_G(x) = \u_{G'}(x)$ by complementary slackness and (b) $\u_{G}(x) \geq \u_{G''}(x) \quad \forall G'' \in \children(G)$ by incentive compatibility.

\item (Equally Preferable Items) \label{fact:equallypreferablegroups} By the above, to satisfy complementary slackness with any dual with $\alpha_{G,G'}(x) > 0$ and $\alpha_{G,G''}(x) > 0$, we must have $u_{G'}(x) = u_{G''}(x)$.

\end{itemize}


 \subsection{Review of Dual Properties} \label{subsec:regvars}
 
\begin{itemize}
\item (Rerouting Flow Among $\children(G)$) \label{fact:rerourtingflow} If $G', G'' \in \children(G)$ and we decrease $\alpha_{G,G'}(v)$ by $\vareps$ and increase $\alpha_{G,G''}(v)$ by $\vareps$, then $v' \leq v$, $f_{G'}(v') \Phi^{\lambda,\alpha}_{G'}(v')$ decreases by $\vareps$ and $f_{G''}(v') \Phi^{\lambda,\alpha}_{G''}(v')$ increases by $\vareps$.  All other virtual values, including all of those within $G$, remain the same.
\item (Utility based on the dual) We can often simplify how utility is written in terms of the dual and complementary slackness constraints. If $\underline{x}_G < x < y < \overline{x}_G$, then allocation in ironed intervals implies $u_G(y) = u_G(x) + a_G(y) (y-x)$.
\item (Allocation to Nonzero Virtual Values) As shown above in Subection~\ref{subsec:lagrangian}, the dual variables (1) determine the virtual welfare functions $\Phi^{\lambda,\alpha}(\cdot)$ and (2) are chosen to minimize the maximum virtual welfare under $\Phi^{\lambda,\alpha}(\cdot)$.  For an optimal dual solution, the optimal mechanism will simply be the corresponding virtual welfare maximizer that satisfies complementary slackness.  Parts of this mechanism are easy to predict if the virtual value functions are sign-monotone, which we will later ensure that they are.  Assuming this, we can talk about the virtual values in terms of three regions: positives, negatives, and zeroes.  
\item (Ironing and Proper Monotonicity.) We say that a dual satisfies \emph{proper monotonicity} if $f_G\cdot\Phi_G^{\lambda,\alpha}(\cdot)$ is monotone non-decreasing (note the multiplier of $f_G$). As shown in \citepalias{FGKK,DW}, for all $\alpha$, there exists a $\lambda$ such that $(\lambda, \alpha)$ is properly monotone.
\item (Boosting can only improve the dual.)  Given any dual with properly monotone virtual values, if there exists $v$ such that $f_G(v) \Phi^{\lambda,\alpha}_G(v) < 0$, then for any $G' \in \children(G)$, incrementing $\alpha_{G,G'}(v)$ by $f_G(v) \Phi^{\lambda,\alpha}_G(v)$ only improves the dual. By proper monotonicity, for all $v' \leq v$, $f_G(v') \Phi^{\lambda,\alpha}_G(v') < f_G(v) \Phi^{\lambda,\alpha}_G(v) < 0$, hence increasing $\alpha_{G,G'}(v)$ will not create any positives within $G$, not hurting the dual objective. Sending flow into an item $G'$ can only help by making positives less so, and does not increase any virtual values (but it's possible that it doesn't strictly help). This operation is coined \emph{boosting} in \citepalias{DW}.  While it is clear that $G$ should send the flow, the remaining question is \emph{which} $G' \in \children(G)$ should the flow be sent to. This is the bulk of our analysis.
\item By sign monotonicity, $v > \bar{r}_G$ has a positive virtual value, and thus the allocation rule must set $a_G(v) = 1$, otherwise it is not maximizing virtual welfare.  
\item Similarly, for values with negative virtual values, that is, $v < \underline{r}_G$, it must be that $a_G(v) = 0$. 
\end{itemize}

From these observations, we can conclude that the flow out of $C$ is identical to the flow out of the root node (day $n$) in the FedEx solution.  That is, 
$$\alpha_{C,A}(v) + \alpha_{C,B}(v) = \begin{cases}0 & v > \bar{r}_C \\ -\hat{R}''_C(v)/f_C(v) & v \leq \bar{r}_C. \end{cases}$$
where $R_C(\cdot)$ is defined as in Definition~\ref{def:revcurve}, $\hat{R}_C(\cdot)$ is the least concave upper bound on $R_C(\cdot)$, and $\hat{R}''_C(\cdot)$ is the second derivative of this function with respect to $v$.

We conclude with a fundamental result from~\citepalias{FGKK}.

\begin{theorem}[Proper Ironing \citepalias{FGKK}]Given all dual variables $\alpha$, suppose $\lambda_G(v) = 0$ for all $(v,G)$.  Then $f_G(v) \Phi^{\lambda,\alpha}_G(v)$ is defined for all $(v,G)$.  We define $\Gamma_G(v) = - \int _0 ^v f_G(x) \Phi^{\lambda,\alpha}_G(x) dx$, and $\hat\Gamma_G(\cdot)$ is the least concave upper bound on this function.  Then setting $\lambda_G(v) = \hat\Gamma_G(v) - \Gamma_G(v)$ defines a continuous and differentiable $\lambda_G(\cdot)$ that, with the update of $\Phi^{\lambda,\alpha}_G(\cdot)$ based on $\lambda_G(\cdot)$, results in the proper monotonicity of $f_G(\cdot) \Phi^{\lambda,\alpha}_G(\cdot)$. \end{theorem}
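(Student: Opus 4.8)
The plan is to run a value-space analogue of Myerson's ironing argument, with $\Gamma_G$ playing the role of the (negative antiderivative of the) virtual-value curve and $\hat\Gamma_G$ its concave envelope, and then to read off the updated virtual value as $-\hat\Gamma_G'$. First I would observe that with $\lambda_G\equiv 0$ the quantity
\[
\psi_G(v):=f_G(v)\Phi^{\lambda,\alpha}_G(v)=\bigl(vf_G(v)-(1-F_G(v))\bigr)+\sum_{G'\in\children(G)}\int_v^H\alpha_{G,G'}(x)\,dx-\sum_{G':\,G\in\children(G')}\int_v^H\alpha_{G',G}(x)\,dx
\]
involves no division by $f_G$ and is therefore defined on all of $[0,H]$; under the standing regularity on $f_G$ it is moreover continuous (the flow integrals are Lipschitz in $v$). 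Hence $\Gamma_G(v)=-\int_0^v\psi_G(x)\,dx$ is well defined, continuous, vanishes at $v=0$, and satisfies $\Gamma_G'(v)=-\psi_G(v)$.

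Next I would record the properties of the least concave upper bound $\hat\Gamma_G$. By definition $\hat\Gamma_G\ge\Gamma_G$, and the concave envelope of a continuous function on a compact interval is itself continuous. For differentiability, decompose $[0,H]$ into the closed \emph{contact set} $K=\{v:\hat\Gamma_G(v)=\Gamma_G(v)\}$ and its complement, which is a countable disjoint union of open \emph{ironed intervals}. On each ironed interval $\hat\Gamma_G$ must be affine (otherwise it could be lowered there while staying a concave upper bound), so it is differentiable there; on the interior of $K$ it coincides with the differentiable $\Gamma_G$; and at an endpoint $a$ of an ironed interval $(a,b)$ the right derivative of $\hat\Gamma_G$ is the slope $m$ of the affine piece, while the left derivative is $\Gamma_G'(a)$, and the two facts ``$\hat\Gamma_G$ concave'' (giving $\Gamma_G'(a)\ge m$) and ``$\hat\Gamma_G\ge\Gamma_G$ with equality at $a$'' (giving $m\ge\Gamma_G'(a)$, since $\hat\Gamma_G-\Gamma_G$ is nonnegative and zero at $a$) force them to agree. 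This tangency argument yields $\hat\Gamma_G\in C^1[0,H]$.

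Then I would set $\lambda_G(v):=\hat\Gamma_G(v)-\Gamma_G(v)$. By the above it is nonnegative, continuous, and differentiable, with $\lambda_G(0)=0$; the same reasoning at the right endpoint, together with the convention used in the integration-by-parts step, gives $\lambda_G(H)=0$, so $\lambda_G$ is a legitimate dual variable. Updating $\Phi^{\lambda,\alpha}_G$ with this $\lambda_G$ and multiplying by $f_G$, and using $\lambda_G'(v)=\hat\Gamma_G'(v)-\Gamma_G'(v)=\hat\Gamma_G'(v)+\psi_G(v)$, gives
\[
f_G(v)\Phi^{\lambda,\alpha}_G(v)=\psi_G(v)-\lambda_G'(v)=-\hat\Gamma_G'(v).
\]
Since $\hat\Gamma_G$ is concave, $\hat\Gamma_G'$ is non-increasing, so $f_G(v)\Phi^{\lambda,\alpha}_G(v)=-\hat\Gamma_G'(v)$ is non-decreasing in $v$, which is exactly proper monotonicity. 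I would also note that no iteration is required: re-running the construction on $-\hat\Gamma_G'$ leaves it unchanged, because its negative antiderivative is already concave --- the value-space analogue of the fact that Myersonian ironing is a one-shot operation.

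The main obstacle is the $C^1$ claim for $\hat\Gamma_G$ in the second step: the tangency argument above handles isolated endpoints of ironed intervals, but one must also check points of $K$ at which ironed intervals accumulate, where continuity of $\Gamma_G'$ together with the tangency identities at the nearby endpoints forces $\hat\Gamma_G'$ to be continuous there as well. A secondary technicality is the regularity of $f_G$ needed to make $\psi_G$ (hence $\Gamma_G'$) continuous; absent it one simply works with an absolutely continuous $\lambda_G$ and derivatives defined almost everywhere, which is all the Lagrangian manipulation and the monotonicity conclusion actually use. Everything else --- nonnegativity of $\lambda_G$, the boundary values, and the algebraic identity $f_G\Phi^{\lambda,\alpha}_G=-\hat\Gamma_G'$ --- is routine.
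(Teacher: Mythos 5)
Your argument is correct and is essentially the standard construction behind this result: the paper itself states the theorem without proof, citing \citetalias{FGKK}, whose ironing argument is exactly what you reconstruct (concave/convex envelope of the integrated virtual value, $\lambda_G=\hat\Gamma_G-\Gamma_G\ge 0$ with $\lambda_G(0)=\lambda_G(H)=0$, and the key identity that the updated $f_G\Phi^{\lambda,\alpha}_G=-\hat\Gamma_G'$ is non-decreasing by concavity). Your tangency argument for the $C^1$ property and your remarks on regularity are consistent with the level of smoothness the paper assumes throughout, so no gap remains.
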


\section{Three Illustrative Examples}
\label{sec:example}


In this section, we use three example instances to understand how the optimal mechanisms become increasingly complex, blowing up from deterministic prices to unbounded randomization. We begin with some intuition before diving into examples.

\paragraph{Intuition: Why is single-minded more complex?} 
Consider first a one-item setting that only sells 2-day shipping.  Myerson's seminal work proves that the optimal way to sell 2-day shipping in isolation is to post the monopoly reserve price for it. Consider next retroactively adding 1-day shipping into the mix, perhaps because some customers demand 1-day shipping and aren't satisfied with 2-day shipping. Perhaps the distribution of customers demanding 1-day shipping has a higher Myerson reserve than the initial 2-day shipping distribution, in which case it is consistent to set both optimal reserves. Note, however, that a customer who wants their package within 2 days would be content with 1-day shipping. So if instead the 1-day shipping distribution has a \emph{lower} Myerson reserve than 2-day shipping, posting the pair of Myerson reserves is no longer incentive compatible. This complexity arises in the FedEx problem~\cite{FGKK}, and requires considering the constraints imposed on 2-day shipping by 1-day shipping (or vice versa).

Now consider the simplest single-minded valuation setting.  The internet service provider (ISP) sells three options: wifi, wifi/cable, and wifi/phone, where wifi/cable and wifi/phone dominate wifi but are incomparable with each other.
If it happens to be that the distribution of consumers who are interested in wifi/cable or wifi/phone both have a higher Myerson reserve than the distribution of consumers who are interested in only wifi,\footnote{Recall that a one-dimensional distribution $D$ can stochastically dominate $D'$ yet have a lower Myerson reserve. For example, if $D$ is uniform over the set $\{1,10\}$, the Myerson reserve is $10$. If $D$ is uniform over the set $\{9,10\}$, the Myerson reserve is $9$.} then again the seller can simply offer all three options at their Myerson reserve. However, if this is not the case, further optimization must be done. Importantly, in contrast to the FedEx setting, there's a circular dependency involving these three options which doesn't arise in the totally-ordered case (see examples for further detail). In this way, the IC constraints that govern the mechanism are much more complex in the single-minded setting than in the FedEx setting, and are the reason both for developing much richer techniques and for the much higher degree of randomization that is seen in our results.\\
\\
\indent Now, we explain what the optimal mechanism looks like for (1) the minimal partially-ordered (single-minded) instance under DMR, (2) the minimal totally-ordered (FedEx) instance without DMR, and (3) the minimal partially-ordered instance without DMR.

\paragraph{Three Partially-Ordered Items under DMR.} We begin with the special case where the marginal distributions for each item satisfy DMR.  Recall that this implies that the marginal revenue curves for each item are concave, and thus do not require ironing. We show how to derive the optimal item pricing (but a proof that this is indeed optimal is deferred to 
\Cref{app:DMR} as part of the general DMR case).  Our instance is again that where $C$ is the worst item (e.g. wifi) and $A$ and $B$ are incomparable (e.g. wifi/cable and wifi/phone).

Let's start by considering what price we would set for item $A$ if we had already set price $p_C$ for item $C$. (Note that whatever price we set for item $B$ has no effect, as $A$ and $B$ are incomparable.) Observe that our revenue from setting any price $p_A$ is just $p_A\cdot [1-F_A(p_A)]$, so ideally we would just set price $r_A:= \arg\max_p \{p \cdot [1-F_A(p)]\}$. If $r_A \geq p_C$, this doesn't violate any IC constraints. Indeed, consumers with interest $C$ will prefer to pay $p_C \leq r_A$ to get item $C$ rather than item $A$. If $r_A < p_C$, however, setting price $r_A$ will violate IC, as now consumers with interest $C$ would strictly prefer to report interest in item $A$ instead. This constrains us to set a price for $A$ that is at least $p_C$.  Observe that, because $R_A(\cdot)$ is concave, the revenue-maximizing price to set that is at least $p_C$ (which is $>r_A$)  is $p_A:= p_C$. Hence, we can define the revenue curve $\bar{R}_A(\cdot)$ to describe the revenue we can get from selling item $A$ as a function of $p_C$: $$\bar{R}_A(p_C) = \begin{cases} R_A(r_A) & p_C \leq r_A \\ R_A(p_C) & p_C > r_A \end{cases}.$$  

The same definition holds for $\bar{R}_B(\cdot)$. Now, we can find the price to set for item $C$ that optimizes the impact on all three items by simply finding the $p$ maximizing $R_{ABC}(p):= R_C(p) + \bar{R}_A(p) + \bar{R}_B(p)$ (depicted in Figure~\ref{fig:dmrcurves}). Picking $p_C$ as such, and then setting $p_A:= \max\{r_A, p_C\}$, $p_B:= \max\{r_B, p_C\}$ is the optimal pricing. The (challenging) remaining step is to prove that in fact this is optimal even among randomized mechanisms. The duality theory previously hinted at is key in this step, but we postpone these details for now. Importantly, note that this claim requires the DMR assumption (so proving it will certainly be technically involved)---without it, there might be a better randomized mechanism.

\vspace{-.5cm}

\begin{figure}[h!]
\centering
\includegraphics[scale=.28]{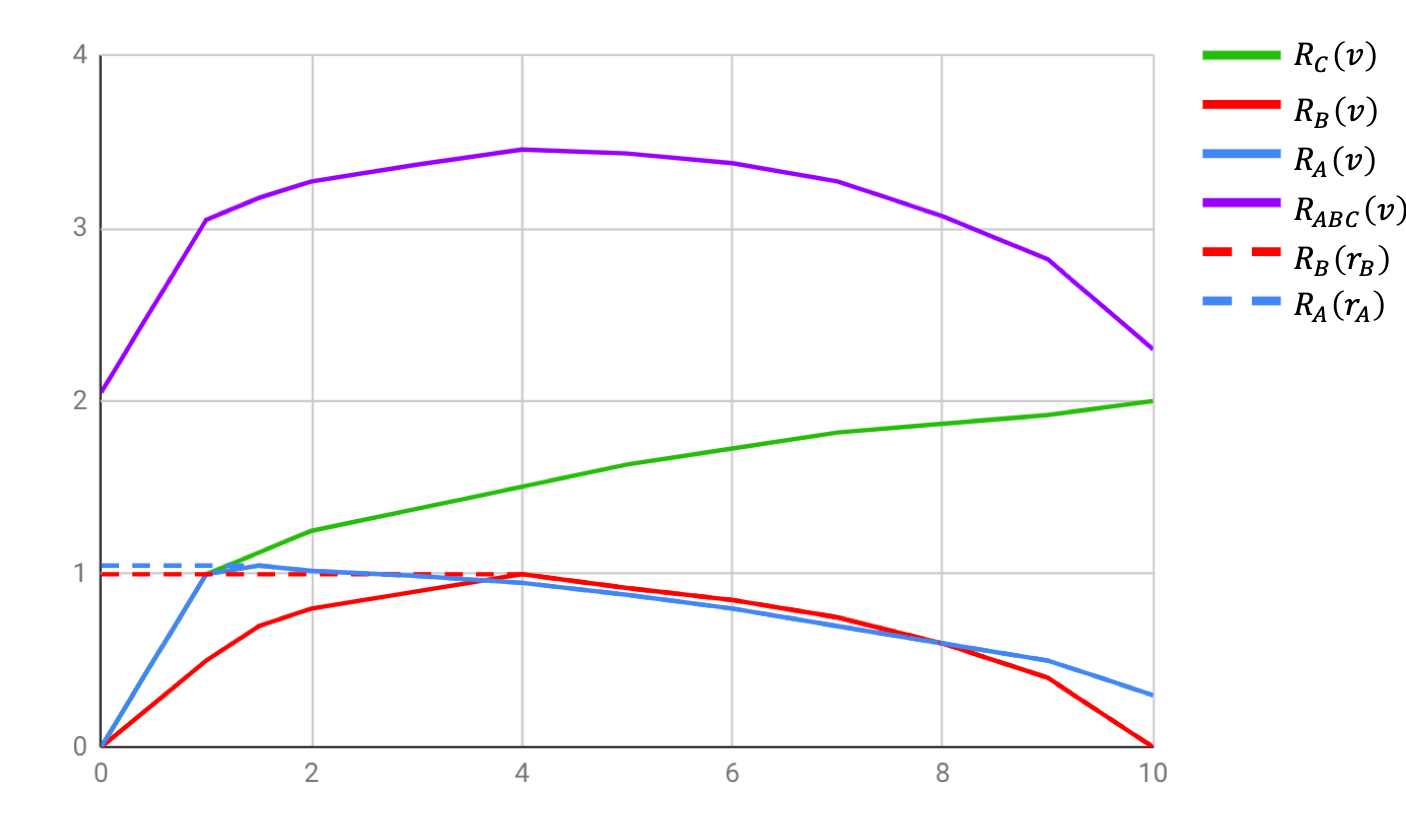} 
\caption{The construction of $R_{ABC}$ with $R_A, \bar{R}_A, R_B, \bar{R}_B,$ and $R_C$ illustrated as well.}
\label{fig:dmrcurves}
\end{figure}

\vspace{-.5cm}

\paragraph{Two Items without DMR (FedEx).} In this example, there are only two items, $A$ and $C$ with $A \succ C$. In this case, we'll think about first setting the price for $A$, and understanding how it constrains our choices for $C$. If we set price $p_A$ for item $A$, then we are constrained to give every type $(v,C)$ interested in item $C$ utility at least $v-p_A$. Again, if $r_C \leq p_A$, we should just set price $r_C$ on item $C$. However, if $r_C > p_A$, \emph{without the DMR assumption}, it's unclear what the best price to set should be. Indeed, it could be that some price $p_C \ll p_A$ generates more revenue than $p_A$ as $R_C(\cdot)$ is not necessarily concave. Note, however, that the ironed revenue curve $\hat{R}_C(\cdot)$ \emph{is} concave. So $\arg\max_{p_C \leq p_A}\{\hat{R}_C(p_C)\} = \min\{r_C, p_A\}$. It's unclear exactly what to make of this, but one hope (that turns out to be correct), is that the optimal scheme for item $C$, conditioned on $p_A$, is to set \emph{expected price} $p_C := \min \{r_C, p_A\}$ via the allocation rule defined as in Definition~\ref{def:ironprelim}. It is not obvious that such an allocation rule satisfies IC, but straight-forward calculations confirm that indeed it does. Similarly to the previous example, we can now define:
$$\bar{R}_C(p_A) = \begin{cases} \hat{R}_C(p_A) & p_A < r_C \\ R_C(r_C) & p_A \geq r_C \end{cases} \quad\quad \text{and} \quad\quad R_{AC}(p_A) = R_A(p_A) + \bar{R}_C(p_A).$$ 
This construction is depicted in Figure~\ref{fig:fedexcurves}. Figure~\ref{fig:splitting} gives some intuition as to why it is indeed incentive compatible to set the proposed allocation rule for item $C$ (but the goal of this section is not to provide complete proofs). It is now clear that, among all options which set a deterministic price for item $A$, and implement an expected price on the ironed revenue curve for item $C$, the above procedure is optimal. What is not clear is why this procedure is optimal over all possible menus for item $C$, or even why a randomized menu for item $A$ can't perform better. Indeed, the same duality theory referenced previously takes care of this.

This example perhaps also gives intuition for the menu complexity upper bound of $2^{m}-1$ for FedEx.  Repeating this process for another totally-ordered item, each option offered to buyers with interest $C$ could be ``split'' into at most two new options to be offered to buyers with interest $D \prec C$.

\begin{figure}[h!]
\begin{minipage}[t]{0.4\textwidth}
\centering
\includegraphics[width=.9\linewidth]{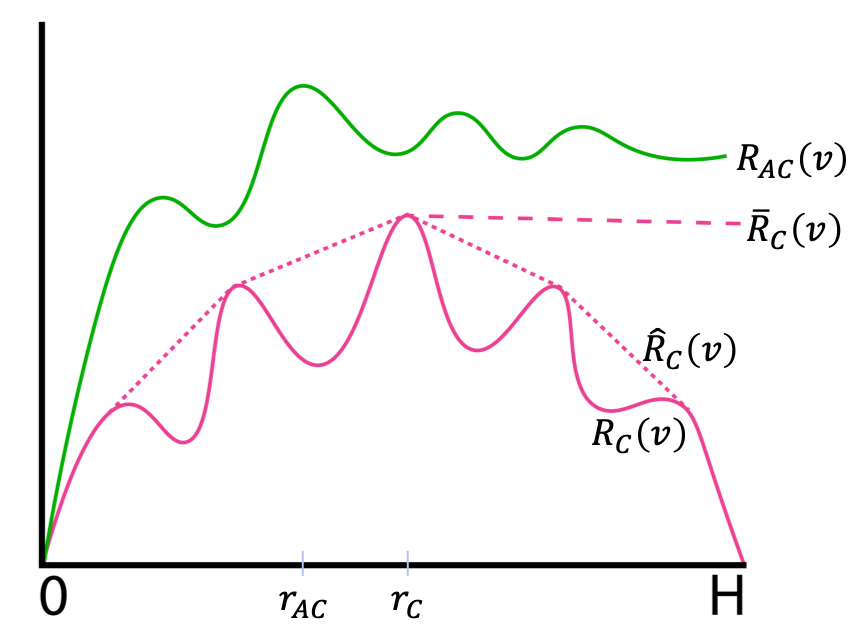}
\caption{A worse-to-better item revenue curve for the FedEx setting that determines the optimal mechanism even without DMR.}
\label{fig:fedexcurves}
\end{minipage}
\hfill
\begin{minipage}[t]{0.48\textwidth}
\centering
\includegraphics[scale=.42]{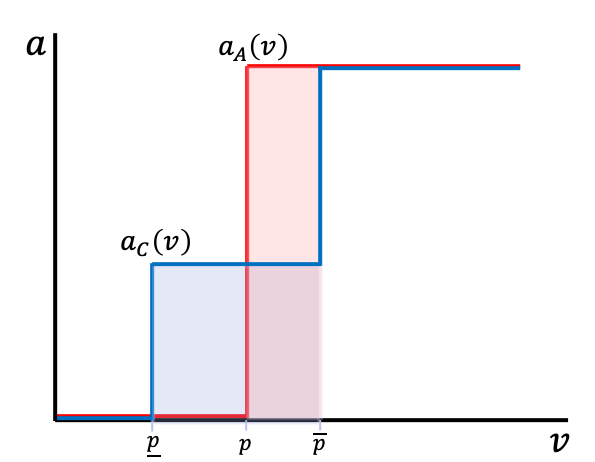}  
\caption{Utility for items $A$ and $C$ are equal for $v \leq \underline{p}$ and $v \geq \underline{p}$, but for $v \in (\underline{p}, \overline{p})$, the randomized option provides more utility.}
\label{fig:splitting}
\end{minipage}
\end{figure}

\vspace{-.3cm}

\paragraph{Three Partially-Ordered Items without DMR.} In our first example, we reasoned about how our decision for item $C$ constrains which prices to set for items $A$ and $B$. In our second example, we reasoned about how our decision for item $A$ constrains prices to set for item $C$. We presented the opposite direction (1) to present both types of arguments and (2) because this direction is necessary without the DMR assumption. For partially-ordered items, however, we really can only reason about how decisions for item $C$ constrain prices for $A$ and $B$. The reason is that in order to know how $p_A$ constrains our options for item $C$, we also need to know $p_B$. Indeed, only $\min\{p_A, p_B\}$ matters for constraining $C$. So we would need to know $p_B$ to know whether a proposed $p_A$ is imposing a new constraint or not.  This results in an impasse for this approach: this partial order requires us to reason about $C$'s price first, but without DMR, we must reason about $A$ and $B$ first.  However, this is only intuition as to why this setting becomes more complicated.  In \Cref{sec:highlevelLB}, we explain why it is that the IC constraints can cause the randomization to get so unwieldy, and \Cref{subsec:3nodmrex} cements this with an example.

Note, however, that we can still reason as we previously did about the optimal item pricing. If, as in the first example, we define $\bar{R}_A(p_C)$ to be the revenue from selling item $A$ at the optimal price that exceeds $p_C$, and $\bar{R}_B(p_C)$ similarly for item $B$, then $R_{ABC}(p_C):=R_C(p_C) + \bar{R}_B(p_C) + \bar{R}_A(p_C)$ accurately defines the revenue we get from all three items by setting price $p_C$ on item $C$, and setting the optimal prices for $A$ and $B$ conditioned on this.

\section{An Exact Characterization Under the Assumption of DMR} \label{app:DMR}

Recall from Subsection~\ref{subsec:rev-iron} that when the distributions satisfy DMR, $\lambda_G(v) = 0$ for all $(v,G)$. Our main result in this section is the following:

\begin{theorem} \label{thm:dmr}
Consider any partially-ordered preferences for items $\G, \succ$.  If the marginal distribution for each item satisfies DMR, the optimal mechanism is deterministic.
\end{theorem}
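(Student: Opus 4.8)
The plan is to produce an explicit deterministic posted-price mechanism together with a dual solution $(\lambda,\alpha)$ satisfying complementary slackness with it, and then invoke strong duality (which holds here as in~\citet{FGKK}) to conclude that this deterministic mechanism is optimal. To build the candidate, I would generalize the recursive pricing procedure used for the three-item DMR instance in~\Cref{sec:example} (and for FedEx) to an arbitrary minimal DAG: process the items starting from the sink nodes (the most-preferred items) and proceeding toward the sources. When item $G$ is processed, all items strictly above it have already been priced, and---because each marginal is DMR, so every $R_{G'}$ is concave and, one checks, the associated constrained revenue curves $\bar R_{G'}$ are concave as well (their crossovers at the monopoly reserves are $C^1$ since $R'_{G'}(r_{G'})=0$)---we may choose a price $p_G$ maximizing the relevant aggregate revenue curve. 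This yields deterministic prices $\{p_G\}$ with $p_G \le p_{G'}$ for every $G' \in \children(G)$; the allocation that awards $G$ iff $v \ge p_G$ is then monotone and meets the interest-IC constraints, hence is a feasible primal.

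\textbf{The dual.} Since all marginals satisfy DMR, no ironing is needed, so I set $\lambda_G \equiv 0$ everywhere (consistent with the Proper Ironing theorem, which then creates no ironed intervals). With $\lambda\equiv 0$, Definition~\ref{def:myevval} and~\eqref{def:vvalgen} give $f_G(v)\Phi^{\lambda,\alpha}_G(v) = -R'_G(v) + \beta_G(v) - \gamma_G(v)$, where $\beta_G(v) = \sum_{G'\in\children(G)}\int_v^H \alpha_{G,G'}$ is the flow out of $G$ above $v$ and $\gamma_G(v)$ is the flow into $G$ above $v$. The deterministic allocation is a primal best response iff $f_G\Phi^{\lambda,\alpha}_G$ changes sign from $-$ to $+$ exactly at $p_G$ (conditions~(\ref{CS1})--(\ref{CS2})), and~(\ref{CS4}) forces each $\alpha_{G,G'}$ to be supported only where $u_G(v)=u_{G'}(v)$---i.e.\ on $\{v \le p_G\}$, together with any values where $p_G = p_{G'}$. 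The core of the proof is to exhibit a flow on the DAG meeting these requirements: an item whose price $p_G$ was pushed above its own reserve $r_G$ by a constraining successor must \emph{receive} enough flow to drive $-R'_G$ negative on $(r_G,p_G)$, while an item priced below its reserve must \emph{send} flow out on $(p_G,r_G)$, and all of these demands must be satisfied simultaneously along the DAG. Feasibility of such a flow reduces to inequalities stating that the aggregated constrained revenue curve has nonnegative derivative below its maximizer and nonpositive derivative above it---exactly the place where DMR (concavity) is used.

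\textbf{Verification, conclusion, and the main obstacle.} Given the flow, I would check that $f_G\Phi^{\lambda,\alpha}_G(\cdot)$ is properly monotone and that~(\ref{CS1})--(\ref{CS4}) all hold for the pair consisting of the deterministic primal and $(\lambda,\alpha)$; strong duality then certifies optimality of the deterministic mechanism, which proves the theorem. I expect the flow construction on a general DAG to be the main obstacle: unlike FedEx, where each item has a unique successor and flow travels along a line, here an item may be constrained by several incomparable successors and may receive flow from several predecessors, so there is genuine freedom in \emph{which} successor each unit of outgoing flow is routed to. Making these routing choices so that every item's net flow stays within the window dictated by $r_G$ and $p_G$, while honoring the support restriction coming from~(\ref{CS4}), is precisely the ``significantly more involved procedure'' for setting the dual variables; I would carry it out by defining the flows recursively in topological order and reducing feasibility at each step to the concavity inequalities above.
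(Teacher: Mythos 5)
Your proposal follows essentially the same route as the paper: a recursively-priced deterministic primal built from the sink nodes, $\lambda\equiv 0$ thanks to DMR, an $\alpha$-flow routed along the DAG toward the most-limiting reserves, and complementary slackness plus strong duality to certify optimality. The flow-routing step you correctly flag as the main obstacle is exactly what the paper's \texttt{Update} procedure and Lemma~\ref{lem:alphaconstruct} carry out (splitting flow among the currently-limiting successors and enlarging that set continuously as their reserves rise), so your plan is sound and matches the paper's proof in structure.
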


For a deterministic mechanism, we will set a take-it-or-leave-it price $p_G$ for each item $G$.

\subsection{Intuition}

It will turn out that the optimal mechanism is analogous to that in FedEx and will set prices as follows: 
\begin{itemize}
\item For items $G$ that are sink nodes in the DAG, set $p_G = r_G$.
\item Starting from the sink nodes and visiting nodes in reverse depth, we will define a least upper bound on each node's price based on the prices set for nodes that dominate it.  We define $\bar{p}_G = \min _{G' \in \N(G)} \bar{p}_{G'}$ to be the least upper bound on $G$'s price.  Then set a price of $p_G = \min\{ \bar{p}_G, r_G\}$ for $G$.
\end{itemize}

In our pricing algorithm, nodes $G$ are limited by the smallest $r_A$ for any $A$ that they have a directed path to.  From complementary slackness, every $r_A$ that a node $G$ has a path to is an upper bound on the price that can be set for $G$, so the smallest of these upper bounds is the most limiting. We define $\bar{p}_G$ to be this smallest upper bound, and we define $L_G$ to be the nodes from $\N(G)$ who are also constrained by this upper bound.  Thus, if we follow the sets $L_G$, we will find all of the limiting nodes with $r_A = \bar{p}_G$.

When we send flow out of $G$, we aim to send it along the paths to the nodes that limit $G$'s price the most. We do this recursively, sending from $G$ to the most limiting neighbor, and from there to that node's most limiting neighbor, splitting the flow equally if there are several limiting neighbors.  This raises the limiting reserve and never lowers it.  We update regularly to ensure that we are always sending flow to the now-limiting reserve, raising it, and thus relaxing the constraints on $G$.  This is almost exactly the construction: the only caveat is that we should never send flow out of an item $B$ at $v$ where $f_B(v)\Phi^{\lambda, \alpha}_B(v) > 0$.  If we send into a $B$ along the path where this is the case, we instead send flow out at $r_B < v$.

\subsection{Formal Pricing Algorithm}

Formally, we set the dual variables according to the following algorithm:
\vspace{.5cm}

\begin{algorithmic}
\STATE \textbf{Dual variable construction:}
\STATE Base case: For sink nodes $A$, there is nowhere to send flow. Set $\bar{p}_A = r_A$.
\FOR{all nodes $A$ starting from the sink nodes and in increasing reverse depth \footnote{i.e. \# edges from sink nodes}}
	\STATE $\bar{p}_A = \min _{B \in \N(A)} \bar{p}_B$ 

	\STATE For all $v$ from $r_A$ down to $0$, determine the minimal amount of flow out $\sigma_A$ such that $\varphi_A(v) = 0$.

	\FOR{$v$ from $0$ to $r_A$}
		\STATE Update($A, v, \sigma_A(v)$)
	\ENDFOR
\ENDFOR
\end{algorithmic}
\vspace{.5cm}

\begin{algorithmic}
\STATE \textbf{Update($A, v, \gamma)$:}
\STATE Let $L_A := \{\argmin _{B \in \N(A)} \bar{p}_B\}$.  
\FOR{all $B \in L_A$}
	\STATE Send $\alpha_{A,B}(v) = \frac{1}{|L_A|} \gamma$.
	\STATE Update($B, \min\{v, r_B\}, \gamma$).
\ENDFOR
\end{algorithmic}

\vspace{.5cm}

The key idea is that the price of a node $G$ is limited by the smallest $r_A$ where $A$ is some item better than $G$ (i.e. there is a path from $G$ to $A$ in the DAG). As we send flow along the path to $A$, we raise $r_A$ and it becomes less limiting.  Let $S_G$ be the set of the items that limit $G$ the most, which are precisely the items $A$ such that $r_A = \bar{p}_G$. Since we are in the continuous setting, sending flow is a continuous process. This means that the most limiting item never discretely jumps up higher and becomes no longer limiting.  Instead, all limiting items stay in the set $S_G$ and this set grows as the upper bounds raise and become less limiting.

Let $L_G \subseteq \children(G)$ to be the items such that, for all $B \in L_G$, there exists $v$ such that $\alpha_{G,B}(v) > 0$.  What this means is that $\bar{p}_G = \bar{p}_B$, and $B$ is on the path (if not the end of the path) from $G$ to a limiting item $A \in S_G$.  We will use the variable $\z$ to keep track of the updated $\bar{p}_G$.  If $A \in S_G$, then $f_A(\z)\Phi^{\lambda, \alpha}_A(\z)= 0$, and if $B$ is on a path to some limiting $A$, then $f_B(\z)\Phi^{\lambda, \alpha}_B(\z)\leq 0$. In every step we decrease the amount of flow to send and the algorithm will terminate when there is no flow left to send. Throughout this process the point $\z$ and the set $S_G$ both only increase.

First, we set the flow out of $G$:
$$\sum _{A \in \children(G)} \alpha_{G,A}(v) = \begin{cases}0 & v > \bar{r}_G \\ -\hat{R}''_G(v)/f_G(v) & v \leq \bar{r}_G. \end{cases}$$

\begin{lemma} \label{lem:alphaconstruct}
For every $G$, we can always send $\sigma_G$ out of $G$ distributed among $\children(G)$ such that 
\begin{enumerate}
\item \label{Sgrows} If $\alpha_{G,B}(x) > 0$ for any $x$, then $B \in L_G$.
\item \label{yesS} If $B \in L_G$, then $f_B(\z)\Phi^{\lambda, \alpha}_B(\z) = 0$.
\item \label{nonotS} If $B \in \children(G) \smallsetminus L_G$, then $\z < \overline{r}_B$ and thus $f_B(\z)\Phi^{\lambda, \alpha}_B(\z) \leq  0$.
\end{enumerate}
\end{lemma}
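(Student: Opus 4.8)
The plan is to prove the lemma by induction on the reverse depth of $G$ in the DAG, following the processing order of the algorithm, so that by the time we reach $G$ every strict descendant of $G$ has already been handled. The base case is a sink node $G$: here $\children(G)=\emptyset$ and conditions 1--3 hold vacuously, with $\bar{p}_G=r_G$. For the inductive step I would analyze the recursive procedure called on $G$ with the emitted flow $\sigma_G(v)$ as $v$ sweeps from $0$ up to $r_G$, tracking three auxiliary objects: $\z$, the ``updated'' value of $\bar{p}_G$, initialized to $\min_{B\in\children(G)}\bar{p}_B$; the set $S_G$ of best-items reachable from $G$ whose current virtual value crosses zero at exactly $\z$; and the set $L_G\subseteq\children(G)$ of children that lie on a binding path from $G$ to some item of $S_G$. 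The inductive hypothesis, applied to each child $B$, supplies sign-monotonicity of $f_B\Phi^{\lambda,\alpha}_B$ and the location of its (unique) zero crossing, which is what lets us initialize $\z$, $S_G$, and $L_G$ correctly.

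The crux is to show that the flow $\sigma_G$ can be pushed all the way through while maintaining the invariants ``$f_A(\z)\Phi^{\lambda,\alpha}_A(\z)=0$ for all $A\in S_G$'' and ``$f_B(\z)\Phi^{\lambda,\alpha}_B(\z)=0$ for all $B\in L_G$ (and for every item on a binding path)'', with $\z$ and $S_G$ only increasing. First, a continuity/monotonicity step: sending an infinitesimal amount of flow out of $G$ at a value $v\le\z$ and forwarding it along the binding paths (splitting equally at branch points) into the items of $S_G$ strictly decreases $f_A(\z)\Phi^{\lambda,\alpha}_A(\z)$ for each such $A$, hence pushes $A$'s zero crossing strictly up; since everything is continuous in the amount of flow pushed, $\z$ rises continuously with no discrete jumps, and $S_G$ and $L_G$ grow continuously. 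A new item joins $S_G$ exactly when $\z$ rises to meet its current zero crossing, which is precisely the moment its invariant becomes true; once an item is in $S_G$ or $L_G$ we keep its virtual value at $0$ at the current $\z$ by continuing to feed it flow. Second, flow conservation: the total flow forwarded to $S_G$ equals the $\sigma_G(v)$ emitted at $G$ (equal splitting preserves the total), and under DMR the revenue curve is concave, so $\sigma_G(v)=-R''_G(v)/f_G(v)\ge 0$ is well defined and non-negative; hence the process is feasible and terminates exactly when $\sigma_G$ runs out or $\z$ reaches $r_G$, whichever comes first. Third, the ``caveat'': when flow passes through an intermediate item $B$ on a binding path it is emitted from $B$ at $\min\{v,r_B\}$ rather than at $v$, so $f_B\Phi^{\lambda,\alpha}_B$ is never raised at a value where it is already $\ge 0$; this preserves sign-monotonicity of $\Phi^{\lambda,\alpha}_B$ and keeps its zero crossing at $\z$.

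Conditions 1 and 2 of the lemma then fall out of the invariants: flow leaves $G$ only toward children in $L_G$ (condition 1), and every $B\in L_G$ finishes with $f_B(\z)\Phi^{\lambda,\alpha}_B(\z)=0$ (condition 2). For condition 3, if $B\in\children(G)\setminus L_G$ then $\bar{p}_B>\z$ throughout (being in $L_G$ is exactly having $\bar{p}_B$ equal to the current $\z$, and $\z$ never exceeds the final updated value $\min_{B'}\bar{p}_{B'}$), so by the inductive hypothesis the zero crossing of $f_B\Phi^{\lambda,\alpha}_B$ sits at $\min\{\bar{p}_B,r_B\}>\z$; thus $\z<\overline{r}_B$ and $f_B(\z)\Phi^{\lambda,\alpha}_B(\z)\le 0$, as required.

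The step I expect to be the main obstacle is the joint bookkeeping in the crux: rigorously following the flow as $\z$ rises continuously with several binding paths simultaneously active (including the branch points where flow is divided), proving that the total amount of flow needed to keep every item of $S_G$ and every intermediate item pinned at virtual value $0$ at the common cutoff $\z$ is exactly the $\sigma_G$ emitted at $G$, and verifying that the $\min\{v,r_B\}$ redirection at intermediate nodes never breaks sign-monotonicity anywhere. The remaining pieces---the base case, the derivation of conditions 1--3 from the invariants, and the non-negativity of $\sigma_G$ under DMR---are routine.
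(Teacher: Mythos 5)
Your proposal is correct and follows essentially the same route as the paper's proof: both track the rising cutoff $\z$ together with the sets $S_G$ and $L_G$, argue that sending flow along the binding paths keeps every item of $S_G$ pinned at virtual value zero while $\z$ and $S_G$ only grow, and read off conditions 1--3 from these invariants. The only difference is presentational --- the paper discretizes your continuous sweep into $\vareps$-increments with a two-case split on whether the next upper bound $\overline{p}_Z$ is reached, whereas you argue continuity of the process directly.
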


\begin{proof}
Suppose we have $\sigma_G(v)$ flow to send at $v$.  Let $Z = \argmin _{B \in \children(G) \smallsetminus L_G} \overline{p}_{B}$ be the next possible upper bound to hit.

Let $\vareps$ be such that  by sending $\sigma_G$ flow along paths to all items in $S_G$ with  correct proportions, we will maintain $S_G$ and raise $\z$ by $\vareps$.  That is, 
$$\sum _{A \in S_G} f_A(\z + \vareps)\Phi^{\lambda, \alpha}_A(\z+\vareps)= \sigma.$$  
If $\z + \vareps < \overline{p}_{Z}$, we can send this flow without growing $S_G$.  Let $\P(G,A)$ denote the edges forming every path from $G$ to $A$.  For every $(C,D) \in \P(G,A)$ for some $A \in S_G$, we set
$$\alpha_{C,D}(v) = \sum _{A \in S_G: (C,D) \in \P(G,A)} f_A(\z + \vareps) \Phi^{\lambda,\alpha}_A(\z + \vareps) \quad \forall A \in L_G.$$
This will ensure that after this update, $f_A(\z + \vareps)\Phi^{\lambda, \alpha}_A(\z + \vareps) = 0$ for all $A \in S_G$.  Update $\z \leftarrow \z + \vareps$.  Note that (\ref{yesS}) holds by construction, and (\ref{nonotS}) holds since $\z < \overline{p}_{Z} < \overline{p}_B$ for all $B \in \children(G) \smallsetminus L_G$.

Otherwise, suppose $\z + \vareps \geq \overline{p}_{Z}$ and $v \geq \overline{p}_{Z}$.  Then we  instead choose $\vareps = \overline{p}_{Z} - \z$ and make the same update described above, add $Z$ to $L_G$ and add the item $Y$ that is limiting $Z$, that is, $Y$ such that $\overline{p}_Z = R_Y$, to $S_G$.  Note that we have sent positive flow, but the flow sent is  $< \sigma$.  After the update, we will have $\z \leftarrow \z + \vareps =  \overline{p}_{Z}$ and $f_A(\z)\Phi^{\lambda, \alpha}_A(\z) = 0$ for all $A \in S_G$, including $Y$.  Then again (\ref{yesS}) holds, and (\ref{nonotS}) holds since $\z = \overline{p}_{Z} < \overline{p}_B$ for all $B \in \children(G) \smallsetminus L_G$.

Finally, (\ref{Sgrows}) holds in both cases as we only send flow to elements of $S_G$ and $S_G$ is non-decreasing.
\end{proof}

\begin{lemma} \label{lem:noironing1} For every $v$ and $G$, our choice of $\alpha_{G,A}(w)$ for all $w \in [0,H],\, A \in \N(G)$ maintains $\lambda_{G}(v) = 0$ for all $v$. \end{lemma}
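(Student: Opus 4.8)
The plan is to reduce the lemma to a single monotonicity statement about virtual values and then read that statement off from DMR together with the structure of the flow produced by the dual-variable construction.

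\emph{Reduction.} By the Proper Ironing theorem of~\citet{FGKK} recalled in \Cref{subsec:regvars}, once all the $\alpha$ variables are fixed the canonical choice of $\lambda_G$ that restores proper monotonicity is $\lambda_G=\hat\Gamma_G-\Gamma_G$, where $\Gamma_G(v):=-\int_0^v f_G(x)\,\Phi^{\lambda,\alpha}_G(x)\,dx$ is formed with $\lambda\equiv 0$ and $\hat\Gamma_G$ is its least concave upper bound. Hence $\lambda_G\equiv 0$ exactly when $\Gamma_G$ is already concave, i.e.\ exactly when the generalized virtual value $v\mapsto f_G(v)\,\Phi^{0,\alpha}_G(v)$ of \Cref{subsec:lagrangian} (evaluated at $\lambda\equiv 0$) is monotone non-decreasing on $[0,H]$. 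So it suffices to prove this monotonicity for every item $G$.

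\emph{Verification.} Using $f_G\varphi_G=-R'_G$ and differentiating the closed form for $f_G\,\Phi^{0,\alpha}_G$,
\[
\frac{d}{dv}\Big[f_G(v)\,\Phi^{0,\alpha}_G(v)\Big]
= -\hat{R}''_G(v)\;+\;\sum_{G'\,:\,G\in\children(G')}\alpha_{G',G}(v)\;-\;\sum_{G'\in\children(G)}\alpha_{G,G'}(v),
\]
where under DMR $\hat{R}_G=R_G$, so the first term is $\ge 0$, the incoming-flow sum is $\ge 0$, and only the outgoing-flow sum can be negative. By the flow-out rule of \Cref{lem:alphaconstruct} the outgoing sum vanishes for $v>\bar{r}_G$, so the derivative is $\ge 0$ there; in particular this already handles every sink node, which sends out no flow at all. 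For $v\le\bar{r}_G$ at a non-sink $G$, the flow sent out of $G$ is by construction the \emph{minimal} amount needed to pull $\Phi_G(v)$ up to $0$, which equals $-R''_G(v)$ plus the incoming-flow rate at $v$ — a \emph{nonnegative} quantity precisely because DMR makes $-R''_G(v)\ge 0$. With this choice $f_G\,\Phi^{0,\alpha}_G\equiv 0$ on $[0,\bar{r}_G]$, and combining this with the continuity of $f_G\,\Phi^{0,\alpha}_G$ across $\bar{r}_G$ (point masses of $\alpha$ at successor reserves only create upward jumps, which preserve non-decreasingness) and with $f_G\,\Phi^{0,\alpha}_G\ge 0$ on $[\bar{r}_G,H]$ by definition of $\bar{r}_G$, the function is monotone non-decreasing on all of $[0,H]$. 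Hence $\lambda_G\equiv 0$.

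\emph{Main obstacle.} The real work is justifying the claim in the previous paragraph that the outgoing flow demanded of $G$ really is a legitimate nonnegative flow that does not overshoot. For an isolated source node this is immediate from $-R''_G\ge 0$; in general one must follow how \Cref{lem:alphaconstruct}'s recursive \textsc{Update} reroutes flow among $\children(G)$ and onward, and argue that the \emph{effective} revenue curve of $G$ — the curve whose negated derivative is $f_G\Phi_G$ after absorbing the incoming flow — is itself concave under DMR, so that the ironing budget it supplies is exactly what is needed and the same argument reapplies one level up the DAG. I would organize this as an induction on reverse depth, carrying the joint invariant that every processed node $G'$ (i) has $f_{G'}\Phi^{0,\alpha}_{G'}$ monotone non-decreasing, (ii) has received incoming flow only at values $\le r_{G'}$, so its own reserve is not pushed upward, and (iii) has all flow leaving it directed along paths to successors whose relevant virtual value is exactly $0$. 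Invariants (ii)--(iii) are exactly what \Cref{lem:alphaconstruct} already maintains, and the fact that under DMR $\hat{R}_H=R_H$ for \emph{every} $H$ — so that there is no ironing slack anywhere — is what lets invariant (i) propagate down the DAG.
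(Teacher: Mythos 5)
Your proof is correct and its core is the same observation the paper uses: the outgoing flow is calibrated so that $f_G\Phi^{\lambda,\alpha}_G$ is identically zero below $\bar{r}_G$ and, by DMR, nonnegative and non-decreasing above it, so the virtual value is already monotone and no ironing is ever triggered. The extra scaffolding you add --- the reduction through the Proper Ironing theorem and the induction on reverse depth justifying that the prescribed outgoing flow is a legitimate nonnegative flow that does not overshoot --- is material the paper delegates to Lemma~\ref{lem:alphaconstruct} and Lemma~\ref{lem:noironing2} rather than re-proving here; its own proof of this lemma is just the one-sentence remark that the flow out of $G$ is chosen exactly to bring all virtual values to $0$ below $\bar{r}_G$, so no non-monotonicities are caused.
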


\begin{proof} Since the flow out of $G$ is chosen exactly to bring all virtual values to 0 below $\bar{r}_G$, no non-monotonicities are caused.
\end{proof}

\begin{lemma} \label{lem:noironing2} For every $v$ and $G$, any choice of $\alpha_{A,G}(w)$ for all $w \in [0,H],\, A \in \parents(G)$ maintains $\lambda_{G}(v) = 0$ for all $v$. \end{lemma}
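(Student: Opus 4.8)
The plan is to reduce the statement to a monotonicity claim and then observe that the incoming-flow variables can only help. By the Proper Ironing result of~\citepalias{FGKK} recalled above, $\lambda_G(v)=0$ for all $v$ is equivalent to $f_G(v)\Phi_G^{\lambda,\alpha}(v)$ (evaluated with $\lambda_G\equiv 0$) being monotone non-decreasing in $v$. So I would fix an item $G$ together with an arbitrary non-negative assignment of the incoming variables $\{\alpha_{A,G}\}_{A\in\parents(G)}$, and establish this monotonicity.

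First I would write $f_G(v)\Phi_G^{\lambda,\alpha}(v)=\big[f_G(v)\varphi_G(v)+\sum_{D\in\children(G)}\int_v^H\alpha_{G,D}(x)\,dx\big]-\sum_{A\in\parents(G)}\int_v^H\alpha_{A,G}(x)\,dx$. The bracketed term is exactly the quantity that \Cref{lem:noironing1} shows is monotone non-decreasing: under DMR $f_G\varphi_G$ is monotone non-decreasing by definition, and the algorithm picks the outgoing flow $\alpha_{G,\cdot}$ precisely so that the bracket stays monotone (it is pushed to $0$ on $[0,\bar r_G]$ and left untouched above $\bar r_G$). For the remaining term, note that $v\mapsto-\sum_{A\in\parents(G)}\int_v^H\alpha_{A,G}(x)\,dx$ has (a.e.) derivative $\sum_{A\in\parents(G)}\alpha_{A,G}(v)\ge 0$ and, at any atom of some $\alpha_{A,G}$, jumps \emph{upward}; hence it is monotone non-decreasing for \emph{every} non-negative choice of the incoming variables. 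A sum of two monotone non-decreasing functions is monotone non-decreasing, so $f_G(v)\Phi_G^{\lambda,\alpha}(v)$ is monotone non-decreasing, no ironing of $G$ is triggered, and $\lambda_G(v)=0$.

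The one place that needs care — and the main obstacle — is that the Update recursion does not simply \emph{add} flow at $G$: flow entering $G$ at a value $w$ is re-emitted out of $G$ at value $\min\{w,r_G\}$, so the outgoing-flow term is itself a function of the incoming flow (for $w\le r_G$ the in- and out-contributions to $f_G\Phi_G$ cancel exactly, while for $w>r_G$ the net effect is a decrement on a sub-interval $[r_G,w)$ of the region where $f_G\varphi_G\ge 0$). This coupling is already absorbed into \Cref{lem:noironing1} — ``our choice of $\alpha_{G,A}$'' there includes the re-routed flow — but to close the argument cleanly I would verify, via \Cref{lem:alphaconstruct}(\ref{nonotS}) and the rule that flow passing through an item is emitted at that item's reserve rather than at a point of strictly positive virtual value, that every such re-emission occurs at a value $\le r_G$ and therefore cannot break the monotonicity of the bracket. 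Granting this, the two-term argument above applies verbatim and yields $\lambda_G(v)=0$ for all $v$ and all $G$.
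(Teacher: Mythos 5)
Your argument is correct and is essentially the paper's proof: the paper observes that flow $\alpha$ entering $G$ at $x$ decreases $f_G(v)\Phi^{\lambda,\alpha}_G(v)$ uniformly for $v\le x$ and leaves it unchanged for $v>x$, i.e.\ the incoming-flow term $-\sum_{A}\int_v^H\alpha_{A,G}(w)\,dw$ is a non-decreasing perturbation, so no non-monotonicity (and hence no ironing) is created. Your third paragraph about re-emitted flow concerns the \emph{outgoing} variables and the interaction with \Cref{lem:noironing1}, which the lemma as stated does not need; the two-term monotonicity argument alone suffices.
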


\begin{proof} Suppose we get flow $\alpha$ into $G$ at $x$.  Every value $v \leq x$ has $f_G(v)\Phi^{\lambda, \alpha}_{G}(v)$ decrease by $\alpha$ while this remains unchanged for $v > x$, causing no non-monotonicities.

\end{proof}

We are now ready to prove the main result of this section.

\begin{proof}[Proof of Theorem~\ref{thm:dmr}]
We claim the the following deterministic allocation rule always satisfies complementary slackness with the dual: set $p_G = \min\{r_G, r_A : A \in S_G\}$. 

From DMR and our setting of $\lambda$, we will have $\lambda_G(v) = 0$ for all $(v,G)$, automatically satisfying complementary slackness for these variables.  Further, even after sending $\alpha$ flow, $f_G(\cdot)\Phi^{\lambda, \alpha}_G(\cdot)$ will be properly monotone for all $G$ by Lemma~\ref{lem:noironing1} and Lemma~\ref{lem:noironing2}.  

First, we verify that the when we set a price, the virtual values are 0 at that price, so we have the freedom to do so. By Lemma~\ref{lem:alphaconstruct}, $f_A(\z)\Phi^{\lambda, \alpha}_A(\z) = 0$ for all $A \in S_G$.  Of course, by definition of $\bar{r}$, $f_A(\overline{r}_A)\Phi^{\lambda, \alpha}_A(\overline{r}_A) = 0$.  In addition, by definition of the flow out of $G$, $f_G(v)\Phi^{\lambda, \alpha}_G(v) = 0$ for all $v \leq \bar{r}_G$ so $f_G(\z)\Phi^{\lambda, \alpha}_G(\z) = 0$.  Then all of the prices posted are viable.  

It remains to choose a mechanism that satisfies complementary slackness with the $\alpha$ variables.  If $\alpha_{G,B}(v) > 0$ for some $v$ then we know that (1) $B \in L_G$ and (2) $v < \bar{r}_G$. By Lemma~\ref{lem:alphaconstruct}, the variable $\alpha_{G,B}(v) > 0$ for any $v$ if and only if $v \in L_G$, a monotone increasing set as $v$ increases.  In this case, then $\bar{p}_B = \bar{p}_G$ and both are set at this price, satisfying $u_G(v) = u_B(v)$ for all $v$ and automatically satisfying complementary slackness.
\end{proof}

\section{An Extension of FedEx: DAGs with Out-Degree At Most 1} \label{sec:fedexext}

In this section, we consider DAGs with out-degree at most 1.  That is, partial orders that are tree-like, where each item has at most one item that minimally dominates it.  In this case, we see that the FedEx solution applies.

\begin{theorem} \label{thm:fedexext}
Consider any partially-ordered preferences for items $\G, \succ$ such that for any $G$, there exists at most one $G''$ that minimally dominates $G$: that is, $G'' \succ G$ and there does not exist any $G'$ where $G'' \succ G' \succ G$.  Then a nearly identical construction to the FedEx Problem with a minor modification for partial orderings yields closed-form optimal dual variables and the optimal mechanism.
\end{theorem}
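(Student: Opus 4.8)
Since $|\children(G)|\le 1$ for every item, the minimal DAG is a disjoint union of in-trees: each non-maximal $G$ has a unique immediately-dominating item, call it $G^{+}$, and iterating $G\mapsto G^{+}$ reaches a $\succ$-maximal sink. In the primal LP of \Cref{subsec:primal} the objective is separable and the only coupling constraints, $\int_0^v a_G\ge\int_0^v a_{G^{+}}$, live inside a single tree; so it suffices to fix one tree $T$ with root (sink) $\rho$ and produce there a primal $a$ together with a closed-form dual $(\lambda,\alpha)$ satisfying complementary slackness (\ref{CS1})--(\ref{CS4}). Strong duality for this continuous LP was established in~\citetalias{FGKK}, so such a pair certifies both as optimal and yields the claimed closed form.

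The construction is the FedEx recursion read off a tree rather than a path. We process $T$ from the leaves upward: for a node $G$ set $\Psi_G(p):=R_G(p)+\sum_{G'\in\parents(G)}\bar R_{G'}(p)$, let $\hat\Psi_G$ be its least concave upper bound (ironing in value space, as in Definition~\ref{def:ironprelim}), let $\rho_G:=\argmax_p\Psi_G(p)=\argmax_p\hat\Psi_G(p)$, and set $\bar R_G(p):=\hat\Psi_G(\min\{\rho_G,p\})$, the best total revenue from $G$ and everything it dominates when $G$ is capped at expected price $p$ (for a leaf $\Psi_G=R_G$, so $\bar R_G(p)=\hat R_G(\min\{r_G,p\})$). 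For the root put $p_\rho:=\rho_\rho$, which is \emph{deterministic} because $\Psi_\rho$ is merely maximized, not ironed; then we propagate downward: a node $G$ whose successor imposes cap $p:=p_{G^{+}}$ is given effective price $q_G:=\min\{\rho_G,p\}$, implemented by a deterministic posted price if $q_G$ lies outside every ironed interval of $\Psi_G$, and otherwise by the two-point lottery over the endpoints of that interval described after Definition~\ref{def:ironprelim}. Dually, $\lambda_G$ is recovered from these ironings via the Proper Ironing theorem of~\citetalias{FGKK}, and the flow out of each non-sink $G$ is determined (as in~\citetalias{FGKK}) by $G$'s ironed revenue curve and has \emph{nowhere to go but the unique $G^{+}$} -- this forced routing is the only ``modification for partial orderings'' needed. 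The one structural feature absent from a chain is that a node $G^{+}$ can be the successor of several nodes, hence receives flow from several predecessors; but incoming flow only ever \emph{decreases} $\Phi_{G^{+}}$ and proper monotonicity is then restored by re-ironing, so absorbing a sum of in-flows behaves exactly like absorbing one, and, going the other way, $p_{G^{+}}$ simply serves as an independent cap for each predecessor's subtree. The genuinely missing ingredient -- a node with two \emph{incomparable} successors -- is precisely the one that breaks everything in Theorem~\ref{thm:unbounded}.

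Verifying complementary slackness is then routine modulo one step. (\ref{CS1})--(\ref{CS2}): each price $q_G$ sits at a point where the corresponding virtual value vanishes once the flow has been routed (an ironed-interval endpoint of $\Psi_G$, or $\rho_G$, or the cap), with allocation $0$ below it and $1$ above, so the item is allocated exactly where the virtual value is positive. (\ref{CS3}): on an ironed interval of $G$ the allocation is the constant middle value of the two-point lottery, so $a_G'=0$ there. (\ref{CS4}), that $\alpha_{G,G^{+}}(v)>0\Rightarrow u_G(v)=u_{G^{+}}(v)$, is the load-bearing step: flow is routed out of $G$ exactly on the range of values where the cap $p_{G^{+}}$ strictly constrains what $G$ would do unconstrained, and on that range the allocation curves of $G$ and $G^{+}$ are forced to coincide, hence so are their integrals. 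This is the FedEx ``splitting'' argument (cf.\ Figure~\ref{fig:splitting}); the main obstacle we expect is re-running it when $G^{+}$ is itself randomized, so that $u_{G^{+}}$ is piecewise linear rather than $(v-p)^{+}$: one must check that $G$'s lottery -- inherited through the cap $p_{G^{+}}$ and possibly refined by $G$'s own ironed intervals -- reproduces $u_{G^{+}}$ on the flow support while dominating it elsewhere, and that the resulting $\alpha_{G,G^{+}}$ stays nonnegative with proper monotonicity preserved. That chaining-of-lotteries bookkeeping, not any conceptual difficulty, is where the work lies; everything else is a transcription of~\citetalias{FGKK} (and~\citetalias{DW}), since out-degree at most one removes the only choice the dual ever faces.
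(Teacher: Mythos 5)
Your outline is essentially the paper's proof: \Cref{sec:fedexext} also transcribes \citepalias{FGKK}, defining $\Gamma_{\ge G}(v)=\Gamma_G(v)+\sum_{G'\in\parents(G)}\bar\Gamma_{\ge G'}(v)$ (your $\Psi_G$, up to sign), ironing via lower convex envelopes, and writing the same closed-form duals ($\lambda_G=\Gamma_{\ge G}-\hat\Gamma_{\ge G}$, flow out of each non-sink forced into its unique dominating item $D(G)$ with density $\hat\gamma'_{\ge G}$ below $r_{\ge G}$); your observation that several in-flows aggregate as harmlessly as one is exactly the paper's ``minor modification,'' and the CS verification is the FGKK one.

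The one step that would fail as you state it is the downward pass. You summarize the successor's influence on $G$ as a single expected-price cap $q_G=\min\{\rho_G,p_{G^{+}}\}$, implemented by at most one two-point lottery over an ironed interval of $\Psi_G$. That is correct only when $a_{G^{+}}$ is deterministic, i.e.\ for children of the root. In general $a_{G^{+}}$ is already a staircase with several jumps, and complementary slackness for $\alpha_{G,G^{+}}$ forces $\int_0^v a_G=\int_0^v a_{G^{+}}$ at essentially every $v$ below $r_{\ge G}$ outside $G$'s ironed intervals; no single-cap, at-most-three-level allocation can do this. The paper's construction (the inductive definition of the allocation curves, following FGKK) decomposes $a_{G^{+}}$ into its jumps $v_1<\dots<v_k$ with weights $\beta_j-\beta_{j-1}$ and splits each jump $v_j\le r_{\ge G}$ separately: kept as a posted price at $v_j$ if $\hat\Gamma_{\ge G}$ is unironed there, otherwise spread as a $\delta/(1-\delta)$ lottery over the endpoints of the ironed interval containing $v_j$, with $a_G$ the corresponding convex combination; the verification of the flow CS condition leans on precisely this jump-by-jump matching (utilities agree except inside ironed intervals of $\Gamma_{\ge G}$ and above the reserve, where the flow is zero). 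This is also why per-item menu complexity grows like $2^{\mathrm{depth}}$ rather than staying bounded by three. You do flag the randomized-successor case as remaining ``chaining-of-lotteries bookkeeping,'' but that recursion is the substantive content of the primal recovery, not a routine check; with it filled in, your argument coincides with the paper's.
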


We use the notation and methods of \citepalias{FGKK}.  The proof is almost identical, provided for completeness, and much of the following is duplicated from their paper, with a slight modification to allow for the DAG structure with out-degree at most 1.  The key difference is the change in definition of the $\Gamma_{\geq G}$ curves.

We recall the following definitions from their paper:

\begin{itemize}
\item Let $\gamma_G(v) := \varphi_G(v) f_G(v)$.  Recall that $\varphi_G(\cdot) = v- \frac{1-F_G(v)}{f_G(v)}$.  

\item Let $\Gamma_G(v) = \int _0 ^v \gamma_G(x) dx$.  As shown in \citepalias{FGKK}, this function
is the negative of the marginal revenue curve for item $G$.  Thus,
$\Gamma_G(0)=\Gamma_G(H) = 0$ and $\Gamma_G(v) \le 0$ for $v\in [0, H]$.

\item For any function $\Gamma$, define $\hat\Gamma(\cdot)$ to be the lower convex envelope \footnote{ The {\em lower convex envelope} of function $f(x)$ is the supremum over convex functions $g(\cdot)$ such
that $g(x) \le f(x)$ for all $x$.  Notice that the lower convex envelope of $\Gamma(\cdot)$ is the negative of the ironed revenue curve $\hat{R}(v)$.} of
$\Gamma(\cdot)$. We say that $\hat\Gamma(\cdot)$ is \emph{ironed} at $v$ if
$\hat\Gamma(v)\ne \Gamma(v).$

Since $\hat\Gamma(\cdot)$ is convex, it is continuously differentiable except at countably many points and its derivative is monotone (weakly) increasing.

\item Let $\hat\gamma(\cdot)$ be the derivative of  $\hat\Gamma(\cdot)$  and let $\gamma(\cdot)$ be the derivative of  $\Gamma(\cdot)$.
\end{itemize}

As shown in \citepalias{FGKK}, the following  facts are immediate from the definition of lower convex envelope:
\begin{itemize}
\item $\hat\Gamma(v)\le \Gamma(v)\quad \forall v$.
\item $\hat\Gamma(v_{\text{min}})= \Gamma(v_{\text{min}})\text{ where }v_{\text{min}} = \text{argmin}_v \Gamma(v).$  (This implies that there is no ironed interval containing $v_{\text{min}}$.)
\item $\hat\gamma(v)$ is an increasing function of $v$ and hence its derivative $\hat\gamma '(v)\ge 0$ is non-negative for all $v$.
\item If $\hat\Gamma(v)$ is ironed in the interval $[\ell, h]$ , then $\hat\gamma(v)$ is linear
and $\hat\gamma '(v)=0$ in $(\ell, h)$.
\end{itemize}


Now, we redefine the functions $\Gamma_{\geq i}$, which are used to set all of the FedEx dual variables, and can be interpreted as negative combined revenue curves for deadlines $i$ through $m$.  Instead, we redefine them for an item $G$ and all dominated items.  In a DAG, we let the set of all source nodes, that is, items that dominate no other items in the partial order, be the set $S$.  Similarly, we call the set of sinks, items that are dominated by no other items, as the set $T$.

Note that, by assumption, every item has out-degree at most 1.  Then the set of items that minimally dominate an item $G$, $N^{+}(G)$, is of size 0 or 1.  If it is of size 0, then $G \in T$: $G$ is a sink node.  That is:
\begin{observation}For all $G \not \in T$, $|N^{+}(G)| = 1$. \label{obs:outdeg1}\end{observation}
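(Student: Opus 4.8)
The plan is to obtain the equality $|N^{+}(G)| = 1$ by squeezing it between the upper bound supplied by the standing structural hypothesis of the section and a lower bound that follows purely from the definition of a sink. First I would invoke the assumption under which \Cref{thm:fedexext} operates --- that the minimal DAG has out-degree at most $1$ --- recalling that an out-edge $(G, G')$ of the DAG is precisely a witness that $G'$ minimally dominates $G$, i.e.\ that $G' \in N^{+}(G)$. Hence out-degree at most $1$ says exactly $|N^{+}(G)| \le 1$ for \emph{every} item $G \in \G$, sink or not. This is the only place the hypothesis on the DAG is used.

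Next I would establish the matching lower bound $|N^{+}(G)| \ge 1$ whenever $G \notin T$. By definition $T$ consists of exactly those items dominated by no other item, so $G \notin T$ means there exists some $G' \in \G$ with $G' \succ G$. Because $\G$ is finite, among all items dominating $G$ there is one, say $G''$, that is minimal with respect to $\succ$; by this minimality there is no $G'''$ with $G'' \succ G''' \succ G$, so $G''$ minimally dominates $G$ and therefore $G'' \in N^{+}(G)$, giving $|N^{+}(G)| \ge 1$. Combining the two bounds yields $|N^{+}(G)| = 1$, as claimed. The only step requiring even a moment's care is the passage from ``$G$ is dominated'' to ``$G$ has a \emph{minimal} dominator lying in $N^{+}(G)$,'' which is immediate from finiteness of $\G$; there is no genuine obstacle, since the observation is essentially a restatement of the out-degree-$\le 1$ hypothesis specialized to non-sink nodes.
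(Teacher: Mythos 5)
Your proof is correct and follows essentially the same route as the paper, which simply notes that the out-degree-at-most-$1$ hypothesis gives $|N^{+}(G)|\le 1$ and that $|N^{+}(G)|=0$ would force $G\in T$. Your only addition is to spell out, via finiteness of $\G$, why a dominated item must have a \emph{minimal} dominator, a step the paper leaves implicit.
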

For this reason, we define the following notation.
\begin{definition} For $G \not \in T$, let $D(G)$ refer to the single item that minimally dominates $G$: $G' \in N^+(G)$. \end{definition}

Now, for any source node $G \in S$, define $\Gamma_{\geq G} = \Gamma_G$.  We will define the curves $\Gamma_{\geq G}$ for $G \in \G \smallsetminus G$ inductively.  Let define $r_{\geq G} := \max \argmin \Gamma_{\geq G}(v)$.  Then define 
$$\bar{\Gamma}_{\geq G}(v) := \begin{cases} \hat\Gamma_{\geq G}(v) & v < r_{\geq G} \\ \Gamma_{\geq G}(r_{\geq G}) & v \geq r_{\geq G}. \end{cases}$$
Now, for all $G$ that are not source nodes, we can inductively define
$$\Gamma_{\geq G}(v) := \Gamma_G(v) + \sum _{G' \in N^{-}(G)} \bar{\Gamma}_{\geq G'}(v).$$
Note then that 
$$\bar{\gamma}_{\geq G} = \begin{cases} \hat\gamma_{\geq G} & v \leq r_{\geq G} \\ 0 & v > r_{\geq G} \end{cases} \quad\quad \text{and} \quad\quad \gamma_{\geq G} = \gamma_G + \sum_{G' \in N^{-}(G)} \bar{\gamma}_{\geq G'}(v).$$

\subsection{Primal, Dual, and Complementary Slackness}

We use the following primal and dual formulations with the noted complementary slackness conditions.  They are virtually identical to FedEx, modified for the DAG, and much of it is copied from \citepalias{FGKK}.

\subsection*{The Primal}
 Variables: $a_G(v)$, for all $G \in \G$, and all $v\in [0,H]$.

$$
     \begin{aligned}
      \text{Maximize }   \sum_{\G}  \int _0 ^{H} a_G(v) \gamma_G(v) dv \\
      &\\
            \text{Subject to}\quad\quad\quad\quad\quad\quad\quad\quad\quad\, &\\
       \int_0^v a_G(x) dx -\int_0^v a_{G'}(x) dx & \leq 0   \qquad \forall G \in \G\smallsetminus S, G' \in N^{-}(G) \quad\forall v \in [0,H] \qquad \mbox{\text{ (dual variables $\alpha_{G',G}(v)$)}} \\
          a_G(v) &\leq 1  \qquad \forall G \in\G \quad\forall v \in [0,H]\qquad \mbox{\text{ (dual variables $b_G(v)$)}} \\
          -a_G'(v) &\leq 0  \qquad \forall G\in\G \quad\forall v \in [0,H]\qquad \mbox{\text{ (dual variables $\lambda_G(v)$)}} \\
       a_G(v) &\geq 0  \qquad \forall G\in\G \quad\forall v\in [0,H].
     \end{aligned}
     $$
Note that $a'_G(v)$ denotes $\frac{d}{dv} a_G(v)$.

\noindent
\subsection*{The Dual}
 Variables: $b_G(v), \lambda_G(v)$, for all $G \in \G$, and all $v\in [0,H]$, $\alpha_{G',G}(x)$ for $G \in \G\smallsetminus S, G' \in N^{-}(G)$ and all $x\in [0,H]$.

\begin{equation*}
     \begin{aligned}
            \text{Minimize }   \int_0^H \sum _{G\in \G} b_G(v) dv \quad\quad\quad\ \\
      &\\
            \text{Subject to}\quad\quad\quad\quad\quad\quad\quad\quad\quad\quad\quad\quad\quad\quad\quad\,\,\, &\\
       b_G(v) + \lambda_G'(v) + \sum _{G' \in N^{-}(G)} \int_v^H \alpha_{G',G}(x) dx &\geq \gamma_G(v) \qquad \forall v\in [0,H], G \in S  \mbox{\text{ (primal var $a_G(v)$)}} \\
       b_G(v) + \lambda_G'(v) + \sum _{G' \in N^{-}(G)} \int_v^H \alpha_{G',G}(x) dx & \qquad\qquad\qquad \forall v\in [0,H], i \in \G \smallsetminus S, T \\ 
       -  \int_v^H
       \alpha_{G,D(G)}(x) dx &\geq \gamma_G(v) \hspace{3.8cm} \mbox{\text{ (primal var $a_G(v)$)}} \\
       b_G(v) + \lambda_G'(v) - \int_v^H \alpha_{G,D(G)}(x)dx &\geq \gamma_G(v) \qquad \forall v\in [0,H], G \in T \mbox{\text{ (primal var $a_G(v)$)}}\\
      \lambda_G(H)&=0 \qquad \forall G \in \G\\
       \alpha_{G',G}(v) &\geq 0  \qquad \forall v\in [0,H], G \in \G\smallsetminus S, G' \in N^{-}(G)\\
       b_G(v),\lambda_G(v) &\geq 0 \qquad \forall G \in \G \forall v\in [0,H].
     \end{aligned}
\end{equation*}
Note that $\lambda'_G(v)$ denotes $\frac{d}{dv} \lambda_G(v)$.

\subsection{Conditions for strong duality}
\label{sec:CS}
As long as there are feasible primal and dual solutions satisfying the following conditions, strong duality holds. Theorem 3 from \citepalias{FGKK} proves that these conditions are sufficient.
\begin{eqnarray}
 a_G(v) > 0  &\Rightarrow&  \lambda_G(v) \text{ continuous at }v \quad G \in \G \quad\quad\quad \label{FCS0}\\
  a_G(v) < 1 &\Rightarrow& b_G(v)=0 \qquad G\in\G \label{FCS1}\\
  a_G'(v) > 0 &\Rightarrow& \lambda_G(v)=0 \qquad G\in\G \label{FCS2}\\
  \int_0^v a_G(x)dx <\int_0^v a_{G'}(x) dx &\Rightarrow& \alpha_{G',G}(v)=0  \,\,G \in \G\setminus S, G' \in N^{-}(G)\label{FCS3}\\
    b_G(v) +\lambda_G'(v) + \sum _{G' \in N^{-}(G)} \int_v^H \alpha_{G',G}(x) dx \hspace{1.4cm}& \nonumber\\
    - \int_v^H \alpha_{G,D(G)}(x)dx  > \gamma_G(v) &\Rightarrow& a_G(v)=0 \qquad G \in \G\smallsetminus S,T\label{FCS4}\\
  b_G(v) +\lambda_G'(v) + \sum _{G' \in N^{-}(G)} \int_v^H\alpha_{G',G}(x)dx > \gamma_G(v) &\Rightarrow& a_G(v)=0 \qquad G \in T\label{FCS5}\\
    b_G(v) +\lambda_G'(v) - \int_v^H \alpha_{G,D(G)}(x)dx >\gamma_G(v) &\Rightarrow& a_G(v)=0 \qquad G \in S\label{FCS6}
\end{eqnarray}

We allow $a_G'(v)\in \R  \cup \{+\infty\}$.  It may have (countably many) discontinuities.  However, the proof of optimality in \citepalias{FGKK} handles this.

\subsection{Optimal Primal Variables}

We determine the allocation rules inductively, from sink nodes all the way to source nodes.  First, for sink nodes $G \in T$, set $$a_G(v) = \begin{cases} 0 & v < r_{\geq G} \\ 1 & v \geq r_{\geq G}. \end{cases}$$

Suppose that $a_{G'}$ has been defined for $G' = D(G)$,
with jumps at $v_1, \ldots, v_k$, and values
$0= \beta_0 < \beta_1 \le \beta_2 \ldots \le \beta_k=1$. That is,

$$ a_{G'}(v)=
\begin{cases}0 &\mbox{if }v <v_1, \\
\beta_j &v_j \le v < v_{j+1}\quad 1\le j < k\\
1 & v_k \le v.  \end{cases} $$
Thus, we can write
$$a_{G'}(v) = \sum_{j=1}^{ k}(\beta_j - \beta_{j-1})a_{G',j}(v)$$
where
$$a_{G',j}(v)=
\begin{cases}0 &\mbox{if }v <v_j  \\
1&v \ge v_j. \end{cases} $$
Next we define $a_G(v)$.

\begin{definition} \label{alloc curves}
Let $j^*$ be the largest $j$ such that $v_j \le r_{\ge G}$. For any $j \le j^*$,
consider two cases:
\begin{itemize}
\item $ \hat\Gamma_{\ge G}(v_j)= \Gamma_{\ge G}(v_j)$, i.e.
$\hat\Gamma_{\ge G}$ not ironed at $v_j$:
In this case,
define
$$a_{G,j}(v)=
\begin{cases}0 &\mbox{if }v <v_j  \\
1 & otherwise.  \end{cases}. $$
\item $ \hat\Gamma_{\ge G}(v_j)\ne \Gamma_{\ge G}(v_j)$:
In this case, let
\begin{itemize}
\item $\underline{v}_j:= \text{  the largest }v < v_j \text{ such that }
\hat\Gamma_{\ge G}(v)= \Gamma_{\ge G}(v)\text{ i.e., not ironed},$
and
\item $\overline{v}_j:= \text{  the smallest }v > v_j \text{ such that }
\hat\Gamma_{\ge G}(v)= \Gamma_{\ge G}(v)\text{ i.e., not ironed}$.
\end{itemize}
Let $0< \delta <1$ such that
$$v_j = \delta\underline{v}_j + (1- \delta)\overline{v}_j.$$
Then $\hat\Gamma_{\ge G}(\cdot)$ is linear between $\underline{v}_j$
and $\overline{v}_j$: $$\hat\Gamma_{\ge G}(v_j) = \delta\Gamma_{\ge G}(\underline{v}_j )+ (1- \delta)\Gamma_{\ge G}(\overline{v}_j).$$

Define
$$a_{G,j}(v)=
\begin{cases}0 &\mbox{if }v <\underline{v}_j  \\
\delta&\underline{v}_j \le v < \overline{v}_j\\
1 & otherwise.  \end{cases} $$
\end{itemize}
Finally, set $a_G(v)$ as follows:
\begin{equation}
\label{def:ai}
a_{G}(v)= \begin{cases}
\sum_{j=1}^{ j^*}(\beta_j - \beta_{j-1})a_{G,j}(v) &\quad\mbox{if } v < r_{\ge G}, \\
& \\
1& \quad v \ge r_{\ge G}.  \end{cases}
\end{equation}
\end{definition}

\noindent
\textbf {Remark:}
In order to continue the induction and define $a_{G''}(v)$ for $G = D(G'')$, we need to rewrite $a_G(v)$ in terms of
functions $a_{G,j}(v)$ that take only 0/1 values.   This is straightforward.

\subsection{Closed-Form Dual Variables and Proof of Optimality}

The following dual variables and proofs are again almost verbatim from \citepalias{FGKK} with very small modifications for the DAG structure.

\begin{align}
\lambda_G(v) &= \Gamma_{\geq G}(v) - \hat\Gamma_{\geq G}(v) \label{defLambda}\\
b_G(v) &= \begin{cases} 0 & v < r_{\geq G} \\ \hat\gamma_{\geq G}(v) &  v \geq r_{\geq G} \end{cases}\label{defbi} \\
\alpha_{G',G}(v) &= \begin{cases} \hat\gamma_{\geq G'}'(v) & v < r_{\geq G'} \\ 0 & v \geq r_{\geq G'} \label{defAlphai}\end{cases} & \text{for } G \in \G \smallsetminus S, G' \in N^{-}(G)
\end{align}

Taking the derivative of (\ref{defLambda}), and using the definition of $\Gamma$, we obtain:
\begin{align}
\gamma_G(v) - \lambda'_G(v) &= \hat\gamma_{\geq G}(v) - \sum _{G' \in N^{-}(G)} \bar{\gamma}_{\geq G'}(v) & \text{for } G \in \G \smallsetminus S \label{RewriteDual}\\
\gamma_G(v) - \lambda_G'(v) &= \hat\gamma_G(v) & \text{for } G \in S
\end{align}

Also,  using (\ref{defAlphai}) and the fact that $\hat\gamma_{\geq i+1}(r_{\geq i+1}) =0$, we get:
\begin{equation}
A_{G',G}(v):= \int_v^{H} \alpha_{G',G}(x)\,dx = \begin{cases} -\hat\gamma_{\geq G'}(v) & v < r_{\geq G'} \\ 0 & v \geq r_{\geq G'} \end{cases} = - \bar{\gamma}_{\geq G'}(v)
\label{Ai}
\end{equation}

\begin{claim} Conditions (\ref{FCS4})--(\ref{FCS6})  and dual feasibility: For all $G$ and $v$, $a_G(v) > 0 \implies$ the corresponding dual constraint is tight, and the dual constraints are always feasible. \end{claim}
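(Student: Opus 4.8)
The plan is to verify the claim by substituting the closed-form dual variables \eqref{defLambda}--\eqref{defAlphai} into each of the three families of dual constraints indexed by the primal variables $a_G(v)$ (one for $G$ a source, one for $G$ internal, one for $G$ a sink), simplifying, and observing that in each case the constraint becomes either an identity or an elementary consequence of convexity; both dual feasibility and the complementary-slackness implications \eqref{FCS4}--\eqref{FCS6} then fall out. Since this construction is a verbatim transcription of the FGKK construction with the sole change that the curves $\Gamma_{\ge G}$ now aggregate contributions over \emph{all} of $N^-(G)$ rather than over a single predecessor, the only genuinely new point is to check that the flow bookkeeping across the DAG still telescopes, and that is exactly where the out-degree-$\le 1$ hypothesis (Observation~\ref{obs:outdeg1}) is used.

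First I would record the two rewritings already isolated in the surrounding text: differentiating \eqref{defLambda} yields \eqref{RewriteDual}, i.e.\ $\gamma_G - \lambda_G' = \hat\gamma_{\ge G} - \sum_{G'\in N^-(G)} \bar\gamma_{\ge G'}$ for internal and sink $G$ and $\gamma_G - \lambda_G' = \hat\gamma_G$ for a source; integrating \eqref{defAlphai} yields \eqref{Ai}, i.e.\ $\int_v^H \alpha_{G',G}(x)\,dx = -\bar\gamma_{\ge G'}(v)$, where the item whose truncated curve appears is the \emph{tail} of the flow edge. Plugging these into the internal-node constraint, the incoming term $\sum_{G'\in N^-(G)}\int_v^H\alpha_{G',G}$ contributes $-\sum_{G'}\bar\gamma_{\ge G'}$, cancelling exactly the $+\sum_{G'}\bar\gamma_{\ge G'}$ that $\lambda_G'$ contributes via \eqref{RewriteDual}, while the single outgoing term $-\int_v^H\alpha_{G,D(G)}$ contributes $+\bar\gamma_{\ge G}$, so the whole constraint collapses to $b_G(v)+\bar\gamma_{\ge G}(v) \ge \hat\gamma_{\ge G}(v)$. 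Since $b_G = \max\{0,\hat\gamma_{\ge G}\}$ and $\bar\gamma_{\ge G} = \min\{0,\hat\gamma_{\ge G}\}$ — using that $\hat\gamma_{\ge G}$ is nondecreasing by convexity of $\hat\Gamma_{\ge G}$ and changes sign at the unironed point $r_{\ge G} = \max\argmin\Gamma_{\ge G}$ — we get $b_G+\bar\gamma_{\ge G} = \hat\gamma_{\ge G}$ exactly, so this constraint is an \emph{equality} for every $v$. The source constraint is the same computation with the sum over $N^-(G)$ empty, and likewise reduces to an identity; the sink constraint has no outgoing $\alpha$-term and reduces to $b_G(v) \ge \hat\gamma_{\ge G}(v)$, which for $v\ge r_{\ge G}$ is an equality and for $v<r_{\ge G}$ reads $0 \ge \hat\gamma_{\ge G}(v)$, true by the sign property just noted.

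With feasibility of these constraints in hand, \eqref{FCS4}--\eqref{FCS6} are immediate: for source and internal $G$ the dual constraint is tight at \emph{every} $v$, so there is nothing to check; for a sink $G$ it is tight precisely on $\{v\ge r_{\ge G}\}$, and the primal allocation assigned to a sink is the step function equal to $0$ for $v<r_{\ge G}$ and $1$ for $v\ge r_{\ge G}$, so $a_G(v)>0$ forces $v\ge r_{\ge G}$ and hence tightness. To fully justify ``the dual constraints are always feasible'' I would also discharge the remaining dual-feasibility requirements, all one-liners: $\lambda_G(v) = \Gamma_{\ge G}(v) - \hat\Gamma_{\ge G}(v)\ge 0$ since $\hat\Gamma_{\ge G}$ underestimates $\Gamma_{\ge G}$; $b_G(v)\ge 0$ and $\alpha_{G',G}(v) = \hat\gamma_{\ge G'}'(v)\ge 0$ since $\hat\gamma_{\ge\cdot}$ is nondecreasing; and $\lambda_G(H)=0$ since the convex envelope $\hat\Gamma_{\ge G}$ agrees with $\Gamma_{\ge G}$ at the extreme point $v=H$, the requisite endpoint identities for $\Gamma_{\ge G}$ following inductively from $\Gamma_G(0)=\Gamma_G(H)=0$ as in FGKK.

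The one step that requires genuine care, and the only obstacle I anticipate, is making the telescoping precise on a DAG rather than a line: one must verify that \eqref{defAlphai} genuinely decouples the flows assigned to the distinct in-edges of $G$ (so that $\sum_{G'\in N^-(G)}\int_v^H\alpha_{G',G}$ really equals $-\sum_{G'}\bar\gamma_{\ge G'}$, with no cross-terms between predecessors), and that the truncated curve $\bar\Gamma_{\ge G}$ leaving a non-sink $G$ along its \emph{unique} out-edge is literally the same object that re-enters $D(G)$ through the term $\bar\Gamma_{\ge G}$ inside $\Gamma_{\ge D(G)} = \Gamma_{D(G)} + \sum_{G''\in N^-(D(G))}\bar\Gamma_{\ge G''}$. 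This is precisely what out-degree $\le 1$ guarantees, and once it is observed every inequality above is a one-line check; the customary differentiability caveats (each of $\lambda_G$, $\hat\gamma_{\ge G}$ is non-smooth at only countably many points, and \eqref{FCS0} holds wherever $a_G(v)>0$) are handled exactly as in FGKK.
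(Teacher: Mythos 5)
Your proposal is correct and follows essentially the same route as the paper's proof: you substitute the closed-form duals, use \eqref{RewriteDual} and \eqref{Ai} to collapse the constraint for $a_G(v)$ to $b_G(v)+\bar\gamma_{\ge G}(v)\ge\hat\gamma_{\ge G}(v)$, observe this is an identity for non-sinks (your $\max/\min$ decomposition of $b_G$ and $\bar\gamma_{\ge G}$ is just a tidier phrasing of the paper's Fact~2), and handle sinks by the sign of $\hat\gamma_{\ge G}$ around $r_{\ge G}$. The tightness implications and the remaining nonnegativity/endpoint checks are discharged exactly as in the paper, so there is nothing further to add.
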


\begin{proof} Rearrange the dual constraint $b_G(v) + \sum _{G' \in N^{-}(G)} A_{G',G}(v) - A_{G,D(G)}(v) + \lambda_G'(v) \geq \gamma_G(v)$ to  $$b_G(v) - A_{G,D(G)}(v) \geq \gamma_G(v) - \lambda_G'(v) - \sum _{G' \in N^{-}(G)} A_{G',G}(v).$$
\textbf{Fact 1:} For $G \not \in S$, 
$\gamma_G(v) - \lambda_G'(v) - \sum _{G' \in N^{-}(G)} A_{G',G}(v) = \hat\gamma_{\geq G}(v)$ 
for all $v$.  To see this use (\ref{RewriteDual}) and (\ref{Ai}):
$$ \gamma_G(v) - \lambda'_G(v) = \hat\gamma_{\geq G}(v) - \sum _{G' \in N^{-}(G)} \bar{\gamma}_{\geq G'}(v) \hspace{2cm} A_{G',G}(v) = -\bar{\gamma}_{\geq G'}$$
\\
\textbf{Fact 2:} For $G \not \in T$,  
$b_G(v) - A_{G,D(G)}(v) = \hat\gamma_{\geq G}(v)$ for all $v$.
$$b_G(v) = \begin{cases} 0 & v < r_{\geq G} \\ \hat\gamma_{\geq G}(v) & v \geq r_{\geq G} \end{cases} \hspace{2cm} -A_{G,D(G)}(v) = \begin{cases} \hat\gamma_{\geq G}(v) & v < r_{\geq G} \\ 0 & v \geq r_{\geq G} \end{cases}$$
Hence for $G \not \in T$, $b_G(v) -  A_{G,D(G)}(v) \geq \gamma_G(v) - \lambda_G'(v) - \sum _{G' \in N^{-}(G)} A_{G',G}(v)$ for all $v$.

For $G \in S$, since $\gamma_{\geq G} = \gamma_G$, and $\gamma_G(v) - \lambda_G'(v) = \hat\gamma_G(v)$. Combining this with Fact 2 above, we get that $b_G(v) - A_{G,D(G)}(v) + \lambda_G'(v) = \gamma_G(v)$ for all $v$.

Finally, for $G \in T$, using Fact 1, for $v < r_{\geq G}$, we get $$b_G(v) = 0 \geq \hat\gamma_{\geq G}(v) = \gamma_G(v) -\lambda_G'(v) - \sum _{G' \in N^{-}(G)} A_{G',G}(v)$$ which is true for $v < r_{\geq G}$.  For $v \geq r_{\geq G}$, we get $$b_G(v) = \gamma_{\geq G}(v) = \gamma_G(v) -\lambda_G'(v) - \sum _{G' \in N^{-}(G)} A_{G',G}(v),$$ so the dual constraint is tight when $a_G(v) > 0$ as this starts at $r_{\geq G}$.
\end{proof}

The proofs of conditions \Cref{FCS0}--\Cref{FCS3} are identical as in \citepalias{FGKK} using our slightly modified dual variables:

\begin{claim} Condition (\ref{FCS1}): For all $G$ and $v$, $a_G(v) < 1 \implies b_G(v) = 0$. \end{claim}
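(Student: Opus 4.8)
The plan is to derive this exactly as \citepalias{FGKK} do for the FedEx instance, since our dual variables and primal allocation rule are only cosmetic modifications of theirs (the change being in the definition of the $\Gamma_{\geq G}$ curves). The single ingredient I need is the fact that the allocation curve $a_G(\cdot)$, as constructed in Definition~\ref{alloc curves} and in the base case for sink nodes $G \in T$, satisfies $a_G(v) = 1$ for every $v \geq r_{\geq G}$. For sinks this is immediate from the base-case definition, which sets $a_G(v) = 0$ for $v < r_{\geq G}$ and $a_G(v) = 1$ for $v \geq r_{\geq G}$. For non-sink $G$ it is precisely the second branch of equation~(\ref{def:ai}), and this holds regardless of how $a_G$ behaves below $r_{\geq G}$ (where it may be ironed, or split into several levels via the curves $a_{G,j}$). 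Stated contrapositively: $a_G(v) < 1$ forces $v < r_{\geq G}$.

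Next I would invoke the definition of the dual variable $b_G$ in equation~(\ref{defbi}): $b_G(v) = 0$ for all $v < r_{\geq G}$, and $b_G(v) = \hat\gamma_{\geq G}(v)$ only once $v \geq r_{\geq G}$. Chaining the two implications yields $a_G(v) < 1 \;\Rightarrow\; v < r_{\geq G} \;\Rightarrow\; b_G(v) = 0$, which is exactly condition~(\ref{FCS1}).

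I do not anticipate any real obstacle here: the argument collapses to a one-line implication once the uniform statement ``$a_G(v) = 1$ for $v \geq r_{\geq G}$'' is recorded across both the base case and the inductive step of the allocation construction. The only mild care needed is to confirm that the inductive definition of $a_G$ never assigns a value strictly less than $1$ at or above $r_{\geq G}$; as this is built directly into~(\ref{def:ai}), it should require nothing beyond pointing at the relevant branch, mirroring the corresponding step in \citepalias{FGKK}.
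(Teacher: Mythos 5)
Your argument is correct and coincides with the paper's own proof, which is exactly the two-step implication $a_G(v)<1 \Rightarrow v < r_{\geq G} \Rightarrow b_G(v)=0$, the first step coming from the construction of $a_G$ (base case and equation~(\ref{def:ai})) and the second from the definition of $b_G$ in~(\ref{defbi}). Nothing further is needed.
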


\begin{proof} If $a_G(v) < 1$, then $v < r_{\geq G}$, so by construction, $b_G(v) = 0$. \end{proof}

\begin{claim} Condition (\ref{FCS2}): For all $G$ and $v$, $a_G'(v) > 0 \implies \lambda_G(v) = 0$. \end{claim}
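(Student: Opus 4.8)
The plan is to unfold the definition of $\lambda_G$ and the construction of $a_G$, and observe that every point at which $a_G$ increases is, by construction, a point at which $\Gamma_{\geq G}$ agrees with its lower convex envelope.

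\textbf{Step 1: characterize where $\lambda_G$ vanishes.} Since $\lambda_G(v) = \Gamma_{\geq G}(v) - \hat\Gamma_{\geq G}(v)$ and $\hat\Gamma_{\geq G} \le \Gamma_{\geq G}$ with equality exactly off the open ironed intervals, we have $\lambda_G(v) > 0$ iff $v$ lies in the interior of some ironed interval of $\Gamma_{\geq G}$, and $\lambda_G(v) = 0$ iff $\hat\Gamma_{\geq G}(v) = \Gamma_{\geq G}(v)$ (in particular at every endpoint of an ironed interval and everywhere outside ironed intervals). So it suffices to show: whenever $a_G'(v) > 0$, the point $v$ is not in the interior of an ironed interval of $\Gamma_{\geq G}$.

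\textbf{Step 2: case analysis on the jumps of $a_G$.} Recall $a_G$ is a step function, so $a_G'(v) \in \{0, +\infty\}$ and the claim is vacuous except at jump points. By Definition~\ref{alloc curves}, for $v < r_{\geq G}$ the jumps of $a_G$ are located either (i) at some $v_j \le r_{\geq G}$ with $\hat\Gamma_{\geq G}(v_j) = \Gamma_{\geq G}(v_j)$ — i.e.\ $v_j$ unironed, whence $\lambda_G(v_j) = 0$ by Step 1 — or (ii), when $v_j$ is ironed, at the points $\underline{v}_j$ and $\overline{v}_j$, which are \emph{defined} to be the nearest values below and above $v_j$ at which $\hat\Gamma_{\geq G}$ equals $\Gamma_{\geq G}$, so $\lambda_G(\underline{v}_j) = \lambda_G(\overline{v}_j) = 0$. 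Moreover inside $(\underline{v}_j, \overline{v}_j)$ the curve $a_{G,j}$ is constant at $\delta$, so $a_G$ has no increase strictly inside an ironed interval. The one remaining jump of $a_G$ is at $v = r_{\geq G}$ (and for $v > r_{\geq G}$, $a_G \equiv 1$). Since $r_{\geq G} = \max\argmin_v \Gamma_{\geq G}(v)$, the standard fact on lower convex envelopes recalled in the excerpt (the minimizer of $\Gamma$ is never contained in an ironed interval, so $\hat\Gamma(v_{\min}) = \Gamma(v_{\min})$) gives $\lambda_G(r_{\geq G}) = 0$. The base case $G \in T$ (sink) is the special case where the only jump of $a_G$ is at $r_{\geq G}$, handled identically.

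\textbf{Induction and the main obstacle.} The argument carries through the induction on the DAG with no new difficulty: identifying the jump locations of $a_G$ only requires that $a_{D(G)}$ be a finite step function (guaranteed by the construction and the Remark following Definition~\ref{alloc curves}), and no property of $\lambda_{D(G)}$ is needed. The only point that genuinely requires care — the closest thing to an obstacle — is pinning down the equivalence ``$\lambda_G(v) = 0 \iff \hat\Gamma_{\geq G}(v) = \Gamma_{\geq G}(v)$'' and matching it against the two mechanisms by which a jump of $a_G$ is created (an inherited jump at an unironed $v_j$ versus a jump split to the endpoints $\underline{v}_j, \overline{v}_j$ of an ironed interval), together with the separate check that the jump at the reserve $r_{\geq G}$ lands at an unironed point via the minimizer fact. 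Everything else is bookkeeping over Definition~\ref{alloc curves}.
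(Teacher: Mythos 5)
Your proof is correct and follows the same route as the paper: the paper's own argument is a one-line appeal to the fact (from Subsection 4.2 of FGKK) that $a_G'(v) > 0$ only at unironed values, where $\lambda_G(v) = \Gamma_{\geq G}(v) - \hat\Gamma_{\geq G}(v) = 0$. You have simply unpacked that citation into the explicit case analysis over the jump locations in Definition~\ref{alloc curves} (unironed $v_j$, the endpoints $\underline{v}_j,\overline{v}_j$, and the unironed minimizer $r_{\geq G}$), all of which checks out.
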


\begin{proof} From Subsection 4.2 of \citepalias{FGKK}, $a_G'(v) > 0$ only for unironed values of $v$, at which $\lambda_G(v) = 0$. \end{proof}

\begin{claim} \label{claim4} Condition (\ref{FCS3}): For all $G \in \G \smallsetminus S, G' \in N^{-}(G)$ and $v$, $\int _0 ^v a_G(x) dx < \int _0 ^v a_{G'}(x) dx \implies \alpha_{G',G}(v) = 0$. \end{claim}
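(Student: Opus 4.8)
The plan is to prove the contrapositive: whenever $\alpha_{G',G}(v) \neq 0$ we show $\int_0^v a_G(x)\,dx = \int_0^v a_{G'}(x)\,dx$, so the strict inequality in the hypothesis can never hold at such $v$. By the displayed formula~(\ref{defAlphai}) defining $\alpha_{G',G}$, this variable is supported on $\{v < r_{\geq G'}\}$, and there it equals $\hat\gamma_{\geq G'}'(v)$, which vanishes on the interior of every ironed interval of $\hat\Gamma_{\geq G'}$ (where $\hat\gamma_{\geq G'}$ is affine). Hence it suffices to prove $u_G(v) = u_{G'}(v)$ for every $v < r_{\geq G'}$ at which $\hat\Gamma_{\geq G'}$ is \emph{unironed}, i.e.\ $\hat\Gamma_{\geq G'}(v) = \Gamma_{\geq G'}(v)$; at all other $v$ the implication holds vacuously, because $\alpha_{G',G}(v) = 0$ there.

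To carry this out I would unfold the explicit construction of the allocation rules. Since $G' \in N^-(G)$ and every node has out-degree at most one, $G$ is the unique item immediately dominating $G'$, so $G = D(G')$ and $a_{G'}$ is built from $a_G$ exactly as in Definition~\ref{alloc curves}. Write $a_G(x) = \sum_{j=1}^{k}(\beta_j - \beta_{j-1})\,a_{G,j}(x)$ with $a_{G,j}(x) = \mathbf{1}[x \ge v_j]$ and jumps $v_1 < \dots < v_k$, and let $j^*$ be the largest $j$ with $v_j \le r_{\geq G'}$; then for $v < r_{\geq G'}$ the definition gives $a_{G'}(v) = \sum_{j=1}^{j^*}(\beta_j - \beta_{j-1})\,a_{G',j}(v)$, where each $a_{G',j}$ is the (possibly ironing-split) step function produced there. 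Fixing an unironed $v < r_{\geq G'}$, it then suffices to show $\int_0^v a_{G,j}(x)\,dx = \int_0^v a_{G',j}(x)\,dx$ for every $j \le j^*$, together with $\int_0^v a_{G,j}(x)\,dx = 0$ for every $j > j^*$; the latter is immediate since $j > j^*$ forces $v_j > r_{\geq G'} > v$.

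For the former I would split into the two cases of Definition~\ref{alloc curves}. If $\hat\Gamma_{\geq G'}$ is not ironed at $v_j$, then $a_{G',j} \equiv a_{G,j}$ and there is nothing to do. If $\hat\Gamma_{\geq G'}$ is ironed at $v_j$, with surrounding ironed interval $(\underline v_j, \overline v_j)$ and $\delta \in (0,1)$ satisfying $v_j = \delta \underline v_j + (1-\delta)\overline v_j$, then $a_{G',j}(x) = \delta\,\mathbf{1}[\underline v_j \le x < \overline v_j] + \mathbf{1}[x \ge \overline v_j]$. Because $v$ is unironed it cannot lie in the open interval $(\underline v_j, \overline v_j)$, so either $v \le \underline v_j$, where both integrals equal $0$, or $v \ge \overline v_j$, where the identity $\delta(\overline v_j - \underline v_j) = \overline v_j - v_j$ gives $\int_0^v a_{G',j}(x)\,dx = \delta(\overline v_j - \underline v_j) + (v - \overline v_j) = v - v_j = \int_0^v a_{G,j}(x)\,dx$. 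Summing over $j \le j^*$ and adding the vanishing terms $j > j^*$ yields $\int_0^v a_{G'}(x)\,dx = \int_0^v a_G(x)\,dx$, i.e.\ $u_{G'}(v) = u_G(v)$, completing the argument.

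I do not expect a genuine obstacle here: this is the analogue, for out-degree-one DAGs, of the corresponding complementary-slackness check in~\citepalias{FGKK}, and the inductive structure of Definition~\ref{alloc curves} does all the work. The only delicate point—again exactly as in~\citepalias{FGKK}—is that $\hat\gamma_{\geq G'}$ is merely weakly increasing and may fail to be differentiable at the countably many endpoints of ironed intervals, so $\alpha_{G',G}$ must be read as a (possibly atomic) nonnegative measure there; but such endpoints are precisely unironed points, hence are already covered by the case $v \notin (\underline v_j,\overline v_j)$ above and need no separate treatment. The degenerate subcase $j^* = 0$ (where $a_{G'} \equiv 0$ on $[0, r_{\geq G'})$) is likewise immediate, since then $v < v_1$ and $\int_0^v a_G(x)\,dx = 0$ as well.
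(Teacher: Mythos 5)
Your proposal is correct and follows essentially the same route as the paper: the paper's proof is exactly the case split you make (equality of $\int_0^v a_G$ and $\int_0^v a_{G'}$ at unironed $v < r_{\geq G'}$, and $\alpha_{G',G}(v)=0$ whenever $v$ is ironed for $\hat\Gamma_{\geq G'}$ or $v \geq r_{\geq G'}$), except that it cites Lemma~1 of \citetalias{FGKK} for the utility equality where you reprove it directly from Definition~\ref{alloc curves} via the step-function decomposition. The only cosmetic slip is your parenthetical that $\hat\gamma_{\geq G'}$ is ``affine'' on ironed intervals---it is $\hat\Gamma_{\geq G'}$ that is affine there, so $\hat\gamma_{\geq G'}$ is constant and $\hat\gamma_{\geq G'}'=0$, which is the fact you actually use.
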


\begin{proof}
As discussed at the end of the proof of Lemma 1 of \citepalias{FGKK}, $\int _0 ^v a_G(x) dx = \int _0 ^v a_{G'}(x) dx $ unless $\Gamma_{\ge G'}$ is ironed at $v$, or $v \ge r_{\ge G}$. In both of these cases $\alpha_{G',G}(v)=0$ (by our fourth fact about the lower convex envelope and equation~\Cref{defAlphai}, respectively).
 \end{proof}

The above claims prove that this dual solution satisfies feasibility and all complementary slackness and continuity conditions from Section~\ref{sec:CS} hold.
\section{Details from \Cref{sec:highlevelLB}: Construction of a Candidate Dual Instance}

\label{app:LB}



We extend the above examples from \Cref{sec:highlevelLB} to require any number of menu options.  As in the two examples, we can reason from the top downward that the allocation at the bottom of every ironed interval must be positive, and reason from the bottom upward that the allocation must strictly increase for each new overlapping ironed interval we encounter, yielding all different menu options.  We formally define this interleaving structure and call it a ``chain,'' depicted in Figure~\ref{fig:chaindef}.  As another sanity check: each new point in the chain induces a new equality that has to be satisfied. So if the chain is of length $M$, intuition suggests that we should need $M$ degrees of freedom to possibly satisfy complementary slackness (but this is just intuition). 


\begin{figure}[h!]
\centering
\includegraphics[scale=.4]{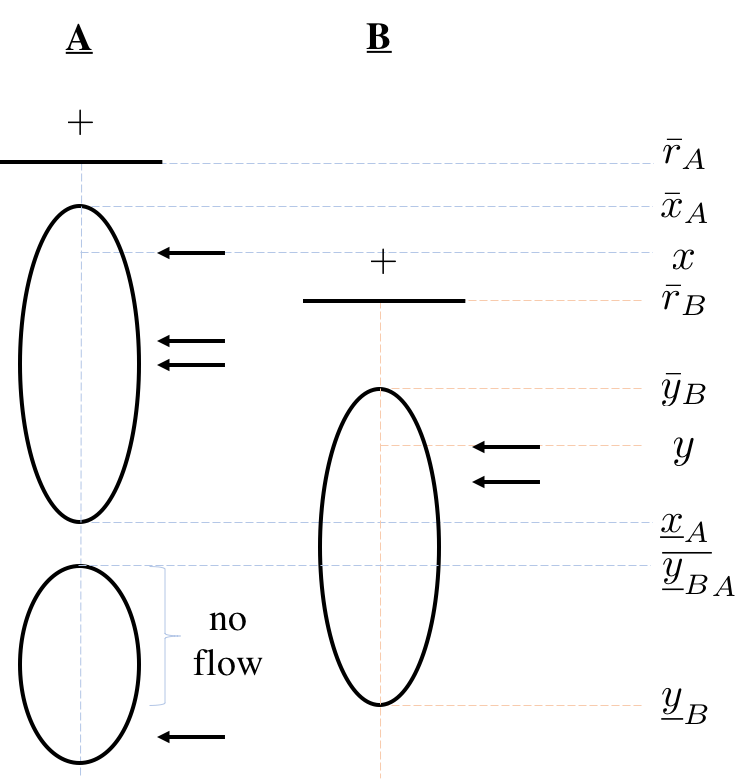}
\caption{This is an example of a chain that consists of the points $\{(x,A), (y,B)\}$.  It is a top chain as $x > \bar{r}_B$.  Note that $(y,B)$ is preceded by $(x,A)$ as there is flow into $B$ at $y$ and $y > \underline{x}_A$, and there is no flow into $B$ for any $v \in (y, \bar{y}_B]$.  The chain terminates at $(y,B)$ since there is no flow into $A$ for any $v \in [\underline{y}_B, \overline{\underline{y}_B}_A]$.}
\label{fig:chaindef}
\end{figure}

\begin{definition}[Top chain] \label{def:chain} 

A sequence $(x_1, A), (x_2, B), (x_3, A), \cdots$ of points that switch between items $A$ and $B$ is called a {\em chain} if the following hold:
\begin{itemize}
\item  $\Phi^{\lambda, \alpha}_A(x) =0 $ for all $(x,A)$ in the chain and  $\Phi^{\lambda, \alpha}_B(y) = 0$ for all $(y,B)$ in the chain.
\item $\alpha_{C,A}(x) > 0$ for all $(x,A)$ in the chain and $\alpha_{C,B}(y) > 0$ for all $(y,B)$ in the chain.
\item $\lambda_A(x) > 0$ for all $(x,A)$ in the chain and $\lambda_B(y) > 0$ for all $(y,B)$ in the chain.
\item $\underline{x_i}_A < x_{i+1} < x_i$ if $(x_i,A)$ is in the chain and $\underline{x_i}_B < x_{i+1} < x_i$ if $(x_i,B)$ is in the chain.
\end{itemize}  

We call a chain the \emph{top chain} if $x_1 > \overline{r}_B$. 
\end{definition}

Note that if any of these conditions do not hold, the mechanism has an easier solution.  If any point $v$ in the zero regions of both $A$ and $B$ were unironed, we could just set a price of $v$ for both.  If the chains did not interleave with flow alternating in, our series of constraints would end.  The top chain structure (and it is key that it is a top chain) prevents this.

We now provide a complete proof of Theorem~\ref{thm:unbounded}.  First, we provide a construction of our candidate dual, which is depicted in Figure~\ref{fig:candidatedual8}.  The instance uses definition~\ref{def:chain} of a top chain. \\
\\
\noindent \textbf{Construction of candidate dual instance:} 
\begin{itemize}
\item Let there exist no point at which $A$ and $B$ both have virtual value zero and both are unironed, that is, there is no $v$ such that $\Phi_A^{\lambda, \alpha}(v) = \Phi_B^{\lambda, \alpha}(v) = 0$ and $\lambda_A(v) = \lambda_{B}(v) = 0$.  
\item Let $\overline{r}_A > x_1 > \overline{r}_B > x_2 > x_3 > \cdots > x_M > \underline{r}_B > \underline{r}_A$. The dual has a top chain of length $M$ defined by $(x_1, A), (x_2, B), \ldots, (x_M,A)$.  
\item In addition, we have flow into the other item at each point in the chain: let $\alpha_{C,B}(x_i) > 0$ for all $(x_i, A)$ in the chain as well as $\alpha_{C,A}(x_i) > 0$ for all $(x_i, B)$ in the chain.  
\item Let $\lambda_C(v) = 0$ for all $v$, {\em i.e.}, item $C$ is unironed everywhere. 
\item For all $v$ where $\alpha$ has not already been defined, let $\alpha_{C, A}(v) = \alpha_{C, B}(v) = 0$.
\end{itemize}

We first make some remarks that follow directly from our construction. All the remarks below (only) talk about our dual and any feasible primal that satisfies complementary slackness with our dual.

\begin{remark}\label{rem:1}
 For all $i \in \{1, 3, \cdots , \menuseq -2\}$, we have $x_i, x_{i+1} \in [\underline{x_{i}}_A, \overline{x_{i}}_A] = [\underline{x_{i+1}}_A, \overline{x_{i+1}}_A]$. Since this interval is ironed, we have $\lambda_A(v) > 0 \implies a'_A(v) = 0$ for $v$ in this interval. Thus, $a_A(x_i) = a_A(x_{i+1})$.
\end{remark}

\begin{remark} \label{rem:2}
For all $i \in \{2, 4, \cdots , \menuseq -1\}$, we have $x_i, x_{i+1} \in [\underline{x_{i}}_B, \overline{x_{i}}_B] = [\underline{x_{i+1}}_B, \overline{x_{i+1}}_B]$. Since this interval is ironed, we have $\lambda_B(v) > 0 \implies a'_B(v) = 0$ for $v$ in this interval. Thus, $a_B(x_i) = a_B(x_{i+1})$.
\end{remark}

\begin{remark} \label{rem:3}
For all $i \in \{1,2, \cdots , \menuseq\}$, we have  $u_A(x_i) = u_B(x_i)$.
\end{remark}
 
We now prove a lemma that forms the backbone of our inductive argument:
 
\begin{lemma} \label{lem:3}
For all $i \in \{1, 2, \cdots , \menuseq-1\}$, we have  $a_A(x_i)  > a_B(x_i) \iff a_A(x_{i+1}) < a_B(x_{i+1})$. Similarly, we have $a_A(x_i) < a_B(x_i) \iff a_A(x_{i+1}) > a_B(x_{i+1})$
\end{lemma}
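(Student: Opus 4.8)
\textbf{Proof plan for Lemma~\ref{lem:3}.}

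The plan is to prove both biconditionals by exploiting the single equality $u_A(x_i) = u_B(x_i)$ from Remark~\ref{rem:3}, together with the monotonicity and ironing-induced flatness of the allocation curves. Since the chain alternates between $A$ and $B$ and consecutive chain points lie in a common ironed interval of one of the two items, I would split the argument by the parity of $i$: say $i$ is odd, so $(x_i,A)$ and $(x_{i+1},A)$ are both in the same ironed interval of $A$ (Remark~\ref{rem:1}), giving $a_A(x_i) = a_A(x_{i+1})$; and $(x_{i+1},B)$ and $(x_{i+2},B)$ lie in the same ironed interval of $B$. (The even case is symmetric, swapping the roles of $A$ and $B$.) By Definition~\ref{def:chain} we have $\underline{x_i}_A < x_{i+1} < x_i$, so in particular $\underline{x_i}_A = \underline{x_{i+1}}_A < x_{i+1}$, and similarly for $B$ one endpoint strictly separates the two consecutive points.

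The core computation is then to write $u_A(x_{i+1}) - u_A(x_i)$ and $u_B(x_{i+1}) - u_B(x_i)$ in terms of the flat allocation values. Since $a_A$ is constant (equal to $a_A(x_i)$) on the interval $[x_{i+1}, x_i]$ (both points lie in the ironed interval), we get $u_A(x_i) - u_A(x_{i+1}) = a_A(x_i)\,(x_i - x_{i+1})$. Likewise, I claim $a_B$ is constant on $[x_{i+1}, x_i]$: the point $x_{i+1}$ is in the ironed interval of $B$ containing $x_{i+2}$, and $x_i > x_{i+1} > \overline{r}_B$ only in the base-step picture --- more carefully, one checks from the construction (the ironed intervals of $B$ interleave those of $A$, and $x_i, x_{i+1}$ straddle the top endpoint $\overline{x_{i+1}}_B$) that $a_B$ takes a single value $a_B(x_{i+1})$ throughout $[x_{i+1}, x_i)$, since above $\overline{x_{i+1}}_B$ we are either at the next flat level or, when $i=1$, at allocation $1$; in every case the increment is $u_B(x_i) - u_B(x_{i+1}) = a_B(x_{i+1})\,(x_i - x_{i+1}) + (\text{nonneg.\ terms that vanish below }x_i)$. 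Subtracting the two displays and using $u_A(x_i) = u_B(x_i)$ and $u_A(x_{i+1}) = u_B(x_{i+1})$ from Remark~\ref{rem:3} yields
\[
a_A(x_i)\,(x_i - x_{i+1}) \;\lessgtr\; a_B(x_{i+1})\,(x_i - x_{i+1}) \quad\Longleftrightarrow\quad u_A(x_{i+1}) \;\gtrless\; u_B(x_{i+1}) \text{ had they differed,}
\]
but since $u_A(x_{i+1}) = u_B(x_{i+1})$ exactly, the honest statement is that the two increments $u_G(x_i)-u_G(x_{i+1})$ must be equal, hence $a_A(x_i)\,(x_i-x_{i+1}) = a_B(x_{i+1})\,(x_i-x_{i+1})$ is \emph{not} what we want; instead I would argue by contradiction. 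Suppose $a_A(x_i) > a_B(x_i)$. Using $a_A(x_i) = a_A(x_{i+1})$ (Remark~\ref{rem:1}) and the fact that $a_B$ is monotone with $a_B(x_{i+1}) \le a_B(x_i)$, I compare $u_A$ and $u_B$ at $x_{i+1}$: reading the utilities as the areas under the allocation curves down to the (distinct) bottom endpoints of the relevant ironed intervals, and using that the $B$-interval extends strictly lower than the $A$-interval at this stage (or vice versa, by parity), the inequality $a_A(x_i) > a_B(x_i)$ forces $a_A(x_{i+1}) < a_B(x_{i+1})$ in order to keep $u_A(x_{i+1}) = u_B(x_{i+1})$ --- exactly the picture in the right panels of Figures~\ref{fig:LBexample2} and~\ref{fig:chain-length-4}. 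The reverse implication and the second biconditional follow by the symmetric argument.

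The main obstacle I anticipate is bookkeeping the endpoints correctly: one must track, for each parity of $i$, which of the two ironed intervals (the $A$-interval $[\underline{x_i}_A,\overline{x_i}_A]$ or the $B$-interval) extends strictly lower, and carefully express $u_A(x_{i+1})$ and $u_B(x_{i+1})$ as $(x_{i+1} - \underline{\,\cdot\,}) \cdot (\text{flat value})$ plus the contribution from below, so that the single equality $u_A(x_{i+1}) = u_B(x_{i+1})$ genuinely pins down the \emph{strict} reversal of the order of $a_A$ and $a_B$. This is essentially the ``argue upward'' reasoning already sketched for the length-$4$ case in Section~\ref{sec:highlevelLB}, now phrased as a clean inductive step; the content is the strict monotonicity forced by the interleaving of ironed intervals with no gaps, which is guaranteed by the construction of the candidate dual and Definition~\ref{def:chain}.
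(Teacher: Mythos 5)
Your overall strategy is the paper's: use $u_A=u_B$ at the chain points (Remark~\ref{rem:3}) together with flatness of the allocation on ironed intervals (Remarks~\ref{rem:1},~\ref{rem:2}), split by parity, and read off a strict reversal of the order of $a_A$ and $a_B$. But the central computation is not carried through, and the two concrete routes you sketch are both flawed. First, the claim that $a_B$ is constant (equal to $a_B(x_{i+1})$) throughout $[x_{i+1},x_i)$ is false: complementary slackness only forces $a_B$ to be constant within each ironed interval of $B$, and the breakpoint $\underline{x_i}_B=\overline{x_{i+1}}_B$ lies strictly between $x_{i+1}$ and $x_i$; across it $a_B$ can (and, in the end, must) jump. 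You half-acknowledge this and then abandon the computation. Second, your fallback contradiction argument reads $u_A(x_{i+1})$ and $u_B(x_{i+1})$ as ``area down to the bottom endpoint of the relevant ironed interval,'' i.e.\ a single flat value times a length. That is valid only at the bottom of the chain (the point $x_\menuseq$, where the allocation vanishes below $\underline{x_\menuseq}_G$ -- this is exactly the base step used in the proof of Theorem~\ref{thm:unbounded}); at an intermediate chain point the two utilities carry unequal, uncontrolled contributions from below their respective endpoints, so ``the $B$-interval extends strictly lower'' does not by itself force the sign you want. ``Plus the contribution from below'' cannot be waved away, because those contributions are evaluated at different points for $A$ and $B$ and need not match.

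The missing step, which is the whole content of the paper's proof, is to work with the \emph{increment} between consecutive chain points rather than with the utilities themselves. Since $u_A=u_B$ at both $x_i$ and $x_{i+1}$, the increments $u_G(x_i)-u_G(x_{i+1})$ coincide for $G\in\{A,B\}$. In the case $a_A(x_i)=a_A(x_{i+1})$ (same $A$-ironed interval), the $A$-increment is $a_A(x_i)(x_i-x_{i+1})$, while the $B$-increment decomposes at $\underline{x_i}_B$ into two flat pieces, giving
\begin{equation*}
a_A(x_i)\,(x_i-\underline{x_i}_B)+a_A(x_{i+1})\,(\underline{x_i}_B-x_{i+1})
= a_B(x_i)\,(x_i-\underline{x_i}_B)+a_B(x_{i+1})\,(\underline{x_i}_B-x_{i+1}),
\end{equation*}
with both lengths strictly positive. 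From this identity the biconditional $a_A(x_i)>a_B(x_i)\iff a_A(x_{i+1})<a_B(x_{i+1})$ (and its twin) is immediate, no contradiction or monotonicity of $a_B$ needed; the other parity is symmetric with the roles of $A$ and $B$ exchanged. Your plan names the ingredients for this but never writes the two-piece decomposition of the $B$-increment, which is precisely the point where your first attempt went wrong and your second attempt is not rigorous.
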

\begin{proof} Note that either $a_A(x_i) = a_A(x_{i+1})$ or $a_B(x_i) = a_B(x_{i+1})$ by Remark~\ref{rem:1} and Remark~\ref{rem:2}. We only prove $a_A(x_i)  > a_B(x_i) \iff a_A(x_{i+1}) < a_B(x_{i+1})$ for the case $a_A(x_i) = a_A(x_{i+1})$ and omit the other (symmetric) cases. Since  $a_A(x_i) = a_A(x_{i+1})$, we have 

\[u_A(x_i) = u_A(x_{i+1}) + a_A(x_{i}) \cdot (x_i - x_{i+1}) = a_A(x_{i+1}) \cdot (\underline{x_i}_B - x_{i+1}) + a_A(x_{i}) \cdot (x_i - \underline{x_i}_B).\]

We also have, by the structure of the ironed intervals for the item $B$, 

\[u_B(x_i) = u_B(x_{i+1}) + a_B(x_{i+1}) \cdot (\underline{x_i}_B - x_{i+1}) + a_B(x_{i}) \cdot (x_i - \underline{x_i}_B) \]

Now, since the utilities at all points $x_i$ is the same for both items $A$ and $B$ (Remark~\ref{rem:3}), the fact that $a_A(x_i)  > a_B(x_i)$ is equivalent to  $a_A(x_{i}) \cdot (x_i - \underline{x_i}_B) > a_B(x_{i}) \cdot (x_i - \underline{x_i}_B)$ which is equivalent to $a_A(x_{i+1}) \cdot (\underline{x_i}_B - x_{i+1}) < a_B(x_{i+1}) \cdot (\underline{x_i}_B - x_{i+1})$ which, in turn, is equivalent to $a_A(x_{i+1}) < a_B(x_{i+1})$.
\end{proof}
 
Finally, we prove Theorem~\ref{thm:unbounded}.
 

\begin{proof}[Proof of Theorem~\ref{thm:unbounded}]
At $x_\menuseq$, we have that

\[u_A(x_\menuseq) = a_A(x_\menuseq) \cdot (x_\menuseq - \underline{x_\menuseq}_A) \quad\quad \text{and} \quad\quad u_B(x_M) = a_B(x_\menuseq) \cdot (x_\menuseq - \underline{x_\menuseq}_B).\]

Since $\underline{x_\menuseq}_B > \underline{x_\menuseq}_A$ and $a_A(x_\menuseq) > 0$, then to ensure that $u_A(x_M) = u_B(x_M)$ (Remark~\ref{rem:3}), we must have $a_B(x_M) > a_A(x_\menuseq)$.  However, with this fact, Lemma~\ref{lem:3} says that $a_B(x_i) > a_A(x_i)$ and $a_B(x_{i+1}) > a_A(x_{i+1})$ in alternation. 

Since $a_A(\cdot)$ and $a_B(\cdot)$ are non-decreasing sequences, they can only alternate if they have $\Omega(\menuseq)$ distinct elements.

By Theorem~\ref{thm:allocalg}, there exists a feasible primal that satisfies complementary slackness.  The primal algorithm constructs a mechanism with menu complexity at least $\menu$ and satisfies complementary slackness, hence this dual is in fact optimal.
\end{proof}
%



\begin{corollary} This idea gives a lower bound for Multi-Unit Pricing as well. \end{corollary}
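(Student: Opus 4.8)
The plan is to replay the top-chain construction behind Theorem~\ref{thm:unbounded} inside the Multi-Unit Pricing (MUP) setting of~\citepalias{DHP}. Recall that in MUP with $m$ unit-levels a type is a pair $(k,v)$ with $k\in\{1,\dots,m\}$ the demand, and a buyer who demands $k$ units and receives $j$ of them enjoys value $\tfrac{\min(j,k)}{k}\,v$. First I would recall from~\citepalias{DHP} (and rederive in Appendix~\ref{sec:MUPLB}) that, after applying Myerson's payment identity within each demand level and pruning the redundant incentive constraints, the optimal MUP mechanism solves a continuous LP in allocation variables whose partial Lagrangian has exactly the shape of Section~\ref{subsec:lagrangian}: $\lambda$-type multipliers that control ironing \emph{within} a demand level, $\alpha$-type multipliers that control incentive constraints \emph{across} demand levels, and virtual-value functions $\Phi^{\lambda,\alpha}$ whose sign governs the optimal (possibly randomized) allocation through conditions~(\ref{CS1})--(\ref{CS4}).

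The one way MUP departs from FedEx is precisely the feature that makes the branching gadget available: a buyer demanding $k$ units is \emph{partially} satisfied by a smaller bundle, so the allocation to a high-demand type decomposes into several ``marginal-unit'' sub-allocations, and distinct marginal units are pinned down by \emph{distinct} lower demand levels. Two such marginal units are mutually incomparable from the viewpoint of the two lower types that constrain them, so one high-demand level can play the role of item $C$ with two incomparable children $A,B$ in the gadget of Section~\ref{sec:basecase}, and the ``flow into both $A$ and $B$ at $x_i$'' pattern has an exact MUP counterpart. After making this correspondence precise (which marginal units play $A$ and $B$, and which demand level plays $C$), I would transcribe the candidate dual of Figure~\ref{fig:candidatedual8}: a top chain of length $M$ built from interleaving ironed intervals in the two relevant $\Phi$ curves, with alternating cross-level flow $\alpha_{C,A}(x_i),\alpha_{C,B}(x_i)>0$ at each chain point $x_1>\dots>x_M$, and everything else unironed and flow-free.

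Two checks remain. For feasibility of the candidate dual for an actual value distribution, the key point is that MUP retains the within-type payment identity, so the ``payment-identity'' hypothesis of the Master Theorem (Theorem~\ref{thm:multimaster}) is available and the same argument (or a light MUP restatement of it) produces a distribution over $(k,v)$ pairs realizing the dual. For optimality of the dual one needs a primal satisfying complementary slackness with it, which follows as in the primal-recovery argument of Theorem~\ref{thm:allocalg}, or can be checked directly on the explicit instance. Given these, the complementary-slackness argument is verbatim that of Theorem~\ref{thm:unbounded}: the MUP analogues of Remarks~\ref{rem:1}--\ref{rem:3} together with the alternation Lemma~\ref{lem:3} force any complementary-slackness-feasible primal to have strictly increasing, hence $\Omega(M)$ distinct, allocation values along the chain, so the optimal MUP mechanism has menu complexity at least $M$; letting $M\to\infty$ gives unboundedness and resolves the open question of~\citepalias{DHP}.

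The main obstacle I anticipate is exactly the correspondence step: because MUP is \emph{not} single-minded (partial credit), one must recast its incentive constraints and virtual values carefully enough both to expose the branching gadget hidden inside an instance that superficially looks totally ordered, and to confirm that the partial-credit structure places no obstruction to interleaving the ironed intervals or alternating the flow. Once that structural translation is secured, the chaining, the Master-Theorem realization, and the complementary-slackness lower bound all carry over without change.
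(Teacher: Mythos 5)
Your overall plan coincides with the paper's: embed the three-node out-degree-two gadget inside a three-level MUP instance, transcribe the top-chain dual of Figure~\ref{fig:candidatedual8}, realize it via an MUP variant of the Master Theorem, and rerun the complementary-slackness alternation argument of Theorem~\ref{thm:unbounded}. The paper does exactly this in Appendix~\ref{sec:MUPLB}, and it does need the separate MUP Master Theorem (Theorem~\ref{thm:mupmaster}) to account for the fact that cross-level flow is ``diagonal,'' as you anticipate.

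The gap is at the step you yourself flag as the main obstacle: identifying \emph{where} the out-degree-two node comes from in MUP. Your proposed mechanism --- decomposing the allocation to a high-demand type into ``marginal-unit sub-allocations'' that are pinned down by distinct lower demand levels and are ``mutually incomparable'' --- is not how the structure arises, and I do not see how to make it work: the MUP linear program (following \citepalias{DHP}) keeps a single allocation variable $a_k(v)$ per type, with no per-unit decomposition, so there are no incomparable marginal items to serve as $A$ and $B$. The actual source of branching is much more direct. With three unit levels, a type $(v,2)$ has two non-redundant deviations: report demand $3$ at the \emph{same} value $v$ (it still gets full value from three copies), and report demand $1$ at the \emph{scaled} value $v/2$ (it gets half value from one copy). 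So level $2$ is literally a node with two children --- level $3$ via a ``straight'' edge and level $1$ via a ``diagonal'' edge --- and plays the role of $C$, with levels $1$ and $3$ playing $A$ and $B$. The diagonal edge is the only new feature: it rescales utilities ($u_A(v/2)=\int_0^{v/2}a_A$ versus $u_B(v)=2\int_0^v a_B$), so the chain must be indexed at the scaled positions $x_i/2$ on level $1$ and the alternation inequality of Lemma~\ref{lem:3} becomes $\tfrac12 a_A(x_i/2)>2a_B(x_{i+1})$ and its mirror; the Master Theorem must likewise compensate for the fact that $\varepsilon$ flow out of $(v,2)$ arrives as only $\tfrac12\varepsilon$ (in virtual-value terms) at $(v/2,1)$. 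Once the correspondence is set up this way, the rest of your argument goes through as you describe; with the ``marginal unit'' framing it does not get off the ground.
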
 We expand on this on \Cref{sec:MUPLB}.
\section{Menu Complexity is Finite: Characterizing the Optimal Mechanism via Duality}
\label{app:optvars}

In this section, we'll characterize the optimal mechanism for three items $\{A, B, C\}$ with structure $A \succ C$, $B \succ C$, and $A \not \succ B, B \not \succ A$. While our approach will be algorithmic, our focus isn't to actually run this algorithm or analyze its runtime. We'll merely use the algorithms to deduce structure of the optimal mechanism. We prove essentially that the interleaving of ironed intervals used in the construction of the previous section is the worst case (in terms of menu complexity of the optimal mechanism). Still, in order to possibly prove this, we need to at minimum find an optimal mechanism for every possible instance. 

Our approach is the following: we propose a \emph{primal recovery algorithm} that, given a dual $(\lambda, \alpha)$, produces a primal solution that (1) satisfies complementary slackness with the dual and (2) has finite menu complexity.
Obviously, the algorithm can't possibly succeed for every input dual (as some duals are simply not optimal for any instance). But we show that whenever the algorithm fails, the dual has some strange structure (elaborated below). We then show that the best dual (which is optimal and always exists, definition below) never admits these strange structures, and therefore the algorithm always succeeds when given the best dual as input. 

\begin{definition}[Best Dual]\label{def:best} We define the best dual of an instance with three partially-ordered items to be the $(\lambda, \alpha)$ satisfying the following:
\begin{enumerate}
\item First, $(\lambda, \alpha)$ is optimal: $(\lambda, \alpha) \in \arg\min\{\sum_{G \in \{A, B, C\}}\int_0^H f_G(v)\cdot \max\{0, \Phi^{\lambda, \alpha}_G(v)\} dv\}$. 
\item Among $(\lambda, \alpha)$ satisfying (1), $(\lambda, \alpha)$ has the \emph{fewest ironed intervals} of virtual value zero. That is, $(\lambda, \alpha)$ minimizes $|\mathcal{I}(\lambda,\alpha)| = |\{\underline{x}_G \mid (x,G) \in [0,H] \times \{A, B, C\}, \Phi^{\lambda, \alpha}_G(v) = 0\}|$.
\item Among $(\lambda, \alpha)$ satisfying (2), $(\lambda, \alpha)$ has the \emph{lowest positives} (lexicographically ordered). That is, $(\lambda, \alpha)$ minimizes $\bar{r}_A$, followed by $\bar{r}_B$, followed by $\bar{r}_C$.
\end{enumerate}
\end{definition}

\begin{definition} A \emph{double swap} exists when there are consecutive points $(x,A)$ and $(y,B)$ in a chain, and there is flow into $A$ for $v \in [\underline{x}_A, y)$. See Figure~\ref{fig:doubleswap}. \end{definition}

\begin{definition} An \emph{upper swap} occurs when there is flow into $(x,A)$ and $(y,B)$ where $x > \bar{r}_A > y > \bar{r}_B$. See Figure~\ref{fig:swaps}.\end{definition}

\begin{proposition}\label{prop:bestdual} The best dual has no double swaps or upper swaps. 
\end{proposition}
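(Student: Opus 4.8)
\textbf{Proof proposal for Proposition~\ref{prop:bestdual}.}

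The plan is to prove the contrapositive: if a dual $(\lambda,\alpha)$ has a double swap or an upper swap, then it is not the best dual, because one can exhibit another feasible dual $(\lambda',\alpha')$ that is either strictly better in the objective, or ties the objective but has strictly fewer ironed intervals, or ties both but has lexicographically lower positives. The central tool is the ``rerouting flow'' operation from Appendix~\ref{subsec:regvars}: decreasing $\alpha_{C,A}(v)$ by $\varepsilon$ and increasing $\alpha_{C,B}(v)$ by $\varepsilon$ changes only $f_A\Phi_A$ (down by $\varepsilon$ at values $\le v$) and $f_B\Phi_B$ (up by $\varepsilon$), leaving $\Phi_C$ and everything else fixed. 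I would combine this with the ``proper ironing'' theorem of \citetalias{FGKK} (restated at the end of Appendix~\ref{subsec:regvars}) to re-iron $A$ and $B$ after rerouting so that proper monotonicity is preserved and the comparison of objectives is clean.

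First, the upper swap case, which I expect is the easier of the two. Suppose there is flow into $(x,A)$ with $x>\bar r_A$ and flow into $(y,B)$ with $\bar r_A > y > \bar r_B$. The point $x$ lies in the region where $f_A\Phi_A(x)>0$, so sending flow there is strictly wasteful for the dual objective: by proper monotonicity (and the ``boosting can only improve the dual'' principle), the flow that currently goes into $A$ at $x$ would be at least as good — in fact strictly better, since $x$ is a strict positive — if redirected to $A$ at $\bar r_A$ (or lower). Rerouting the $\alpha_{C,A}(x)$ mass down to $\bar r_A$ reduces the positive region contribution of $A$ without creating any new positives in $B$ or changing $\Phi_C$; after re-ironing, the objective strictly decreases, contradicting optimality (Definition~\ref{def:best}, item 1). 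One has to check the bookkeeping that moving flow within $C$'s out-neighborhood at a single value is legal and that the total flow out of $C$ at each value is unchanged (it is, since we move flow in $A$ across values, we do not change which child receives $C$'s outflow — actually here it is cleaner to argue we reduce $\alpha_{C,A}(x)$ and increase $\alpha_{C,A}(\bar r_A)$, which is not a reroute between children but a legal modification because the only constraint is $\alpha\ge 0$ and the flow-out-of-$C$ identity holds value-by-value; so in fact we should phrase upper-swap removal as moving $C$'s outflow from value $x$ to value $\bar r_A$, both into $A$).

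Now the double swap case, which I expect is the main obstacle. Here $(x,A)$ and $(y,B)$ are consecutive in a chain and there is flow into $A$ on the whole sub-interval $[\underline x_A, y)$, not just at the chain point. The idea is that this extra flow into $A$ below $y$ is redundant: the relevant incentive constraint between $C$ and $A$ at those low values is already implied (through the chain structure and the equal-preferability constraints $u_A=u_B=u_C$ forced at the chain points) by the flow at the chain points themselves plus flow into $B$. So one should be able to reroute the flow $\alpha_{C,A}(v)$ for $v\in[\underline x_A,y)$ over to $\alpha_{C,B}(v)$ (or push it up to the chain point), and after re-ironing obtain a dual with the same objective value but \emph{one fewer ironed interval} in $A$ — because removing this flow ``breaks the chain'' at that link and merges or eliminates the ironed interval $[\underline x_A,\bar x_A]$. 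This would contradict item 2 of Definition~\ref{def:best}. The delicate points are: (a) verifying that rerouting does not create positives in $B$ (use proper monotonicity of $f_B\Phi_B$ and the fact that $y<\bar r_B$ region... wait, here one needs $y$ to sit where $\Phi_B$ is still $\le 0$, which holds since $(y,B)$ is a chain point with $\Phi_B(y)=0$ and $B$ is ironed there) and hence does not hurt the objective; (b) verifying that the objective does not strictly increase either, so that we genuinely land in case 2 of the lexicographic comparison rather than contradicting optimality in the wrong direction — this needs the reroute to be objective-neutral, which should follow because the flow being moved lies in the zero-virtual-value region of both $A$ and $B$ after ironing, where shifting mass between $A$ and $B$ changes the integrand $f\cdot a\cdot\Phi$ by zero; and (c) carefully arguing that the count of ironed intervals strictly drops, which requires understanding how re-ironing $A$ responds to removing a localized chunk of flow — this is where the \citetalias{FGKK} proper-ironing machinery, applied to the modified $\Gamma_A$, does the work. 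I would organize the proof as two lemmas (one per swap type), each following the template: modify the dual via rerouting, re-iron via proper ironing, check feasibility and proper monotonicity, then compare against the three criteria of Definition~\ref{def:best} in order. The hardest bookkeeping is step (c) in the double-swap lemma; everything else is a careful but routine application of the tools already assembled in the preliminaries.
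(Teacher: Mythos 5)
Your high-level skeleton (contrapositive, one lemma per swap type, modify the dual, re-iron, compare against the three criteria of Definition~\ref{def:best}) matches the paper, but both of your concrete operations fail, and in both cases for the same underlying reason: you have the effect of moving flow on the receiving item backwards relative to the objective. For the upper swap, you move $C$'s outflow into $A$ from value $x>\bar r_A$ down to $\bar r_A$ and claim this ``reduces the positive region contribution of $A$'' and strictly improves the objective, contradicting criterion~1. In fact flow received at value $x$ \emph{subtracts} $\int_v^H\alpha_{C,A}$ from $f_A\Phi^{\lambda,\alpha}_A(v)$ for all $v\le x$, so removing it from $x$ \emph{raises} $f_A\Phi^{\lambda,\alpha}_A$ on $(\bar r_A,x]$, i.e.\ it makes $A$'s strict positives larger and the dual strictly worse; it also changes $\Phi_C$ on $(\bar r_A,x]$, contrary to your claim, because you have moved $C$'s outflow across values. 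A dual with an upper swap can be objective-optimal, which is exactly why the paper does not argue via criterion~1: it pairs the downward move of $A$'s inflow ($x\to y$) with an upward move of $B$'s inflow ($y\to x$) between the \emph{same} two values, so $C$'s per-value outflow (hence $\Phi_C$) is untouched, the $+\alpha(x-y)$ loss on $A$ is exactly cancelled by the $-\alpha(x-y)$ gain on $B$'s positive region, and the payoff is that $\bar r_A$ strictly drops --- a contradiction with criterion~3 (lowest positives), not with optimality.

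For the double swap, your reroute of $\alpha_{C,A}(v)$ to $\alpha_{C,B}(v)$ for $v\in[\underline x_A,y)$ is not objective-neutral: decreasing $\alpha_{C,A}(v)$ raises $f_A\Phi^{\lambda,\alpha}_A$ by $\varepsilon$ on the whole zero region of $A$ below $v$, creating strict positives there (which re-ironing averages but does not remove, since ironing preserves the integral), while the matching decrease in $B$'s zero region only creates negatives, which $\max\{0,\cdot\}$ ignores. So the dual objective strictly increases and you never reach the ``fewer ironed intervals'' comparison; your justification for neutrality conflates the Lagrangian value $\int f\,a\,\Phi$ at a fixed primal with the dual objective $\int f\max\{0,\Phi\}$. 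The paper's fix is the more delicate three-point swap at $x$, $y$, and the offending value $z$: swap $\varepsilon$ at $x$, $\gamma=\bigl(1+\tfrac{x-y}{y-z}\bigr)\varepsilon$ at $y$, and $\alpha=\tfrac{x-y}{y-z}\varepsilon$ at $z$, chosen so that $\varepsilon-\gamma+\alpha=0$ (nothing changes below $z$) and $\varepsilon(x-z)=\gamma(y-z)$ (the average of $f\Phi$ over $[z,x]$ is preserved in each item). For small $\varepsilon$ the interval in $A$ stays ironed at zero, while in $B$ the interval $[\underline y_B,\bar y_B]$ becomes positive and the region above becomes negative with equal averages, so re-ironing merges them into one larger zero interval: the objective is unchanged and the number of ironed intervals strictly drops (note: in $B$, not in $A$ as you predicted), contradicting criterion~2. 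Without some balancing of this kind, the ``push the flow to $B$ and re-iron'' step in your item (b) simply fails.
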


The full proof of Proposition~\ref{prop:bestdual} appears below. 
The high-level approach is that whenever a double swap or upper swap exists, we can exploit this structure to modify the dual variables.  This creates a better dual solution (with respect to definition~\ref{def:best}) and proves that (2) or (3) respectively must not have held for the original dual.


\begin{numberedtheorem}{\ref{thm:allocalg}} For any best dual solution, we can find a primal with finite menu complexity that satisfies complementary slackness (and is therefore optimal). \end{numberedtheorem}

A full proof appears below,
but the high-level approach is explained in the following.

\begin{proof}[Proof Sketch of Theorem~\ref{thm:allocalg}]
(No bad structures exist in best duals.)  First, we try to satisfy the necessary complementary slackness system of equations as in \Cref{app:LB}, and identify all possible barriers to solutions existing. These barriers are exactly double swaps or upper swaps, which are not found in best duals by Proposition~\ref{prop:bestdual}. 

(Inductive primal recovery algorithm.)  Without these barriers, an inductive argument shows that we can indeed find an allocation rule that satisfies all of the complementary slackness conditions.  Every dual has a (possibly empty) top chain, and each point in the chain has another set of preferability constraints for that item, along with the constraint that the allocation is constant.  We use induction to handle one point in the chain at a time. (See Figure~\ref{fig:ISrecovery} in \Cref{app:optvars}.)  We take the partially-constructed allocation that satisfies the constraints for the chain so far, scale it down (and thus continue to satisfy the constraints), and then solve for the allocation probability that will satisfy the new constraints given by this point in the chain.  As shown in \Cref{app:LB}, this requires choosing a different allocation probability at the bottom of each ironed interval in the chain, but we show that this is sufficient, giving menu complexity at most the length of the chain $+\,1$.  

(Finite menu complexity.) The other interesting part not addressed in \Cref{app:LB} is what to do if there is a chain of countably infinite length (which can certainly exist). Naively following our algorithm would indeed result in a primal of countably infinite menu complexity. But, because the sequence of chain points is monotonically decreasing (and lower bounded by zero), they must converge to some value $v$. If they converge, and the chain is indeed infinitely long, then neither $A$ nor $B$ can possibly be ironed at $v$, and we can simply set price $v$ for both items instead. 
\end{proof}

We begin below by reviewing properties of the dual previously observed in~\cite{FGKK, DW}. Throughout this section we'll reference the ``best'' dual. While multiple optimal duals might exist, we'll be interested in a specific tie-breaking among them (and refer to the one that satisfies these conditions as ``best"). 

\begin{theorem}[\cite{DW}] \label{thm:bestdualdw}The best dual $(\lambda, \alpha)$ satisfies the following:
\begin{itemize}
\item (Proper monotonicity) $(f_G \cdot \Phi^{\lambda, \alpha}_G)(\cdot)$ is monotone non-decreasing, for all $v$.
\item (No-boosting) $\Phi^{\lambda, \alpha}_G(v) \geq 0$ for all $G$ such that there exists a $G' \succ G$. 
\item (No-rerouting) $\Phi^{\lambda, \alpha}_G(v) > 0 \Rightarrow \alpha_{G, G'}(v) = 0$ for all $G'$. 
\item (No-splitting) $\lambda_G(v) > 0 \Rightarrow \alpha_{G, G'}(v) = 0$ for all $G'$. 
\end{itemize}

\end{theorem}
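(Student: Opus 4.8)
The plan is to derive each of the four properties from the \emph{minimality} built into Definition~\ref{def:best}, via proof by contradiction: assuming the best dual $(\lambda,\alpha)$ violates one of them, I would apply a local ``dual operation'' from the Review of Dual Properties (Appendix~\ref{subsec:regvars}) to obtain a modified dual that is weakly better under criterion (1), and strictly better under criterion (2) or (3) of Definition~\ref{def:best} — contradicting the fact that $(\lambda,\alpha)$ is a lexicographic minimizer. The whole proof is thus a catalogue of four such exchange arguments; no new machinery beyond \citepalias{FGKK,DW} is needed, only careful accounting of how each move interacts with the three nested criteria.

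For \textbf{proper monotonicity}, I would invoke the Proper Ironing theorem of~\citetalias{FGKK} directly: holding the flow $\alpha$ of the best dual fixed, the choice $\lambda_G(v)=\hat\Gamma_G(v)-\Gamma_G(v)$ makes $f_G\cdot\Phi^{\lambda,\alpha}_G$ monotone non-decreasing for every $G$. Since re-ironing replaces $f_G\Phi_G$ by the derivative of a convex envelope of its integral, it only pushes virtual values \emph{downward} on the newly-ironed portions and leaves them unchanged elsewhere; hence it never creates positive virtual value, never raises $\bar{r}_G$, and so neither increases the Lagrangian objective nor the positives. One then checks the new ironed intervals are exactly the maximal stretches where the convex envelope is strictly below $\Gamma_G$, which coarsen rather than refine any pre-existing flat region, so $|\mathcal{I}(\lambda,\alpha)|$ does not increase either. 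Thus re-ironing preserves all three criteria, and since it acts as the identity on an already properly-monotone curve, the best dual must itself be properly monotone. For \textbf{no-boosting}, suppose $f_G(v)\Phi^{\lambda,\alpha}_G(v)<0$ for some non-sink $G$ and pick $G'\in\children(G)$; by proper monotonicity $f_G\Phi_G<0$ on all of $[0,v]$, so the ``boosting'' operation (incrementing $\alpha_{G,G'}$ at $v$ by up to $|f_G(v)\Phi_G(v)|$, then re-ironing $G'$) raises $f_G\Phi_G$ on $[0,v]$ without making it positive and lowers $f_{G'}\Phi_{G'}$ on $[0,v]$ — so the objective weakly decreases (hence is unchanged), $\bar{r}_{G'}$ weakly decreases, and any flat negative stretch of $f_G\Phi_G$ merges into the enlarged negative region, weakly decreasing $|\mathcal I|$. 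Iterating the boost to its maximal extent strictly improves a lexicographic criterion unless $f_G\Phi_G\ge 0$ already, giving the contradiction.

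The remaining two properties are the delicate ones. For \textbf{no-rerouting}, suppose $G$ sends $\alpha_{G,G'}(v)>0$ at a point with $\Phi^{\lambda,\alpha}_G(v)>0$: this flow does no work for $G$ at $v$, so I would retract it (or, when $G$ has several children, reroute it among $\children(G)$ using the Rerouting Flow operation toward the child where it most reduces positives) and re-iron. Retracting lowers $f_G\Phi_G$ on $[0,v]$, which — as $f_G\Phi_G(v)>0$ and is monotone — strictly shrinks $G$'s positive region, strictly lowering $\bar r_G$; the only worry is the corresponding rise of $f_{G'}\Phi_{G'}$ on $[0,v]$, which must be shown (using proper monotonicity, and no-boosting for $G'$ when $G'$ is not a sink) not to create new positives. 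If the objective is unchanged, the strict drop in $\bar r_G$ contradicts criterion (3). For \textbf{no-splitting}, suppose $v$ lies in the interior of an ironed interval of $G$ (so $f_G\Phi_G$ is constant there) and $\alpha_{G,G'}(v)>0$: I would slide that flow down to the bottom endpoint $\underline{v}_G$ of the interval, which changes no virtual value outside $[\underline{v}_G,v)$ and leaves the ironing structure of every item intact, and argue this strictly reduces either $|\mathcal I|$ or the positives — again contradicting minimality.

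The main obstacle throughout is exactly this last kind of bookkeeping: each exchange move breaks proper monotonicity of some item, so it must be followed by a re-ironing step, and one has to verify that this re-ironing never inadvertently increases $|\mathcal I(\lambda,\alpha)|$ or the positives, since that is what makes the contradiction with criterion (2) or (3) legitimate. Getting the amount and location of each flow perturbation right so that the re-ironing is benign (and the cases where $G'$ is a sink vs.\ non-sink, and where $v$ sits at an interval endpoint vs.\ interior) is where the real work lies; everything else is the standard Lagrangian/complementary-slackness calculus already set up in Section~\ref{sec:highlevel} and Appendix~\ref{subsec:regvars}.
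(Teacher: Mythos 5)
You should first note that the paper does not actually prove Theorem~\ref{thm:bestdualdw}: it is imported from \citepalias{DW}, with the relevant operations (proper ironing, boosting, rerouting) only reviewed informally in Appendix~\ref{subsec:regvars}; the paper's own dual analysis begins at Proposition~\ref{prop:bestdual}, where Definition~\ref{def:best}'s criteria (2) and (3) are used to kill double swaps and upper swaps. So you are reconstructing an external proof, and your plan of deriving all four properties purely from the lexicographic minimality in Definition~\ref{def:best} does not go through as written.

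There are two concrete problems. First, several of your claimed effects on $\bar{r}_G$ are reversed. Removing outflow $\alpha_{G,G'}(v)$ \emph{lowers} $f_G\Phi^{\lambda,\alpha}_G$ below $v$, which shrinks $G$'s positive region and therefore weakly \emph{raises} $\bar{r}_G$ (the top of the zero region), not ``strictly lowers'' it; similarly, boosting sends extra flow into $G'$, lowering $f_{G'}\Phi^{\lambda,\alpha}_{G'}$ below $v$ and weakly \emph{raising} $\bar{r}_{G'}$. Since criterion (3) of Definition~\ref{def:best} minimizes the $\bar{r}$'s, the moves you propose are neutral or move in the wrong direction, so the intended contradictions never materialize. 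Second, and more fundamentally, the three criteria of Definition~\ref{def:best} do not force no-boosting (nor no-splitting): boosting a negative region of $C$ changes neither the objective (negatives are clipped by $\max(0,\cdot)$), nor necessarily $|\mathcal{I}(\lambda,\alpha)|$, nor any $\bar{r}_G$ --- the paper itself remarks that boosting ``possibly doesn't strictly help'' --- so an optimal dual violating no-boosting can tie the best dual on all three criteria, and your ``iterate the boost to its maximal extent'' step never yields the strict lexicographic improvement your contradiction requires. These four properties are normalizations that must be built into the selection of the dual (as in \citepalias{DW}), not consequences of Definition~\ref{def:best}. Two smaller issues of the same flavor: your no-splitting move (sliding outflow from $v$ down to $\underline{v}_G$) does change the virtual values of both sender and recipient on the interval between $\underline{v}_G$ and $v$, in particular perturbing the constant value on $G$'s ironed interval, so it does not ``leave the ironing structure intact''; and your proper-monotonicity argument only shows that re-ironing produces \emph{some} properly monotone dual tying the best dual on all criteria, not that the lexicographic minimizer itself is properly monotone.
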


Returning to our three-item example, prior work nicely characterizes the flow coming out of $C$ in the optimal dual: No-boosting tells us that we must always send flow out of $(v, C)$ into somewhere whenever $\Phi^{\lambda, \alpha}_C(v) < 0$ (in order to bring it up to $0$). No-rerouting tells us that we can never send flow out of $(v, C)$ if $\Phi^{\lambda, \alpha}_C(v) > 0$. No-splitting tells us that we never send flow out of the middle of an ironed interval. But, we still need to decide whether to send this flow into $A$ or $B$. This is the novel part of our analysis.

\begin{proof}[Proof of Proposition~\ref{prop:bestdual}]  By Definition~\ref{def:best}, we know that a best dual has the minimum number of ironed intervals amongst all optimal duals. Similarly, a best dual has the lowest positives amongst all optimal duals. We prove the proposition using two lemmas. The first lemma proves that a best dual can't have double swaps:

\begin{lemma} The optimal dual that has the minimal number of ironed intervals does not contain any double swaps. \label{DS} \end{lemma}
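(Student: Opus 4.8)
The plan is to argue by contradiction. Suppose an optimal dual $(\lambda,\alpha)$ with the fewest ironed intervals contains a double swap, witnessed by consecutive chain points $(x,A)$ and $(y,B)$ (so $\underline{x}_A<y<x$) with $\alpha_{C,A}(v)>0$ for $v\in[\underline{x}_A,y)$. I would build another optimal dual with strictly fewer ironed intervals, contradicting minimality. The operation is to \emph{reroute into $B$ the flow that currently enters $A$ on $[\underline{x}_A,y)$}: for $v\in[\underline{x}_A,y)$ set $\alpha'_{C,B}(v):=\alpha_{C,B}(v)+\alpha_{C,A}(v)$ and $\alpha'_{C,A}(v):=0$, leave every other dual coordinate unchanged, and then re-iron $A$ and $B$ via the proper-ironing construction of \citepalias{FGKK} to restore proper monotonicity. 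By the rerouting-among-$\children(C)$ identity reviewed earlier, this leaves $\Phi^{\lambda,\alpha}_C$ and every virtual value at values $\ge y$ untouched, and it raises $f_A\Phi^{\lambda,\alpha}_A$ and lowers $f_B\Phi^{\lambda,\alpha}_B$ by the common amount $\int_{\max\{v,\underline{x}_A\}}^{y}\alpha_{C,A}(w)\,dw$ at each value $v<y$; note that this perturbation is a positive constant on $[0,\underline{x}_A]$ tapering monotonically to $0$ at $y$, and that its whole support lies below $\bar r_A$ (since $y<x\le\bar r_A$).

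Two claims then complete the argument. First, \emph{optimality is preserved}, i.e. the dual objective $\sum_G\int_0^H f_G\cdot\max\{0,\Phi^{\lambda,\alpha}_G\}$ does not increase: for $B$ this is the ``boosting only helps'' phenomenon — pushing a virtual value down and re-ironing can only shrink its positive part — and for $A$ one uses that the bump is supported strictly below $\bar r_A$, so before re-ironing it only perturbs a region where $f_A\Phi^{\lambda,\alpha}_A\le 0$, and because the bump has the monotone ``ramp'' shape described above, the least concave upper bound of $A$'s revenue curve (hence $\int f_A\max\{0,\Phi^{\lambda,\alpha}_A\}$) is unchanged above the monopoly reserve. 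Second, \emph{the number of ironed intervals strictly drops}: the downward push on $B$ over $[\underline{x}_A,y)$ turns part of $B$'s zero region negative (or merges/erases a piece of $B$'s ironing), while the upward ramp on $A$ lets the lower portion of $A$'s ironed interval around $[\underline{x}_A,y)$ un-iron and the surviving ironed pieces coalesce; tracking the hull operation through the reroute shows the total count goes down by at least one. This contradicts the choice of $(\lambda,\alpha)$, proving the lemma.

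The main obstacle is the second claim. One must reason carefully about how the least-concave-upper-bound (ironing) operator responds to the precise ramp perturbation produced by the reroute — simultaneously for $A$, where the revenue curve is pushed up, and for $B$, where it is pushed down — and verify that the bookkeeping genuinely deletes an ironed interval rather than merely reshaping one. Here the combinatorial structure of a chain is what makes it go through: consecutive same-item ironed intervals in a chain are distinct and abut with no gap (the two allocation probabilities they pin must differ, exactly as in the proof of Theorem~\ref{thm:unbounded}), which is what forces the coalescence after the ramp. If in some degenerate configuration the count only stays equal rather than strictly dropping, one supplements the argument with a secondary strictly-decreasing monovariant — e.g. the total mass of flow entering $A$ within $A$'s own ironed intervals — and iterates finitely many times. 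The no-increase of the objective on the $A$ side is a secondary technical point that relies on the ramp's monotone shape and its confinement below $\bar r_A$.
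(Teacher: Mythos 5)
There is a genuine gap, and it is in your \emph{first} claim (preservation of optimality), not the second one you flag as the main obstacle. Your operation removes all flow entering $A$ on $[\underline{x}_A,y)$ and reroutes it into $B$. By the definition of $\Phi^{\lambda,\alpha}_A$, this raises $f_A\Phi^{\lambda,\alpha}_A(v)$ for every $v$ below the rerouted points. But those points lie in $[\underline{x}_A,y)\subseteq[\underline{r}_A,\bar r_A]$, i.e.\ inside $A$'s \emph{zero} region, so the ramp turns a set of positive measure where $f_A\Phi^{\lambda,\alpha}_A=0$ into a region where it is strictly positive. Since ironing preserves $\int f_A\Phi^{\lambda,\alpha}_A$ over each ironed interval, after re-ironing we get $\int f_A\max\{0,\Phi^{\lambda,\alpha}_A\}$ strictly larger than before; meanwhile $C$ is untouched and $B$'s contribution cannot decrease (you are only pushing already non-positive virtual values further down, and the ramp even preserves monotonicity on $B$, so no re-ironing occurs there). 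Hence the new dual has a strictly \emph{larger} Lagrangian value: it is not optimal, and the intended contradiction with ``fewest ironed intervals among optimal duals'' never arises. Your assertion that the perturbation on $A$ ``only perturbs a region where $f_A\Phi^{\lambda,\alpha}_A\le 0$'' and leaves the positive part unchanged is exactly where this breaks: raising a zero makes it positive.

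The paper's proof avoids this by using a mass-balanced three-point exchange rather than a wholesale reroute: it moves $\varepsilon$ flow from $(x,A)$ to $(x,B)$ at the top of $A$'s ironed interval, $\gamma$ flow from $(y,B)$ to $(y,A)$, and $\alpha$ flow from $(z,A)$ to $(z,B)$ at the double-swap point $z$, with $\alpha=\tfrac{x-y}{y-z}\varepsilon$ and $\gamma=(1+\tfrac{x-y}{y-z})\varepsilon$. The first identity $\varepsilon-\gamma+\alpha=0$ keeps all virtual values below $z$ unchanged, and the second, $\varepsilon(x-z)=\gamma(y-z)$, keeps the average of $f_G\Phi^{\lambda,\alpha}_G$ over $[z,x]$ unchanged for both $G=A$ and $G=B$. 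Consequently the dual objective is \emph{exactly} preserved ($A$'s interval re-irons back to all zeroes for small $\varepsilon$, while in $B$ a non-monotonicity is created whose two pieces average to zero and hence iron together), and the number of ironed intervals strictly drops because two of $B$'s intervals merge into one. If you want to salvage your approach you would need to couple the removal of flow into $A$ at $z$ with a compensating injection of flow into $A$ higher up (and symmetrically for $B$), which is precisely what the paper's choice of $\varepsilon,\gamma,\alpha$ accomplishes.
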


First, we discuss why this structure would cause a problem for how we're used to satisfying complementary slackness conditions.  Complementary slackness forces that in the ironed intervals $[\underline{z}_B, \overline{z}_B]$ and $[\underline{y}_A, \overline{y}_A]$, the allocation is constant, and thus utility in these regions is linear.  However, no linear utility functions can satisfy the preferability constraints of having  utility that is higher for item $A$, then $B$, then $A$, as illustrated on the left in Figure~\ref{fig:doubleswap}.

\begin{figure}[h!]
\begin{center} \includegraphics[scale=.5]{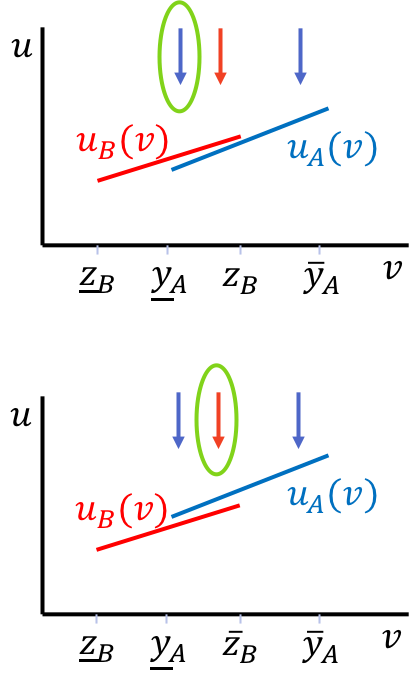} \hspace{1cm} \includegraphics[scale=.4]{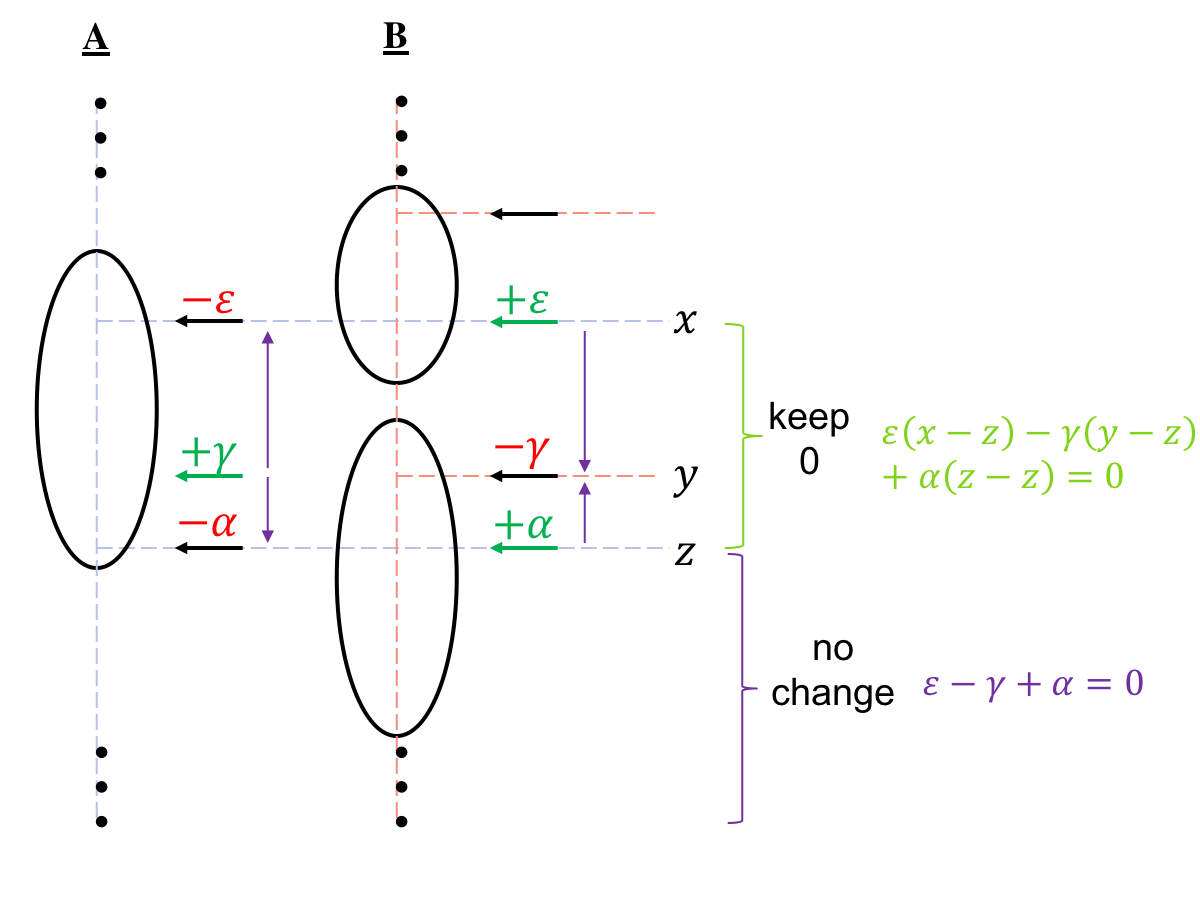}  \end{center}
\caption{Left: Complementary slackness forces linear utility in ironed intervals.  For any choice of linear utility functions, we cannot satisfy the preferability constraints imposed by the double swap for item $A$, then $B$, then $A$ in this region.  The violated constraint corresponds to the circled arrow. 
Right: The operation used in the proof of Lemma~\ref{DS}, using a double swap to maintain virtual welfare and create fewer ironed intervals.}\label{fig:doubleswap}
\end{figure}

\begin{proof}
Proof by contradiction. Suppose that somewhere in the top chain, some point in the chain $(x,A)$ is succeeded by $(y,B)$ and $\alpha_{C,A}(z) > 0$ for some $z \in (\underline{x}_A,y)$, creating a double swap.  We consider the following operation (depicted on the right in Figure~\ref{fig:doubleswap}) that pushes flow down within the ironed interval $[\underline{x}_A, \overline{x}_A]$ and does the reverse on $B$, yet negates the change in flow at $z$ to maintain the virtual values below here.  Move $\vareps$ flow from $(x,A)$ to $(x,B)$.  Move $\gamma$ flow from $(y,B)$ to $(y,A)$.  Move $\alpha$ flow from $(z,A)$ to $(z,B)$.  We will set
$$\alpha = \left(\frac{x-y}{y-z}\right) \vareps \quad\quad\quad\quad \text{and} \quad\quad\quad\quad \gamma = \left(1 + \frac{x-y}{y-z}\right) \vareps.$$
First, this ensures that $\vareps - \gamma + \alpha = 0$, and thus for $v \leq z$, $\hat\Phi^{\lambda,\alpha}_A(v) = \Phi^{\lambda,\alpha}_A(v)$ as well as $\hat\Phi^{\lambda,\alpha}_B(v) = \Phi^{\lambda,\alpha}_B(v)$.  Second, this ensures that $\vareps(x-z) - \gamma(y-z) = 0$, keeping the average virtual value from $z$ to $x$ the same for both items.
\begin{align*}
\int _z ^x f_A(v) \hat \Phi^{\lambda,\alpha}_A(v) dv &= \int _z ^y f_A(v) (\Phi^{\lambda,\alpha}_A(v) + \vareps - \gamma) dv + \int _y ^x f_A(v) (\Phi^{\lambda,\alpha}_A(v) - \gamma) dv \\
&= \int _z ^x f_A(v) \Phi^{\lambda,\alpha}_A(v) dv + \vareps(x-z) - \gamma(y-z) \\
&= \int _z ^x f_A(v) \Phi^{\lambda,\alpha}_A(v) dv 
\end{align*}
However, the virtual values in $[y,x]$ are increasing for item $A$ and decreasing for item $B$, and likewise those in $[z,x)$ are decreasing for item $A$ and increasing for item $B$.  If we choose $\vareps$ small enough as to not uniron the interval $[\underline{x}_A, \bar{x}_A]$, the change gets spread around the interval and the interval remains all zeroes.  However, for item $B$, the interval $[\underline{y}_B, \bar{y}_B]$ becomes positive while the region above becomes negative.  Since the average of both regions is the same and there is now a non-monotonicity, the regions will be ironed together, creating a larger ironed interval with virtual value zero.

Since the virtual welfare of the dual hasn't changed, but we have reduced the number of ironed intervals, then we did not start with an optimal dual with the fewest possible ironed intervals, deriving a contradiction.
\end{proof}




\begin{figure}
\begin{center}  \includegraphics[scale=.5]{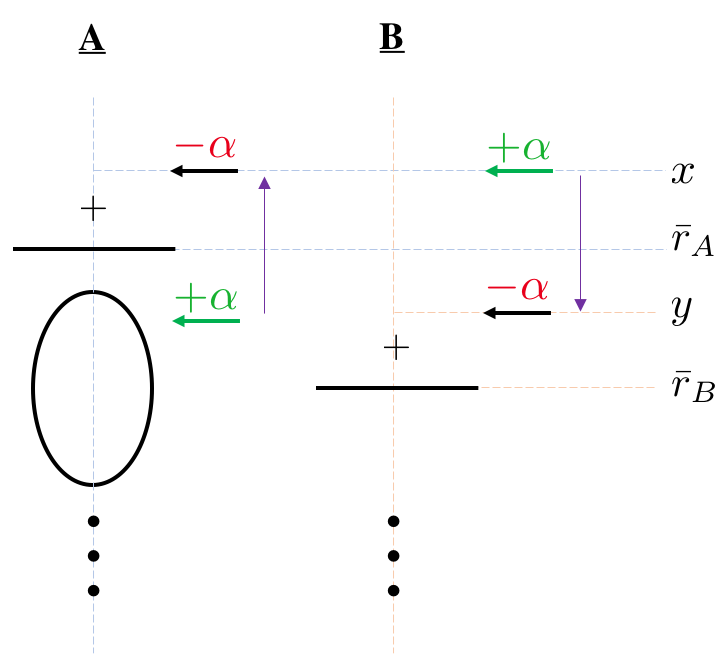} \end{center}
\caption{The operation used in the proof of Lemma~\ref{lowerpos}, using a upper swap to maintain virtual welfare and create lower positives.}\label{fig:swaps}
\end{figure}


The second lemma proves that a best dual can't have upper swaps:

\begin{lemma}The optimal dual that has the lowest positives does not contain any upper swaps. \label{lowerpos}\end{lemma}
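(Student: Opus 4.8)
\textbf{Proof plan for Lemma~\ref{lowerpos}.}

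The plan is to argue by contradiction, exactly paralleling the proof of Lemma~\ref{DS}, but now exploiting an \emph{upper} swap to build a dual with strictly lower positives (in the lexicographic sense of Definition~\ref{def:best}(3)) while keeping total virtual welfare unchanged and not increasing the number of ironed intervals. Suppose the optimal dual with the fewest ironed intervals contains an upper swap: there is flow into $(x,A)$ and $(y,B)$ with $x > \bar r_A > y > \bar r_B$. The key observation is that $(x,A)$ lies strictly above $\bar r_A$, so $(x,A)$ has \emph{strictly positive} virtual value, while $(y,B)$ lies in the zero region of $B$ (it has flow into it and $y < \bar r_A$, but $y > \bar r_B$ means it is at or above $B$'s positive threshold --- more precisely $y$ is on the boundary $y = \bar r_B$ or just above; in any case the virtual value of $B$ at $y$ is $\ge 0$ while the virtual value of $B$ just below $\bar r_B$ is negative). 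The idea is to reroute a small amount of flow $\varepsilon$ coming out of $C$: instead of sending it into $A$ at $x$, send it into $A$ at some point just above $\bar r_A$ — or, better, shift flow from $(x,A)$ to $(x,B)$ and compensate at $(y, \cdot)$, using the rerouting-flow fact (Review of Dual Properties, ``Rerouting Flow Among $\children(G)$'') so that virtual values for $v \le y$ are unaffected.

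Concretely, I would: (1) move $\varepsilon$ flow from $(x,A)$ to $(x,B)$; (2) move a compensating amount $\gamma$ of flow from $(y,B)$ to $(y,A)$, choosing $\gamma$ so that the net change in cumulative flow below $y$ is zero (so $\Phi^{\lambda,\alpha}_A(v)$ and $\Phi^{\lambda,\alpha}_B(v)$ are unchanged for $v \le y$, hence $\bar r_A$, $\bar r_B$ and all ironed intervals below $y$ are preserved, exactly as in Lemma~\ref{DS}). Since $\varepsilon$ flows out of $(x,A)$ which had strictly positive virtual value, decreasing it by $\varepsilon$ lowers the virtual value at $x$ but — for $\varepsilon$ small enough — keeps it nonnegative there, so the virtual welfare contribution of $A$ at $x$ changes by exactly $-\varepsilon f_A(x)(\cdot)$ in the Lagrangian, matched by the $+\varepsilon$ increase into $B$ at $x$; picking the compensation $\gamma$ so that the total virtual welfare is conserved (the same two-equation system $\varepsilon - \gamma + (\text{correction}) = 0$ for cumulative flow and a welfare-balance equation as in Lemma~\ref{DS}) is the routine part. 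The payoff: moving flow \emph{out of} the positive region of $A$ near $\bar r_A$ raises the virtual value there, which pushes $\bar r_A$ strictly downward (or, in the degenerate case, pushes $\bar r_B$ down without raising $\bar r_A$); either way we have produced an optimal dual with the same number of ironed intervals but strictly lower positives, contradicting Definition~\ref{def:best}(3). I would accompany this with Figure~\ref{fig:swaps}, which already depicts exactly this operation.

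The main obstacle is bookkeeping the effect on \emph{proper monotonicity} and on the count of ironed intervals. Shifting flow into $B$ at $x > \bar r_A > \bar r_B$ \emph{decreases} $B$'s virtual value at $x$ (which was positive), potentially creating a non-monotonicity in $f_B \Phi_B^{\lambda,\alpha}$ between $y$ and $x$; I need to verify that re-ironing this sliver does not \emph{increase} the number of ironed intervals (it should merge into, or create, at most the intervals already present, analogously to how Lemma~\ref{DS}'s operation \emph{reduced} the count — here I want to show it doesn't go up), and simultaneously that on the $A$ side, removing flow near $\bar r_A$ doesn't create a new ironed interval either. The cleanest route is: choose $\varepsilon$ small enough that (a) the virtual value of $A$ at $x$ stays $\ge 0$, (b) no new non-monotonicity on $A$ is created because we are only adding a positive quantity to $f_A\Phi_A^{\lambda,\alpha}$ on $[\bar r_A, x]$ where it was already increasing, and (c) any non-monotonicity on $B$ created on $(\bar r_B, x]$ is ironed by the Proper Ironing theorem of~\citetalias{FGKK} into at most the ironed structure that was already accounted for. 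Finally, I would invoke ``Boosting can only improve the dual'' / proper monotonicity (Review of Dual Properties) to re-establish a properly monotone $\lambda$ for the modified $\alpha$ without changing virtual welfare, completing the contradiction and hence, together with Lemma~\ref{DS}, the proof of Proposition~\ref{prop:bestdual}.
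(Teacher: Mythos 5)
Your proposed operation is, after unwinding, \emph{identical} to the paper's: decreasing $\alpha_{C,A}(x)$ and increasing $\alpha_{C,B}(x)$ by $\varepsilon$ while doing the reverse at $y$ with $\gamma=\varepsilon$ is exactly the same net change to the dual variables as the paper's two ``vertical'' pushes in Figure~\ref{fig:swaps} (flow up from $(y,B)$ to $(x,B)$, flow down from $(x,A)$ to $(y,A)$), and the contradiction with ``lowest positives'' is the same. So the approach matches the paper.

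Two pieces of your bookkeeping would fail as written, though. First, removing flow into $(x,A)$ does not lower ``the virtual value at $x$''; it raises $f_A(v)\Phi^{\lambda,\alpha}_A(v)$ for all $v<x$, since the flow enters through $\int_v^H\alpha_{C,A}(w)\,dw$. Consequently your point-wise welfare balance at $x$ is not the right computation, and there is no two-parameter, two-equation system as in Lemma~\ref{DS}: the cumulative-flow condition already forces $\gamma=\varepsilon$, and welfare conservation is then something to \emph{verify} over the interval $(y,x]$, namely that $B$'s objective drops by $\varepsilon(x-y)$ (all of $(y,x]$ is strictly positive in $B$ since $y>\bar{r}_B$) while $A$'s rises by $\varepsilon(x-\bar{r}_A)$ on its positive part plus $\varepsilon(\bar{r}_A-y)$ on the newly positive slice of its zero region, again $\varepsilon(x-y)$. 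Second, you are missing the case split the paper makes at the end: if $y$ lies inside an ironed interval of $A$ whose top endpoint is $\bar{r}_A$, you cannot make only $(y,\bar{r}_A]$ positive --- re-ironing spreads the added mass over all of $[\underline{y}_A,\bar{r}_A]$, the whole interval becomes positive, and the new top of the zero region is $\underline{y}_A$; this is still a strict decrease, but your stated ``degenerate case'' (that $\bar{r}_B$ drops instead) is backwards, since the operation makes $B$ \emph{less} positive on $(y,x]$ and so can only raise $\bar{r}_B$. On the worry you correctly flag about monotonicity: the only new non-monotonicities (the dip at $y$ in $B$ and the drop at $x$ in $A$) lie in strictly positive regions for small enough $\varepsilon$, so re-ironing them preserves the objective and does not touch the count of zero-virtual-value ironed intervals in Definition~\ref{def:best}(2).
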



\begin{proof}
Proof by contradiction. Suppose an upper swap exists.  Then (as depicted on the right in Figure~\ref{fig:swaps}) we can push up $\alpha$ flow from $(y,B)$ to $(x,B)$, causing $f_B(v) \hat \Phi^{\lambda,\alpha}_B(v) = f_B(v) \Phi^{\lambda,\alpha}_B(v) - \alpha$ for $v \in [y, x]$ and improving virtual welfare by $\alpha(x-y)$.  To leave the flow out of item $C$ unchanged, we balance this out by pushing $\alpha$ flow down from $(x,A)$ to $(y,A)$, causing $f_A(v) \hat \Phi^{\lambda,\alpha}_A(v) = f_A(v) \Phi^{\lambda,\alpha}_A(v) + \alpha$ for $v \in [y, x]$.  

If $y$ is unironed at $A$, that is, $\bar{y}_A = y$, or if $\bar{y}_A < \bar{r}_A$, then by choosing $\alpha = - f_A(\bar{y}_A) \Phi^{\lambda,\alpha}_A(\bar{y}_A)$, this will cause $\hat{\bar{r}}_A = \bar{y}_A$, lowering the positives.  

Alternatively, if $y$ is ironed up to $\bar{r}_A$ such that $\bar{y}_A = \bar{r}_A$, then we can choose a very small $\alpha$ to keep the interval $[\underline{y}_A, \bar{r}_A]$ ironed, making the whole interval positive and causing $\hat{\bar{r}}_A = \underline{y}_A$, lowering the positives.  The dual will only increase by $\alpha(x-y)$, even when the values are ironed around, as ironing preserves virtual welfare.  This is canceled out by the improvement in virtual welfare from item $B$.  Then we have maintained virtual welfare but lowered the positives, showing that this dual solution could not have had the lowest positives.
\end{proof}

Lemma~\ref{DS} and Lemma~\ref{lowerpos} comprise the proof of Proposition~\ref{prop:bestdual}.
\end{proof}

Now we prove that our primal recovery algorithm always succeeds in finding an optimal primal (that satisfies complementary slackness) when given a best dual.

\begin{proof}[Proof of Theorem~\ref{thm:allocalg}]
First, consider the case where there exists some point $v$ where $\Phi^{\lambda,\alpha}_A(v) = \Phi^{\lambda,\alpha}_B(v) = 0$, and $v$ is unironed both in $A$ and in $B$.  Then we simply set $v$ as a price for both $A$ and $B$, automatically satisfying the complementary slackness conditions of flow into $A$ or $B$, as both are equally preferable. Since both items $A$ and $B$, have the same allocation rule, the instance degenerates into a FedEx instance. Thus, an optimal allocation rule for the item $C$ can be determined.

Otherwise, the dual solution contains no point $v$ as described in the first case, meaning that ironed intervals interleave throughout the region of zero virtual values.  This means that, if without loss of generality $\bar{r}_A > \bar{r}_B$, that $\bar{r}_B = x$ must sit in an ironed interval $[\underline{x}_A, \bar{x}_A]$ on $A$.

If the top chain is empty, then we have $\bar{r}_A > \bar{r}_B > \underline{x}_A$ with no flow into $A$ for any $v \in [\underline{x}_A, \bar{x}_A]$.  Then, setting 
$$a_A(v) = \begin{cases} 1 & v \geq \bar{x}_A \\ \frac{\bar{r}_A - \bar{r}_B}{\bar{r}_A - \underline{x}_A} & v \in [\underline{x}_A, \bar{x}_A) \\ 0 & \text{otherwise} \end{cases} \quad \quad \text{and} \quad \quad a_B(v) = \begin{cases} 1 & v \geq \bar{r}_B \\ 0 & \text{otherwise} \end{cases}$$
makes both options equally preferable for all $v$ except for $v \in [\underline{x}_A, \bar{x}_A]$, where reporting $B$ is strictly preferable, but this does not violate complementary slackness by the assumption that the top chain is empty.

Otherwise, the top chain is non-empty.  A dual gives a system of utility inequalities via complementary slackness which the allocation rule must satisfy.  Instead, we can solve a system of utility equalities given by the chain via induction on the length of the top chain, and this will imply a solution that satisfies all of the inequalities.  More specifically, the following will hold for top chains of all lengths:
\begin{enumerate}
\item The allocation rule will only increase at the bottom of ironed intervals in the chain.  That is, if the allocation rule increases at $z$, so $a'_A(z) > 0$, then $z$ must be the bottom of an ironed interval for a point $(x,A)$ in the top chain, thus $z = \underline{x}_A$, and $a_A(x) = a_A(\underline{x}_A)$. \label{support}
\item We will fully allocate to all positive virtual values.  That is, $a_A(\bar{r}_A) = a_B(\bar{r}_B) = 1$.  \label{positives}
\item If $(x,A)$ is followed by $(y,B)$ in the chain, then $a_A(x) = a_A(\underline{x}_A) > a_B(y) = a_B(\underline{y}_B)$. \label{decreasing with chain}
\item At any point $(x,A)$ in the top chain, we will have $u_A(x) = u_B(x)$. \label{equality constraints}
\item An alternative solution can, for the first point in the chain $(x,A)$, vary $a_A(\underline{x}_A)$ such that the utility constraint is a strict inequality $u_A(x) > u_B(x)$, and instead we have equality at $\bar{r}_A$: $u_A(\bar{r}_A) = u_B(\bar{r}_A)$.  This gives an equal expected price for the two items, and equal utility for all values $v \geq \bar{r}_A$.  \label{first point} 
\end{enumerate}

To satisfy complementary slackness, for any type $(x,A)$ with flow in, it must be that $u_A(x) \geq u_B(x)$.  We now show why (\ref{decreasing with chain}-\ref{equality constraints}) imply that complementary slackness will be satisfied everywhere.

Consider a subsequence of points in the chain: $(x, B), (y,A), (z, B)$, hence $y > \underline{x}_B$ and $z > \underline{y}_A$.  Then $a_B(x) > a_A(y) > a_B(z)$ by (\ref{decreasing with chain}).  Since $u_A = u_B$ for every point in the chain and a larger allocation rule implies a larger change in utility, we can deduce that $u_A(v) \geq u_B(v)$ for all $v \in [z, y]$.

\begin{itemize}
\item For $v \in (\underline{y}_A, \underline{x}_{B})$, we have that $a_A(v) > a_B(v)$, and since $u_A(z) = u_B(z)$, then $u_A(v) \geq u_B(v)$ in this region.

\item For $v \in (\underline{x}_{B}, y)$, we have that $a_B(v) > a_A(v)$, and since $u_A(y) = u_B(y)$, then $u_A(v) \geq u_B(v)$ in this region.

\item By definition of a double swap, there is no $v \in [\underline{y}_A, z)$ such that there is flow into $(v,A)$.  Likewise, there is no $v \in [\underline{x}_{B}, y)$ such that there is flow into $(v,B)$.  
\end{itemize}
Hence all possible complementary slackness conditions are satisfied.


\begin{figure} [h!]
\begin{center} \includegraphics[scale=.4]{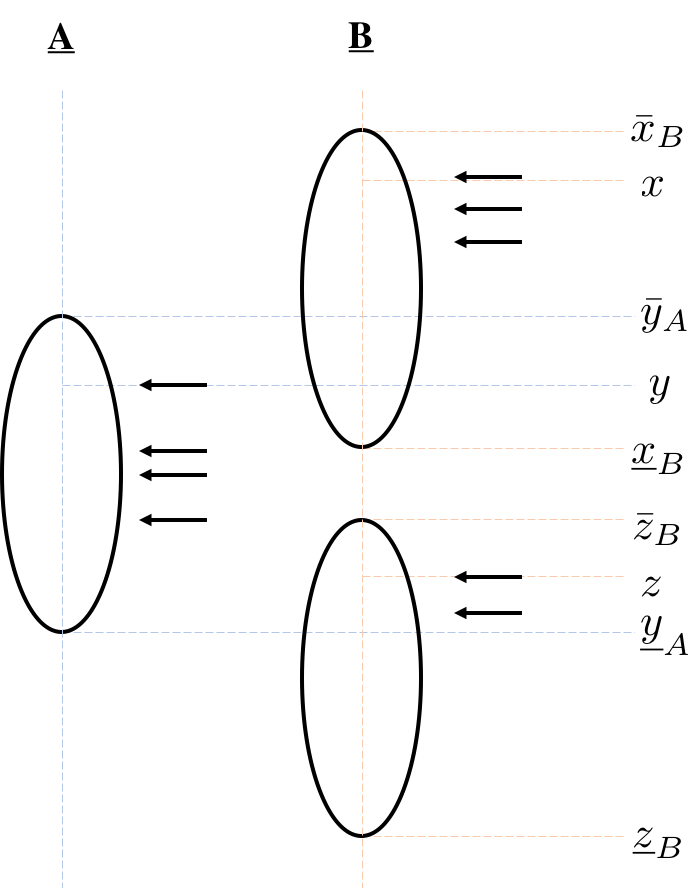} \quad\quad\quad\quad \includegraphics[scale=.4]{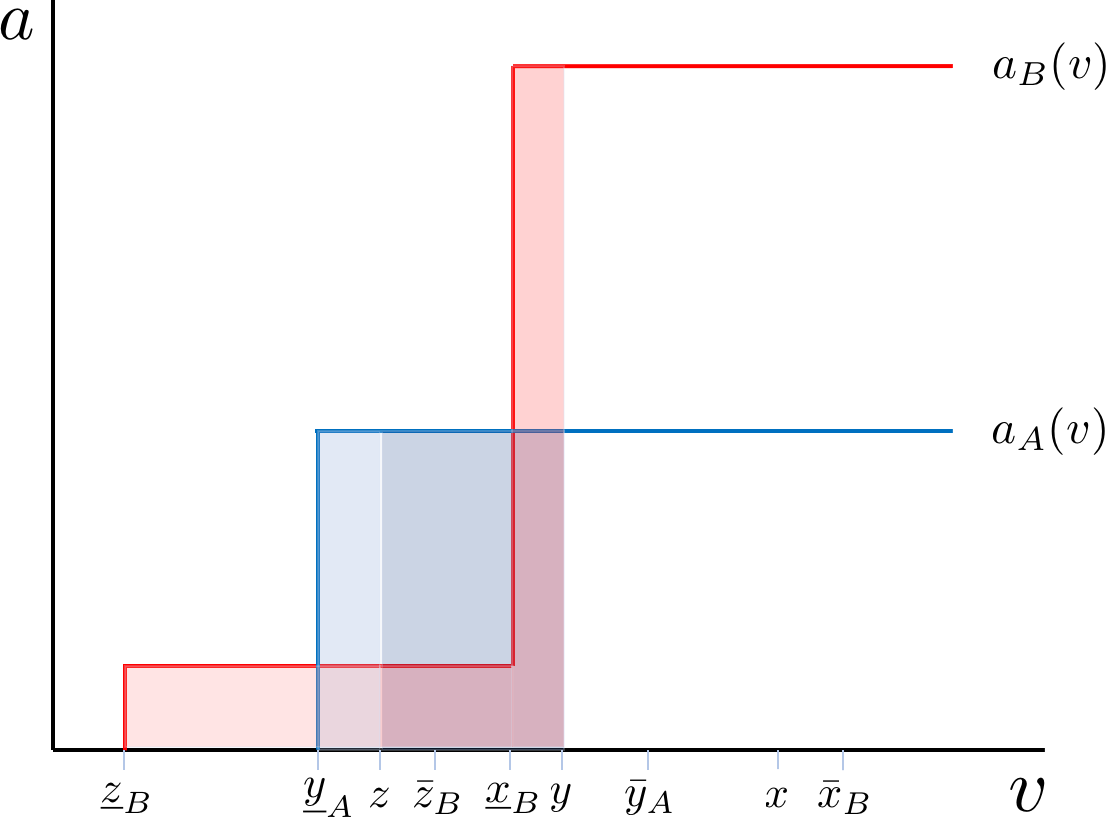} \end{center}
\caption{Left: A candidate dual (with no double or upper swaps); part of a chain.  Right: An allocation that satisfies complementary slackness up to value $y$, satisfying equal preferability at $z$ and $y$ and preferability at all points with flow in.}\label{fig:CSconstraints}
\end{figure}

We now show that these sufficient properties hold by induction.  As a base case, consider when there is one point in the top chain, which without loss is $(x, A)$.  By definition of the top chain, $\bar{r}_A > x > \bar{r}_B > \underline{x}_A$ and there is flow into item $A$ at $x$, which is in ironed interval $[\underline{x}_A, \bar{x}_A]$.  We can set $a_A(\underline{x}_A) = \frac{x - \bar{r}_B}{x - \underline{x}_A}$ and set $a_A(\bar{r}_A) = a_B(\bar{r}_B) = 1$.  Then $$u_A(x) = a_A(\underline{x}_A) \cdot (x - \underline{x}_A) = \frac{x - \bar{r}_B}{x - \underline{x}_A} \cdot (x - \underline{x}_A) = 1 (x - \bar{r}_B) = u_B(x).$$


\begin{figure} [h!]
\begin{center} \includegraphics[scale=.5]{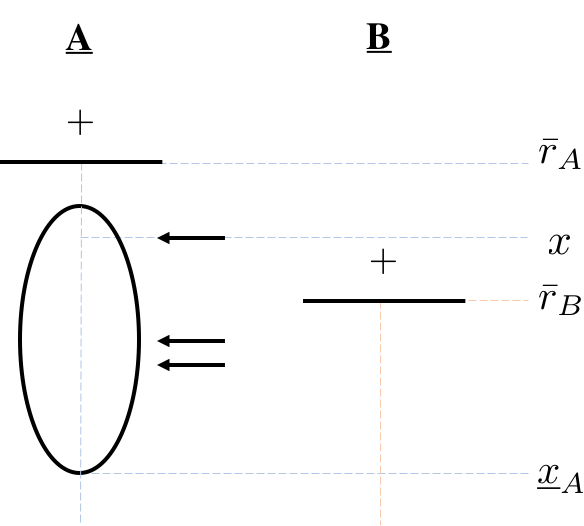} \quad\quad\quad\quad\quad\quad \includegraphics[scale=.3]{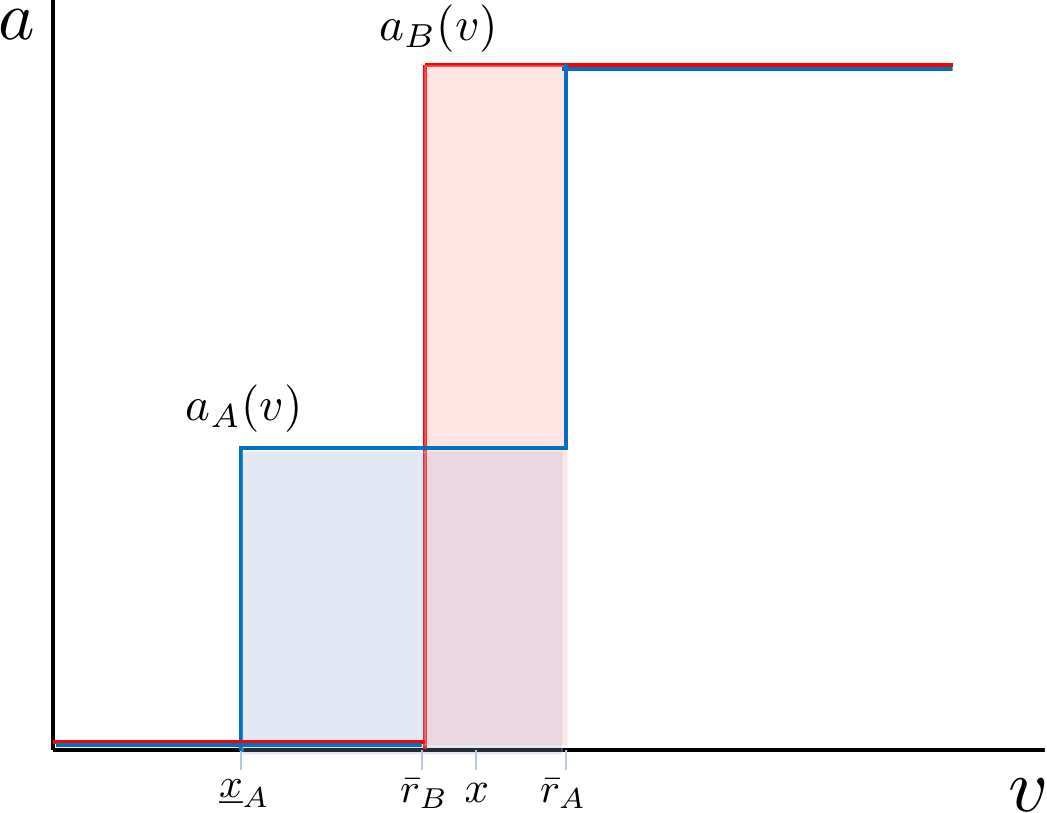} \end{center}
\caption{Left: The base case of a candidate dual with an empty chain.  Right: An allocation that satisfies complementary slackness.}\label{fig:basecaserecovery}
\end{figure}

Then conditions (\ref{support}-\ref{equality constraints}) are met.  To satisfy (\ref{first point}), we can instead set $a_A(\underline{x}_A) = \frac{\bar{r}_A - \bar{r}_B}{\bar{r}_A - \underline{x}_A}$.  Then $$u_A(\bar{r}_A) = a_A(\underline{x}_A) \cdot (\bar{r}_A - \underline{x}_A) = \frac{\bar{r}_A - \bar{r}_B}{\bar{r}_A - \underline{x}_A} \cdot (\bar{r}_A - \underline{x}_A) = 1 (\bar{r}_A - \bar{r}_B) = u_B(\bar{r}_A).$$

For the inductive hypothesis, suppose for any chain of length $n$, we have allocation rules such that (\ref{support}-\ref{first point}) hold. 



Now consider a chain of length $n+1$.  Without loss of generality, let $(x,A)$ be the top point in the chain, where $x$ sits in the ironed interval $[\underline{x}_A, \bar{x}_A]$, and this point is proceeded by $(y,B)$ which sits in $[\underline{y}_B, \bar{y}_B]$, hence $\bar{r}_A > x > \bar{r}_B$ and $y > \underline{x}_A$ by definition of the chain.


By the inductive hypothesis, we can come up with allocation rules $a_A(\cdot)$ and $a_B(\cdot)$ that satisfy complementary slackness to the same chain without the highest point $(x,A)$, and will have $a_A(\underline{x}_A) = a_B(\bar{r}_B) = 1$.  We construct an allocation rule $\hat{a}$ for the top chain of size $n+1$ as follows; this is depicted in Figure~\ref{fig:ISrecovery}.  Let $\lambda = \frac{x-\bar{r}_B}{x-y - a_B(\underline{y}_B)(\bar{r}_B - y)} < 1$.  Then let $$\hat a_A(v) = \begin{cases} 1 & v \geq \bar{r}_A \\ \lambda a_A(v) & \text{otherwise} \end{cases} \quad \quad \text{and} \quad \quad \hat a_B(v) = \begin{cases} 1 & v \geq \bar{r}_B \\ \lambda a_B(v) & \text{otherwise.} \end{cases}$$
This clearly satisfies (\ref{support}-\ref{decreasing with chain}).  To show that (\ref{equality constraints}) holds, we observe that at any previous point of concern $v < \bar{r}_B$, we had $u_A(v) = u_B(v)$.  Now at those points, we have $\hat u_A(v) = \int _0 ^v \hat a_A(v) dv = \lambda \int _0 ^v a_A(v) dv = \lambda u_A(v)$.  This holds for $\hat u_B(v) = \lambda u_B(v)$ as well.  Thus, complementary slackness is still satisfied at all previous points $v\leq \bar{r}_B$; we only need to check equal utility at $x$.
\begin{align*}
\hat u_A(x) &= \hat u_A(y) + \hat a_A(\underline{x}_A) (x-y) = \lambda u_A(y) + \lambda \cdot 1 \cdot (x-y) \\
\hat u_B(x) &= \hat u_B(y) + \hat a_B(\underline{y}_B) (\bar{r}_B-y) + \hat a_B(\bar{r}_B) (x -\bar{r}_B) = \lambda u_B(y) + \lambda \cdot a_B(\underline{y}_B) (\bar{r}_B-y) + 1 \cdot (x -\bar{r}_B)
\end{align*}
Then to have $\hat u_A(x) = \hat u_B(x)$, since $u_A(y) = u_B(y)$, we require that  $$\lambda (x-y) =  \lambda \cdot a_B(\underline{y}_B) (\bar{r}_B-y) + 1 \cdot (x -\bar{r}_B).$$ The solution here is exactly the $\lambda$ defined above.

Alternatively, by replacing $x$ with $\bar{r}_A$, thus setting $\lambda = \frac{\bar{r}_A -\bar{r}_B}{\bar{r}_A-y - a_B(\underline{y}_B) (\bar{r}_B-y)}$, we get a solution that has $u_A(x) > u_B(x)$ and $u_A(\bar{r}_A) = u_B(\bar{r}_A)$ as required in (\ref{first point}).

Thus we have ensured that for top chains of all lengths, we can give an allocation rule that satisfies complementary slackness for all values from the bottom to the top of the chain.  For $v$ below the chain, $u_B(v) = u_A(v) = 0$, so we automatically satisfy complementary slackness.  Above the chain, if we have used the alternate solution that (\ref{first point}) guarantees exists, we automatically satisfy complementary slackness for $v \geq \bar{r}_A$.  This would only fail if there is flow into item $B$ for $v \in [x, \bar{r}_A)$---that is, if the dual contains a upper swap, but by assumption it does not.  Then for any dual solution with no double swaps or upper swaps, this algorithm gives an allocation rule that satisfies complementary slackness. 


We prove that the menu complexity of the mechanism output by this algorithm is finite below:

\begin{claim}The menu complexity is always finite. \end{claim}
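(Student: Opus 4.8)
The plan is to bound the menu complexity \emph{item by item}, splitting according to which branch of the primal recovery algorithm fires. In the easy branch --- when there is a value $v$ with $\Phi^{\lambda,\alpha}_A(v)=\Phi^{\lambda,\alpha}_B(v)=0$ that is unironed in both $A$ and $B$ --- the algorithm posts the single price $v$ for both items, so $a_A$ and $a_B$ each take only two values, and the residual problem for $C$ is a genuine two-item FedEx instance (with the merged item ``$A=B$''), whose menu complexity is at most $2^2-1$ by~\citepalias{FGKK}. Hence the total is a small constant in this branch, and all that remains is the branch where the ironed zero-regions of $A$ and $B$ interleave and the construction follows the top chain.

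For the chain branch I would first argue that the top chain is in fact \emph{finite}. Suppose instead it is a countably infinite alternating sequence $(x_1,A),(x_2,B),(x_3,A),\dots$; since $x_1>x_2>\cdots\ge 0$, the monotone convergence theorem gives $x_i\to\underline v$ for some $\underline v\ge 0$. I would then show $\underline v$ is unironed in both items. The one degenerate case to rule out is that infinitely many odd-indexed (type-$A$) chain points pile up inside a single $A$-ironed interval $[\ell,h]$: there $a_A$ is forced constant at some $c$ by Remark~\ref{rem:1}, while the sign of $a_A(x_i)-a_B(x_i)$ alternates along the chain (Lemma~\ref{lem:3}); combining three consecutive chain points with monotonicity of the non-decreasing $a_B$ yields $c<a_B(x_{\text{even}})$ and $c\ge a_B(x_{\text{even}})$ (or the reverse), a contradiction. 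With accumulation inside one interval excluded, the chain axiom $\underline{x_i}_A<x_{i+1}<x_i$ together with $x_{i+2}<\underline{x_i}_A$ (since $x_{i+2}$, a chain point, has $\lambda_A>0$ and so cannot be the unironed endpoint $\underline{x_i}_A$ nor lie strictly inside that interval) forces $x_{i+2}<\underline{x_i}_A<x_{i+1}$, so $\underline{x_i}_A\to\underline v$. Each $\underline{x_i}_A$ is unironed, so by continuity of $\lambda_A(\cdot)$ we get $\lambda_A(\underline v)=0$; the symmetric argument on the even-indexed points gives $\lambda_B(\underline v)=0$, and $\underline v$ lies in the (closed) zero region of both items. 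But then $\underline v$ is precisely a witness for the easy branch, contradicting that the algorithm is in the chain branch. Hence the top chain has finite length $\ell$.

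Given a finite top chain, finiteness follows from the structural properties (1)--(5) of the recovered allocation already established in the proof: $a_A$ changes value only at the $O(\ell)$ bottom endpoints of the chain-ironed intervals in $A$ and at $\overline{r}_A$, and likewise for $a_B$; then $a_C$ is recovered by the FedEx-style rule relative to the piecewise-linear envelope $\max\{u_A(\cdot),u_B(\cdot)\}$ with the determined flow out of $C$, which at most doubles the number of breakpoints. Since $u_A,u_B$ are piecewise linear with $O(\ell)$ pieces, $a_C$ also has $O(\ell)$ distinct values, so summing over $A,B,C$ the total menu complexity is finite.

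The step I expect to be the main obstacle is the passage to the limit in the infinite-chain case --- specifically ruling out that infinitely many chain points accumulate inside one ironed interval, which is exactly where one must invoke the alternation lemma and monotonicity rather than purely topological facts; everything after producing the ``abort'' point $\underline v$ reduces to the finite case and the FedEx bound. A secondary technical point to get right is the careful use of continuity of the $\lambda$-variables at endpoints of ironed intervals, which is available from the dual derivation.
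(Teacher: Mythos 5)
Your proof is correct and follows essentially the same route as the paper's: infinite menu complexity forces an infinite top chain, the monotone convergence theorem yields a limit point that is unironed in both $A$ and $B$ with zero virtual value in both, and this point triggers the single-price branch of the algorithm, a contradiction. The extra care you take to rule out infinitely many chain points accumulating inside a single ironed interval (via the alternation lemma and monotonicity of $a_B$) and to invoke continuity of $\lambda$ at the unironed endpoints $\underline{x_i}_A$ fills in details the paper asserts without proof, but does not change the argument.
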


\begin{proof} Proof by contradiction. Suppose that there exists an instance such that the mechanism output by the algorithm has infinite menu complexity. 

Note that this can only happen if the length of the top chain is infinity. Thus, there exists a sequence of points $(x_1, A), (x_2, B), (x_3, A), \cdots$ such that the point $(x_i, A)$ is inside an ironed interval $[\underline{x_i}_A, \overline{x_i}_A]$ and $x_{i+1} \geq \underline{x_i}_A$. Analogous claims hold for an element $(x_{i+1}, B)$ in the chain.

Thus, we have

\[x_1 \geq \overline{r}_B \geq \overline{x_2}_B \geq x_2 \geq \underline{x_1}_A \geq \overline{x_3}_A \geq x_3 \geq \cdots\]

Since the infinite sequence $x_1, x_2, \cdots$ is monotone and bounded, it converges to a limit, say $x^*$.  Observe that $x^*$ satisfies $\Phi^{\lambda, \alpha}_A(x^*) = \Phi^{\lambda, \alpha}_B(x^*) = 0$ and is unironed. This is because points arbitrarily close to it are unironed and are zeroes of  $\Phi^{\lambda, \alpha}_A(\cdot)  $ and $\Phi^{\lambda, \alpha}_B(\cdot)$. However, in this case, our algorithm just sets the price $x^*$ and thus has constant menu complexity, a contradiction.

\end{proof}

\end{proof}

\begin{figure} [h]
\begin{center}\includegraphics[width=5cm]{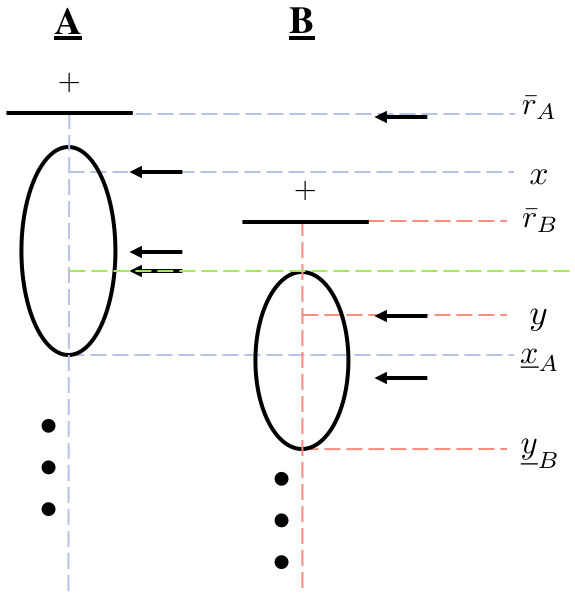} \hspace{1.5cm} \includegraphics[width=5.5cm]{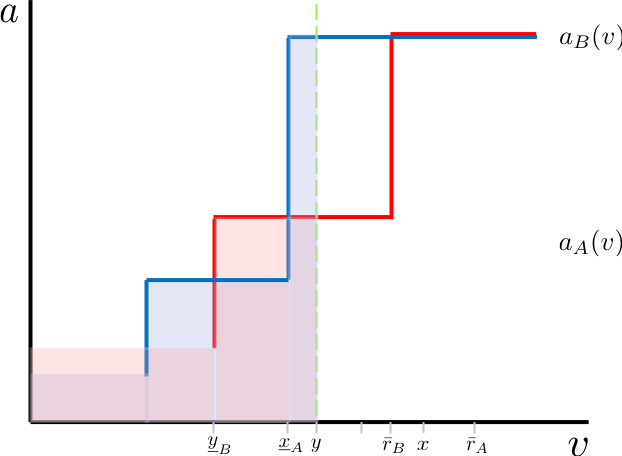} \\ 
\vspace{.5cm} \includegraphics[width=5.5cm]{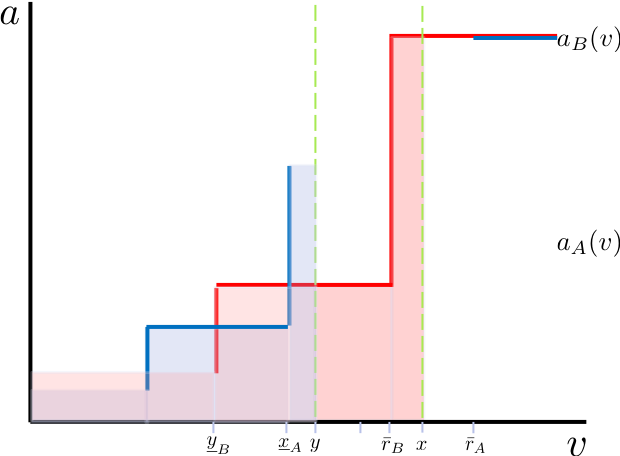} \hspace{1cm} \includegraphics[width=5.5cm]{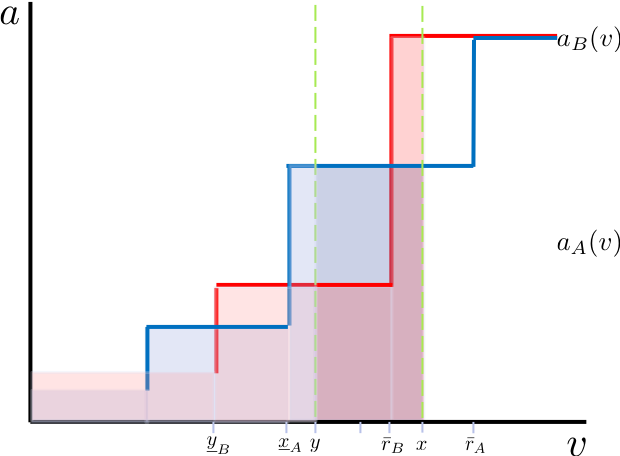}  \end{center}
\caption{Top Left: A top chain from a candidate dual.  We use the inductive hypothesis on the chain of one size smaller (below the green line).  Top Right: The allocation rule from the inductive hypothesis that satisfies all CS constraints on the smaller chain (below the green line).  Bottom Left: The scaled allocation rule, requiring preferability of $A$ between the green lines.  Bottom Right: The allocation rule that satisfies these preferability constraints.}\label{fig:ISrecovery}
\end{figure}

\section{Equivalence with Single-Minded Valuations} \label{sec:singleminded}

In the introduction, we note the following observation.

\begin{observation}\label{obs:SMequiv}
The partially-ordered setting is \emph{equivalent} to the single-minded setting.
\end{observation}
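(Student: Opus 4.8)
The plan is to prove the equivalence in both directions; the only real mathematical content is the classical fact that every finite poset order-embeds into a Boolean lattice, and the rest is checking that instances and mechanisms transport faithfully.

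\textbf{Single-minded $\Rightarrow$ partially-ordered.} First I would observe that a single-minded instance is \emph{literally} a partially-ordered instance: given a ground set of items $\mathcal{U}$ and a distribution over pairs $(v,S)$ with $S\subseteq\mathcal{U}$, let $\mathcal{S}$ be the (finite) family of interest sets appearing in the support, ordered by $S\preceq S'\iff S\subseteq S'$. A buyer interested in $S$ obtains $v$ from a bundle $T$ iff $T\supseteq S$, and in particular from the bundle $S'$ iff $S'\succeq S$. Invoking the reduction of~\citet{FGKK} used throughout this paper --- that it is without loss to award each buyer (a bundle indifferent to) their declared interest set, or nothing --- the only outcomes that matter are the elements of $\mathcal{S}$ together with ``unsatisfied,'' which is precisely the data of a partially-ordered instance on $(\mathcal{S},\preceq)$, with the same distribution now read over (value, interest) pairs.

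\textbf{Partially-ordered $\Rightarrow$ single-minded.} For the converse, given a partially-ordered instance on a finite poset $(\mathcal{G},\succeq)$, I would define the principal-ideal map $\phi(G):=\{H\in\mathcal{G}: H\preceq G\}\subseteq\mathcal{G}$ and verify that it is an order embedding: if $G\preceq G'$ then transitivity gives $\phi(G)\subseteq\phi(G')$, while if $\phi(G)\subseteq\phi(G')$ then reflexivity ($G\in\phi(G)$) forces $G\in\phi(G')$, i.e.\ $G\preceq G'$. Then I would build the single-minded instance whose ground set is $\mathcal{G}$ itself and whose distribution is the pushforward of the given one under $(v,G)\mapsto(v,\phi(G))$. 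After collapsing outcomes via the~\citet{FGKK} reduction, the relevant outcomes of this single-minded instance are $\{\phi(G'):G'\in\mathcal{G}\}$ together with ``unsatisfied,'' and a buyer of interest $\phi(G)$ is satisfied by $\phi(G')$ iff $\phi(G')\supseteq\phi(G)$ iff $G'\succeq G$. Hence the single-minded instance coincides with the original one under the relabeling $G'\leftrightarrow\phi(G')$; allocation probabilities, payments, incentive-compatibility constraints, and expected revenue all pass through this relabeling unchanged, so the two optimization problems are identical.

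\textbf{Main obstacle.} The one point requiring care --- and where I expect the real work to be --- is reconciling the two notions of ``allocation'': a single-minded seller may hand out an arbitrary bundle $T\subseteq\mathcal{U}$, whereas a partially-ordered seller chooses among a fixed finite set of ``items.'' The bridge is again the~\citet{FGKK} observation that a buyer's value depends only on whether the awarded bundle dominates (contains) their interest, so one may restrict without loss to mechanisms that award, per declared interest, either the minimal satisfying bundle or nothing; once outcomes are quotiented by this indifference, the single-minded outcome space is exactly a finite poset under inclusion, and the two arguments above close the loop. I would also sanity-check the edge cases against the introduction: incomparable items of $\mathcal{G}$ map to non-nested subsets (immediate from the embedding, since $G\not\preceq G'$ and $G'\not\preceq G$ give $\phi(G)\not\subseteq\phi(G')$ and $\phi(G')\not\subseteq\phi(G)$), and the minimal three-node partial order with $A,B\succ C$ studied in this paper becomes the interest set $\phi(C)$ with two incomparable supersets $\phi(A),\phi(B)$ --- matching the wifi/cable/phone instance.
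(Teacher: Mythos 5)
Your proposal is correct and takes essentially the same route as the paper: the forward direction is the same literal identification via set inclusion, and your principal-ideal embedding $\phi(G)=\{H : H\preceq G\}$ is isomorphic to the paper's inductive construction, which assigns each interest a fresh item and takes the union over everything below it (so the paper's $B(G)$ is exactly the set of fresh items indexed by your down-set $\phi(G)$). The extra care you take with the order-embedding verification and the \citet{FGKK} outcome-collapsing is consistent with, and slightly more explicit than, the paper's presentation.
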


First, we define the single-minded setting.

\begin{definition}
In a \emph{single-minded setting}, a seller determines how to sell any bundle of $k$ items.  A buyer has a (value, bundle) pair $(v,B)$ where $B \in 2^{[k]}$ is any subset of items.  The pair $(v,B)$ is drawn from a joint probability distribution over $[0,H]\times 2^{[k]}$ where $H$ is the maximum possible value of any bidder for any item.
\end{definition}

Any single-minded setting can be represented as a partially-ordered setting: the set of possible interests $\G$ is just the set of possible bundles, $2^{[k]}$.  The relation is set containment: an interest $G$ dominates an interest $G'$, that is, $G \succ G'$, if $G \supset G'$.  The distribution $F$ is identical.

Any partially-ordered setting can be represented as a single-minded setting:  we can invent items such that every interest $G$ maps to some subset of items.  For any minimal interest $G$ (that is, $G$ which does not dominate any other interests), map $G$ to a new item $i$: $B(G) = \{i\}$.  For each successive interest $G' \in \children(G)$, map $G'$ to $B(G') = \{j\} \cup \bigcup _{G'' : G' \in \children(G'')} B(G'')$ where $j$ is a new additional item.  Repeat this process, completing a mapping from interests to subsets of some $m$ created items.  For all subsets $B \in 2^{[k]}$ which do not have an interest that maps onto it, assign measure $0$ to the event of drawing $(v,B)$ from $F$.  Otherwise, $f_{B(G)}(v) = f_G(v)$.


\section{The Master Theorem}
\label{sec:masterthmApp}

All of the analysis in the previous section started from a candidate dual solution, and showed that such duals are optimal (as in, there is a feasible primal satisfying complementary slackness). The missing step is ensuring that there exists an input distribution for which these duals are feasible. To save ourselves (and future work) the tedium of hand-crafting an actual distribution for which these duals are feasible, we prove a general Master Theorem, essentially stating that for a wide class of duals 
(essentially, anything dictated by ironed intervals, positive/negative regions, and flow in),
there exists a distribution for which this dual is feasible.  
\begin{theorem}[Master Theorem]
\label{thm:multimaster}
Suppose we are given a partial order over $\mathcal{G}$, for each item $G \in \mathcal{G}$ candidate endpoints of zero region (bounded away from $0$) $\bar{r}_{G}, \underline{r}_{G}$, a finite set of candidate ironed intervals (bounded away from zero) $[\underline{x}_{i,G},\overline{x}_{i,G}]$ with $\underline{r}_{G} \leq \underline{x}_{i,G} \leq \overline{x}_{i,G} \leq \bar{r}_{G}$, and for each pair of items $G' \succ G$ a finite set of candidate flow-exchanging points (bounded away from zero) $y_{i,G,G'}$ \emph{not in} $(\underline{x}_{i,G}, \overline{x}_{i,G}]$ for any candidate ironed interval. Then there exists a joint distribution over (value, interest) pairs with a feasible dual $(\lambda, \alpha)$ such that:
\begin{itemize}
	\item the endpoints of the zero region for $\Phi^{\lambda, \alpha}_G$ are $\underline{r}_G$ and $\bar{r}_G$.
	\item the ironed intervals of $\Phi^{\lambda, \alpha}_G$ are exactly to the intervals $[\underline{x}_{i,G},\overline{x}_{i,G}]$ (no others). 
	\item $\alpha_{G,G'}(y) > 0 \Leftrightarrow y = y_{i, G, G'}$ for some $i$.
\end{itemize}
\end{theorem}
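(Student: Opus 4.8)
The plan is to build the joint distribution and the dual $(\lambda,\alpha)$ \emph{simultaneously}, by designing, for each item $G$, the sign-monotone ``virtual value'' $g_G:=f_G\cdot\Phi^{\lambda,\alpha}_G$ directly and then reverse-engineering a marginal that produces it. The enabling fact --- the ``payment identity'' freedom alluded to in the introduction --- is the standard single-item correspondence: the revenue curves $R_G$ that arise from a (conditional) marginal value-distribution are exactly the functions on $[0,H]$ that are non-negative, vanish at $0$ and at $H$, and have $R_G(v)/v$ non-increasing; and since $f_G\varphi_G=-R_G'$ (Definition~\ref{def:myevval}) while $\Phi^{\lambda,\alpha}_G$ is given by \eqref{def:vvalgen}, choosing such an $R_G$ amounts to choosing $-R_G'$ with great freedom. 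Fix interest-probabilities $q_G>0$ summing to $1$ once and for all (e.g.\ all equal). Because all the prescribed features are finite in number and bounded away from $0$, all of the action will take place in a ``bulk'' region where $R_G(v)/v$ stays bounded away from $0$ and $1$, leaving room for small perturbations.

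First I would fix the flow. Set each $\alpha_{G,G'}$ to be a sum of atoms located exactly at the prescribed points $y_{i,G,G'}$, with magnitudes $\epsilon_{i,G,G'}\in(0,\eta]$ for a small parameter $\eta$ to be pinned down at the very end (if one insists that $\alpha$ be a bona fide function, replace each atom by a narrow bump whose support avoids every ironed interval; nothing below changes). This makes $\alpha_{G,G'}(y)>0\iff y=y_{i,G,G'}$ by fiat, and makes the cumulative terms $A^{\mathrm{out}}_G(v):=\sum_{G'}\int_v^H\alpha_{G,G'}$ and $A^{\mathrm{in}}_G(v):=\sum_{G'}\int_v^H\alpha_{G',G}$ non-increasing step functions of total size $O(\eta)$.

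Now process the items in a linear extension of the partial order, from worst to best. When $G$ is reached, $A^{\mathrm{in}}_G$ has already been determined by the worse items processed earlier (flow into $G$ originates at items $G'$ with $G\succ G'$); I am still free to choose $\lambda_G$, $g_G$, and the split of $G$'s outgoing flow among its children. Choose $\lambda_G\ge 0$ to be a smooth bump function that is positive exactly on the open union of the prescribed ironed intervals and zero elsewhere (so $\lambda_G$ is a feasible dual variable, continuous with $\lambda_G(H)=0$, and its support is precisely the prescribed set of ironed intervals). Choose $g_G$ to be any sign-monotone function that is strictly negative on $[0,\underline r_G)$, identically $0$ on $[\underline r_G,\bar r_G]$, and strictly positive on $(\bar r_G,H]$, with small sup-norm and with its total integral tuned so that $\int_0^H(-g_G)=\int_0^H(A^{\mathrm{in}}_G-A^{\mathrm{out}}_G)$. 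Finally \emph{define} $R_G$ on $[0,H]$ by $-R_G'=g_G+\lambda_G'+A^{\mathrm{in}}_G-A^{\mathrm{out}}_G$ together with $R_G(0)=0$; the integral condition just imposed gives $R_G(H)=0$. Unwinding \eqref{def:vvalgen} with the marginal induced by $R_G$ yields $f_G\Phi^{\lambda,\alpha}_G=g_G$ exactly: its zero region has endpoints $\underline r_G,\bar r_G$, its ironed intervals (the places where $\lambda_G>0$) are the prescribed ones, and proper monotonicity holds since $g_G$ was chosen sign-monotone. The $q_G$-dependence in \eqref{def:vvalgen} only rescales the flow magnitudes, which stay within the $\eta$-budget.

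\textbf{The main obstacle} is to check, with a single global choice of $\eta$ (and, if necessary, a common small rescaling of all the $g_G$), that each $R_G$ produced by the differential equation above is in fact a revenue curve: $R_G\ge 0$, $R_G(0)=R_G(H)=0$, and $R_G(v)/v$ non-increasing. The first pair of conditions is ``open'' --- it survives any sufficiently small perturbation of the target profile --- hence holds once $\|g_G\|_\infty$ and $\eta$ are small relative to the (strictly positive, since we are bounded away from $0$ and $H$) gap between the target $R_G(v)/v$ and $0$ on the bulk region. The monotonicity of $R_G(v)/v$ is the genuinely delicate requirement, since $-R_G'$ absorbs the downward jumps of $-A^{\mathrm{out}}_G$ and the oscillation of $\lambda_G'$: here one uses exactly the hypothesis that every flow point lies outside $(\underline x_{i,G},\overline x_{i,G}]$ (so a jump never lands inside an ironed interval, where $R_G$ is already dipping by an amount proportional to $\lambda_G$), that there are only finitely many such jumps on a compact set bounded away from $0$, and that each has size $O(\eta)$; a compactness argument then produces one admissible $\eta$. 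A secondary check is that the atoms of $\alpha$, sitting strictly outside the ironed intervals, neither merge two nearby prescribed ironed intervals nor perturb $\underline r_G$ or $\bar r_G$ --- again immediate from finiteness, separation of the features, and smallness of $\eta$. With all choices made, the marginals $R_G$ together with the weights $q_G$ assemble into the desired joint distribution and feasible dual.
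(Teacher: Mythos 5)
Your proposal is correct and follows essentially the same route as the paper: both reverse-engineer the marginal via $F_G = 1 - R_G(v)/v$ from a hand-designed revenue-curve derivative, both exploit that flow shifts virtual values by a predictable constant below the flow point (so it can be absorbed as a small additive, piecewise-linear correction to $R_G$), and both rely on keeping all slopes small so that $R_G(v)/v$ stays monotone. The one point you should make precise is the normalization you gloss over: for $R_G' \le R_G(v)/v$ to survive the perturbations, the baseline slope of $R_G$ below the smallest prescribed feature must dominate the sup-norms of $g_G$, $\lambda_G'$, and the $O(\eta)$ flow terms on the bulk region --- this is exactly the role of the paper's normalization $R(1)=1$, $|R'|\le \frac{1}{2H}$ in Lemma~\ref{lem:fromRtoF}, and it is also what supplies the condition $R_G(v)/v \le 1$ (i.e.\ $F_G \ge 0$) that your stated characterization of realizable revenue curves omits.
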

Note that from the proof the Master Theorem, it is clear how to explicitly construct a distribution for the lower bound (although this is a tedious and unilluminating process).

In this section we provide a complete proof of Theorem~\ref{thm:multimaster}. On our way to prove this theorem, we generalize a result of~\cite{SSW17}, in which they show that for totally ordered preferences, one can always find a discrete distribution that produces a well-enough-behaved revenue curve. They use this result to show that there exist instances for which the menu complexity is the worst possible, exponential in the number of items. Here we extend their construction and show that for any well-enough-behaved set of continuous revenue curves for the partially ordered setting, there exist distributions that induce them.


The first step is to generalize the result of~\cite{SSW17} from discrete distributions to continuous distributions. 

\begin{lemma}[Revenue Theorem for Continuous Curves]
\label{lem:fromRtoF}
Given a continuous curve $R : [ 1, H ]$ differentiable everywhere except at countably many points, such that $R(1) = 1$ and $|R'(x)_{+}|, |R'(x)_{-}| \leq \frac{1}{2H} \forall x \in [1, H]$, there exists a distribution $\mathcal{F}$ such that $R$ is the revenue curve that arises from selling to a single bidder with a valuation drawn from $\mathcal{F}$. 
\end{lemma}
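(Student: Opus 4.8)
The plan is to \emph{read off} the distribution directly from the equation that defines a revenue curve, and then check that the two hypotheses on $R$ — the normalization $R(1)=1$ and the slope bound $|R'(x)_{+}|,|R'(x)_{-}|\le \tfrac1{2H}$ — are precisely what make this read-off a valid CDF whose revenue curve is $R$.

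Since posting price $v$ against a single bidder with CDF $F$ earns $v\cdot\Pr[\mathrm{val}\ge v]$, to make this equal $R(v)$ on $[1,H]$ it suffices to set $F(v):=1-R(v)/v$ for $v\in[1,H)$, $F(v):=0$ for $v<1$, and $F(v):=1$ for $v\ge H$; equivalently, $\mathcal{F}$ is the distribution on $[1,H]$ with density $f(v)=\big(R(v)-vR'(v)\big)/v^2$ on $(1,H)$ together with a point mass of size $R(H)/H$ at $H$. First I would verify this is a legitimate distribution. Because the one-sided derivatives of $R$ are bounded by $\tfrac1{2H}$ everywhere, $R$ is $\tfrac1{2H}$-Lipschitz, hence absolutely continuous, hence so is $v\mapsto 1-R(v)/v$ on $[1,H]$ (the denominator stays bounded away from $0$); thus $F$ is continuous on $[1,H)$ and its only jump — the atom at $H$ — has size $R(H)/H>0$. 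Combining $R(1)=1$ with the Lipschitz bound gives $R(v)\in[\tfrac12+\tfrac1{2H},\,\tfrac32-\tfrac1{2H}]$ for $v\in[1,H]$, and also $R(v)\le 1+\tfrac1{2H}(v-1)\le v$ (using $H\ge 1$); together these give $0\le 1-R(v)/v\le 1$, so $F$ takes values in $[0,1]$ with $F(1)=0$ and total mass $1$. For monotonicity, at points where $R'$ exists $F'(v)=\big(R(v)-vR'(v)\big)/v^2$, which is strictly positive since $R(v)>\tfrac12$ and $v\le H$ force $R(v)/v>\tfrac1{2H}\ge R'(v)$; as $F$ is absolutely continuous this makes it nondecreasing on $[1,H]$, and it is trivially constant off $[1,H]$. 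Hence $F$ is a valid CDF.

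Finally I would check the revenue curve of $\mathcal{F}$. For $p\in[1,H)$, $F$ is continuous at $p$, so revenue is $p\big(1-F(p)\big)=p\cdot R(p)/p=R(p)$; and at $p=H$, since the atom sits exactly at $H$, revenue is $H\big(1-F(H^-)\big)=H\cdot R(H)/H=R(H)$. Thus $R$ is exactly the revenue curve of $\mathcal{F}$ on $[1,H]$, which is the claim.

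I do not anticipate a real obstacle here: the entire content is checking that the single constant $\tfrac1{2H}$ in the hypothesis is strong enough to force all three of $R(v)>0$, $R(v)\le v$, and $R'(v)<R(v)/v$ \emph{uniformly} on $[1,H]$, and handling the countably many non-differentiability points of $R$ through absolute continuity rather than pointwise differentiation. (Alternatively, one could deduce the statement from the discrete-distribution result of \citepalias{SSW17} by discretizing $R$ finely and passing to a limit, but the direct construction above is shorter and is the route I would take.)
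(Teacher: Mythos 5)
Your construction and verification are essentially identical to the paper's proof: both define $F(v)=1-R(v)/v$ on $[1,H]$ (with $F=0$ below $1$ and $F=1$ above $H$), use $R(1)=1$ together with the slope bound to get $R(v)\ge \tfrac12$ and hence $R(v)-vR'(v)\ge 0$ for monotonicity, and then observe that posting price $v$ recovers $R(v)$. Your write-up is if anything slightly more careful than the paper's (handling the atom at $H$ and the countably many non-differentiability points via Lipschitz continuity explicitly), but the route is the same.
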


\begin{proof}
Consider the following distribution
$$F(x) = 1 - \frac{R(x)}{x}, x \in [1,H]$$
and $F(x) = 0$ for $x \leq 1$, $F(x) = 1$ for $x \geq H$. In order to show that this is a valid distribution, it suffices to show that it is monotonic non-decreasing. For that, we consider its derivative and show it is non-negative everywhere: 

$$F'(x) = \frac{-x R'(x) +R(x) }{x^2}.$$ 

It suffices to show that the numerator, $R(x) -R'(x)x$, is always non-negative. Note that for $x \geq 1$, $R(x) \geq \frac{1}{2}$ (since $R(1) = 1$ and the derivative doesn't change fast enough) and $|R'(x)_+| \leq \frac{1}{2H}$. Since $x \leq H$, the claim follows. 

It remains to show that indeed the revenue from this distribution matches the curve $R(x)$. Consider setting a price of $x$, then the revenue of selling at $x$ is exactly $x(1-F(x)) = R(x)$.  
\end{proof} 

Now we want to extend this to say we can find distributions for revenue curves with specific properties that will be useful.

\begin{theorem}[Master Theorem for Single Item]
\label{theorem:singlemaster}
Given candidate endpoints of zero region $x_{+}, x_{-}$ and candidate ironed interval endpoints $[\underline{x}_{i}, \overline{x}_{i}]_{i=1}^{k}$ (where $x_{-} \leq \underline{x}_{i} \leq \overline{x}_{i} \leq x_{+}$) there is a distribution $\mathcal{F}$ such that the revenue curve induced by a bidder whose valuation is drawn from $\mathcal{F}$ satisfies
\begin{itemize}
\item $\Phi^{\lambda,\alpha}(x) f(x)$ is negative for $x < x_{-}$ (i.e. $x_{-}$ is the lower endpoint of the zero region), 
\item $\Phi^{\lambda,\alpha}(x) f(x)$ is positive for $x > x_{+}$ (i.e. $x_{+}$ is the upper endpoint of the zero region) and, 
\item the ironed intervals correspond exactly to the intervals $[\underline{x}_{i}, \overline{x}_{i}]$ for $i =1$ to $k$.
\end{itemize}
\end{theorem}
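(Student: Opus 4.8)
\textbf{Proof plan for Theorem~\ref{theorem:singlemaster}.}

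The plan is to explicitly build a revenue curve $R(\cdot)$ on $[1,H]$ with the prescribed sign/ironing behaviour, then invoke Lemma~\ref{lem:fromRtoF} to get a distribution $\mathcal{F}$ inducing it. Recall from Definition~\ref{def:myevval} that $R'(x) = -\varphi(x) f(x)$, so the sign of $\Phi f$ (here with no flow and $\lambda=0$ before ironing) is just $-R'$. Thus ``$\Phi f$ negative for $x < x_{-}$'' means $R$ is increasing on $[1,x_{-})$, ``$\Phi f$ positive for $x > x_{+}$'' means $R$ is decreasing on $(x_{+},H]$, and the flat/zero region on $[x_{-},x_{+}]$ should have $R$ roughly constant. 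The ironed intervals $[\underline{x}_i,\overline{x}_i]$ are places where $R$ dips below its least concave upper bound, so I would make $R$ strictly concave (a small ``valley'') on each $[\underline{x}_i,\overline{x}_i]$ and equal to its own least concave upper bound (a straight segment joining the endpoints) on the complement of the ironed intervals within $[x_{-},x_{+}]$. Concretely: on $[1,x_{-}]$ let $R$ rise at slope $+\tfrac{1}{4H}$; on $[x_{+},H]$ let $R$ fall at slope $-\tfrac{1}{4H}$; on $[x_{-},x_{+}]$ set $R$ to be the piecewise-linear ``plateau'' of height $R(x_{-})$ outside the ironed intervals, and on each $[\underline{x}_i,\overline{x}_i]$ replace that linear piece by a shallow downward parabola (or any $C^1$ strictly concave bump) matching value and one-sided derivative at the endpoints, with curvature small enough that $|R'|\le \tfrac{1}{2H}$ everywhere and $R$ stays bounded away from $0$ (scaling $H$ or the plateau height up front if needed so that $R(1)=1$ and $R\ge \tfrac12$ throughout).

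The key steps in order: (1) define $R$ as above, verifying it is continuous, differentiable except at the finitely many endpoints $1,x_{-},\underline{x}_i,\overline{x}_i,x_{+},H$, satisfies $R(1)=1$, and obeys the derivative bound $|R'(x)_\pm|\le \tfrac{1}{2H}$; (2) apply Lemma~\ref{lem:fromRtoF} to obtain $\mathcal{F}$ with revenue curve exactly $R$; (3) check the sign conditions: since $R'<0$ on $(x_{+},H]$ we get $\Phi f = -R' > 0$ there, and since $R'>0$ on $[1,x_{-})$ we get $\Phi f < 0$ there, so $x_{-},x_{+}$ are precisely the endpoints of the zero region (using that $R$ is constant on the plateau pieces so $\Phi f = 0$ in between); (4) check the ironing: on each $[\underline{x}_i,\overline{x}_i]$, $R$ is strictly concave and dips strictly below the chord joining $(\underline{x}_i,R(\underline{x}_i))$ and $(\overline{x}_i,R(\overline{x}_i))$, while that chord coincides with the least concave upper bound $\hat R$ there (because $\hat R$ is linear across a region where $R$ is strictly below it); and on the complementary plateau pieces $R$ equals $\hat R$ (it is already linear and the global concave envelope is ``pinned'' at those points), so the ironed set is exactly $\bigcup_i (\underline{x}_i,\overline{x}_i)$ as required.

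The main obstacle is step (4): one must argue globally that the least concave upper bound $\hat R$ of the whole curve $R$ agrees with $R$ exactly off the prescribed ironed intervals. This is delicate because $\hat R$ is a global object, so a poorly chosen valley could ``leak'' ironing into a neighbouring plateau piece, or the upward/downward ramps on $[1,x_{-}]$ and $[x_{+},H]$ could interact with the plateau in a way that introduces spurious non-concavity at $x_{-}$ or $x_{+}$. I would control this by (a) choosing the ramp slopes $\pm\tfrac{1}{4H}$ strictly smaller in magnitude than the (zero) plateau slope only in the sense that the one-sided derivatives decrease as $x$ increases through $x_{-}$ and through $x_{+}$ — i.e. arrange $R$ to be concave at $x_{-}$ and $x_{+}$ by taking the ramp into the plateau to have the correct derivative ordering (incoming slope $\ge 0 \ge$ plateau slope $\ge$ outgoing slope), so no ironing is forced at those kinks; and (b) making each valley's depth small relative to the gap between consecutive ironed intervals so that the chord over $[\underline{x}_i,\overline{x}_i]$ lies strictly below the plateau level only inside that interval. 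A clean way to make (a)(b) rigorous is to first build $R$ so that $R$ is \emph{already concave except on the valleys}, observe that the least concave upper bound of such a curve is obtained by replacing each valley by its endpoint-chord (a standard one-dimensional fact), and conclude $\hat R = R$ precisely off $\bigcup_i(\underline{x}_i,\overline{x}_i)$. With that, the three bulleted conclusions follow, and—combined with the reduction from the multi-item Master Theorem (Theorem~\ref{thm:multimaster}) to single-item revenue curves via the flow structure out of each node—this completes the argument.
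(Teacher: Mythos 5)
Your plan is essentially the paper's own proof: the paper also builds an explicit revenue curve on $[1,H]$ that rises at slope $1/(2H)$ up to $x_-$, sits at a plateau on $[x_-,x_+]$ except for a dip inside each prescribed interval $[\underline{x}_i,\overline{x}_i]$, falls after $x_+$, and then invokes Lemma~\ref{lem:fromRtoF} together with the identity $R'(x)=-\varphi(x)f(x)$ to read off the sign pattern and the ironing. The only cosmetic difference is that the paper's dips are piecewise-linear V-shapes with slopes $\mp 1/(2H)$ rather than smooth arcs, which makes the derivative bound and the ``concave except on the dips'' structure immediate.

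There is, however, one genuine slip in how you describe the dips, and as written it would break step (4). You repeatedly call the valleys ``strictly concave'' while also requiring them to lie strictly \emph{below} the chord joining $(\underline{x}_i,R(\underline{x}_i))$ and $(\overline{x}_i,R(\overline{x}_i))$; these are incompatible, since a strictly concave function with those endpoint values lies strictly \emph{above} its chord. For the interval to be ironed (i.e.\ $\hat R>R$ there, with $\hat R$ the least concave upper bound), the dip must lie below the chord, i.e.\ it must be a locally \emph{convex} excursion (an upward-opening parabola arc, or the paper's V), not a concave bump. If you literally inserted a concave bump above the plateau, that interval would satisfy $\hat R=R$ (no ironing), the concave envelope would instead lift off the plateau on the \emph{neighbouring} regions (spurious ironing exactly where you want none), and the nonzero slope of the bump would put nonzero virtual values inside $[x_-,x_+]$, violating the zero-region requirement. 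The fix is exactly your own closing remark, correctly oriented: build $R$ concave everywhere except on the dips, make each dip a below-chord (convex) excursion with $|R'|\le 1/(2H)$, and then the least concave upper bound is obtained by replacing each dip with its endpoint chord, so the ironed set is precisely $\bigcup_i(\underline{x}_i,\overline{x}_i)$ and the three bullets follow as in the paper.
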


\begin{proof}
We will reduce the problem of finding a valid distribution to that of constructing a revenue curve that will guarantee these properties and then apply Lemma~\ref{lem:fromRtoF}. Consider the following revenue curve:

\begin{equation*}
R(x) =\begin{cases}
      x & 0\leq x \leq 1, \\
      1+\frac{x}{2H} & 1 \leq x \leq x_{-}, \\
      1+ \frac{x_{-}}{2H} & x_{-} \leq x \leq \underline{x}_{1} \\ 
      1+ \frac{x_{-}+\underline{x}_{1}-x}{2H} & \underline{x}_{1} \leq x \leq \frac{\underline{x}_{1}+\overline{x}_{1}}{2} \\
      1+ \frac{x_{-}+x-\overline{x}_{1}}{2H} & \frac{\underline{x}_{1}+\overline{x}_{1}}{2} \leq x \leq \overline{x}_{1} \\
      \dots \\ 
       1+\frac{x_{-}}{2H} & \overline{x}_{i-1}  \leq x \leq \underline{x}_{i} \\ 
      1+ \frac{x_{-}+\underline{x}_{i}-x}{2H} & \underline{x}_{i} \leq x \leq \frac{\underline{x}_{i}+\overline{x}_{i}}{2} \\
      1+ \frac{x_{-}+x-\overline{x}_{i}}{2H} & \frac{\underline{x}_{i}+\overline{x}_{i}}{2} \leq x \leq \overline{x}_{i} \\
      \dots \\ 
      1+\frac{x_{-}}{2H} & \overline{x}_{k} \leq x \leq x_{+} \\
      1+\frac{x_{-}}{2H}-\frac{x-x_{+}}{2H(H-x_{+})}(x_{-}+1) & x_{+} \leq x \leq H.\\       
   \end{cases} 
\end{equation*}

This revenue curve is such that $R(1) = 1$ and $|R'(x)| \leq \frac{1}{2H}$ for $x \in [1, H]$. This allows us to claim that there is a distribution that induces this revenue curve. Moreover, from the way we constructed this revenue curve, the derivative is positive from $0$ to $x_{-}$, negative from $x_{+}$ to $H$, goes from negative to positive for the intervals $[\underline{x}_{i}, \overline{x}_{i}]$ and is $0$ elsewhere. We will show that these conditions are sufficient to make the virtual values take the signs we intend them to. 


It suffices to note that the sign of the derivative of the revenue at $x$ is the opposite of the sign of the virtual value at $x$ (noted in Definition~\ref{def:revcurve}). By construction, our revenue curve has negative slope on values higher than $x_{+}$ and positive slope on points below $x_{-}$. The intervals in between will be ironed and turn into $0$ slope intervals. 
\end{proof} 

\begin{remark}
It is possible to relax the condition that all ironed intervals are between $x_{-}, x_{+}$. It is not hard to see how to adapt the proof to have ironed intervals either in $[1, x_{-}]$ or $[x_+, H]$. It is sufficient to add dimpled intervals, like the ones in our construction, as the revenue curve is increasing or decreasing. We don't need them for our main result, hence don't worry about this more general result. Likewise, the revenue curve $R$ could be made differentiable everywhere if we used a smoother function to transition between the ironed and non-ironed intervals, as opposed to straight lines.
\end{remark}



\begin{proof}[Proof of Theorem~\ref{thm:multimaster}]
If the constraint over flows wasn't there, the problem would be a direct application of Theorem~\ref{theorem:singlemaster}. Unfortunately, the flow constraints may affect the virtual values of neighboring items. It is not hard to predict how outgoing and incoming flow will change the virtual values for the different items. From the study of duality in this context we know that if there is $\varepsilon$-flow leaving from $(y_i, G)$ to $(y_i, G')$ (where $G' \in N^{+}(G)$), then the virtual values of all points of item $G$ with $y \leq y_i$ will increase by $\varepsilon$ and all points of item $G'$ with $y \leq y_i$ will decrease by $\varepsilon$. Thus, given that we know what we want the revenue curves to look like after all flow has been sent, we can reverse engineer what they must look like in order to make that happen. In particular, since the flows shift the virtual values by a constant it will suffice to subtract a function whose value is $0$ before $y_i$ and becomes a line with small, negative slope at $x_i$ (say, slope $\varepsilon = \frac{1}{2^H}$) from the ``suggested'' (by Theorem~\ref{theorem:singlemaster}) revenue curve for item $G$ (since these will increase by $\varepsilon$ after the flow is sent) and add positive slope functions of the same value at $x_i$ on item $G_{i,G}$ from its suggested revenue curve (since these will decrease by $\varepsilon$ after the flow is sent). This is sufficient because of the connection between virtual values and revenue curves argued before: the derivative corresponds to changes in the virtual value. So for a constant change in virtual value, the matching change would be adding a linear term to the revenue curve of opposite sign. The order in which we do these changes is by processing items from leaves to the root (i.e. only process a node once all its children have been processed) and within an item $G$, address the flow-exchange values from smallest to largest. 
\end{proof}



We abuse this opportunity to prove a similar result for the multi-unit pricing setting. 

\begin{theorem}[Master Theorem for MUP]
\label{thm:mupmaster}
Suppose we are given a MUP instance where the buyer can get up to $n$ copies of an item. Let $G_i$ for $1 \leq i \leq n$ be the item corresponding to $i$ copies. For each item $G_i$ we are given candidate endpoints of the zero region $x_{-i}, x_{+i}$ and a set of candidate ironed interval endpoints $[\underline{x}_{j,i},\overline{x}_{j,i}]_{j=1}^{k_i}$ with $x_{-i} \leq \underline{x}_{j,i} \leq \overline{x}_{j,i} \leq x_{+i}$. Moreover, for each tuple $(i, i+1)$ and $(i,i-1)$, we are given a set of candidate flow-exchanging points $y_{j,i,i+1}$ and $y_{j,i,i-1}$ not in $(\underline{x}_{j,i},\overline{x}_{j,i}]$ for any candidate ironed interval. Then, there exists distributions $\mathcal{F}_G$ for all items $G$ such that: 
\begin{itemize}
	\item the endpoints of the zero region for $G_i$ correspond to $x_{-i}, x_{+i}$, 
	\item the ironed intervals correspond exactly to the intervals $[\underline{x}_{j,i},\overline{x}_{j,i}]_{j=1}^{k_i}$ (and no other), 
	\item the dual of the problem is such that there $\alpha_{G_i, G_{i+1}}(y_{j,i,i+1}) \geq 0$ (i.e. there is flow sent from $G_i$ at $y_i$ to $G_{i+1}$ into $y_{j,i,i+1}$ and no other flow from $i$ to $i+1$). 
	\item the dual of the problem is such that there $\alpha_{G_i, G_{i-1}}(y_{j,i,i-1}) \geq 0$ (i.e. there is flow sent from $G_i$ at $y_i$ to $G_{i-1}$ into $\frac{i-1}{i} y_{j,i,i-1}$ and no other flow from $i$ to $i-1$). 
\end{itemize}
\end{theorem}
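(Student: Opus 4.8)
The plan is to run the same machinery used for the partially‑ordered Master Theorem (Theorem~\ref{thm:multimaster}), adapted to the totally‑ordered‑with‑fractional‑values structure of Multi‑Unit Pricing. The workhorse is the single‑item construction of Theorem~\ref{theorem:singlemaster}: apply it once for each demand level $i$ to obtain a ``target'' revenue curve $R_{G_i}$ whose induced virtual value $\varphi_{G_i} f_{G_i}$ is negative below $x_{-i}$, positive above $x_{+i}$, and ironed (after the canonical choice of $\lambda$) on exactly the intervals $[\underline{x}_{j,i},\overline{x}_{j,i}]$. These are the curves we want each item's distribution to realize \emph{after} the prescribed dual flow is sent; Lemma~\ref{lem:fromRtoF} will then turn each corrected curve into an honest distribution $\mathcal{F}_{G_i}$, and the dual $(\lambda,\alpha)$ obtained by actually routing the flows will have the advertised support.

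Next I would record how flow moves virtual values in the MUP dual. A demand‑$i$ buyer who is handed $i+1$ copies gets full value, so the IC constraint between $G_i$ and $G_{i+1}$ is $u_{G_i}(v)\ge u_{G_{i+1}}(v)$, exactly as in FedEx: sending $\varepsilon$ units of flow out of $(y,G_i)$ toward $G_{i+1}$ raises $f_{G_i}\Phi^{\lambda,\alpha}_{G_i}$ by $\varepsilon$ on $\{w\le y\}$ and lowers $f_{G_{i+1}}\Phi^{\lambda,\alpha}_{G_{i+1}}$ by $\varepsilon$ on $\{w\le y\}$. A demand‑$i$ buyer handed $i-1$ copies gets only $\tfrac{i-1}{i}v$, so the relevant constraint is $u_{G_i}(v)\ge u_{G_{i-1}}\!\big(\tfrac{i-1}{i}v\big)$; hence flow out of $(y,G_i)$ toward $G_{i-1}$ raises $f_{G_i}\Phi^{\lambda,\alpha}_{G_i}$ by $\varepsilon$ on $\{w\le y\}$ but lowers $f_{G_{i-1}}\Phi^{\lambda,\alpha}_{G_{i-1}}$ by $\varepsilon$ only on $\{w\le \tfrac{i-1}{i}y\}$ — which is precisely the anchor $\tfrac{i-1}{i}y_{j,i,i-1}$ in the theorem statement. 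Since the (negative) derivative of the revenue curve is the virtual value, a constant shift of $f\Phi$ on an initial segment corresponds to adding a ``ramp'' (a function that is constant up to the anchor and then linear with slope of magnitude $\varepsilon$) to the revenue curve. So the construction is: for each prescribed $y_{j,i,i+1}$, correct $R_{G_i}$ and $R_{G_{i+1}}$ by opposite ramps anchored at $y_{j,i,i+1}$; for each $y_{j,i,i-1}$, correct $R_{G_i}$ by a ramp anchored at $y_{j,i,i-1}$ and $R_{G_{i-1}}$ by the opposite ramp anchored at $\tfrac{i-1}{i}y_{j,i,i-1}$. Choosing the common slope $\varepsilon$ tiny (e.g. $\varepsilon=2^{-H}$ divided by the total number of flow points, as in the proof of Theorem~\ref{thm:multimaster}), the accumulated corrections keep each $R_{G_i}(1)=1$ and $|R_{G_i}'|\le \tfrac1{2H}$, so Lemma~\ref{lem:fromRtoF} applies, and by design the post‑flow virtual values are exactly the Theorem~\ref{theorem:singlemaster} targets.

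\textbf{Main obstacle.} The crux is the bookkeeping around the corrections, which is genuinely harder than in the partial‑order case for two reasons. First, MUP flow runs in \emph{both} directions along the chain, so each $G_i$ both emits corrections (to $R_{G_{i\pm 1}}$) and receives them (from $R_{G_{i\pm 1}}$); one cannot simply process items in a topological order as in Theorem~\ref{thm:multimaster}, so instead I would fix all ramp magnitudes at once (they depend only on $\varepsilon$ and the $y$'s, not on the curves) and show their superposition stays inside the slope budget and does not flip the sign of any derivative. Second, the $\tfrac{i-1}{i}$ anchor shift means a ramp added to $R_{G_{i-1}}$ at $\tfrac{i-1}{i}y_{j,i,i-1}$ could in principle land inside a dimple or glue two adjacent flat stretches of $R_{G_{i-1}}$ into one longer ironed interval; I would forestall this by building the Theorem~\ref{theorem:singlemaster} curves with shallower dimples (slopes of magnitude $\tfrac1{4H}$, say, leaving room $\gg\varepsilon$), checking that $\tfrac{i-1}{i}y_{j,i,i-1}$ avoids the candidate ironed intervals of $G_{i-1}$ (absorbing it as an extra dimple à la the remark following Theorem~\ref{theorem:singlemaster} if it does not), and verifying that between ironed intervals the corrected derivative still has the intended sign so that the zero‑region endpoints $x_{-i},x_{+i}$ are unchanged. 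Once these smallness/ordering conditions are in place, feasibility of the resulting dual ($\lambda\ge 0$ from proper ironing, $\alpha\ge 0$ by construction) and the three bulleted conclusions follow immediately.
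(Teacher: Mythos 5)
Your proposal follows essentially the same route as the paper: build per-item target revenue curves via the single-item construction (Theorem~\ref{theorem:singlemaster}), pre-correct them with small ramps so that the prescribed flows restore the targets, and invoke Lemma~\ref{lem:fromRtoF}, with the key MUP-specific observation being that backward flow out of $(y,G_i)$ lands at the shifted anchor $\tfrac{i-1}{i}y$ on $G_{i-1}$. The one substantive bookkeeping difference: the paper computes the decrement on the receiving item as $\tfrac{i-1}{i}\varepsilon$ (not $\varepsilon$, as you assert) and therefore scales the receiving-side ramp by $\tfrac{i}{i-1}$; since the zero region requires the post-flow virtual value to be \emph{exactly} zero, you should verify this constant against the MUP utility normalization $u_i(v)=i\int_0^v a_i(x)\,dx$ from \citepalias{DHP} before fixing your ramp magnitudes. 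Your extra care about $\tfrac{i-1}{i}y_{j,i,i-1}$ landing inside an ironed interval of $G_{i-1}$, and your simultaneous-superposition argument in place of the paper's leaves-to-root processing order, are both fine (the corrections are additive and commute).
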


\begin{proof}
This proof is similar to that of~\ref{thm:multimaster} with the exception that on the former, increasing the flow from $(v,G)$ to $(v,G')$ (with $G' \in N^{+}(G)$) by a little bit increases and decreases the virtual values below $v$ by the same amount. This is no longer true since we are moving from $(y_{j,i,i-1}, G_i)$ to $(\frac{i-1}{i} y_{j,i,i-1}, G_{i-1})$. In this case, sending $\varepsilon$ flow from $(y_{j,i,i-1}, G_i)$ to $(\frac{i-1}{i} y_{j,i,i-1}, G_{i-1})$ increases the virtual values below $(y_{j,i,i-1}, G_i)$ by $\varepsilon$ but decreases the ones on the other end by only  $\frac{i-1}{i} \varepsilon$. So, in order to reverse engineer the change in virtual value induced by this setting we need to add the same functions as in the proof of Theorem~\ref{thm:multimaster} to the revenue curve suggested for $G_i$ and add a $\frac{i}{i-1}$-scaled version of it for the receiving item at the point  $(\frac{i-1}{i} y_{j,i,i-1}, G_{i-1})$ on the revenue curve for $G_{i-1}$. The order in which these we do these changes is by processing items from leaves to root (i.e. from $G_n$ to $G_1$) and within a item $G_i$, address the flow-exchange points from smallest to largest.
\end{proof}

\section{A Candidate Dual for a Lower Bound on Menu-Complexity for the Multi-Unit Pricing Problem}\label{sec:MUPLB}

Consider an MUP instance where the buyer can get one, two, or three copies of a given item. The relevant complementary slackness constraints in this setting go from 

\begin{itemize}
\item {\bf Rightwards.} For all $v$, from $(v,1) \to (v,2)$ and $(v,2) \to (v,3)$. This is because a buyer can always misreport and get more items.
\item {\bf Leftwards.} For all $v$, from $(v,2) \to (v/2,1)$ and $(v,3) \to (2v/3,2)$. This is because a buyer would prefer getting fewer items if they are available for much cheaper.
\end{itemize}

As shown in \citepalias{DHP}, a buyer of type $(v,C)$'s utility for reporting $(v/2, A)$ is given by $u_A(v/2) = \int _0 ^{v/2} a_A(x) dx$.  The same buyer's utility for reporting $(v, B)$ is given by $u_B(v) = 2 \int _0 ^{v} a_B(x) dx$.

To construct a lower bound for the MUP instance, we adapt our construction from the partially ordered case. We describe our construction formally below, but note here all the relevant differences. Observe that the incentive compatibility constraints for the MUP instance described above hide a partially ordered instance inside them. Indeed, the `item' $2$ is analogous to the item $C$, while the items $A$ and $B$ are the items $1$ and $3$ respectively. Just like the partially ordered instance, there are incentive compatibility constraints from $(v,2) \to (v,3)$ for all $v$.  The only difference is that the constraints from $(v, 2) \to (v,1)$ have been replaced by those from $(v,2) \to (v/2,1)$. Also, there are `new' constraints from $(v,1) \to (v,2)$ and $(v,3) \to (2v/3,2)$.

We claim that, despite these changes, the essence of our argument there still holds. Roughly speaking, our argument there involved constructing a top-chain (see Definition~\ref{def:chain}) oscillating between items $A$ and $B$. For any value $x$ in this chain, we had flow coming from $C$ to {\em both} $A$ and $B$. Reasoning about complementary slackness constraints, then, gave us our lower bound.

For the MUP case, we can still do all the above things with the caveat that the value $(v/2,1)$ has to be treated as if it were $(v,1)$.  An analogous master theorem can still be proved as the effect of the `diagonal' flow on the virtual values is predictable. Using the master theorem, we can construct (essentially) any dual we want. Thus, we can have a feasible dual with a top-chain of an arbitrary length $\menuseq$ oscillating between items $1$ and $3$. Also, we have flow from the item $2$ to both $1$ and $3$ at all values in this top-chain. Chasing through the complementary slackness constraints in this dual again gives us a lower bound.

To highlight this analogy, in what follows, we use $C$ instead of $2$, $A$ instead of $1$, and $B$ instead of $3$.

Formally, we construct given an integer $\menu > 0$, a dual containing a top chain among $A$ and $B$ of length $\menuseq$. 
That is, a sequence of points $(x_1, A), (x_2, B), \ldots, (x_{\menuseq}, A)$ such that
$$\underline{x_{\menuseq}/2}_A < \underline{x_{\menuseq-1}}_B/2 < \cdots < \underline{x_2}_B/2 < \underline{x_1/2}_A.$$

In this dual, we have no extra space between the ironed intervals:
\begin{itemize}
\item $\underline{r}_A = \underline{x_\menuseq}_A$/2, $\overline{r}_A = \overline{x_1}_A/2$, and for $i$ such that $(x_i, A)$ and $(x_{i+2},A)$ are in the chain, $\underline{x_i/2}_A = \overline{x_{i+2}/2}_A$.
\item $\underline{r}_B = \underline{x_\menuseq}_B$, $\overline{r}_B = \overline{x_2}_B$, and for $i$ such that $(x_i, B)$ and $(x_{i+2},B)$ are in the chain, $\underline{x_i}_B = \overline{x_{i+2}}_B$.
\end{itemize}
Recall that by definition of the $\overline{\cdot}$ and $\underline{\cdot}$ operators, $(\underline{x}_G, \overline{x}_G]$ is ironed in $G$.  Also by our definitions, $f_G(v) \Phi^{\lambda,\alpha}_G(v) > 0$ for $v \geq \bar{r}_G$; $f_G(v) \Phi^{\lambda,\alpha}_G(v) = 0$ for $v \in [\underline{r}_G, \bar{r}_G]$;  $f_G(v) \Phi^{\lambda,\alpha}_G(v) < 0$ for $v \leq \underline{r}_G$.

We will also define $C$ to be DMR (and thus have no ironed intervals) with $\underline{r}_C = 2 \underline{r}_A$ and $\bar{r}_C = 2 \bar{r}_A$.

We adapt the flow from the partially ordered lower bound example: for any $(x,G)$ in the chain, $\alpha_{C,A}(x \to x/2) > 0$ and $\alpha_{C,B}(x \to x) > 0$. 

\begin{figure}[h!] \label{fig:MUPcandidatedual}
\centering{\includegraphics[scale=.5]{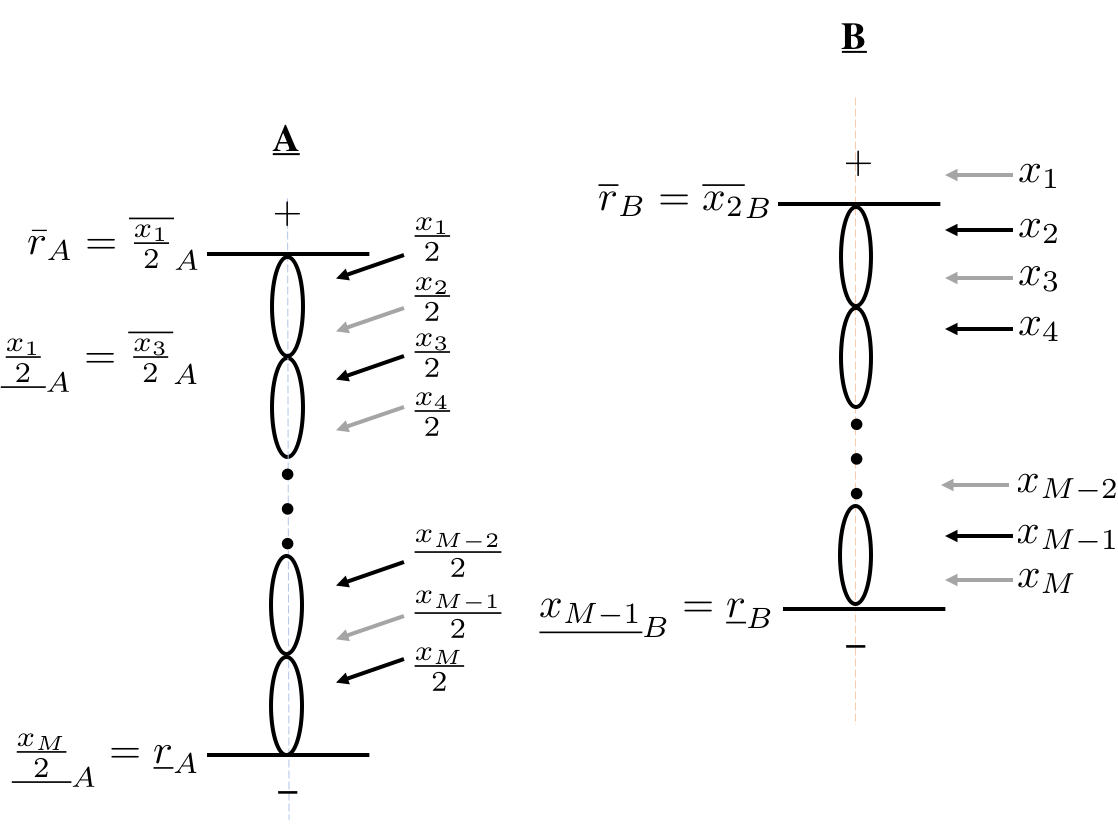}}
\caption{The analogue of the partially ordered candidate dual, adjusted for the Multi-Unit Pricing problem.}
\end{figure}

\begin{theorem} \label{thm:MUPunbounded} To satisfy complementary slackness with the candidate dual, the allocation requires $M$ distinct allocation probabilities; the menu complexity is at least $M$. \end{theorem}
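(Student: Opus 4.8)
\textbf{Proof plan for Theorem~\ref{thm:MUPunbounded}.}

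The plan is to mimic, essentially verbatim, the argument for Theorem~\ref{thm:unbounded} in the partially-ordered setting, replacing the role of item $C$ by the ``two copies'' item and the roles of $A,B$ by the ``one copy'' and ``three copies'' items, with the only bookkeeping change being that the leftward incentive constraint from $C$ now relates $(v,C)$ to $(v/2,A)$ rather than to $(v,A)$. First I would set up the candidate dual described just above the theorem statement (Figure~\ref{fig:MUPcandidatedual}): a top chain of length $\menuseq$ alternating between $A$ and $B$, with no gaps between consecutive ironed intervals, with $C$ taken to be DMR (so $\lambda_C\equiv 0$) on the rescaled interval $[2\underline r_A,2\bar r_A]$, and with flow from $C$ into both $A$ and $B$ at every chain point $x_i$, where the $A$-flow is ``diagonal,'' landing at $x_i/2$. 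The analogue of the Master Theorem (sketched in Theorem~\ref{thm:mupmaster}) guarantees that this dual is feasible for some joint distribution over (value, interest) pairs, since the effect of the diagonal flow on virtual values is predictable (it shifts the virtual values of $A$ below $x_i/2$ up by $\varepsilon$ and of $B$ below $x_i$ down by $\varepsilon$, as in the proof of Theorem~\ref{thm:mupmaster}).

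Next I would carry through the complementary-slackness chase exactly as in the proof of Theorem~\ref{thm:unbounded}. The three Remarks translate directly: (i) consecutive chain points in $A$ lie in a common ironed interval for $A$, so $a_A$ is constant there; (ii) likewise for $B$; (iii) at each chain point $x_i$ there is flow into both $A$ and $B$, so by (CS\ref{CS4}) and incentive compatibility the utilities are equal there --- with the caveat that ``utility in $A$ at $x_i$'' now means $u_A(x_i/2)=\int_0^{x_i/2}a_A(w)\,dw$ and ``utility in $B$ at $x_i$'' means $2\int_0^{x_i}a_B(w)\,dw$, matching the formulas from~\citepalias{DHP} quoted above. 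The analogue of Lemma~\ref{lem:3} then says that $a_A(\,\cdot\,)$ evaluated at the relevant rescaled chain points and $a_B(\,\cdot\,)$ at the chain points must strictly alternate in which is larger: because either $a_A$ or $a_B$ is pinned constant across a step of the chain (by (i) or (ii)), the equal-utility condition at two consecutive chain points forces a strict crossing. Finally, at the bottom chain point $x_\menuseq$ one checks, using that the two chains terminate at \emph{different} bottom endpoints (the $A$-chain at $\underline r_A$ in rescaled coordinates, the $B$-chain at $\underline r_B$, and these are not aligned), that $a_B$ must exceed $a_A$ there; propagating the alternation up the chain via the lemma shows $a_A$ and $a_B$ (evaluated at $\Theta(\menuseq)$ points) alternate strictly, and since both allocation curves are monotone non-decreasing they must take $\Omega(\menuseq)$ distinct values. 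Combined with the deterministic ``full allocation'' option this yields menu complexity at least $\menuseq$; feasibility of a primal satisfying complementary slackness (hence optimality of the dual) follows from the same primal-recovery argument as in the three-item case, adapted to MUP.

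The main obstacle I anticipate is not the complementary-slackness combinatorics --- that is a routine transcription of the partially-ordered proof --- but rather making the diagonal-flow bookkeeping airtight in two places. First, in the Master-Theorem step one must verify that the candidate ironed intervals and flow points for $A$, once expressed in the rescaled coordinate $v\mapsto v/2$, still satisfy the non-overlap hypotheses (flow points not inside ironed intervals) and that the $\tfrac{i}{i-1}$-scaled correction curves added to the revenue curves keep $|R'|\le \tfrac{1}{2H}$ so that Lemma~\ref{lem:fromRtoF} applies; this is exactly the content of Theorem~\ref{thm:mupmaster}, so I would simply invoke it, but I would need to double-check that a chain of length $\menuseq$ does not force the revenue-curve slopes out of the admissible range (choosing $H$ large, or the per-step $\varepsilon$ small like $2^{-H}$, handles this). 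Second, in the CS chase the rescaling means the ``ironed interval containing $x_i$ in $A$'' is really the interval containing $x_i/2$, so one must be careful that the points $x_i/2$ are genuinely interleaved in the same pattern as the $x_i$ --- this is why the construction places the $A$-ironed intervals at the halved locations. Once those two alignment facts are nailed down, the rest of the proof is the same induction as Theorem~\ref{thm:unbounded}.
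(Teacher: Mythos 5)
Your proposal is correct and takes essentially the same route as the paper: the paper's own proof of Theorem~\ref{thm:MUPunbounded} is a short transcription of the Theorem~\ref{thm:unbounded} argument in which the equal-utility condition at each chain point becomes $u_A(x/2)=u_B(x)$ and the alternation is phrased as $(1/2)a_A(x_i/2)>2a_B(x_{i+1})$ and $2a_B(x_i)>(1/2)a_A(x_{i+1}/2)$ for consecutive chain points, exactly the ``strict crossing of rescaled allocation rates'' you describe. Your additional care about the diagonal-flow bookkeeping in the Master Theorem step matches the paper's Theorem~\ref{thm:mupmaster}, so nothing is missing.
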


\begin{proof}
The proof is almost identical to that of Theorem~\ref{thm:unbounded}.  Using the constraint that the allocation can't increase in the middle of an ironed interval and that $u_A(x/2) = u_B(x)$ for all $(x,G)$ in the chain, we show that the allocations must be non-zero throughout the chain.

Then, we show that for consecutive points in a chain $(x_i, A), (x_{i+1}, B)$ that $(1/2) a_A(x_i/2) > 2 a_B(x_{i+1})$, and similarly, for $(x_i, B), (x_{i+1}, A)$, that $2 a_B(x_{i}) > (1/2) a_A(x_{i+1}/2)$

This is enough to show that all of the menu options must be distinct, requiring meu complexity $\geq \menu$.
\end{proof}
\section{Coordinated Valuations} \label{sec:raghuvansh}

In this section and the following Appendix~\ref{sec:rrsappendix}, we examine the same minimal partial ordering with $\G = \{A,B,C\}$ where $A \succ C, B \succ C, A \not \succ B, B \not \succ C$.  However, a type $(v,C)$ now has a function $g_A(v)$ and $g_B(v)$ describe his valuations for $A$ and $B$ respectively, and $g_C(v) = v$.  That is, if a buyer with type $(v,C)$ gets item $G$, his utility is $g_G(v)$ times the probability that he is served minus his payment.

The main result is that even for $g_A$ that is piecewise linear with only two segments and for $g_B(v) = v$, the randomization required in the optimal mechanism jumps from unbounded but finite, as it was in the partially-ordered setting, to \emph{countably infinite}.  This further fills out the spectrum, placing this setting between partially-ordered items and two additive items.

However, if $g_A$ and $g_B$ are not piecewise linear, the menu complexity once again jumps up, becoming uncountably infinite, and matching the menu complexity for two additive items.

\subsection{Preliminaries}

Consider selling $3$ items $A$, $B$, and $C$ to one bidder. Define the set $\items = \{A,B,C\}$. We use $\bar{G}$ to refer to a general item $\in \items$. When we make claims about $\bar{G}$, we mean that the claim holds for each of the three items in $\items$. When referring only to items $A$, $B$, we use the symbol $G$. Thus, a claim that holds for $G$ holds for both $A$ and $B$.

In the setting we consider, the bidder has a type in the set $\types = \{A,B,C\} \times [0,H]$, where $H \in \mathbb{R}$ is some constant. The type $(\bar{G}, v)$ of the bidder is drawn from a distribution $\mathfrak{f}$ supported on $\types$. Denote by $q_{\bar{G}} = \int_0^H \mathfrak{f}(\bar{G}, t) dt$. Also, define $f_{\bar{G}}(v) = \frac{1}{q_{\bar{G}}}\int_0^v \mathfrak{f}(\bar{G}, t) dt$ as the distribution $\mathfrak{f}$ conditioned on $q = \bar{G}$. Almost exclusively, we refer to $\mathfrak{f}$ as $(q, f_{\bar{G}})$. We also omit the subscript when it is clear from context.

We now define the bidder's value function $\v: \types \times \items \to [0,H]$. This is defined as
\[\v(C, v,\bar{G}) = g_{\bar{G}}(v) \quad \quad \text{ and } \quad \quad \v(G, v, \bar{G}) = v\cdot \mathbbm{1}_{G = \bar{G}},\]
where $g_G :[0,H] \to [0,H]$ is an increasing, invertible function that is Lipschitz with parameter $L$ and $g_C(v) = v$ for all $v \in [0,H]$.
Intuitively, this means that if the bidder has type $(C, v)$, then their value for item $\bar{G}$ is $g_{\bar{G}}(v)$. A bidder with type $(G,v)$ has the value $0$ for any item $\bar{G} \neq G$ and value $v$ for the item $G$. 

\paragraph{Mechanisms.} A mechanism is defined by two functions $a$, $p$, where $a:\types \to [0,1]$ and $p: \types \to [0,H]$. The function $a$ is called the {\em allocation rule} and the function $p$ is called the {\em payment rule}. We will use $a_{\bar{G}}(v)$ to denote $a(\bar{G}, v)$ and  $p_{\bar{G}}(v)$ to denote $p(\bar{G}, v)$. A mechanism is said to be {\em incentive compatible} if, for all $\t = (\bar{G}, v), \t'=(\bar{G}', v') \in \types$
\begin{equation} \label{eq:ic}
\v(\t, \bar{G}) a(\t) - p(\t) \geq \v(\t, \bar{G}') a(\t') - p(\t') 
\end{equation}

\paragraph{Instance.} An instance $\I$ for the \setname setting is defined by a tuple $(q, f_{\bar{G}}, g_{\bar{G}})$. We will usually omit the subscript and simply write $(q,f,g)$. Our goal is find, for a given instance $\I$, the incentive compatible mechanism that maximizes the {\em revenue} $\E_{\t \sim (q,f)}[p(\t)]$. 

\subsubsection{A Linear Programming Formulation}
For an instance $\I$ for the \setname, finding the incentive compatible mechanism that maximizes $\E_{\t \sim (q,f)}[p(\t)]$ turns out to be equivalent to the linear program in \autoref{eq:primal}. Here and throughout, for a distribution $f$ supported on $[0,H]$, we use $\varphi(v) = v - \frac{1 - F(v)}{f(v)}$ to denote the Myerson's virtual value function, where $F(v) = \int_0^v f(t)dt$ is the cumulative distribution function for the distribution defined by $f$.

\begin{subequations}
\label{eq:primal}
\begin{alignat}{3}
 & \text{maximize} & \P(a) = \sum_{\bar{G}} \int_0^H  q_{\bar{G}}&f_{\bar{G}}(t)a_{\bar{G}}(t)\varphi_{\bar{G}}(t)dt& \\
 & \text{subject to} &&& \notag \\
 &\quad& \int_0^{g_G(v)} a_G(t)dt  - \int_0^v a_C(t)dt &\leq 0 \quad&,\forall v \label{eq:util} \\
 &\quad&   -a'_{\bar{G}}(v) &\leq 0  \quad&,\forall v \label{eq:allocmonotone}\\
 &\quad&  0 \leq a_{\bar{G}}(v) &\leq 1 \quad&,\forall v  \label{eq:allocrange}\\
  &\quad&   a_{\bar{G}}(0) &=0  \quad&
\end{alignat}
\end{subequations}

Even though the argument that \autoref{eq:primal} is equivalent to finding the incentive compatible mechanism that maximizes $\E_{\t \sim (q,f)}[p(\t)]$ is standard, we summarize it here. Observe that the non-trivial incentive compatibility constraints in \autoref{eq:ic} can be classified into two types:
\begin{itemize}
\item {\bf $\t = (\bar{G},v)$ and $\t' = (\bar{G}, v')$:} In this case, \autoref{eq:ic} reduces to $va_{\bar{G}}(v) - p_{\bar{G}}(v) \geq va_{\bar{G}}(v') - p_{\bar{G}}(v')$. This is attainable for all $v$, $v'$ if and only if the allocation rule $a_{\bar{G}}$ is monotone increasing (\autoref{eq:allocmonotone}) and $p_{\bar{G}}(v) = va_{\bar{G}}(v) - \int_0^v a_{\bar{G}}(t)dt$.
\item {\bf $\t = (C,v)$ and $\t' = (G, v')$:} In this case, \autoref{eq:ic} reduces to $va_{C}(v) - p_{C}(v) \geq g_G(v)a_{G}(v') - p_{G}(v')$. Due to the constraints in the previous paragraph, it is sufficient to have $va_{C}(v) - p_{C}(v) \geq g_G(v)a_{G}(g_G(v)) - p_{G}(g_G(v))$. This is equivalent to the constraint in \autoref{eq:util}.
\end{itemize}
Finally, it can be verified that if $p_{\bar{G}}(v) = va_{\bar{G}}(v) - \int_0^v a_{\bar{G}}(t)dt$, then  $\E_{\t \sim (q,f)}[p(\t)] = \P(a)$.

We will use $\p = (a_{\bar{G}})$ to denote a general solution to \autoref{eq:primal} and $\P(\I)$ to denote the optimal value for the instance $\I$.

\subsubsection{A Lagrangian Dual Formulation}

For any instance $\I$ of \setname, the revenue maximization problem is defined in \autoref{eq:primal}. Let $X \subset [0,H]$ be a discrete set of points in $[0,H]$.
Define the $X$-dual of \autoref{eq:primal} as:

\begin{subequations}
\label{eq:dual}
\begin{alignat}{3} 
 & \text{minimize} & \D_X(\lambda,\gamma,\Gamma) =  \sum_{\bar{G}} \int_0^H q_{\bar{G}}f_{\bar{G}}(t)&\max\left(0,\Phi_{\bar{G}}^{\lambda,\gamma,\Gamma, X}(t)\right)dt & \\
 & \text{subject to} &&&\notag\\
 &  \quad&  \lambda_{\bar{G}}(H) &=0 &\\
 &  \quad&  \lambda_{\bar{G}}(v),  \gamma_{G}(v) ,  \Gamma_{G}(x) &\geq 0 \quad&,\forall x \in X,v \in [0,H]
\end{alignat}
\end{subequations}

where 
\begin{equation*} 
\Phi_{\bar{G}}^{\lambda,\gamma,\Gamma, X}(v) = \varphi_{\bar{G}}(v) + \frac{1}{q_{\bar{G}}f_{\bar{G}}(v)}\left[-\lambda'_{\bar{G}}(v) - \sum_{X \ni x > g_{\bar{G}}^{-1}(v)} \Gamma_{\bar{G}}(x) - \int_{g_{\bar{G}}^{-1}(v)}^H \gamma_{\bar{G}}(s)ds\right]\\
\end{equation*}

\autoref{eq:dual} is obtained from \autoref{eq:primal} by Lagrangifying the constraints \autoref{eq:allocmonotone} using the variables $\lambda_{\bar{G}}(v)$, the constraints \autoref{eq:util} using the variables $\gamma_{G}(v)$, and the constraints \autoref{eq:util}, for $v \in X$ using the variables $\Gamma_{G}(v)$. Here and throughout this paper, we define $\gamma_C(v) = -\gamma_A(v) -\gamma_B(v)$ and $\Gamma_C(v) = -\Gamma_A(v) -\Gamma_B(v)$ for notational ease. We note that getting \autoref{eq:dual} from \autoref{eq:primal} requires integrating a certain term by parts. {\em Throughout this work, we assume that our functions are well-behaved enough to allow such standard operations.}

Note that the idea of using a Lagrangian dual is not new to this work. Indeed, a lot of recent advances in similar settings have been made using this technique. We will use $\d=  (\lambda_{\bar{G}}, \gamma_G, \Gamma_G)$  to denote a general solution to \autoref{eq:dual} and $\D_X(\I)$ to denote the optimal value for the instance $\I$. Often, we will abbreviate $\Phi_{\bar{G}}^{\lambda,\gamma,\Gamma, X}(v) $ to $\Phi_{\bar{G}}^{\d, X}(v)$ or even $\Phi(v)$  if the subscript and superscript are clear from the context.

We have the following `strong duality' result (proof in Appendix~\ref{app:SD}):

\begin{theorem}[Strong Duality] \label{thm:SD} Let $\I$ be an instance of \setname. 
\begin{enumerate}[label={(\alph*)}]
\item \label{thm:SD:WD} Let $X \subset [0,H]$ be a discrete set. For any feasible solution $\p = (a_{\bar{G}})$ of \autoref{eq:primal} and any feasible solution  $\d=  (\lambda_{\bar{G}}, \gamma_G, \Gamma_G)$ of the $X$-dual (\autoref{eq:dual}), it holds that:
\[\P(\p) \leq \D_X(\d).\]
Equality holds if and only if the following conditions are satisfied almost everywhere:
\begin{subequations} \label{eq:cs}
\begin{align}
\forall v:\Phi_{\bar{G}}^{\d}(v) > 0 &\implies a_{\bar{G}}(v) = 1.  \label{eq:pos}\\
\forall v:\Phi_{\bar{G}}^{\d}(v) < 0 &\implies a_{\bar{G}}(v) = 0. \label{eq:neg}\\
\forall v:\lambda_{\bar{G}}(v) > 0 &\implies a'_{\bar{G}}(v) = 0. \label{eq:iron}\\
\forall v:\gamma_{G}(v) > 0 &\implies \int_0^v a_C(t)dt = \int_0^{g_G(v)} a_G(t)dt. \label{eq:flowa}\\
\forall x \in X:\Gamma_{G}(x) > 0 &\implies \int_0^{x} a_C(t)dt = \int_0^{g_G(x)} a_G(t)dt. \label{eq:flowadisc}
\end{align}
\end{subequations}
\item \label{thm:SD:SD} There exists a set $X$ such that $\P(\I) = \D_X(\I)$. 
\end{enumerate}
\end{theorem}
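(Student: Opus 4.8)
The plan is to prove part~\ref{thm:SD:WD} by an explicit Lagrangian computation, and part~\ref{thm:SD:SD} by combining compactness of the primal feasible region with an explicit construction of an optimal dual via the ironing-and-flow machinery developed earlier in the paper (with an alternative ``discretize and take limits'' route).

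For part~\ref{thm:SD:WD}, I would form the Lagrangian that relaxes the monotonicity constraints \eqref{eq:allocmonotone} with multipliers $\lambda_{\bar{G}}(v)\ge 0$ and the constraints \eqref{eq:util} both continuously (multipliers $\gamma_G(v)\ge0$) and at the points of $X$ (multipliers $\Gamma_G(x)\ge 0$):
\begin{multline*}
\mathcal{L}(\p;\d) = \P(\p) + \sum_{\bar{G}} \int_0^H \lambda_{\bar{G}}(v) a'_{\bar{G}}(v)\,dv \\
- \sum_G \int_0^H \gamma_G(v)\Bigl[\int_0^{g_G(v)} a_G - \int_0^v a_C\Bigr]dv - \sum_G \sum_{x \in X} \Gamma_G(x)\Bigl[\int_0^{g_G(x)} a_G - \int_0^x a_C\Bigr].
\end{multline*}
For any primal-feasible $\p$ every added term is nonnegative (using $a'_{\bar{G}}\ge 0$ and that each bracket is $\le 0$), so $\P(\p)\le\mathcal{L}(\p;\d)$. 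Integrating the $\lambda_{\bar{G}}a'_{\bar{G}}$ term by parts --- using $\lambda_{\bar{G}}(H)=0$, $a_{\bar{G}}(0)=0$, and continuity of $\lambda_{\bar{G}}$ at the countably many points where $a'_{\bar{G}}$ is infinite, exactly as in \citepalias{FGKK} --- swapping the order of integration in the double integrals, rewriting the $X$-sums as integrals against $a_G$, and using the conventions $\gamma_C=-\gamma_A-\gamma_B$, $\Gamma_C=-\Gamma_A-\Gamma_B$, $g_C=\mathrm{id}$, collects the coefficient of $a_{\bar{G}}(t)$ into exactly $q_{\bar{G}}f_{\bar{G}}(t)\Phi_{\bar{G}}^{\d,X}(t)$, so $\mathcal{L}(\p;\d)=\sum_{\bar{G}}\int_0^H q_{\bar{G}}f_{\bar{G}}(t)a_{\bar{G}}(t)\Phi_{\bar{G}}^{\d,X}(t)\,dt$. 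Since the only remaining constraint is $a_{\bar{G}}(v)\in[0,1]$, maximizing over $\p$ sets $a_{\bar{G}}(v)=1$ where $\Phi_{\bar{G}}^{\d,X}(v)>0$ and $a_{\bar{G}}(v)=0$ where it is negative, giving $\max_\p\mathcal{L}(\p;\d)=\D_X(\d)$; hence $\P(\p)\le\D_X(\d)$. Equality throughout forces (i) $\P(\p)=\mathcal{L}(\p;\d)$, i.e.\ every added slack term vanishes almost everywhere, which is exactly \eqref{eq:iron}, \eqref{eq:flowa}, \eqref{eq:flowadisc}, and (ii) $\mathcal{L}(\p;\d)=\D_X(\d)$, i.e.\ $\p$ is a pointwise maximizer, which is exactly \eqref{eq:pos}, \eqref{eq:neg}; conversely, all five conditions together give $\P(\p)=\D_X(\d)$.

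For part~\ref{thm:SD:SD}, I would first argue that \eqref{eq:primal} attains its optimum: the feasible set of allocation rules $(a_{\bar{G}})$ --- monotone, $[0,1]$-valued, with $a_{\bar{G}}(0)=0$ --- is compact for pointwise convergence at continuity points (Helly selection), the constraints \eqref{eq:util} pass to such limits by dominated convergence, and $\P$ is upper semicontinuous, so an optimal $\p^\star$ exists. Next, starting from the revenue-curve/ironing description of the optimal dual for this setting --- the flow leaving $(C,\cdot)$ is pinned down by the ironed revenue curve of $C$, and the only remaining freedom is how to split that flow between $A$ and $B$ through the coordinate changes $g_A, g_B$ --- I would run the boosting, ironing, and flow-splitting operations to produce a candidate dual $\d^\star=(\lambda_{\bar{G}},\gamma_G,\Gamma_G)$ together with the set $X$ on which the $\Gamma$-atoms sit, taking $X$ to be exactly the flow-exchange values this procedure generates. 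The key structural claim to establish is that $X$ is discrete (at most countably many isolated points), which is what the hypotheses on $g$ (increasing, invertible, Lipschitz; piecewise linear in the sharp version) buy. Finally I would verify $\d^\star$ is $X$-dual feasible ($\lambda_{\bar{G}}(H)=0$, all variables nonnegative, proper monotonicity of $\Phi^{\d^\star,X}$) and that $(\p^\star,\d^\star)$ satisfy \eqref{eq:pos}--\eqref{eq:flowadisc}; by part~\ref{thm:SD:WD} this gives $\P(\p^\star)=\D_X(\d^\star)$, hence $\P(\I)=\D_X(\I)$. An alternative for this step avoids guessing the dual: discretize the type space to a fine grid, invoke finite-dimensional LP strong duality (legitimate since $a\equiv 0$ restricted to the grid is primal feasible and the value is bounded), and pass to the limit, letting $X$ be the set of accumulation points of the discrete dual masses; one then shows the limiting dual ``measure'' splits into an absolutely continuous part ($\gamma$) plus atoms on this discrete $X$ ($\Gamma$), and that complementary slackness survives the limit.

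The main obstacle is part~\ref{thm:SD:SD}, and the sharpest point within it is confining the $\Gamma$-atoms to a genuinely discrete set $X$ --- equivalently, ruling out that the singular part of the optimal dual on the constraints \eqref{eq:util} spreads out into something like a continuous singular measure. This is precisely where the structure of $g$ enters, and it is the same phenomenon responsible for the optimal menu complexity being countably rather than uncountably infinite; by contrast, the Lagrangian algebra of part~\ref{thm:SD:WD} and the compactness argument for existence of $\p^\star$ are routine.
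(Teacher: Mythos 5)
Your part~(a) is the same argument as the paper's: form the Lagrangian, integrate the $\lambda_{\bar G}a'_{\bar G}$ term by parts using $\lambda_{\bar G}(H)=0$ and $a_{\bar G}(0)=0$, group the coefficient of each multiplier, observe every grouped term is nonnegative by primal feasibility, and read off that tightness of each inequality is exactly \eqref{eq:pos}--\eqref{eq:flowadisc}. Nothing to add there.

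For part~(b) your \emph{primary} route is not the paper's, and I would not trust it as written. You propose to build an optimal dual $\d^\star$ by running the boosting/ironing/flow-splitting operations and then verify complementary slackness against a primal $\p^\star$ obtained by Helly selection. But the ``best dual'' machinery (no double swaps, no upper swaps, the primal recovery algorithm) is developed in the paper only for the plain partially-ordered three-item setting; in the coordinated setting the flow goes from $(C,v)$ to $(G,g_G(v))$, the optimal menu can be countably infinite, and no analogue of those dual operations is established. Asserting that this procedure ``produces a candidate dual together with the set $X$'' is precisely the content you would need to prove, and it is at least as hard as the theorem itself. Your \emph{alternative} route, by contrast, is exactly what the paper does: it defines a discrete LP on the grid $\{i\delta\}$ and its dual, invokes finite-dimensional strong duality, proves two quantitative approximation lemmas --- $(1-\sqrt{\delta})\P_\delta(\I)\le \P(\I)+L\sqrt{\delta}$ by converting the discrete optimal menu into a continuous truthful mechanism with prices scaled by $(1-\sqrt{\delta})$ (this is where the Lipschitz constant $L$ of $g$ enters), and $\D_{X_\delta}(\I)\le\D_\delta(\I)$ by embedding the discrete dual as an $X_\delta$-dual with $X_\delta$ equal to the grid itself --- and then concludes by closedness of the union of the $X$-dual feasible regions. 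Note that in this route the discreteness of $X$ is automatic (it is a grid at every finite stage), so the ``singular continuous measure'' obstruction you flag as the sharpest point is sidestepped rather than confronted; the residual delicacy is only in the final limiting step, which the paper handles via the closedness claim. If you rewrite part~(b) around the discretization argument and supply the two approximation lemmas, the proof is complete; as it stands, the primary route has a genuine gap.
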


The main reason we provide a proof for this `strong duality' result is that the variables are parametrized by a continuous variable. We could not find any results for such variables that subsume our setting.
Our proof of \autoref{thm:SD} works by showing, for all $\epsilon > 0$, a discrete linear program that has almost the same primal and dual value (up to terms depending on $\sqrt{\epsilon}$). Since strong duality holds for discrete systems, this gives us that the duality gap of our linear program is small. \autoref{thm:SD} then follows as the feasible region is closed.

\subsubsection{Our Framework}
Fix an instance $\I$ and discrete set $X \subseteq [0,H]$.
\paragraph{The Dual Framework.}  Let $\d = (\lambda_{\bar{G}}, \gamma_G, \Gamma_G)$ be a feasible solution for the $X$-dual, {\em i.e.}, \autoref{eq:dual}. Define
\[\underline{r}_{\bar{G}}(\d) = \inf \{v \mid \Phi^{\d,X}_{\bar{G}}(v) = 0\} \quad \quad\quad  \overline{r}_{\bar{G}}(\d) = \sup \{v \mid \Phi^{\d,X}_{\bar{G}}(v) = 0\}.\]
If the $\inf$ (resp. $\sup$) is over an empty set, we define it to be $H$ (resp. $0$).


We sometimes refer to $\gamma$ and $\Gamma$ as representing {\em flow}, {\em e.g.}, we say that there is  flow from $(C,v)$ to $A$ if $\gamma_A(v) >0$ or $\Gamma_A(v) > 0$.

An interval $[\underline{y}, \overline{y}] \subseteq [0,H]$ is said to be {\em ironed} on a item $\bar{G}$ if $\lambda_{\bar{G}}(\underline{y}) = \lambda_{\bar{G}}(\overline{y}) = 0$ and for all $v \in (\underline{y}, \overline{y})$, we have $\lambda_{\bar{G}}(v) > 0$.
 
\paragraph{The Primal Framework.} Let $\p = (a_{\bar{G}})$ be a feasible primal solution for \autoref{eq:primal}. Let $Y = [\underline{y}, \overline{y}] \subseteq [0,H]$ be an interval. Define $\MC_{\bar{G}}(Y, \p) = \lvert{\{\alpha \mid \exists v \in Y: a_{\bar{G}}(v) = \alpha\}}\rvert$ to be the number of distinct values taken by the function $a_{\bar{G}}$ over the interval $Y$.  Also, define $\MC(Y,\p) = \max_{\bar{G} \in \items} \MC_{\bar{G}}(Y,\p)$. We omit the argument $Y$ if it is $[0,H]$.

We define the {\em menu complexity} of the instance $\I$,  $\MC(\I) = \min_{\p : \P(\p) = \P(\I)} \MC(\p)$ to be the smallest menu complexity of any optimal solution to \autoref{eq:primal}.

\subsubsection{Formal Statements of our Results}

\begin{theorem} \label{thm:LB1} There exists an instance $\UC$ of \setname such that $\MC(\UC)$ is uncountably infinite. Furthermore, the instance $\UC$ satisfies $g_B(v) = v$ and all the distributions $f_{\bar{G}}$ are DMR\footnote{Recall that  a distribution is {\em DMR} if the Myerson's virtual value function is non-decreasing.}. 
\end{theorem}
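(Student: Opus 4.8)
The argument I would run reverse--engineers the instance from the structure of the optimal mechanism, exactly in the style of the partially--ordered lower bounds in \Cref{app:LB} and the Master Theorem (Theorem~\ref{thm:multimaster}), but now exploiting the curvature of $g_A$ in place of interleaved ironed intervals. Fix $g_B(v)=v$ and let $g_A$ be a fixed \emph{strictly concave}, increasing, invertible, Lipschitz function on $[0,H]$ (so $g_A'$ is strictly decreasing), which is in particular not piecewise linear. The goal is to produce an instance $\UC=(q,f,g)$ with all $f_{\bar G}$ DMR, together with a feasible dual $\d=(\lambda,\gamma,\Gamma)$ with $\lambda\equiv 0$ (consistent with the no--ironing feature of DMR) and $\Gamma\equiv 0$ (we use continuous flow only), such that on an interval $[v_1,v_2]$ of interest--$C$ values there is flow from $(C,v)$ into \emph{both} $A$ and $B$, i.e.\ $\gamma_A(v)>0$ and $\gamma_B(v)>0$ for all $v\in(v_1,v_2)$, and such that $\Phi^{\d}_C$ vanishes exactly on $[v_1,v_2]$, $\Phi^{\d}_B$ vanishes on $[v_1,v_2]$, and $\Phi^{\d}_A$ vanishes on the corresponding image interval $[w_1,w_2]\subseteq[g_A(v_1),g_A(v_2)]$, with the correct signs ($>0$ above, $<0$ below) outside these intervals.

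\textbf{Why this forces uncountable menu complexity.} Suppose $\p$ is any optimal primal. Since we will have exhibited a primal achieving $\P(\p)=\D_X(\d)$ via complementary slackness, weak duality (Theorem~\ref{thm:SD}\ref{thm:SD:WD}) forces $\D_X(\d)=\P(\UC)$ and hence every optimal $\p$ satisfies \eqref{eq:cs} with $\d$. On $(v_1,v_2)$, \eqref{eq:flowa} applied to both $A$ and $B$ gives $\int_0^v a_C = \int_0^{g_A(v)} a_A = \int_0^{v} a_B$; writing $u_{\bar G}(w)=\int_0^w a_{\bar G}(t)\,dt$ and differentiating $u_A(g_A(v))=u_B(v)$ yields
\[
a_A(g_A(v))\,g_A'(v) \;=\; a_B(v),\qquad\text{i.e.}\qquad a_A(g_A(v))=\frac{a_B(v)}{g_A'(v)} \quad\text{for all }v\in(v_1,v_2).
\]
With $\underline r_C(\d)=v_1$ and $\overline r_C(\d)=v_2$ one has $u_C(v_1)=0$ and $u_C(v_2)>0$, so $a_C=a_B$ is strictly positive on a subinterval of $(v_1,v_2)$; since $g_A'$ is strictly decreasing, $1/g_A'(g_A^{-1}(\cdot))$ is strictly increasing, and as $a_A$ is non\-decreasing the identity above pins $a_A(w)=a_B(g_A^{-1}(w))/g_A'(g_A^{-1}(w))$, a strictly increasing function of $w$ over that subinterval's image. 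Hence $a_A$ takes uncountably many distinct values, so $\MC_A(\p)$, and therefore $\MC(\p)$, is uncountable; taking the infimum over optimal $\p$ gives that $\MC(\UC)$ is uncountable.

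\textbf{Constructing the dual and the instance.} For $\Phi^{\d}_C$: with flow out of $C$ on $(v_1,v_2)$ the relation $\Phi^{\d}_C=0$ there forces $\gamma_A(v)+\gamma_B(v)=-R_C''(v)$, which is $\geq 0$ since $R_C$ is concave (DMR), and choosing $v_2$ to be the monopoly reserve of $C$ makes the flow vanish at $v_2$ and $\Phi^{\d}_C>0$ just above; below $v_1$ the residual constant boost keeps $\Phi^{\d}_C<0$. The flow out of $C$ is then split between $A$ and $B$; the split, together with the choices of $f_A,f_B$, is engineered so that $\Phi^{\d}_A,\Phi^{\d}_B$ have the prescribed zero regions and signs. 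This existence step is a ``Master Theorem'' for the coordinated setting, analogous to Theorem~\ref{thm:multimaster} and Theorem~\ref{theorem:singlemaster}: given target endpoints of zero regions and a (finite or interval) flow--exchange structure, one constructs DMR distributions realizing it, the only genuinely new feature being that flow leaving $(C,v)$ acts on item $A$ at the reindexed value $g_A(v)$, so one subtracts, from the ``suggested'' revenue curve for $A$, a term whose sign and support are dictated by $g_A$ (exactly as the MUP variant handles diagonal flow in Theorem~\ref{thm:mupmaster}). Finally one verifies $\d$ is feasible ($\gamma\geq 0$, $\lambda\equiv 0$, $\lambda_{\bar G}(H)=0$) and exhibits the explicit primal --- $a_C=a_B$ the ``hockey stick with a curved middle,'' $a_A(w)=a_B(g_A^{-1}(w))/g_A'(g_A^{-1}(w))$ on $[w_1,w_2]$ and $0/1$ outside --- checking \eqref{eq:pos}--\eqref{eq:flowadisc} directly; the utility identity $u_A(g_A(v))=\int_{v_1}^{v}a_A(g_A(s))g_A'(s)\,ds=\int_{v_1}^{v}a_B(s)\,ds=u_B(v)$ holds automatically for any admissible monotone choice of $a_B$, which is what makes the explicit primal consistent.

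\textbf{Main obstacle.} The delicate point is the DMR constraint. Because DMR precludes ironing ($\lambda\equiv 0$), the flat stretch of $\Phi^{\d}_A$ cannot be created by the dual variable $\lambda_A$; it must be produced by the incoming flow exactly cancelling $\varphi_A$ over a region where $\varphi_A\geq 0$, while $\varphi_A$ must stay monotone (DMR) and $\Phi^{\d}_A$ must be strictly negative just below $w_1$ and strictly positive just above $w_2$. Matching the (nonlinearly reindexed) flow profile to $\varphi_A$ simultaneously at both endpoints, and doing the analogous bookkeeping for $B$ and $C$ so all three zero regions line up through $g_A$, is the technical heart of the construction --- essentially the content of the adapted Master Theorem. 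Everything downstream (the uncountability count and the complementary--slackness verification) is comparatively routine.
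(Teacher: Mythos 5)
Your proposal follows essentially the same route as the paper's proof of Theorem~\ref{thm:LB1}: build the instance and a feasible dual via a Master Theorem for the coordinated setting (the paper's \autoref{thm:MT}), send flow from $C$ into \emph{both} $A$ and $B$ over an entire interval with no ironing anywhere (which is what makes the $f_{\bar G}$ DMR), and extract uncountability from the complementary-slackness identity $\int_0^{g_A(v)}a_A=\int_0^{v}a_B$, i.e.\ $a_A(g_A(v))\,g_A'(v)=a_B(v)$, with $g_A$ nonlinear on the flow interval. The only presentational difference is direction: the paper fixes the target allocation $\mathfrak{a}(v)=v/H$ first and defines $g_A$ as a shift of $\mathfrak{A}^{-1}$ (which is exactly a strictly concave $g_A$ on the flow interval), whereas you fix a strictly concave $g_A$ and read off $a_A$; these coincide up to a change of variables.

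One spot in your write-up needs repair. You take $\Phi^{\d}_B\equiv 0$ on all of $[v_1,v_2]$, so $a_B$ is not pinned down by $B$'s virtual values, and your assertion that $u_C(v_2)>0$ (equivalently, that $a_B=a_C$ is strictly positive on a subinterval of positive length) is stated without proof. A priori the degenerate primal with $a_B=a_C=0$ on $[v_1,v_2)$ satisfies all the sign and flow conditions you list and yields a deterministic $a_A$, which would kill the lower bound. The argument can be rescued within your setup --- if $\overline{r}_A=w_2<g_A(v_2)$ strictly, then $a_A=1$ above $w_2$ forces $u_A(g_A(v))>0$, hence $u_C(v)>0$, for $v$ near $v_2$, and positivity of $a_C$ propagates downward --- but this must be said. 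The paper sidesteps the issue entirely by taking $\underline{\rho}_B=\overline{\rho}_B=\rho$, i.e.\ $\Phi^{\d}_B>0$ on the whole flow interval, which forces $a_B\equiv 1$ there and makes $u_C(v)=v-\rho$, and hence $a_A$, literally unique rather than merely uncountable-valued; adopting that choice would simplify your argument as well.
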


\begin{theorem} \label{thm:UB} For any instance $\I$ such that the functions $g_G$ are piecewise linear, we have $\MC(\I)$ is at most countably infinite. 
\end{theorem}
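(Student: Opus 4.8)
The plan is to follow the same duality-driven skeleton that was used for the three-item single-minded case (Theorem~\ref{thm:allocalg}), adapting the ``primal recovery algorithm'' to the coordinated setting. By Strong Duality (Theorem~\ref{thm:SD}), there is a discrete set $X$ and an optimal dual $\d = (\lambda_{\bar G}, \gamma_G, \Gamma_G)$ for which $\P(\I) = \D_X(\I)$, and it suffices to exhibit \emph{one} optimal primal $\p$ with $\MC(\p)$ at most countably infinite. First I would fix a ``best'' such dual (analogous to Definition~\ref{def:best}: optimal, then fewest ironed intervals, then lowest positives) and argue — exactly as in Proposition~\ref{prop:bestdual} — that it contains no ``double swaps'' or ``upper swaps,'' since the dual-improving operations there (pushing $\gamma,\Gamma$-flow up or down within ironed intervals while compensating so the net flow out of $C$ is unchanged) go through verbatim; the only change is that flow now lands at the point $g_G^{-1}(v)$ rather than $v$, which merely rescales the bookkeeping and does not affect the sign-monotonicity / virtual-welfare-preservation arguments. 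The key structural fact I need is that, since $g_G$ is piecewise linear, $g_G^{-1}$ is piecewise linear too, so the images of the (finitely many, per best dual) ironed intervals and of $X$ under $g_G^{-1}$ are again finite unions of intervals and finite sets; in particular the ``top chain'' structure from \Cref{app:LB} is well-defined here.

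Next I would run the inductive primal recovery of Theorem~\ref{thm:allocalg} on the top chain. The chain alternates between $A$ and $B$ with points $(x_1,A),(x_2,B),\dots$; each chain point sits in an ironed interval (forcing $a$ constant there) and carries a flow constraint forcing $\int_0^{x_i} a_C = \int_0^{g_G(x_i)} a_G$ by complementary slackness \eqref{eq:flowa}--\eqref{eq:flowadisc}. I would solve the corresponding \emph{system of equalities} one chain point at a time: take the partially-built allocations satisfying all constraints for the first $n$ chain points, scale them by a common factor $\lambda<1$, and then choose the new allocation value at the bottom of the new ironed interval so that the new equality $u_A(x_{n+1})=u_B(x_{n+1})$ (appropriately transported through $g$) holds. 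The linearity of $g_G$ on each piece is exactly what keeps each such equation linear in the single new unknown, so the same existence argument as in \Cref{app:optvars} applies; one new allocation value is introduced per chain point, giving menu complexity at most (length of chain) $+1$. For $v$ below the chain all utilities are $0$ (CS automatic), and above the chain the ``alternate solution'' equalizing expected prices at $\bar r_A$ handles CS in the absence of upper swaps. Finally I would verify that any other dual variables ($C$'s allocation, non-chain flow) are handled as in the FedEx-style recovery.

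The new phenomenon — and the main obstacle — is the possibility that the chain is \emph{countably infinite}. In the single-minded case the monotone convergence of the decreasing chain points $x_1>x_2>\cdots$ to a limit $x^*$ forced $x^*$ to be unironed in both $A$ and $B$, so the algorithm could ``abort'' and post the single price $x^*$, collapsing to finite complexity. Here the flow lands at $g_G^{-1}(x_i)$, so the two chains live in \emph{different} value coordinates; convergence of $(x_i)$ still gives a limit, but I must argue that the limit point is simultaneously unironed for $A$ (at $g_A^{-1}(x^*)$) and for $B$ (at $g_B^{-1}(x^*)$), and that $\Phi^{\d,X}$ vanishes there — using that $g_G^{-1}$ is continuous (piecewise linear) and that points arbitrarily close in the chain are unironed zeros. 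If this abort is possible, complexity is finite; if not, the chain is finite and complexity is finite anyway. Either way the worst case is a finite chain that \emph{itself} has countably infinite length only in a degenerate limit — so I expect the honest statement to be that the recovered menu is at most countably infinite, matching the theorem. The subtle point to nail down is ruling out a pathology where the chain is infinite but does \emph{not} converge to a jointly-unironed point; I expect this to be impossible precisely because $X$ is discrete and best duals have finitely many ironed intervals, so an infinite chain must eventually be ``squeezed'' into the closure of a single ironed interval on each side, and the endpoints of those intervals provide the abort point. Handling this accumulation argument carefully — rather than the linear algebra — is where the real work lies.
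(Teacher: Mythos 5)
Your proposal takes a genuinely different route from the paper, and it has a gap at exactly the point you flag as ``where the real work lies.'' The dichotomy you fall back on --- ``if the abort is possible, complexity is finite; if not, the chain is finite'' --- is false in this setting. The paper's own lower bounds (Theorems~\ref{thm:LB2} and~\ref{thm:LB-gen}) construct instances with a countably infinite interleaving of ironed intervals and flow points where \emph{no} collapse to a posted price occurs and the menu complexity is genuinely countably infinite; the coincidence argument from the single-minded case (both chains converge to the same jointly-unironed point) breaks precisely because the chains live in the coordinates $g_A^{-1}(v)$ and $g_B^{-1}(v)$. So you must actually carry out the inductive recovery along an infinite chain, which means showing that the infinite product of your scaling factors $\lambda<1$ converges to something positive and that the limiting allocation is feasible and satisfies complementary slackness --- none of which is addressed. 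Two further premises are shaky: a ``best dual'' minimizing the number of ironed intervals need not exist (the relevant duals here can have countably infinitely many), and the claim that the double-swap/upper-swap eliminations go through ``verbatim'' ignores that flow from $(C,v)$ now lands at $g_G(v)$ with a Jacobian factor, which changes the virtual-welfare bookkeeping (compare the MUP master theorem, where $\varepsilon$ flow out decreases the receiving end by only $\tfrac{i-1}{i}\varepsilon$).

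For contrast, the paper avoids primal recovery entirely. It invokes strong duality (Theorem~\ref{thm:SD}) to obtain \emph{both} an optimal primal and an optimal $X$-dual satisfying complementary slackness, then partitions $[0,H]$ into countably many ``strips'' --- maximal intervals on which every $g_{\bar G}$ is linear and the signs of $\Phi^{\d,X}_{\bar G}(g_{\bar G}(\cdot))f_{\bar G}(g_{\bar G}(\cdot))$ and of $\lambda_{\bar G}(g_{\bar G}(\cdot))$ are constant. Lemma~\ref{lemma:strip} shows that within any one strip the given optimal allocation can be replaced by its average over at most a constant number of carefully chosen subintervals (the ``splitting'' operation $a\circ s$), preserving complementary slackness and leaving at most $10$ distinct values per strip; linearity of $g_G$ on the strip is what makes the averaged utilities match up in the flow constraints. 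Countably many strips times a constant per strip gives the countable bound, with no need to control infinite chains or characterize best duals. If you want to salvage your approach, the missing ingredient is a convergence argument for the infinite recovery; the paper's local-averaging argument is the cleaner path.
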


\begin{theorem} \label{thm:LB2} There exists an instance $\C_1$ of \setname such that $\MC(\C_1)$ is countably infinite. Furthermore, the instance $\C_1$ satisfies $g_B(v) = v$ and all the distributions $f_{\bar{G}}$ are DMR, and the function $g_A$ is piecewise linear. 
\end{theorem}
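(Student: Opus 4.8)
The plan is to exhibit the instance $\C_1$ explicitly through a candidate dual, in direct analogy with the unbounded-but-finite menu-complexity phenomenon for partially-ordered items (Theorems~\ref{thm:unbounded} and \ref{thm:allocalg}), and then to upgrade it to a genuinely infinite one using a single kink in $g_A$. Concretely, I would fix $g_B(v)=v$ and take $g_A$ piecewise linear with two pieces meeting at a breakpoint $\beta$, with slopes $m_1$ on $[0,\beta]$ and $m_2$ on $[\beta,H]$, and choose DMR marginals $f_A,f_B,f_C$. Since DMR marginals never require Myersonian ironing on their own, \emph{every} ironed interval in the optimal dual must be created by flow boosting out of $C$ — this is exactly the role played by the non-DMR distributions in the partially-ordered lower bound, and here that work is shifted onto the $g_A$-distortion, which is why "deterministic under DMR" fails for coordinated valuations. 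I would then construct a candidate dual $\d=(\lambda,\gamma,\Gamma)$ for \autoref{eq:dual} whose flow out of $C$ is supported on an \emph{infinite} decreasing sequence of $C$-values $c_1>c_2>\cdots$, with flow into both $A$ and $B$ at each $c_i$, arranged so that the induced (properly ironed) virtual value functions $\Phi^{\d}_A$ and $\Phi^{\d}_B$ have interleaving ironed intervals — an infinite analogue of the "top chain" of Definition~\ref{def:chain} (and one must check it is a \emph{top} chain, i.e.\ the analogue of $x_1>\bar r_B$, so the cascade of incentive constraints does not terminate). The existence of an actual distribution realizing such a $\d$ follows from an extension of the Master Theorem (cf.\ Theorems~\ref{thm:multimaster} and \ref{thm:mupmaster}) to the coordinated setting, adapted to permit countably many prescribed ironed intervals and flow-exchange points; the routine checks are nonnegativity of the flow out of $C$, proper monotonicity of each $f_{\bar G}\Phi^{\d}_{\bar G}$, and dual feasibility.

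Second, I would certify that $\d$ is optimal by producing a primal satisfying the complementary-slackness conditions \autoref{eq:cs}, via the coordinated-setting analogue of the primal-recovery argument behind Theorem~\ref{thm:allocalg}: on each ironed interval of the chain the allocation is forced constant, at each $c_i$ the flow into $A$ and $B$ forces $u_C(c_i)=\int_0^{g_A(c_i)}a_A=\int_0^{c_i}a_B$, and one solves these equalities link-by-link, scaling the partially-built allocation down at each new link exactly as in the inductive construction of Appendix~\ref{app:optvars}. By Theorem~\ref{thm:SD}, this primal is optimal, and hence \emph{every} optimal primal must satisfy complementary slackness with $\d$.

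Third, the menu-complexity lower bound: I would run the two-directional argument of Section~\ref{sec:highlevelLB} along the infinite chain. Downward, the top link has strictly positive virtual value hence full allocation, and then, alternating Facts~\ref{pos}, \ref{alpha}, and \ref{ironed} (equality of utilities at each flow point, constancy on ironed intervals, full allocation above the zero region), every allocation value along the chain is forced strictly positive. Upward, because consecutive ironed intervals on $A$ and on $B$ are interleaved and never aligned, equality of utilities at each $c_i$ forces the allocation at the bottom of each ironed interval to be \emph{strictly} larger than at the bottom of the preceding one, so the allocation rule takes countably infinitely many distinct values. Thus $\MC(\C_1)\ge\aleph_0$, and combined with the countable upper bound of Theorem~\ref{thm:UB} this gives $\MC(\C_1)=\aleph_0$.

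The main obstacle is defeating the very mechanism that forces finiteness in the three-item partially-ordered case: in the proof of Theorem~\ref{thm:allocalg}, monotone convergence makes the chain points converge to some $x^\ast$, interleaving forces the $A$- and $B$-subchains to share the bottom endpoint $x^\ast$, and at $x^\ast$ both items are unironed with virtual value $0$, so replacing the tail of the chain by the single price $x^\ast$ for both $A$ and $B$ satisfies complementary slackness and collapses the menu. I would use the kink in $g_A$ precisely to block this collapse: the $A$-points of the chain live in $A$-value space while the $B$-points live in $C$-value space, linked by the \emph{nonlinear} $g_A$, so although $c_i\to c^\ast$ the $A$-side accumulates at $g_A(c^\ast)$, and by placing $\beta$ at (or immediately adjacent to) $c^\ast$ one arranges that no single value is simultaneously an unironed zero of both $\Phi^{\d}_A$ and $\Phi^{\d}_B$ — so no finite randomized menu can replace the tail. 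Making such a $g_A$ coexist with dual feasibility, proper monotonicity, primal realizability, and DMR marginals is the delicate heart of the construction; it is also what pins down why \emph{two} linear pieces suffice, whereas a strictly curved $g_A$ would push the complexity to uncountable (Theorem~\ref{thm:LB1}).
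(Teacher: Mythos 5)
Your plan imports the ``infinite top chain of interleaved ironed intervals'' machinery from the partially-ordered lower bound, but that machinery is incompatible with the DMR hypothesis that Theorem~\ref{thm:LB2} must preserve, and this is a genuine gap rather than a presentational difference. In the coordinated dual (\autoref{eq:dual}), flow into $A$ or $B$ — whether discrete ($\Gamma_G$) or continuous ($\gamma_G$) — enters $\Phi^{\d,X}_G(v)$ only through the terms $-\sum_{X\ni x>g_G^{-1}(v)}\Gamma_G(x)$ and $-\int_{g_G^{-1}(v)}^{H}\gamma_G(s)\,ds$, both of which are non-decreasing in $v$ because $g_G^{-1}$ is increasing. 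Hence if $f_A,f_B$ are DMR (so $\varphi_G f_G$ is already non-decreasing), the product $f_G\Phi^{\d}_G$ stays monotone for \emph{every} admissible flow pattern, proper monotonicity is achieved with $\lambda_A=\lambda_B\equiv 0$, and no ironed interval can appear on $A$ or $B$ in an optimal dual. This is precisely why the Master Theorem for coordinated valuations (Theorem~\ref{thm:MT}) yields DMR marginals exactly when $\mathcal{Y}_G$ is empty. Your infinite chain therefore cannot be instantiated under DMR, and your further claim that a single kink in $g_A$ suffices is also off: a two-segment $g_A$ with DMR marginals is the regime of Theorem~\ref{thm:LB-gen}, whose instance $\C_2$ is explicitly \emph{not} DMR.

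The paper's actual proof of Theorem~\ref{thm:LB2} uses no ironing and no chain. It places continuous flow $\gamma_G(v)>0$ on a single interval $(\rho,H+2)$, so complementary slackness (\autoref{eq:flowa}) forces $\int_0^{g_A(v)}a_A(t)\,dt=\int_0^{v}a_C(t)\,dt=v-\rho$ throughout that interval; differentiating pins down $a_A(g_A(v))=1/g_A'(v)$, i.e.\ the allocation curve is the derivative of a prescribed function $\mathfrak{a}$ encoded into $g_A$ via $g_A=\mathfrak{A}^{-1}(\cdot-\rho)+2$. Choosing $\mathfrak{a}$ to be a monotone step function with countably many steps makes $g_A$ piecewise linear with countably many pieces and forces $a_A$ to take countably many distinct values, while keeping all marginals DMR. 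If you want to salvage your approach, you would have to either drop DMR (which recovers $\C_2$) or abandon the ironed-interval chain in favor of this utility-identity argument.
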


\begin{theorem} \label{thm:LB-gen} There exists an instance $\C_2$ of \setname such that $\MC(\C_2)$ is countably infinite. Furthermore, the instance $\C_2$ satisfies $g_B(v) = v$ and the function $g_A$ is piecewise linear with only $2$ segments. 
\end{theorem}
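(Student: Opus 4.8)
Looking at Theorem~\ref{thm:LB-gen}, I need to construct an instance of the coordinated-items setting with $g_B(v) = v$, $g_A$ piecewise linear with exactly two segments, such that every optimal mechanism has countably infinite menu complexity.

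\textbf{Overall approach.} The plan is to take the ``top chain'' construction from Theorem~\ref{thm:unbounded} and push it to an infinite length, exploiting the extra modeling power of coordinated valuations to prevent the ``collapse'' that occurs in the pure partially-ordered setting. Recall from \Cref{sec:minioptvar} that in the single-minded ($g_G = v$) setting, an infinite top chain is impossible because the two interleaving sequences of ironed-interval endpoints both converge (monotone convergence) and interleaving forces them to converge to a \emph{common} limit $\underline{v}$, at which point $A$ and $B$ are both unironed and one can simply post price $\underline{v}$ for both items, giving finite menu complexity. The key insight is that when $g_A$ is a piecewise-linear function with a ``kink'' — a change of slope at some point $v^*$ — the map between values in $A$-space and values in $C$-space is no longer the identity, so the constraint $u_A(g_A(v)) = u_C(v)$ (the tight version of \eqref{eq:util}/\eqref{eq:flowa}) relates allocation in $A$ at a \emph{rescaled} point. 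By placing the chain so that it straddles the kink, the two endpoint-sequences for $A$ and $B$ can be forced to converge to limits that are \emph{not} identified under the correspondence: the limit in $A$-space maps (via $g_A^{-1}$, which has different slopes on the two sides of the kink) to a point strictly different from the limit in $C$-space, so no single posted price for both items satisfies complementary slackness. This keeps the chain ``alive'' all the way to the limit and forces infinitely many distinct allocation levels.

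\textbf{Key steps, in order.} First, I would set $g_B(v) = v$ and $g_A$ piecewise linear with two segments: slope $1$ below a kink point $v^*$ and some slope $\kappa \neq 1$ above it (or vice versa; the exact choice is tuned so the geometric rescaling works out). Second, using the Master Theorem for coordinated valuations — the analogue one would need to invoke or prove here, cf. Theorem~\ref{thm:multimaster} and the $\gamma,\Gamma$-flow structure of \autoref{eq:dual} — I would exhibit a feasible dual $\d = (\lambda_{\bar G}, \gamma_G, \Gamma_G)$ that contains an infinite top chain: a sequence of points $(x_1, A), (x_2, B), (x_3, A), \dots$ with $\Phi^{\d}_{\bar G}(x_i) = 0$, flow into \emph{both} $A$ and $B$ at each chain point (so equal-utility constraints \eqref{eq:flowa} bind), and interleaving ironed intervals, all arranged to converge to a limit point that straddles the kink. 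Third, I would run the complementary-slackness argument of Theorem~\ref{thm:unbounded} essentially verbatim: reason downward from the top ($a_B(x_1) = 1$ by \eqref{eq:pos}) to conclude all allocations along the chain are strictly positive, then reason upward using that $a_{\bar G}$ is constant on ironed intervals and that the $u_A(g_A(\cdot)) = u_C(\cdot) = u_B(\cdot)$ equalities at successive chain points force $a_A(x_i)$ and $a_B(x_i)$ to strictly alternate in dominance — this time the algebra carries an extra factor of $g_A'$ (piecewise constant, $=1$ or $\kappa$) from the chain rule, which is exactly where the two-segment structure enters. Fourth, I would verify that the limit point of the chain is \emph{not} a valid ``abort'' point: because $g_A^{-1}$ has a slope discontinuity there, the would-be common price fails to equalize utilities, so the algorithm cannot terminate early and the menu must genuinely contain countably infinitely many levels. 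Fifth, invoke Theorem~\ref{thm:SD} (strong duality) to conclude the constructed dual is optimal, hence \emph{every} optimal primal satisfies complementary slackness with it and therefore has countably infinite menu complexity; Theorem~\ref{thm:UB} gives the matching upper bound (piecewise linear $g$ implies at most countable), pinning it exactly at countably infinite.

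\textbf{Main obstacle.} The hard part will be Step 4 — showing that the infinite chain genuinely does not collapse. In the single-minded case the collapse is forced by interleaving plus a common limit; here I must carefully arrange the kink location $v^*$, the slope $\kappa$, and the geometric decay rates of the two endpoint sequences so that (a) the chain remains a valid \emph{feasible} dual configuration for some distribution (this is where a coordinated-valuations Master Theorem must do real work, since the ``diagonal'' flow from $(C,v)$ to $A$ at $g_A^{-1}(v) \neq v$ shifts virtual values on a stretched axis, exactly as in the MUP Master Theorem, Theorem~\ref{thm:mupmaster}), and (b) the limiting equalities are genuinely incompatible with any finite menu. A secondary but nontrivial obstacle is establishing the coordinated-valuations analogue of the Master Theorem at the level of generality needed — the excerpt proves Master Theorems for the partially-ordered and MUP settings but the coordinated setting has both $\gamma$ (continuous) and $\Gamma$ (discrete-support) flow variables in \autoref{eq:dual}, so one must check the virtual-value perturbations induced by both types of flow are still ``predictable'' enough to reverse-engineer a distribution. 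I expect this to follow the same leaves-to-root processing order used in the proof of Theorem~\ref{thm:multimaster}, with the rescaling by $g_A'$ handled as in Theorem~\ref{thm:mupmaster}.
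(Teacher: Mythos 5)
Your proposal matches the paper's proof in all essentials: the paper builds exactly this instance (chain points $x_i, y_i$ converging geometrically to the kink of a two-segment piecewise-linear $g_A$ with slope ratio $2$, interleaved ironed intervals on $A$ and $B$, discrete $\Gamma$-flow at each $y_i$, feasibility via the coordinated-valuations Master Theorem, an explicit primal certifying optimality of the dual, and strong duality to transfer the complementary-slackness lower bound to every optimal primal). The only cosmetic difference is in how the final contradiction is phrased — the paper shows that finite menu complexity would force eventually-constant allocations with $\pi_A = 2\pi_B$ (the factor $2$ coming from $g_A'$, exactly as you predicted), which is then incompatible with the boundary conditions at the top of the chain — but this is the same mechanism you describe as the limit point failing to be a valid abort point.
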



\subsection{Master Theorem} \label{sec:RRSMT}

For our lower bounds, we will construct instances that have a large menu complexity. To show a lower bound on the menu complexity, we will define a feasible solution to the $X$-dual (\autoref{eq:dual}), for some $X$, show that  it is optimal, and then show that  any feasible primal that satisfies complementary slackness (\autoref{eq:cs}) with this dual must have a large menu complexity. Below, we prove a `Master Theorem' (\autoref{thm:MT}) (proof in Appendix~\ref{app:MT}) that constructs instances together with a feasible dual solution with some desirable properties.

\begin{theorem}[Master Theorem]\label{thm:MT} Let $H > 0$ be fixed. Suppose that, for all $\bar{G} \in \items$, points $\underline{\rho}_{\bar{G}} < \overline{\rho}_{\bar{G}} \in [1,H]$ are given. Let $X_A$, $X_B$ be discrete subsets of $[\underline{\rho}_C,\overline{\rho}_C]$. Consider finite or infinite sequences of disjoint intervals
\[\mathcal{Y}_G = \{(\underline{y}_{G,i}, \overline{y}_{G,i})\}_{i \geq 0} \quad\quad \mathcal{Z}_G = \{(\underline{z}_{G,i}, \overline{z}_{G,i})\}_{i \geq 0}\]
where $\underline{y}_{G,i},\overline{y}_{G,i}\in [\underline{\rho}_G,\overline{\rho}_G]$ and $\underline{z}_{G,i}, \overline{z}_{G,i} \in [\underline{\rho}_C,\overline{\rho}_C]$. Then, for any invertible functions $g'_G:[0,H]\to [0,H]$, there exists an instance $\I = (q, f_{\bar{G}}, g_G)$ and an $(X_A \cup X_B)$-dual $\d = (\lambda_{\bar{G}}, \gamma_{G}, \Gamma_{G})$ such that the following hold:
\begin{itemize}
\item $g_G(v) = g'_G(v)$ for all $v \in [0,H]$.
\item The value of $\Phi_{\bar{G}}^{\d, X_A \cup X_B}(v) f_{\bar{G}}(v)$ is non-decreasing and $\underline{r}_{\bar{G}}(\d) = \underline{\rho}_{\bar{G}} $ and $\overline{r}_{\bar{G}}(\d) = \overline{\rho}_{\bar{G}} $.
\item $\lambda_C(v) = 0$ throughout and $\lambda_G(v) > 0 \iff v \in (\underline{y}_{G,i}, \overline{y}_{G,i})$ for some $i$.
\item $\gamma_{G}(v) > 0 \iff v \in (\underline{z}_{G,i}, \overline{z}_{G,i})$ for some $i$.
\item $\Gamma_{G}(v) > 0 \iff v \in  X_G$ for some $i$.
\end{itemize}

Moreover, if $\mathcal{Y}_G$ is empty, then the distributions $f_{\bar{G}}$ are DMR.
\end{theorem}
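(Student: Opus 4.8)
The plan is to follow the template of our earlier Master Theorem (\Cref{thm:multimaster}) and its multi-unit analogue (\Cref{thm:mupmaster}), reducing the construction to three ingredients: (i) the revenue-curve-to-distribution statement, which is here exactly \Cref{lem:fromRtoF} (it is entirely item-agnostic, so it applies verbatim); (ii) a single-item construction of a piecewise-linear revenue curve with a prescribed zero region and a prescribed list of ironed intervals, i.e.\ \Cref{theorem:singlemaster}, which I would restate to accept an arbitrary finite-or-countable family of dimple locations (the family $\mathcal{Y}_{\bar G}$) rather than the ad hoc list used there; and (iii) a reverse-engineering step that accounts for the flow. Two features are genuinely new relative to \Cref{thm:multimaster}: the \emph{discrete} flow variables $\Gamma_G$ supported on $X_G$, and the non-identity value functions $g_G$, which ``transport'' flow sent out of $(C,s)$ so that it arrives on item $G$ at the point $g_G(s)$ --- precisely the diagonal-flow phenomenon already handled in \Cref{thm:mupmaster}, where flow out of $(y,G_i)$ arrives at $(\tfrac{i-1}{i}y,G_{i-1})$. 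A $\Gamma_G$-atom at $x$ behaves like a point-mass version of $\gamma_G$: reading off \Cref{eq:dual}, it shifts the scaled virtual value $q_Gf_G(v)\Phi_G(v)$ down by $\Gamma_G(x)$ for $v<g_G(x)$ and shifts $q_Cf_C(v)\Phi_C(v)$ up by $\Gamma_G(x)$ for $v<x$, so the qualitative effect of flow is the same as before, only ``relocated'' by $g_G$.

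First I would fix the flows. For $G\in\{A,B\}$ let $\gamma_G$ be a small smooth strictly-positive bump on each interval of $\mathcal{Z}_G$ and zero elsewhere, and let $\Gamma_G(x)$ be a small positive number for each $x\in X_G$ and zero off $X_G$; set $\gamma_C=-\gamma_A-\gamma_B$ and $\Gamma_C=-\Gamma_A-\Gamma_B$. These automatically satisfy the non-negativity constraints of the $(X_A\cup X_B)$-dual and make the required support conditions on $\gamma,\Gamma$ hold by construction. With the flows fixed, each item contributes a \emph{known} function $\Theta_{\bar G}(v):=\sum_{X\ni x>g_{\bar G}^{-1}(v)}\Gamma_{\bar G}(x)+\int_{g_{\bar G}^{-1}(v)}^{H}\gamma_{\bar G}(s)\,ds$ to the pre-ironing scaled virtual value, so that $q_{\bar G}f_{\bar G}(v)\Phi_{\bar G}(v)=-q_{\bar G}r_{\bar G}'(v)-\lambda_{\bar G}'(v)-\Theta_{\bar G}(v)$, where $r_{\bar G}(v)=v(1-F_{\bar G}(v))$ is the conditional revenue curve (cf.\ \Cref{def:revcurve}). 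Now I reverse-engineer the revenue curves, taking $q_A=q_B=q_C=\tfrac13$. For item $C$ (where $\lambda_C\equiv 0$ is mandatory) I pick directly a small target scaled virtual value $\psi_C$ that is negative-increasing on $[1,\underline\rho_C)$, identically zero on $[\underline\rho_C,\overline\rho_C]$, and positive-increasing on $(\overline\rho_C,H]$, and then define $r_C$ by $r_C'(v)=-\tfrac{1}{q_C}\!\left(\psi_C(v)+\Theta_C(v)\right)$ with $r_C(1)=1$; since $\Theta_C$ and $\psi_C$ are both non-decreasing, $\psi_C$ is automatically the correct virtual value with no ironing. For $G\in\{A,B\}$ I instead pick a small \emph{pre-ironing} curve $p_G$ agreeing with the $\psi_C$-type profile on $[1,\underline\rho_G)$ and $(\overline\rho_G,H]$ but, on $[\underline\rho_G,\overline\rho_G]$, having a downward non-monotonicity crossing zero exactly on each interval of $\mathcal{Y}_G$, so that the ironing construction of \citepalias{FGKK} flattens precisely those intervals to zero, producing $\lambda_G>0$ exactly on $\mathcal{Y}_G$ and leaving the zero-region endpoints at $\underline\rho_G,\overline\rho_G$; then set $r_G'(v)=-\tfrac1{q_G}\!\left(p_G(v)+\Theta_G(v)\right)$, $r_G(1)=1$. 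Choosing the overall scale of the $p$'s and the magnitude of the flows small enough forces $|r_{\bar G}'|\le\tfrac1{2H}$ on $[1,H]$, so \Cref{lem:fromRtoF} produces valid distributions $f_{\bar G}$ realizing these revenue curves; together with the $\lambda$'s from ironing and the flows fixed above, this yields the instance $\I$ and dual $\d$. Feasibility, monotonicity of $q_{\bar G}f_{\bar G}\Phi_{\bar G}$, the zero-region identities $\underline r_{\bar G}(\d)=\underline\rho_{\bar G}$, $\overline r_{\bar G}(\d)=\overline\rho_{\bar G}$, and the ironed-interval and flow-support identities are all read off from the construction; when $\mathcal{Y}_G=\emptyset$ the curve $p_G$ has no dimples, and arranging the flow to ``arrive'' (through $g_G$) outside $(\underline\rho_G,\overline\rho_G)$ keeps $-q_Gr_G'=p_G+\Theta_G$ non-decreasing, giving the DMR claim (for $C$ this is automatic since $\psi_C$ and $\Theta_C$ are both non-decreasing).

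The main obstacle is exactly the bookkeeping hidden in the last paragraph, which is why we already describe this step as tedious and unilluminating. One must check that the downward jumps introduced into $r_{\bar G}'$ by the discrete flow $\Gamma$ (and by transport through $g_G$) are cancelled by the corresponding jumps of $\Theta_{\bar G}$, so that the post-flow, post-ironing function $q_{\bar G}f_{\bar G}\Phi_{\bar G}$ is continuous and non-decreasing with its sign changes exactly at $\underline\rho_{\bar G},\overline\rho_{\bar G}$; that the ironing of $p_G$ does not accidentally merge a dimple of $\mathcal{Y}_G$ with a neighbouring dimple or with the boundary of the zero region (controlled by taking the dimples shallow and well separated, as in \Cref{theorem:singlemaster}); and that all the $\tfrac1{2H}$ Lipschitz bounds survive the reverse-engineering. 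None of these steps requires a new idea beyond \Cref{theorem:singlemaster} and the flow analysis underlying \Cref{thm:multimaster} and \Cref{thm:mupmaster}, so I expect the proof to be a careful but essentially routine extension of those arguments.
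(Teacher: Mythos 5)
Your proposal matches the paper's proof in all essentials: both reverse-engineer the revenue curves by superposing a base piecewise-linear shape encoding the zero region, downward dimples on the intervals of $\mathcal{Y}_G$ (which become $\lambda_{\bar{G}}$), and integrated flow terms composed with $g_{\bar{G}}^{-1}$ (which become $\gamma_G$ and $\Gamma_G$), so that the flow and ironing contributions cancel exactly in $\Phi_{\bar{G}}^{\d,X_A\cup X_B}$ and only the base shape survives, and then both appeal to a revenue-curve-to-distribution lemma of the \Cref{lem:fromRtoF} type. The only cosmetic differences are that the paper sets $\lambda_{\bar{G}}=-\Lambda_{\bar{G}}$ explicitly for a quadratic dimple $\Lambda_{\bar{G}}$ rather than invoking the \citetalias{FGKK} ironing procedure, and makes your ``small enough'' quantitative by weighting the $i$-th interval and flow atom by $4^{-i}$ so that the countably infinite families still respect the Lipschitz budget.
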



\subsection{Lower Bounds} \label{sec:RRSLB}

\subsubsection{DMR Distributions} \label{sec:RRSLB1}
Let $\mathfrak{a}:[0,H] \to [0,1]$ be a non-decreasing function that is continuous except at countably many points. 
In this subsection, we construct an instance $\I  = (q,f,g)$ of \setname such any optimal solution of \autoref{eq:primal} for $\I$ satisfies that $a_A = \mathfrak{a}$\footnote{We abuse notation slightly here. What is meant is that $a_A$ takes the same values over the interval $[2, H+2]$ as $\mathfrak{a}$ takes over the interval $[0,H]$ (see \autoref{sec:RRSLB-DMR-inst} for the exact statement).}. In our construction, the distributions $f_{\bar{G}}$ are DMR and $g_B(v) = v$. 

Two important instantiations of this general procedure prove \autoref{thm:LB1} and \autoref{thm:LB2}. 
For \autoref{thm:LB1}, we construct $\UC$ by setting need $\mathfrak{a}(v) = v/H$. For \autoref{thm:LB2}, we construct $\C_1$ by setting $\mathfrak{a}$ to be a function that takes countably many values. A concrete example of such a function is one that has countably many ``steps'' as it moves from $0$ to $1$. We take care that the function $g_A$ is piecewise linear in $\C_1$.


\paragraph{The Instance}
\label{sec:RRSLB-DMR-inst}

For notational convenience, we work in the range $[0,H+3]$ in this subsubsection. Consider an increasing function $\mathfrak{a}:[0,H] \to [0,1]$ and let $\mathfrak{A}(v) = \int_0^v \mathfrak{a}(t)dt$. Let $\rho = H + 2 - \mathfrak{A}(H) \in [2,H+2]$. Note that $\mathfrak{A}^{-1}(v)$ is well defined for $v > 0$ and define: 

\[g'_A(v) = \begin{cases}
 v&, 0 \leq v \leq 1 \\
 \frac{v-1}{\rho-1} + 1&, 1 < v \leq \rho\\
\mathfrak{A}^{-1}\left(v - \rho\right) + 2&, \rho < v \leq H+2\\
 v&, H+2 < v \leq H+3\\
\end{cases}.\]

It is readily seen seen that $g'_A$ satisfies $\mathfrak{A}(H) - \mathfrak{A}\left(g'_A(v) - 2)\right) = H - (v - 2)$ in the interval $(\rho ,H+2]$. Let $g'_B(v) = v$ and apply \autoref{thm:MT} with $g'_G$ and
\begin{itemize}
\item $\underline{\rho}_A = 2$, $\overline{\rho}_A = H+2$, $\underline{\rho}_B = \overline{\rho}_B = \rho$,  $\underline{\rho}_C = 1$, $\overline{\rho}_C = H+2$.
\item $X_G, \mathcal{Y}_G$ are empty
\item $\mathcal{Z}_G(v) = \{(\rho, H+2)\}$.
\end{itemize}

This gives an instance $\I = (q,f,g)$ and an $(X_A \cup X_B)$-dual solution $\d = (\lambda_{\bar{G}}, \gamma_G, \Gamma_G)$ such that the distributions $f_{\bar{G}}$ are DMR and :
\begin{itemize}
\item For all $v$, $g_G(v) = g'_G(v)$ and $\underline{r}_{\bar{G}} = \underline{\rho}_{\bar{G}}$ and $\overline{r}_{\bar{G}} = \overline{\rho}_{\bar{G}}$.

\item $\lambda_{\bar{G}}(v), \Gamma_G(v)$ are $0$ throughout.

\item $\gamma_G(v) > 0 \iff v \in (\rho, H+2)$.
\end{itemize}

\paragraph{The Analysis}
Recall $\mathfrak{a}$, $\rho$, and $\I$ constructed above.
Define a feasible primal solution $\p^* = (a^*_{\bar{G}})$ of \autoref{eq:primal} for $\I$ as:
\[a^*_A(v) = \begin{cases}
0 &, 0\leq v \leq 2\\
\mathfrak{a}(v-2) &, 2< v \leq H+2\\
1 &, H+2 < v \leq H+3\\
\end{cases}\quad\quad
a^*_B(v) = a^*_C(v) = \begin{cases}
0 \hspace{1.3cm}&, 0\leq v \leq \rho\\
1 &, \rho < v \leq H+3\\
\end{cases}\]
In \autoref{lemma:compslack1} (proof in Appendix~\ref{app:lb1}), we show that $\p^*$ and $\d$ satisfy complementary slackness. To finish our menu complexity lower bound, we argue that {\em any} primal $\p = (a_{\bar{G}})$ that satisfies complementary slackness with $\d$ must have $a_A = a^*_A$. This proof is in \autoref{lemma:LB-DMR}. 

Together with strong duality (\autoref{thm:SD}), \autoref{lemma:compslack1} shows that $\d$ is optimal. Thus, any optimal primal for $\I$ must satisfy complementary slackness with $\d$. \autoref{lemma:LB-DMR} says that $a_A = a^*_A$ for this primal and thus, it has a high menu complexity.

\begin{lemma} \label{lemma:compslack1}The primal $\p^*$ is feasible and $\p^*$, $\d$ satisfy complementary slackness (\autoref{eq:cs}).
\end{lemma}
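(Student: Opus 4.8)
The plan is to verify the two assertions of the lemma separately and directly, leaning on the structural facts handed to us by the Master Theorem (\autoref{thm:MT}) for this particular instance: $\lambda_{\bar{G}} \equiv \Gamma_G \equiv 0$, the flow $\gamma_G$ is supported exactly on $(\rho, H+2)$, each $f_{\bar{G}}\Phi^{\d}_{\bar{G}}$ is non-decreasing, and the zero-region endpoints are $\underline{r}_{\bar{G}}(\d) = \underline{\rho}_{\bar{G}}$, $\overline{r}_{\bar{G}}(\d) = \overline{\rho}_{\bar{G}}$.

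First I would check feasibility of $\p^*$ for \autoref{eq:primal}. The range bound \autoref{eq:allocrange}, the boundary condition $a^*_{\bar{G}}(0)=0$, and monotonicity \autoref{eq:allocmonotone} are immediate once one notes that $\mathfrak{a}$ is non-decreasing and that the junction values of $a^*_A$ at $v=2$ and at $v=H+2$ are consistent. The real content is the incentive constraint \autoref{eq:util}. For $G=B$ it is trivial, since $g_B(v)=v$ and $a^*_B = a^*_C$, so it holds with equality for every $v$. For $G=A$ I would compute $\int_0^{g_A(v)} a^*_A(t)\,dt$ piecewise over the four regimes in the definition of $g_A$, using the identity $\mathfrak{A}(g_A(v)-2) = v-\rho$ on $(\rho, H+2]$ (which is precisely how $g'_A$ was built out of $\mathfrak{A}^{-1}$) together with $\rho = H+2 - \mathfrak{A}(H)$; this shows the integral equals $\max(0, v-\rho) = \int_0^v a^*_C(t)\,dt$ in every regime. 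Thus \autoref{eq:util} in fact holds \emph{with equality} for both $A$ and $B$, for all $v$ — a fact I then reuse below.

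Second I would check the complementary slackness conditions \autoref{eq:cs}. Conditions \autoref{eq:iron} and \autoref{eq:flowadisc} are vacuous because $\lambda_{\bar{G}} \equiv 0$ and $\Gamma_G \equiv 0$. Condition \autoref{eq:flowa} follows from the equality just established: \autoref{eq:util} is tight at \emph{every} $v$, hence in particular wherever $\gamma_G(v)>0$. For the sign conditions \autoref{eq:pos} and \autoref{eq:neg}, monotonicity of $f_{\bar{G}}\Phi^{\d}_{\bar{G}}$ forces $\Phi^{\d}_{\bar{G}}(v)>0$ only for $v>\overline{\rho}_{\bar{G}}$ and $\Phi^{\d}_{\bar{G}}(v)<0$ only for $v<\underline{\rho}_{\bar{G}}$; plugging in $(\underline{\rho}_A,\overline{\rho}_A)=(2,H+2)$, $(\underline{\rho}_B,\overline{\rho}_B)=(\rho,\rho)$, $(\underline{\rho}_C,\overline{\rho}_C)=(1,H+2)$ and comparing against the definitions of $a^*_A$, $a^*_B$, $a^*_C$ — using $1<\rho<H+2$ for item $C$ — shows each item allocates $1$ above its upper endpoint and $0$ below its lower endpoint. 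This completes both parts of the lemma.

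The main obstacle, and really the only place that demands care rather than routine checking, is the piecewise evaluation of $\int_0^{g_A(v)} a^*_A$ and confirming that $\rho = H+2-\mathfrak{A}(H)$ makes the regimes $v\le\rho$, $\rho<v\le H+2$, and $H+2<v\le H+3$ glue together exactly so as to match $\int_0^v a^*_C$ everywhere. One has to stay attentive to the $[0,H]\leftrightarrow[2,H+2]$ index-shift convention and the inverse-function definition of $g_A$, but no new idea is needed beyond the construction already built into the instance.
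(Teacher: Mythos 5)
Your proposal is correct and follows essentially the same route as the paper: the heart of both arguments is the identity $\int_0^{g_A(v)} a^*_A(t)\,dt = \mathfrak{A}(g_A(v)-2) = v-\rho = \int_0^v a^*_C(t)\,dt$ on $(\rho,H+2)$, with conditions \eqref{eq:iron} and \eqref{eq:flowadisc} dispatched because $\lambda$ and $\Gamma$ vanish identically. The only difference is that you spell out the feasibility check (in particular that \eqref{eq:util} holds with equality in every regime of $g_A$), which the paper explicitly leaves to the reader.
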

\begin{lemma} \label{lemma:LB-DMR} Consider any feasible primal $\p = (a_{\bar{G}})$ that satisfies complementary slackness with $\d$ must $a_A = a^*_A$, where $a^*_A$ is the allocation for item $A$ in $\p^*$.
\end{lemma}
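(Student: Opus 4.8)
The plan is to chase the complementary-slackness conditions \autoref{eq:cs} through the dual $\d$ produced by the Master Theorem, pinning down $a_B$, then $a_C$, and finally the antiderivative of $a_A$. First I would record which conditions are non-vacuous: since the Master Theorem output has $\lambda_{\bar G}\equiv 0$ and $\Gamma_G\equiv 0$, conditions \eqref{eq:iron} and \eqref{eq:flowadisc} impose nothing, leaving the sign conditions \eqref{eq:pos}--\eqref{eq:neg} and the continuous-flow condition \eqref{eq:flowa}; moreover $\gamma_A(v),\gamma_B(v)>0$ precisely on $(\rho,H+2)$ because $\mathcal Z_A=\mathcal Z_B=\{(\rho,H+2)\}$. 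Since $f_{\bar G}\Phi^{\d}_{\bar G}$ is non-decreasing and vanishes exactly on $[\underline r_{\bar G},\overline r_{\bar G}]$, the sign conditions give: on item $B$ (with $\underline r_B=\overline r_B=\rho$) $a_B(v)=0$ for $v<\rho$ and $a_B(v)=1$ for $v>\rho$, i.e.\ $a_B=a^*_B$; on item $C$ ($\underline r_C=1$, $\overline r_C=H+2$) $a_C(v)=0$ for $v<1$ and $a_C(v)=1$ for $v>H+2$; on item $A$ ($\underline r_A=2$, $\overline r_A=H+2$) $a_A(v)=0$ for $v<2$ and $a_A(v)=1$ for $v>H+2$, while on $(2,H+2)$ the sign conditions say nothing.

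Next I would pin down $a_C$ on the middle region and transport the information to $A$. For every $v\in(\rho,H+2)$ the flow condition \eqref{eq:flowa} for $G=B$, using $g_B(v)=v$, reads $\int_0^v a_C=\int_0^v a_B=v-\rho$ (by the form of $a_B$ just derived); differentiating this identity over the interval gives $a_C(v)=1$ for a.e.\ $v\in(\rho,H+2)$, and letting $v\downarrow\rho$ forces $\int_0^\rho a_C=0$, hence $a_C\equiv 0$ a.e.\ on $(0,\rho)$ since $a_C\ge 0$. Combined with $a_C(v)=1$ for $v>H+2$ this gives $\int_0^v a_C=\max(0,v-\rho)$ for all $v\in[0,H+3]$. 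Then for every $v\in(\rho,H+2)$ the flow condition \eqref{eq:flowa} for $G=A$ gives $\int_0^{g_A(v)}a_A=\int_0^v a_C=v-\rho$. I now substitute $w=g_A(v)$: on $(\rho,H+2)$ the construction has $g_A(v)=\mathfrak A^{-1}(v-\rho)+2$, a bijection of $(\rho,H+2)$ onto $(2,H+2)$ with inverse $v=\rho+\mathfrak A(w-2)$, so $\int_0^w a_A=\mathfrak A(w-2)$ for all $w\in(2,H+2)$; since $\mathfrak A(w-2)=\int_0^{w-2}\mathfrak a(t)\,dt=\int_0^w a^*_A(t)\,dt$ by the definition of $a^*_A$, this is exactly $\int_0^w a_A=\int_0^w a^*_A$ on $(2,H+2)$.

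Finally I would conclude: the antiderivative identity $\int_0^w a_A=\int_0^w a^*_A$ extends to all of $[0,H+3]$, since both sides are continuous in $w$ and $a_A$ agrees with $a^*_A$ outside $(2,H+2)$ by the sign conditions; two monotone non-decreasing functions with identical antiderivatives coincide except at their countably many shared jump points, which is the asserted equality $a_A=a^*_A$ (and in particular $a_A$ realizes every value $\mathfrak a$ realizes on $[0,H]$, which is what drives the menu-complexity lower bounds for $\UC$ and $\C_1$). The step I expect to be most delicate is the change of variables $w=g_A(v)$: one must verify that $g_A$ restricted to $(\rho,H+2)$ maps onto $(2,H+2)$, that $\mathfrak A^{-1}$ is genuinely defined on the range in play (it is, since $v-\rho>0$ there), and that the differentiation and the change of variables under the integral sign are legitimate for the merely monotone, possibly countably discontinuous allocation functions — precisely the kind of regularity assumed throughout this work.
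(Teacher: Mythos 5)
Your proposal is correct and follows essentially the same route as the paper's proof: both pin down $a_B$ from the sign conditions, use the flow condition $\gamma_G(v)>0$ on $(\rho,H+2)$ to force $\int_0^{g_A(v)}a_A(t)\,dt=v-\rho$, and observe that this (being independent of the primal) determines $a_A$ on $(2,H+2)$, with the sign conditions fixing it elsewhere. The only cosmetic difference is that you derive $a_C$ explicitly before transporting the identity to $A$, whereas the paper chains $\int_0^{g_A(v)}a_A=\int_0^v a_C=\int_0^{g_B(v)}a_B$ directly; your extra care with the change of variables $w=g_A(v)$ is a fair elaboration of a step the paper leaves implicit.
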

\begin{proof}
We reason from \autoref{eq:cs}. Specifically, the constraints \eqref{eq:pos}, \eqref{eq:neg} for item $B$ imply that $a_B(v) = 0$ for $v \in [0,\rho)$ and $a_B(v) = 1$ for $v \in (\rho, H+3]$. Thus, we have:

\[\int_0^{g_B(v)}a_B(t)dt = (v - \rho)\mathbbm{1}_{v \geq \rho}.\]

Since $\gamma_{G}(v) > 0$ for all $v \in (\rho, H+2)$, we have by \eqref{eq:flowa} that $\int_0^{g_A(v)}a_A(t)dt = \int_0^{g_B(v)}a_B(t)dt = \int_0^v a_C(t)dt$ for all $v$ in this range. Thus, for all $v \in (\rho, H+2)$:

\[\int_0^{g_A(v)}a_A(t)dt = v -\rho.\]

Since the right hand side in the equation above is independent of the primal, we get for all $v \in (\rho, H+2)$ that $\int_0^{g_A(v)}a_A(t)dt = \int_0^{g_A(v)}a^*_A(t)dt$. Thus, for all $v \in (2, H+2)$, we have $\int_0^{v}a_A(t)dt = \int_0^{v}a^*_A(t)dt$ implying $a_A(v) = a^*_A(v)$ in this range. The constraints \eqref{eq:pos}, \eqref{eq:neg} for item $A$ fix the allocation $a_A$ outside $(2,H+2)$. Combining, we get that $a_A = a^*_A$.

\end{proof}

\subsubsection{Proof of \autoref{thm:LB-gen}}
\label{sec:RRSLB2}
In the last subsection, we showed that instances can have high menu complexity, even when all the distributions $f_{\bar{G}}$ are DMR. The reason for high menu complexity is the complexity in the functions $g_G$.
We now show that if the distributions $f_{\bar{G}}$ are not required to be DMR, even `simple' ({\em e.g.}, piecewise linear with only $2$ segments) functions $g_G$ can have countably infinite menu complexity. Two segments are required because of the arguments in \autoref{app:optvars}. This is tight due to our upper bounds in \autoref{sec:RRSUB}.

\paragraph{The Instance $\C_2$}
\label{sec:RRSLB-gen-inst}
For notational convenience, we work in the range $[0,H+1]$ in this subsubsection.
Define:
\[
g'_A(v)\footnote{Note that this function is piecewise linear with $3$ segments and not $2$. However, the first segment is just for notational ease and can be removed.} = \begin{cases}
v &, 0 \leq v \leq 1\\
\frac{v+1}{2} &, 1 < v \leq 2H/3 + 1\\
2V - H-1 &, 2H/3 +1< v \leq H +1\\
\end{cases}
\quad\quad\quad
g'_B(v) = v
\]

We define points $x_1 = \frac{4H}{5}+1$, $y_1 = \frac{3H}{4}+1$, and $x_i = \frac{8}{3}\frac{H}{2^i} +1$ , $y_i = \frac{8}{5} \frac{H}{2^i}+1 $ for $i > 1$. 
Note that the sequence $x_i$ converges to $x = 1$. 

In order to construct $\C_2$, we  apply \autoref{thm:MT} with $g'_G$ and
\begin{itemize}
\item $\underline{\rho}_A = g'_A(x)$, $\overline{\rho}_A = g'_A(x_2)$, $\underline{\rho}_B = g'_B(x)$, $\overline{\rho}_B= g'_B(x_1)$,  $\underline{\rho}_C = x$, $\overline{\rho}_C = x_1$.

\item $\mathcal{Y}_A = \{(g'_A(x_{2i+2}), g'_A(x_{2i}))\}$ for some $i>0$. $\mathcal{Y}_B = \{(g'_B(x_{2i+1}), g'_B(x_{2i-1}))\}$ for some $i>0$. 

\item $\mathcal{Z}_G$ is empty and $X_G = \{y_i\}_{i > 0}$.
\end{itemize}
This gives an instance $\C_2 = (q,f,g)$ and an $(X_A \cup X_B)$-dual solution $\d = (\lambda_{\bar{G}}, \gamma_G, \Gamma_G)$ such that :
\begin{itemize}
\item For all $v$, $g_G(v) = g'_G(v)$ and $\underline{r}_{\bar{G}} = \underline{\rho}_{\bar{G}}$ and $\overline{r}_{\bar{G}} = \overline{\rho}_{\bar{G}}$.

\item $\lambda_{C}(v), \gamma_G(v)$ are $0$ throughout. $\lambda_{A}(v) >0$ if and only if $v \in (g_A(x_{2i+2}), g_A(x_{2i}))$ for some $i > 0$ and $\lambda_{B}(v) >0$ if and only if $v \in (g_B(x_{2i+1}), g_B(x_{2i-1}))$ for some $i > 0$.

\item $\forall i > 0: \Gamma_G(y_i) > 0$.
\end{itemize}

\paragraph{The Analysis}

Recall the definitions of $x_i, y_i$ and the instance $\C_2$ above. Define a feasible primal solution $\p = (a^*_{\bar{G}})$ of \autoref{eq:primal} for $\C_2$ as:

\[a^*_A(v) = \begin{cases}
0 &, 0\leq v \leq g_A(x)\\
\frac{40}{13\cdot 4^{i}} &, g_A(x_{2i+2}) \leq v < g_A(x_{2i})\\
1 &,  g_A(x_2)\leq v \leq H+1\\
\end{cases}\quad\quad\quad a^*_B(v) =  \begin{cases}
0 &, 0\leq v \leq g_B(x)\\
\frac{40}{13\cdot 4^{i}} &, g_B(x_{2i+1}) \leq v < g_B(x_{2i-1}) \\
1 &,  g_B(x_1)\leq v \leq H+1\\
\end{cases}\]
\[a^*_C(v) =  \begin{cases}
0 &, 0\leq v \leq x \\
\frac{20}{13\cdot 2^i} &, y_{i+1} \leq v < y_{i} \\
1 &,  y_1 \leq v \leq H+1\\
\end{cases}\]

We proceed exactly as in \autoref{sec:RRSLB1}.
In \autoref{lemma:compslack2} (proof in Appendix~\ref{app:lb2}), we show that $\p$ and $\d$ satisfy complementary slackness. To finish our menu complexity lower bound, we argue that {\em any} primal that satisfies complementary slackness with $\d$ must have infinite menu complexity. 

Together with strong duality (\autoref{thm:SD}), \autoref{lemma:compslack2} shows that $\d$ is optimal. Thus, any optimal primal for $\C_2$ must satisfy complementary slackness with $\d$. \autoref{lemma:LB-gen} says that such a primal has infinite menu complexity

%

\begin{lemma} \label{lemma:compslack2} There primal $\p$ is feasible for $\C_2$ and $\p$,$\d$ satisfy complementary slackness.
\end{lemma}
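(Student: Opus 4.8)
The plan is to establish the two halves of the statement separately: feasibility of $\p=(a^*_{\bar G})$ for \autoref{eq:primal}, and the complementary slackness conditions \autoref{eq:cs} with the dual $\d=(\lambda_{\bar G},\gamma_G,\Gamma_G)$ produced by the Master Theorem (\autoref{thm:MT}). Three of the four feasibility constraints are immediate from the closed form of $\p$: $a^*_{\bar G}(0)=0$ since every allocation vanishes below $v=x=1$; $0\le a^*_{\bar G}(v)\le 1$ since $\tfrac{40}{13\cdot 4^i},\tfrac{20}{13\cdot 2^i}\le\tfrac{10}{13}<1$ for $i\ge 1$; and monotonicity \eqref{eq:allocmonotone} since these coefficients strictly decrease in $i$ while, as $v$ increases, the relevant index $i$ decreases. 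So the only substantive feasibility content is the utility constraint \eqref{eq:util}, i.e. $\int_0^{g_G(v)}a^*_G(t)\,dt\le\int_0^v a^*_C(t)\,dt$ for $G\in\{A,B\}$ and all $v$, and this turns out to be intertwined with the key flow-point equalities needed for \eqref{eq:flowadisc}.

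To handle \eqref{eq:util} I would first observe that both sides are continuous in $v$, vanish on $[0,1]$, and are piecewise affine with breakpoints contained in $\{x_i\}_{i\ge1}\cup\{y_i\}_{i\ge1}$: the right side because $a^*_C$ is a step function jumping at the $y_i$; the left side because $a^*_G$ is a step function jumping at $g_G(x_i)$ while $g_G$ is piecewise affine with its single interior breakpoint at $x_2=\tfrac{2H}{3}+1$ (and $g_B=\mathrm{id}$). Using the interleaving order $1<\cdots<x_3<y_2<x_2<y_1<x_1$, on each elementary interval between consecutive breakpoints both sides are affine, so it suffices to verify \eqref{eq:util} at the points of $\{x_i\}\cup\{y_i\}$ and let the inequality propagate by affineness. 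The plan is then to show equality at every $y_i$ and a (possibly strict) inequality at every $x_i$ — the latter being slack precisely because at $x_i$ one of the three allocations has already jumped to its next, larger level.

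The heart of the proof, and the step I expect to be the main obstacle, is verifying the identities $\int_0^{y_i}a^*_C(t)\,dt=\int_0^{g_A(y_i)}a^*_A(t)\,dt=\int_0^{g_B(y_i)}a^*_B(t)\,dt$ for all $i\ge1$, together with the companion $x_i$ inequalities. Each of these integrals is a finite geometric sum: $\int_0^{y_i}a^*_C$ telescopes with ratio $\tfrac12$ over the intervals $[y_{j+1},y_j)$, while $\int_0^{g_A(y_i)}a^*_A$ and $\int_0^{g_B(y_i)}a^*_B$ telescope with ratio $\tfrac14$ over the intervals $[g_G(x_{j+1}),g_G(x_{j-1}))$; the slopes $\tfrac12$ (middle segment) and $2$ (top segment) of $g_A$, the explicit choice of the sequences $x_i,y_i$, and the normalizers $\tfrac{40}{13}$ (resp. $\tfrac{20}{13}$) are exactly calibrated so that the three sums coincide. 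This is pure bookkeeping with geometric series, but it must be executed with care near the top of the chain ($v$ close to $x_1$ and $x_2$), where the breakpoint of $g_A$ and the onset of the $1$-levels of $a^*_A,a^*_B,a^*_C$ occur at slightly different values; getting the constants and interval endpoints right there is the delicate part.

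Once the identities are in hand, the complementary slackness conditions \autoref{eq:cs} fall out quickly. \autoref{thm:MT} guarantees $\Phi^{\d}_{\bar G}f_{\bar G}$ is non-decreasing, hence $\Phi^{\d}_{\bar G}$ is negative below $\underline r_{\bar G}$, zero on $[\underline r_{\bar G},\overline r_{\bar G}]$, and positive above $\overline r_{\bar G}$; combined with $\underline r_A=g_A(1),\ \overline r_A=g_A(x_2),\ \underline r_B=\underline r_C=1,\ \overline r_B=\overline r_C=x_1$ and the fact that $a^*_A$ is $0$ on $[0,g_A(1)]$ and $1$ on $[g_A(x_2),H{+}1]$ (and likewise $a^*_B$ is $0$ on $[0,1]$, $1$ on $[x_1,H{+}1]$, while $a^*_C$ is $0$ on $[0,1]$, $1$ on $[y_1,H{+}1]\supseteq[x_1,H{+}1]$), conditions \eqref{eq:pos} and \eqref{eq:neg} hold. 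Condition \eqref{eq:iron} holds because $\lambda_G>0$ only on the intervals $(g_A(x_{2i+2}),g_A(x_{2i}))$ for $A$ and $(g_B(x_{2i+1}),g_B(x_{2i-1}))$ for $B$, on each of which $a^*_G$ is constant by construction, and $\lambda_C\equiv0$. Condition \eqref{eq:flowa} is vacuous since $\gamma_G\equiv0$, and \eqref{eq:flowadisc} is exactly the geometric-series equalities, since $\Gamma_G>0$ precisely at the $y_i$. Feeding the resulting optimality of $\d$ (via \autoref{thm:SD}) together with \autoref{lemma:LB-gen} then yields \autoref{thm:LB-gen}.
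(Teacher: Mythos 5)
Your proposal is correct and follows essentially the same route as the paper: the paper also reduces everything to the geometric-sum identities $\int_0^{g_A(y_i)}a^*_A=\int_0^{g_B(y_i)}a^*_B=\int_0^{y_i}a^*_C$ (packaged as Lemma~\ref{lemma:doublearrow}, an incremental inequality on each $[y_{i+1},y_i]$ with equality at the endpoints, proved via the identities \eqref{eq:xy1}--\eqref{eq:xy4}), and then disposes of \eqref{eq:pos}--\eqref{eq:flowadisc} exactly as you do. Your "check at breakpoints and propagate by affineness" is the same computation organized slightly differently, and you correctly isolate the only delicate point (the kink of $g_A$ at $x_2$ and the top of the chain).
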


\begin{lemma} \label{lemma:LB-gen} Any feasible primal $\p$ for $\C_2$ with a finite menu complexity does not satisfy complementary slackness with $\d$.
\end{lemma}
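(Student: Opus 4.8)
\textbf{Proof proposal for Lemma~\ref{lemma:LB-gen}.}

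The plan is to argue, exactly as in the proof of \autoref{lemma:LB-DMR} and in the unbounded-menu-complexity argument of \autoref{app:LB}, that complementary slackness with $\d$ forces the allocation rule to agree (essentially) with $\p$, and in particular to take infinitely many distinct values. First I would pin down the ``easy'' parts of any complementary-slackness primal $\p = (a_{\bar G})$: by \eqref{eq:pos} and \eqref{eq:neg} the allocation is $0$ below $\underline r_{\bar G} = \underline\rho_{\bar G}$ and $1$ above $\overline r_{\bar G} = \overline\rho_{\bar G}$ for each item; by \eqref{eq:iron} the allocation $a_A$ is constant on each ironed interval $(g_A(x_{2i+2}), g_A(x_{2i}))$ and $a_B$ is constant on each $(g_B(x_{2i+1}), g_B(x_{2i-1}))$. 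Since $\lambda_C \equiv 0$ and $\gamma_G \equiv 0$ here, the only active flow constraints are the discrete ones \eqref{eq:flowadisc} at the points $y_i \in X_G$: for every $i>0$ we get $\int_0^{y_i} a_C(t)\,dt = \int_0^{g_A(y_i)} a_A(t)\,dt$ and likewise with $B$. These equalities, together with constancy on the ironed intervals and the $0/1$ boundary behavior, form exactly the ``interleaving chain'' system studied in \autoref{app:LB}.

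The key step is to run the downward-then-upward argument of \autoref{app:LB} (Lemma~\ref{lem:3} and the proof of Theorem~\ref{thm:unbounded}) adapted to this setting. Downward: starting from $a_B(v)=1$ for $v$ just above $\overline\rho_B$ one shows $u_B > 0$ at the top chain point, hence $u_A = u_B > 0$ there by the flow equality, hence $a_A > 0$ on the relevant ironed interval, hence $u_A > 0$ at the next chain point, hence $a_B > 0$ there, and so on; because the points $x_i$ form an infinite interleaving chain converging to $x=1$, \emph{every} ironed interval in the tail carries a strictly positive allocation. Upward: since the ironed intervals for $A$ and $B$ are staggered (they share endpoints $g_A(x_i)$, $g_B(x_i)$ but are never aligned) and the bottom endpoints of the $A$-chain and the $B$-chain differ, the equal-utility constraints $u_A(x_i)=u_B(x_i)$ (obtained from \eqref{eq:flowadisc} exactly as in \autoref{rem:3}) force the allocation to \emph{strictly} increase as one passes from one ironed interval to the next — this is the content of Lemma~\ref{lem:3} transported to the present instance. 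Consequently $a_A$ (and $a_B$) takes a distinct value on each of the infinitely many ironed intervals $(g_A(x_{2i+2}), g_A(x_{2i}))$, so $\MC_A(\p)$ is infinite; in particular $\MC(\p) = \infty$. Since $\p$ was an arbitrary complementary-slackness primal, any $\p$ of finite menu complexity cannot satisfy complementary slackness with $\d$.

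The main obstacle is verifying that the ``abort'' escape hatch from \autoref{sec:minioptvar} and \autoref{thm:allocalg} genuinely cannot apply here: that argument said a countably infinite chain collapses to a single reserve price \emph{provided} both chains converge to the same bottom endpoint, forcing $A$ and $B$ to be unironed there. Here I must check that the construction of $\C_2$ (via the Master Theorem with the specified $\mathcal Y_G$ and $X_G$) genuinely produces a chain in which $A$ and $B$ remain ironed arbitrarily close to the common limit $x=1$ — i.e. that the ironed intervals accumulate at $g_A(x)=\underline\rho_A$ and $g_B(x)=\underline\rho_B$ rather than stopping — and that the discrete flow points $y_i$ are interleaved with the chain points $x_i$ so that every step of the up/down induction is licensed. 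This is a bookkeeping check on the geometric sequences $x_i = \tfrac{8}{3}\tfrac{H}{2^i}+1$, $y_i=\tfrac{8}{5}\tfrac{H}{2^i}+1$ and the piecewise-linear $g_A$: one must confirm $g_A(x_{2i+2}) < y_{2i+1}' < g_A(x_{2i})$-type nestings (after applying $g_A$ and $g_B$) so that no ironed interval contains a flow point in its interior (the ``no double swap'' condition) and the chain structure of Definition~\ref{def:chain} is realized. Once these inclusions are confirmed, the strictly-increasing conclusion follows verbatim from the $\app:LB$ machinery, and combined with \autoref{lemma:compslack2} and \autoref{thm:SD} we conclude that $\d$ is optimal and every optimal primal has countably infinite menu complexity, proving \autoref{thm:LB-gen}.
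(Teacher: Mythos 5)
There is a genuine gap, and it sits exactly where you flagged your ``main obstacle'' --- but it is not a bookkeeping check, it is the crux of the lemma, and the machinery you propose to import cannot close it. The upward induction of \Cref{app:LB} (Lemma~\ref{lem:3} plus the proof of Theorem~\ref{thm:unbounded}) only yields \emph{alternation} of the inequalities between $a_A$ and $a_B$; the \emph{strict} increase across ironed intervals is seeded by a base case at the bottom point $x_M$ of a \emph{finite} chain, where $\underline{x_M}_B > \underline{x_M}_A$ forces $a_B(x_M) > a_A(x_M)$. In $\C_2$ the chain is infinite and both sequences of ironed intervals accumulate at the \emph{same} limit $x=1$ (i.e.\ at $g_A(x)=\underline\rho_A$ and $g_B(x)=\underline\rho_B$), so there is no bottom point and no base case; the alternative branch of Lemma~\ref{lem:3}, in which the allocations agree (here: satisfy the fixed ratio) all the way down, is not excluded. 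Worse, Theorem~\ref{thm:allocalg} shows that in the partially-ordered setting an infinite interleaved chain converging to a common limit genuinely \emph{does} collapse to a posted price --- so a verbatim transport of that machinery would, if anything, suggest the opposite conclusion. What prevents the collapse in $\C_2$ is the asymmetric scaling $g_A(v)=\tfrac{v+1}{2}$ versus $g_B(v)=v$, which your argument never uses quantitatively.

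The paper's proof runs in the opposite direction and leans on that factor of $2$. Assuming finite menu complexity, monotonicity gives a largest index $i^*$ after which $a_A(g_A(y_i))$ and $a_B(g_B(y_i))$ are constant, say $\pi_A$ and $\pi_B$. The discrete flow constraints \eqref{eq:flowadisc} at consecutive $y_{i+1},y_i$ equate the step utilities $\int_{g_A(y_{i+1})}^{g_A(y_i)}a_A = \int_{g_B(y_{i+1})}^{g_B(y_i)}a_B$, and since $g_A$ halves lengths in the tail this forces $\pi_A = 2\pi_B$. A direct computation with $y_1,x_2,y_2$ rules out $i^*\le 1$, and at the last jump $i^*$ the staggering of $x_{i^*+1}$ inside $[y_{i^*+1},y_{i^*}]$ produces a strict inequality contradicting $\pi_A=2\pi_B$. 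To repair your proposal you would need to replace the up/down induction with an argument of this type --- one that uses the slope of $g_A$ and treats ``eventually constant'' as the case to be refuted, rather than assuming a strict seed at a nonexistent bottom of the chain.
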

\begin{proof} Proof by contradiction. Let $\p = (a_{\bar{G}})$ be a feasible primal with finite menu complexity that satisfies complementary slackness with $\d$.

Let $i^*$ be the largest $i$ such that $a_A(g_A(y_{i})) > a_A(g_A(y_{i + 1}))$ or $a_B(g_B(y_{i})) > a_B(g_B(y_{i + 1}))$. If no such $i^*$ exists, we define $i^* = 0$. Since $\MC(\p)$ is assumed to be finite and the allocation is monotone (by constraint \eqref{eq:allocmonotone}), $i^*$ is well defined and $a_A(g_A(y_{i}))$ and $a_B(g_B(y_{i}))$ is constant for all $i > i^*$. Let the constant values be $\pi_A$ and $\pi_B$ respectively.

Since $\p$, $\d$ satisfy \autoref{eq:cs}, we have, by \autoref{eq:iron}
\begin{equation} \label{eq:constG}
  a_G(v) = \pi_G\quad\quad,\forall v \in (g_G(x), g_G(y_{i^* + 1})]
\end{equation}

Also, for all $i >0$ we have by the constraint \eqref{eq:flowadisc} (applied once to both $y_i$ and $y_{i+1}$) 
\begin{equation}\label{eq:steputil}
\int_{g_A(y_{i+1})}^{g_A(y_{i})}a_A(t)dt = \int_{y_{i+1}}^{y_{i}}a_C(t)dt = \int_{g_B(y_{i+1})}^{g_B(y_{i})}a_B(t)dt
\end{equation}

We derive a contradiction in two steps. First, we prove 
\begin{claim} \label{claim:ratio} $ \pi_A = 2\pi_{B}$.
\end{claim}
\begin{proof}
 Consider an $i$ larger than $i^*+10$. Using \autoref{eq:constG} and \autoref{eq:steputil}, we get
\begin{align*}
 \pi_A\left(g_A(y_{i}) - g_A(y_{i+1})\right) &= \pi_B\left(g_B(y_{i}) - g_B(y_{i+1})\right)&\\
 \pi_A\left(y_{i} - y_{i+1}\right) &= 2\pi_B\left(y_{i} - y_{i+1}\right)& (\text{Definition of $g_A$})\\
 \pi_A &= 2\pi_B.
\end{align*}
\end{proof}

\begin{claim} $a_A(g_A(y_2)) \neq 2a_B(g_B(y_2))$.
\end{claim}
\begin{proof} Proof by contradiction. Suppose that $a_A(g_A(y_2)) = 2a_B(g_B(y_2)) = 2\pi$. By \autoref{eq:pos}, we have $a_A(g_A(v)) = 1$ for all $v > x_2$. Since  $y_2, y_1$ are in the same ironed interval on $B$ and $y_2, x_2$ are in the same ironed interval on $A$, we have using \autoref{eq:steputil} that
\[g_A(y_{1}) - g_A(x_{2})  + 2\pi\left(g_A(x_{2}) - g_A(y_{2})\right) = \pi\left(g_B(y_{1}) - g_B(y_{2})\right).\]
Plugging in the values of $y_1, x_2, y_2$, we have 
\[\frac{H}{2}  - \frac{H}{3} + 2\pi\left(\frac{H}{3} - \frac{H}{5}\right) = \pi\left(\frac{3H}{4} - \frac{2H}{5}\right),\]
a contradiction to $\pi < 1$.
\end{proof}

These two claims together with \autoref{eq:constG} establish that $i^* > 1$. We now give a contradiction assuming $i^*$ is even. A similar argument works if $i^*$ is odd. Since $g_A(y_{i^*}), g_A(y_{i^* + 1})$ lie in the same ironed interval in $A$, we have $a_A(g_A(y_{i^*})) = a_A(g_A(y_{i^* + 1})) = \pi_A$.  By choice of $i^*$, we have $\pi'_B = a_B(g_B(y_{i^*})) > \pi_B$. By \autoref{eq:steputil}, we have
 \[\frac{\pi_A}{2}(y_{i^*} - y_{i^*+1}) =  \pi_A(g_A(y_{i^*}) - g_A(y_{i^*+1})) = \pi'_B(y_{i^*} - x_{i^*+1}) + \pi_B(x_{i^*+1} - y_{i^*+1}) > \pi_B(y_{i^*} - y_{i^*+1}),\]
 a contradiction to \autoref{claim:ratio}.
 \end{proof}


 \subsection{Upper bounds} \label{sec:RRSUB}

In this subsection, we prove that for any instance $\I$ such that the functions $g_A(\cdot)$ and $g_B(\cdot)$ are piecewise linear, we have $\MC(\I)$ is at most countably infinite (\autoref{thm:UB}). This result is tight by our arguments in \autoref{sec:RRSLB}

Our line of argument is as follows: Fix an instance $\I$. By \autoref{thm:SD}, there exists an $X$, a primal solution $\p = (a_{\bar{G}})$ and an $X$-dual solution $\d = (\lambda_{\bar{G}}, \gamma_G, \Gamma_G)$ that satisfy complementary slackness.
From $\p, \d$, we construct another primal solution $\hat{\p}$ such that $\MC(\hat{\p})$ is small and $\hat{\p}$, $\d$ satisfy complementary slackness.
Thus, the primal $\hat{\p}$ also defines an optimal revenue mechanism. The menu complexity of this new mechanism gives us our upper bound on $\MC(\I)$.

We note that this technique is markedly different from that employed in \autoref{app:optvars} where we assume an optimal dual and describe a recovery algorithm that reads an optimal primal from the optimal dual. Here, we assume {\em both}, an optimal dual and an optimal primal\footnote{Note that we need \autoref{thm:SD} to assume that there exists an optimal primal-dual pair that satisfies complementary slackness.}, and prove that such a primal can be improved to have a lower menu complexity. 

\subsubsection{Splitting}
 
Our procedure to improve the primal makes extensive use of the following ``splitting'' operation:
 
\begin{definition} \label{def:split}  Let $a:\mathbb{R} \to \mathbb{R}$ be a function and consider the interval $s = [x,y]$, where $x,y \in \mathbb{R}$. Define the function $a \circ s$ as  :
\[a\circ s (v) = \begin{cases}
a(v) &, v \not \in s\\
\frac{1}{y-x} \int_x^y a(t)dt &, v \in s\\
\end{cases}\]
\end{definition}
The following is easily observed for all $v \notin s$:
\begin{equation}\label{eq:split-util}
\int_{-\infty}^v a\circ s(t)dt = \int_{-\infty}^v a(t)dt
\end{equation}

If $s_1$ and $s_2$ are two disjoint intervals, then $a \circ s_1 \circ s_2 = a \circ s_2 \circ s_1$. We will use $a \circ s_1 s_2$ to denote this common value.

\begin{remark} \label{rem:split-util} Let $a$ be a function and $s$ be an interval. For any $x,y,z \in s$, we have 
\[\int_{-\infty}^z a\circ s(t)dt = \frac{1}{y-x}\left[(y-z)\int_{-\infty}^x a(t)dt  + (z-x)\int_{-\infty}^y a(t)dt\right] .\]
\end{remark}

\begin{lemma} \label{lemma:split-util} Let $a$ be an increasing function and $s$ be an interval. We have $\int_{-\infty}^z a(t)dt \leq \int_{-\infty}^z a\circ s(t)dt$. Moreover, equality holds if $v \not \in s$ or $a$ is constant over the interval $s$.
\end{lemma}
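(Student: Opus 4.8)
The plan is to split the argument at whether $z$ lies inside the splitting interval $s$. Write $s = [\ell,r]$ for its endpoints. If $z\notin s$, then both the inequality and the asserted equality are immediate: \autoref{def:split} changes $a$ only on $s$, so by \autoref{eq:split-util} we have $\int_{-\infty}^z a\circ s(t)\,dt = \int_{-\infty}^z a(t)\,dt$, and there is nothing more to do. So from here on assume $z\in[\ell,r]$.

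The second step is to make both sides explicit. Applying \autoref{rem:split-util} with the endpoints $\ell,r$ of $s$ and the point $z$, and using $\int_{-\infty}^r a = \int_{-\infty}^\ell a + \int_\ell^r a$, one simplifies $\int_{-\infty}^z a\circ s(t)\,dt = \int_{-\infty}^\ell a(t)\,dt + \tfrac{z-\ell}{r-\ell}\int_\ell^r a(t)\,dt$. Since also $\int_{-\infty}^z a(t)\,dt = \int_{-\infty}^\ell a(t)\,dt + \int_\ell^z a(t)\,dt$, the claimed inequality $\int_{-\infty}^z a \le \int_{-\infty}^z a\circ s$ reduces to the single clean inequality $\tfrac{z-\ell}{r-\ell}\int_\ell^r a(t)\,dt \ge \int_\ell^z a(t)\,dt$, i.e. (after splitting $\int_\ell^r = \int_\ell^z + \int_z^r$ and rearranging) to $(z-\ell)\int_z^r a(t)\,dt \ge (r-z)\int_\ell^z a(t)\,dt$.

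The third step proves this last inequality from monotonicity: since $a$ is non-decreasing, $\int_z^r a(t)\,dt \ge (r-z)\,a(z)$ and $\int_\ell^z a(t)\,dt \le (z-\ell)\,a(z)$, so both sides are squeezed around $(z-\ell)(r-z)\,a(z)$, giving the result. For the equality clause we only need sufficiency: if $a$ is constant on $s$ then the average of $a$ over $s$ equals $a$ on $s$, so $a\circ s = a$ pointwise and the two cumulative integrals coincide for every $z$; and the $z\notin s$ case was already handled in step one. (As a sanity note one also gets the converse when $\ell<z<r$: equality forces $\int_z^r a = (r-z)a(z)$ and $\int_\ell^z a = (z-\ell)a(z)$, hence $a\equiv a(z)$ a.e. on $s$, but the converse is not needed for the statement.)

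I do not expect a real obstacle here: the whole content is the elementary fact that the average of an increasing function over a sub-interval is dominated by its average over the whole interval. The only thing requiring a little care is the bookkeeping in step two — correctly unwinding \autoref{def:split} and \autoref{rem:split-util} and keeping track of the degenerate endpoints $z=\ell$ and $z=r$ (where both sides are equal) — and noting that $a$ being increasing with at most countably many discontinuities makes every integral appearing above well defined, so Fubini/monotonicity manipulations are legitimate.
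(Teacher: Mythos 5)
Your proof is correct and follows essentially the same route as the paper's: after reducing to the case $z\in s$, the key step in both arguments is the monotonicity squeeze $\tfrac{1}{r-z}\int_z^r a \ge a(z) \ge \tfrac{1}{z-\ell}\int_\ell^z a$, which the paper states in the equivalent form $\int_\ell^r a \ge \tfrac{r-\ell}{z-\ell}\int_\ell^z a$. Your version just spells out the bookkeeping (and correctly reads the statement's ``$v\notin s$'' as ``$z\notin s$''), so there is nothing to fix.
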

\begin{proof} Let $s = [x,y]$. Since $a$ is increasing, we have
\begin{align*}
\int_{x}^y a(t)dt &\geq  \int_{x}^z a(t)dt  + \frac{y-z}{z-x}\int_{x}^z a(t)dt=  \frac{y-x}{z-x}\int_{x}^z a(t)dt.
\end{align*}
Rearranging gives the result. The moreover part can be using \autoref{eq:split-util} and \autoref{def:split}.
\end{proof}

\subsubsection{Proof of \autoref{thm:UB}}
Fix an instance $\I$ and let $\p, \d$, be an optimal primal dual pair ($\d$ is $X$-dual for some $X$). Without loss of generality, we can assume that the product $\Phi^{\d,X}_{\bar{G}}(v) f_{\bar{G}}(v)$ is non-decreasing (see \autoref{sec:addtlprelims}). We define the function $\sgn(x)$ to be $1$ if $x> 0$, $0$ if $x = 0$, and $-1$ if $x < 0$.
Using $\d$, define the following notion of  a strip.
\begin{definition}[Strip] A strip $s$ is an interval $[x,y]$, where $x,y \in [0,H]$, such that the following hold:
\begin{itemize}
\item For all $\bar{G}$, there functions $g_{\bar{G}}$ are linear over $s$
\item For all $\bar{G}$, the function $\sgn\left(\Phi_{\bar{G}}^{\d, X}(g_{\bar{G}}(v))f_{\bar{G}}(g_{\bar{G}}(v))\right)$ is constant for $v \in s$. 
\item For all $\bar{G}$, the function $\sgn\left(\lambda_{\bar{G}}(g_{\bar{G}}(v))\right)$ is constant for $v \in s$.
\end{itemize}
\end{definition}

The following holds for any strip.

\begin{lemma}\label{lemma:strip} For $\I, \p, \d$ as above, let $s$ be any strip. There exists another primal solution $\hat{\p}$ of $\I$ such that 
\begin{itemize}
\item $\hat{a}_{\bar{G}}(g_{\bar{G}}(v)) = a_{\bar{G}}(g_{\bar{G}}(v))$ for all $v \not\in s$.
\item $\hat{\p}$, $\d$ satisfy complementary slackness (\autoref{eq:cs}).
\item For all $\bar{G}$, it holds that $\MC_{\bar{G}}\left([g_{\bar{G}}(x), g_{\bar{G}}(y)], \hat{a}_{\bar{G}}\right) \leq 10$.
\end{itemize}
\end{lemma}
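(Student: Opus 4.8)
\textbf{Plan of proof for Lemma~\ref{lemma:strip}.}

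The plan is to construct $\hat{\p}$ from $\p$ by a localized modification supported on the strip $s = [x,y]$, using the splitting operation and the structure that $s$ imposes. First I would case on the constant signs that define a strip. On $s$ the function $\sgn(\Phi_{\bar G}^{\d,X}(g_{\bar G}(v))f_{\bar G}(g_{\bar G}(v)))$ is constant for each $\bar G$; if it is $+1$ then complementary slackness \eqref{eq:pos} already pins $a_{\bar G}(g_{\bar G}(v))=1$ on all of $s$, and if it is $-1$ then \eqref{eq:neg} pins it to $0$. In both of those cases there is nothing to change for that item: the allocation is already constant on the image of $s$ and $\MC_{\bar G}$ restricted there is $1$. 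So the only interesting item(s) are those with $\Phi_{\bar G}=0$ throughout $s$ — call these the ``zero items'' — and for a zero item we additionally know $\sgn(\lambda_{\bar G}(g_{\bar G}(v)))$ is constant on $s$, i.e. either the whole image $g_{\bar G}(s)$ lies in a single ironed interval or it contains no ironed point at all. If $g_{\bar G}(s)$ is inside an ironed interval, then \eqref{eq:iron} forces $a_{\bar G}$ to be constant there already, so again $\MC_{\bar G}\le 1$ on $g_{\bar G}(s)$ and no change is needed. Hence the genuinely live case is: $\Phi_{\bar G}=0$ and $\lambda_{\bar G}=0$ on the (linear) image $g_{\bar G}(s)$, for one or both of $G\in\{A,B\}$ (and possibly $C$, where $g_C(v)=v$).

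In that live case the idea is to replace $a_{\bar G}$ restricted to the zero items by a piecewise-constant approximation obtained by the splitting operation $a_{\bar G}\mapsto a_{\bar G}\circ s_1\circ s_2$ over a bounded number of subintervals of $g_{\bar G}(s)$, chosen so that (i) the cumulative-utility integrals $\int_0^w a_{\bar G}$ are preserved at the few ``anchor'' points $w$ where complementary slackness makes a demand, and (ii) the resulting allocation is still monotone. Concretely, the only complementary-slackness constraints touching the interior of $g_{\bar G}(s)$ are the flow conditions \eqref{eq:flowa} at points $v$ with $\gamma_G(v)>0$, and \eqref{eq:flowadisc} at points $x\in X$ with $\Gamma_G(x)>0$. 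Since $\gamma_G$ is a single nonnegative function on a linear strip, the set $\{v\in s:\gamma_G(v)>0\}$ is governed by the sign of $\gamma_G$; I would split $s$ at the (at most a constant number of) sign-change boundaries of $\gamma_A$ and $\gamma_B$ and at the finitely many points of $X\cap s$ that matter, collapsing $a_{\bar G}$ to its average on each resulting subinterval. Because splitting preserves $\int_0^w a_{\bar G}$ for $w$ outside the split interval (equation \eqref{eq:split-util}) and the flow conditions are exactly equalities of such integrals at those boundary points, the new allocation still satisfies \eqref{eq:flowa}–\eqref{eq:flowadisc}; \eqref{eq:iron} is vacuous here since $\lambda_{\bar G}=0$ on $g_{\bar G}(s)$; and \eqref{eq:pos}–\eqref{eq:neg} are untouched since $\Phi_{\bar G}=0$ on the relevant image. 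Monotonicity of the averaged function follows from Lemma~\ref{lemma:split-util}, taking the subintervals in increasing order. The number of subintervals is bounded by a constant (the number of sign changes of $\gamma_A,\gamma_B$ on a linear strip, plus the flow-exchange points of $X$ inside $s$, plus the two endpoints), which is where the bound $\MC_{\bar G}\le 10$ comes from — one should of course double-check that the constants line up, but $10$ is a comfortable slack.

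The last thing to verify is that this modification does not break feasibility globally: the split allocations must still be globally monotone (they are, by Lemma~\ref{lemma:split-util} and the fact that we only average within $s$ and leave values outside $s$ alone, and the boundary values match up), must remain in $[0,1]$ (averages of $[0,1]$-valued functions), and must still satisfy the ``utility'' constraint \eqref{eq:util} across the boundary of $s$; for the latter I would use that \eqref{eq:split-util} makes $\int_0^{g_G(v)}\hat a_G$ agree with $\int_0^{g_G(v)}a_G$ for $v$ at the endpoints of $s$, and Remark~\ref{rem:split-util}/Lemma~\ref{lemma:split-util} control the values in between (the cumulative integral of the split function is a convex combination of the original cumulative integrals at the endpoints, hence still dominated by $\int_0^{v}a_C$ since that bound holds at the endpoints and $\int_0^v a_C$ is concave-free — actually one needs $C$'s side handled symmetrically, so I would split $C$'s strip image too). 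The main obstacle I anticipate is precisely this bookkeeping at the seams: ensuring that the flow equalities \eqref{eq:flowa}–\eqref{eq:flowadisc} are preserved simultaneously for $A$, $B$, and $C$ (whose strip images are linked by $g_A,g_B$), since a split on $A$'s image changes $\int_0^{g_A(v)}a_A$ in the interior and one must make sure the corresponding $C$-side and $B$-side integrals are split compatibly so the three stay equal at every flow point. Once that compatibility is set up correctly — essentially by choosing the split intervals on $A$, $B$, $C$ to be images of a common partition of $s$ under $g_A,g_B,g_C$ — the rest is the routine verification sketched above, and the per-item bound of $10$ on $[g_{\bar G}(x),g_{\bar G}(y)]$ follows by counting the pieces of that common partition.
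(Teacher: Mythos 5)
Your overall strategy---a localized averaging of the allocation over a common partition of the strip, using the splitting operation and the fact that the endpoints' cumulative integrals are preserved---is the same as the paper's. Your preliminary casing (positive/negative $\Phi$ pins the allocation, a fully ironed image pins it too, so only the ``$\Phi=0$, unironed'' items are live) is also consistent with how the paper verifies \eqref{eq:pos}--\eqref{eq:iron}. The gap is in \emph{where} you split. You propose to cut $s$ at the sign-change boundaries of $\gamma_A,\gamma_B$ and at the points of $X\cap s$ with $\Gamma_G>0$, and you assert this is ``at most a constant number.'' Nothing in the definition of a strip supports that: a strip only fixes the signs of $\Phi_{\bar G}\circ g_{\bar G}$ and $\lambda_{\bar G}\circ g_{\bar G}$ and the linearity of $g_{\bar G}$; it places no constraint on $\gamma_G$ or on $X$. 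The set $\{v\in s:\gamma_G(v)>0\}$ can be a countable union of intervals, and $X\cap s$ can be countably infinite (indeed, in the lower-bound instance $\C_2$ the set $X_G=\{y_i\}_{i>0}$ accumulates, so infinitely many flow points can land in one strip). With your partition the number of pieces, and hence $\MC_{\bar G}$ on the strip, would be unbounded, defeating the whole point of the lemma.

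The paper's fix is to split at $O(1)$ points that are \emph{not} the flow points themselves but the extremal points of utility-coincidence sets: $\underline z=\inf\{v\in s: u_A(v)=u_B(v)\}$, $\overline z=\sup\{\cdots\}$ for $A$ versus $B$, and then six points $z_1,\dots,z_6$ (inf/sup of $\{\max(\hat u_A,\hat u_B)=u_C\}$ on each of the three resulting subintervals) for $C$. After splitting, each $\hat u_{\bar G}$ is affine in $v$ on each piece (here linearity of $g_{\bar G}$ on the strip is essential, via Remark~\ref{rem:split-util}), and Lemma~\ref{lemma:split-util} shows that any flow point $v$ of the original pair must land inside one of the intervals $[z_1,z_2]\cup[z_3,z_4]\cup[z_5,z_6]$; on such an interval $\hat u_G$ and $\hat u_C$ are two affine functions agreeing at both endpoints, hence agreeing everywhere. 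So \eqref{eq:flowa} and \eqref{eq:flowadisc} are preserved at \emph{all} flow points simultaneously without ever enumerating them, and the piece count is at most $9$ per item, giving the bound $10$. Your proposal needs this replacement of ``split at every flow point'' by ``split at the boundary of where the utilities coincide, then use affineness'' to go through.
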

The proof of this lemma spans the rest of this subsection.
\begin{proof}
Let $s = [x,y]$. Define the function $u_{\bar{G}}(v) = \int_0^{g_{\bar{G}}(v)} a_{\bar{G}}(t)dt$\footnote{Note that this is just the utility of a bidder with type $(\bar{G}, g_{\bar{G}}(v))$.}. Since the functions $g_{\bar{G}}$ are continuous, and $a_{\bar{G}}$ is continuous except at countably many points, we have that $u_{\bar{G}}$ is also continuous. Define points $\underline{z}$, $\overline{z}$ as follows\footnote{Throughout this subsection, we define several infimums and supremums. In case the argument to any of these is empty, we simply drop those terms from where they are used. For example, if $\underline{z}$, $\overline{z}$ are not defined, we simply use $a_G \circ [g_G(x), g_G(y)]$ instead of $a_G \circ [g_G(x), g_G(\underline{z})][g_G(\underline{z}), g_G(\overline{z})][g_G(\overline{z}), g_G(y)]$ below.}:
\[ \underline{z} = \inf_{v \in [x,y]} \{v \mid u_A(v) = u_B(v)\} \quad\quad \overline{z} = \sup_{v \in [x,y]} \{v \mid u_A(v) = u_B(v)\}. \]
We use $\underline{z}$ and $\overline{z}$ to define:
\begin{align*}
\hat{a}_G &= a_G \circ [g_G(x), g_G(\underline{z})][g_G(\underline{z}), g_G(\overline{z})][g_G(\overline{z}), g_G(y)].\\
\hat{u}_G(v) &= \int_0^{g_G(v)} \hat{a}_G(t)dt.
\end{align*}
Now, define:
\begin{align*}
z_1 = \inf_{v \in [x, \underline{z}]} \{v \mid \max(\hat{u}_A(v), \hat{u}_B(v)) = u_C(v)\} \quad&\quad z_2 = \sup_{v \in [x, \underline{z}]} \{v \mid \max(\hat{u}_A(v), \hat{u}_B(v)) = u_C(v)\}.\\
z_3 = \inf_{v \in [\underline{z}, \overline{z}]} \{v \mid \max(\hat{u}_A(v), \hat{u}_B(v)) = u_C(v)\} \quad&\quad z_4 = \sup_{v \in [\underline{z}, \overline{z}]} \{v \mid \max(\hat{u}_A(v), \hat{u}_B(v)) = u_C(v)\}.\\
z_5 = \inf_{v \in [\overline{z}, y]} \{v \mid \max(\hat{u}_A(v), \hat{u}_B(v)) = u_C(v)\} \quad&\quad z_6 = \sup_{v \in [\overline{z}, y]} \{v \mid \max(\hat{u}_A(v), \hat{u}_B(v)) = u_C(v)\}.
\end{align*}
Finally, we define:
\begin{align*}
\hat{a}_C&= a_C \circ [x, z_1][z_1, z_2][z_2, \underline{z}][\underline{z}, z_3][z_3, z_4][z_4, \overline{z}][\overline{z}, z_5][z_5, z_6][z_6, y].\\
\hat{u}_C(v) &= \int_0^{v} \hat{a}_C(t)dt.
\end{align*}
Our primal $\hat{\p}$ is defined by the allocations $(\hat{a}_{\bar{G}})$.
Note that item $1$ and item $3$ are straightforward from \autoref{def:split}. We only concentrate on item $2$.

For item $2$, we verify each of the constraints in \autoref{eq:cs}. For \autoref{eq:pos}, observe that if $\Phi_{\bar{G}}(g_{\bar{G}}(v)) > 0$ for some $v \in s$ and some $\bar{G}$, then, since $s$ is a strip, $\Phi_{\bar{G}}(g_{\bar{G}}(v)) > 0$ throughout $s$.  Thus, the allocation $a_{\bar{G}}$ is $1$ throughout $s$ and our operations have no effect. If $v \notin s$, then the result follows as $\p$, $\d$ satisfied complementary slackness. A similar argument verifies \autoref{eq:neg} and \autoref{eq:iron}.

For \autoref{eq:flowa}, consider a $v, G$ such that $\gamma_G(v) > 0$. If $v \not \in s$, then the result follows because of \autoref{rem:split-util} and the fact that $\p,\d$ satisfy complementary slackness. If $v \in s$, then, since $\p,\d$ satisfied complementary slackness, we have $u_G(v) = u_C(v)$. Thus, by \autoref{lemma:split-util}, $\hat{u}_G \geq  u_C(v)$ implying that $v \in [z_1, z_2] \cup [z_3, z_4] \cup [z_5, z_6]$. Suppose that $v \in [z_1, z_2]$. The other cases are similar. We have:
\begin{align*}
\hat{u}_G(v) &= \frac{\hat{u}_G(z_1) (g_G(z_2)-g_G(v)) + \hat{u}_G(z_2)(g_G(v)-g_G(z_1))}{g_G(z_2)-g_G(z_1)} &\text{(\autoref{rem:split-util})}\\
&= \frac{\hat{u}_G(z_1) (z_2-v) + \hat{u}_G(z_2)(v-z_1)}{z_2 -z_1} &\text{($g_G(v) = m_G v + c_G$ over a strip)}\\ 
&= \frac{\hat{u}_C(z_1) (z_2-v) + \hat{u}_C(z_2)(v-z_1)}{z_2 -z_1} &\text{(Definition of $z_1$, $z_2$)}\\ 
&= \hat{u}_C(v) &\text{(\autoref{rem:split-util})}
\end{align*}

\autoref{eq:flowadisc} is verified similarly.


\end{proof}
\begin{proof}[Proof of \autoref{thm:UB}]
Since the functions $g_G$ are assumed to be piecewise linear, we can partition the range $[0,H]$ into countably many disjoint strips. \autoref{thm:UB} follows by applying \autoref{lemma:strip} to each of the strips.
\end{proof}

\section{Missing Proofs From Appendix~\ref{sec:raghuvansh}} \label{sec:rrsappendix}

 \subsection{The Proof of \autoref{thm:SD}} \label{app:SD}
 \subsubsection{The Proof of \autoref{thm:SD:WD}}
\begin{proof} Fix $X \subset [0,H]$. Observe that :
\[\D_X(\d) =  \sum_{\bar{G}} \int_0^H q_{\bar{G}}f_{\bar{G}}(t)\max\left(0,\Phi_{\bar{G}}(t)\right)dt\geq \sum_{\bar{G}} \int_0^H q_{\bar{G}}f_{\bar{G}}(t)\Phi_{\bar{G}}(t)a_{\bar{G}}(t)dt\]
Using the definition of $\Phi$, we get 
\[
\D_X(\d) \geq \P(\p) +  \sum_{\bar{G}} \int_0^H a_{\bar{G}}(t)\left(-\lambda'_{\bar{G}}(t) - \sum_{X \ni x > g_{\bar{G}}^{-1}(t)} \Gamma_{\bar{G}}(x) - \int_{g_{\bar{G}}^{-1}(t)}^H \gamma_{\bar{G}}(s)ds\right)dt
\]
First, note that, integrating by parts, we have $ \sum_{\bar{G}} \int_0^H a_{\bar{G}}(t)\lambda'_{\bar{G}}(t)dt = -\sum_{\bar{G}} \int_0^H a'_{\bar{G}}(t)\lambda_{\bar{G}}(t)dt \leq 0$ by \autoref{eq:allocmonotone}.
Fix any $v \in[0,H]$. Grouping all the terms with $\gamma_G(v)$, we observe that  $\gamma_G(v)$ is multiplied by $\int_0^v a_C(t)dt - \int_0^{g_G(v)} a_G(t) dt \geq 0$ by \autoref{eq:util}.
Similarly, fix any $x \in X$. Grouping all the terms with $\Gamma_G(x)$, we observe that  $\Gamma_G(x)$ is multiplied by $\int_0^x a_C(t)dt - \int_0^{g_G(x)} a_G(t) dt \geq 0$ by \autoref{eq:util}.

Thus, we have $\D_X(\d) \geq \P(\p)$. To finish the proof, observe that the conditions in \autoref{eq:cs} are exactly those needed to make these inequalities tight.
\end{proof}

  \subsubsection{The Proof of \autoref{thm:SD:SD}}
  
 \begin{proof} We prove that for all $\epsilon >0$, there is a set $X_{\epsilon}$ such that $\P(\I) \geq \D_{X_{\epsilon}}(\I) - \epsilon$. The statement then follows because the the union of the region in \autoref{eq:dual} for all possible sets $X$ is closed. Recall that the functions $g_{G}$ are $L$-Lipschitz.
 
 Fix $\epsilon > 0$ and let $\delta >0$ be sufficiently small.
 Our proof proceeds by defining a {\em discrete} linear program (\autoref{eq:primaldisc}) and its dual (\autoref{eq:dualdisc}) for $\I, \delta$. Let the optimal value of \autoref{eq:primaldisc} be $\P_{\delta}(\I)$ and the optimal value of \autoref{eq:dualdisc} be $\D_{\delta}(\I)$. Since strong duality holds for discrete linear programs, we have $\P_{\delta}(\I) = \D_{\delta}(\I)$. We also ensure that $\D_{\delta}(\I)  = \P_{\delta}(\I) \leq H$ for all $\delta$.
 
 In \autoref{thm:primalapprox}, we show that $\P(\I) \geq (1-\sqrt{\delta})\P_{\delta}(\I) - L\sqrt{\delta}$. In \autoref{thm:dualapprox}, we show that there is a set $X_{\delta}$ such that $ \D_{X_{\delta}}(\I)  \leq \D_{\delta}(\I)$. Combining, we get 
 \[\P(\I) \geq (1-\sqrt{\delta})\P_{\delta}(\I) - L\sqrt{\delta} = (1-\sqrt{\delta})\D_{\delta}(\I) - L\sqrt{\delta} \geq (1-\sqrt{\delta})\D_{X_{\delta}}(\I) - L\sqrt{\delta} >   \D_{X_{\delta}}(\I) - \epsilon, \]
 for small enough $\delta$.
 \end{proof}
 
 The rest of this subsection is devoted to defining and analyzing the discrete linear program, in order to prove \autoref{thm:primalapprox} and \autoref{thm:dualapprox}.
 
 \subsubsection{The Discrete Linear Program}
 
We describe a discrete linear program for the instance $\I$. The instance $\I$ is fixed for the rest of this subsection. Without loss of generality, let $\delta > 0$ be such that $H/\delta = H'$ is an integer. 
 \paragraph{The Primal} 
 Consider the following optimization problem with the variables $a_{\bar{G}}(i)$ and $p_{\bar{G}}(i)$ for $0 \leq i \leq H'$. In this subsection,  we abuse notation and write $g_{\bar{G}}(i)$ instead of $g_{\bar{G}}(i\delta)$. Define $\int_{g_{\bar{G}}(i-1)}^{g_{\bar{G}}(i)} f_{\bar{G}}(v)dv = \hat{f}_{\bar{G}}(i)$ and $\hat{f}_{\bar{G}}(0) = 0$.
\begin{subequations}
\label{eq:primaldisc}
\begin{alignat}{3}
 & \text{maximize} & \P_{\delta}(a) = \sum_{\bar{G}} \sum_{i = 1}^{H'}  q_{\bar{G}}\hat{f}_{\bar{G}}(i)p_{\bar{G}}(i)&& \\
 & \text{subject to} &&&\notag\\
 && g_C(i) a_C(i) - p_C(i)  &\geq g_G(i)a_G(i) -p_G(i) \quad&,\forall i \label{eq:c1}\\
 && g_{\bar{G}}(i) a_{\bar{G}}(i) - p_{\bar{G}}(i)  &\geq g_{\bar{G}}(i)a_{\bar{G}}(i+1) -p_{\bar{G}}(i+1) \quad&,\forall i\label{eq:c2}\\
 && g_{\bar{G}}(i) a_{\bar{G}}(i) - p_{\bar{G}}(i)  &\geq g_{\bar{G}}(i)a_{\bar{G}}(i-1) -p_{\bar{G}}(i-1) \quad&,\forall i \label{eq:c3}\\
   &\quad&  p_{\bar{G}}(0) = a_{\bar{G}}(0) &= 0 \quad &  \\
 &\quad&  a_{\bar{G}}(i) &\in [0,1] \quad&,\label{eq:c5}\forall i  
\end{alignat}
\end{subequations}

It is easy to see why  $\P_{\delta}(\I) \leq H$. We also have:

\begin{theorem}\label{thm:primalapprox} The optimal value $\P_{\delta}(\I) $ of \autoref{eq:primaldisc} satisfies $(1-\sqrt{\delta}) \P_{\delta}(\I)\leq \P(\I) + L\sqrt{\delta}$.
\end{theorem}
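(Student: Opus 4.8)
The plan is to take an optimal solution $(a^*_{\bar G}(i), p^*_{\bar G}(i))$ of the discrete program \autoref{eq:primaldisc} and manufacture from it a feasible solution of the continuous program \autoref{eq:primal} whose objective is at least $(1-\sqrt\delta)\P_\delta(\I) - L\sqrt\delta$; since $\P(\I)$ is the supremum over all feasible continuous solutions, this yields the claimed inequality. First I would record the structural consequences of the within-item constraints \eqref{eq:c2}--\eqref{eq:c3}: chaining them gives $g_{\bar G}((i-1)\delta)(a^*_{\bar G}(i)-a^*_{\bar G}(i-1)) \le p^*_{\bar G}(i)-p^*_{\bar G}(i-1) \le g_{\bar G}(i\delta)(a^*_{\bar G}(i)-a^*_{\bar G}(i-1))$, so $a^*_{\bar G}(\cdot)$ is non-decreasing in $i$, and after replacing $p^*$ by the canonical discrete (Myerson) payments — which does not decrease revenue under the IC constraints — the discrete utilities $u^*_{\bar G}(i) := g_{\bar G}(i\delta)\,a^*_{\bar G}(i) - p^*_{\bar G}(i)$ obey the identity $u^*_{\bar G}(i) = \sum_{j\le i}\big(g_{\bar G}(j\delta)-g_{\bar G}((j-1)\delta)\big)\,a^*_{\bar G}(j-1)$. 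Because $g_{\bar G}$ is $L$-Lipschitz, each cell width $g_{\bar G}(j\delta)-g_{\bar G}((j-1)\delta)$ is at most $L\delta$. Finally, the cross-item constraint \eqref{eq:c1} reads $u^*_C(i) \ge u^*_G(i)$ for $G\in\{A,B\}$.

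Next I would define a step-function continuous allocation. For $G\in\{A,B\}$ let $a_G$ be constant on each value-space cell $(g_G((j-1)\delta), g_G(j\delta)]$, equal to $a^*_G(j-1)$, so that $\int_0^{g_G(v)} a_G(t)\,dt$ is bounded above by the discrete utility $u^*_G$ at the nearest grid point below (rounding "downward"), and for $C$ let $a_C$ be the analogous step function rounding "upward" so that $\int_0^v a_C(t)\,dt$ dominates $u^*_C$ at grid points. Monotonicity of each $a_{\bar G}$ is immediate from monotonicity of $a^*_{\bar G}$. By construction together with $u^*_C(i)\ge u^*_G(i)$, for $v$ a grid point we get $\int_0^{g_G(v)} a_G(t)\,dt \le \int_0^v a_C(t)\,dt$; for general $v$ the same holds up to an additive error $O(L\delta)$, again using the Lipschitz bound on $g_G$ to control the contribution inside a single cell.

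To make the solution honestly feasible I would repair this $O(L\delta)$ violation in two regimes. Let $\theta_G$ be the largest $v$ with $\int_0^{g_G(v)} a_G(t)\,dt \le c\,L\sqrt\delta$ for a suitable constant $c\le 1$. On $[0,\theta_G]$ I would zero out $a_G$ entirely: this satisfies \autoref{eq:util} trivially there and costs only $O(L\sqrt\delta)$ revenue. On $(\theta_G,H]$ I would replace $a_G$ by $(1-\sqrt\delta)a_G$; since the utility there is $\Omega(L\sqrt\delta)$, the scaling creates utility slack $\Omega(L\delta)$, which absorbs the $O(L\delta)$ violation and restores \autoref{eq:util}, while keeping $a_G$ monotone and continuous in utility. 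The allocation $a_C$ is left unchanged. It then remains to compare objectives: by the revenue-equivalence form of \autoref{eq:primal}, the continuous objective of the constructed solution equals its expected payment, which I would compare term by term to $\P_\delta(\I) = \sum_{\bar G}\sum_i q_{\bar G}\hat f_{\bar G}(i)p^*_{\bar G}(i)$ — the masses $\hat f_{\bar G}(i)=\int_{g_{\bar G}((i-1)\delta)}^{g_{\bar G}(i\delta)} f_{\bar G}$ are exactly the masses of the step allocation's cells, and the per-cell payment contributions agree up to the $O(L\delta)$ cell widths — so the scaled-and-truncated $A,B$ contributions are $\ge (1-\sqrt\delta)(\text{disc. } A,B \text{ revenue}) - O(L\delta) - O(L\sqrt\delta)$ and the $C$ contribution is $\ge (\text{disc. } C \text{ revenue}) - O(L\delta)$, giving in total $\ge (1-\sqrt\delta)\P_\delta(\I) - L\sqrt\delta$ after absorbing lower-order terms.

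The main obstacle I anticipate is the feasibility bookkeeping for \autoref{eq:util}: one must verify that the "round down for $G$, round up for $C$" step functions genuinely sandwich the discrete utilities at \emph{all} $v$, not merely at grid points; that the Lipschitz bound on $g_G$ is invoked with the correct constant at cell boundaries; and that the $\theta_G$-threshold surgery leaves each allocation monotone, continuous-in-utility, and still satisfying the discrete-matching needed for the objective comparison. Getting a clean $\sqrt\delta$ (rather than only $\delta$) is precisely what forces the two-regime truncation-plus-scaling argument instead of a naive uniform scaling, and pinning down that the utility is indeed $\Omega(L\sqrt\delta)$ above $\theta_G$ so that the $(1-\sqrt\delta)$-scaling buys enough slack is the delicate quantitative point.
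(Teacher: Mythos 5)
Your high-level plan --- round the optimal discrete solution into a feasible continuous one and account for the losses --- is a reasonable direction, and your bookkeeping for the step-function construction and the $O(L\delta)$ per-cell violation of \eqref{eq:util} is essentially sound. But the repair step has a genuine gap: you claim that zeroing out $a_G$ on $[0,\theta_G]$, where $\theta_G$ is the largest $v$ with $\int_0^{g_G(v)}a_G(t)\,dt\le c L\sqrt{\delta}$, ``costs only $O(L\sqrt{\delta})$ revenue.'' Small buyer \emph{utility} does not imply small \emph{payment}, nor does it bound the probability mass of the affected types. If $a_G$ jumps to $1$ at some value $w$ and $f_G$ concentrates its mass just above $w$, then all those types have utility at most $cL\sqrt{\delta}$ yet each pays roughly $w$, and they may carry essentially all of the revenue of item $G$; truncating them destroys it. So the truncation step can lose an unbounded fraction of $\P_{\delta}(\I)$, and the two-regime surgery as described does not prove the theorem. (A secondary issue: replacing $p^*$ by the maximal within-item Myerson payments can violate the cross-constraint \eqref{eq:c1}, since raising $p^*_C$ lowers $u_C$; you should work with the optimal $p^*$ directly.)

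The paper sidesteps exactly this difficulty by rounding in menu space rather than allocation space: it forms the menu of discrete optimal options with every price discounted by the factor $(1-\sqrt{\delta})$, lets each continuous type choose its favorite option (which is automatically truthful), and couples each continuous type $(\bar G,v)$ to the discrete type $(\bar G,i)$ with $i$ minimal such that $g_{\bar G}(i\delta)>v$. Adding the two best-response inequalities and using $a\in[0,1]$ together with the Lipschitz bound on $g_{\bar G}$ shows that the coupled continuous type pays at least $(1-\sqrt{\delta})p_{\bar G}(i)-L\sqrt{\delta}$. The key point is that the discount generates slack proportional to each type's \emph{payment} rather than its utility, so no type needs to be truncated, and the revenue loss is exactly the multiplicative $(1-\sqrt{\delta})$ plus the additive $L\sqrt{\delta}$ from the within-cell value mismatch. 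If you want to stay in allocation space, you would need to give every type extra utility proportional to its payment rather than a fixed fraction of its utility, which is precisely what the price discount accomplishes for free.
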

\begin{proof}
Let $\eta = \sqrt{\delta}$ and $(a_{\bar{G}}, p_{\bar{G}})$ be the optimal solution for \autoref{eq:primaldisc}.  Consider the set $\mathcal{T} = \{(G_j, a_j, p_j) \mid \exists i, G_j : a_{G_j}(i) = a_j, (1-\eta) p_{G_j}(i) = p_j\}$. Using the set $\mathcal{T}$, we now define a mechanism $M$ for the continuous revenue optimization problem. Consider a type $\t  = (\bar{G}, v) \in \types$. Define: 
\[(G_{\t}, a_{\t}, p_{\t}) = \argmax_{(G_j, a_j, p_j) \in \mathcal{T} } \v(\t, G_j) a_j - p_j.\]
Let $M$ be the mechanism that allocates item $G_{\t}$ with probability $a_{\t}$ and charges price $ p_{\t}$ to a bidder who reports $\t$\footnote{We slightly deviate from our definition of a mechanism and allow $G_{\t}$ to be different from $\bar{G}$.}. Observe that $M$ is a truthful mechanism. By a standard argument, there exists a feasible solution $\p$ to \autoref{eq:primal} such that $\P(\I) \geq \P(\p) = \E_{\t \sim (q, f)}[p_{\t}]$ is the expected revenue of $M$. We now prove that $\E_{\t \sim (q, f)}[p_{\t}] \geq (1-\eta)\P_{\delta}(\I) -L\eta $.

Couple a bidder in $M$ with (continuous) type $\t = (\bar{G},v)$ with a bidder with (discrete) type $(\bar{G},i)$ where $i$ is the smallest value such that $g_{\bar{G}}(i)  > v$ . The coupling is valid as $\int_{g_{\bar{G}}(i-1)}^{g_{\bar{G}}(i)} f_{\bar{G}}(v)dv = \hat{f}_{\bar{G}}(i)$. We show that $p_{\t} \geq (1-\eta)p_{\bar{G}}(i) - L\eta$. Taking the expectation on both sides gives the result.

Observe that:
\begin{align*}
\v(\t, G_{\t}) a_{\t} - p_{\t} &\geq v a_{\bar{G}}(i) - (1-\eta)p_{\bar{G}}(i)\\
g_{\bar{G}}(i) a_{\bar{G}}(i) - p_{\bar{G}}(i) &\geq \v(\bar{G}, g_{\bar{G}}(i), G_{\t}) a_{\t} - \frac{p_{\t}}{1-\eta}
\end{align*}

 Adding, we get that 
 
 \[\v(\bar{G}, v, G_{\t}) a_{\t}+ g_{\bar{G}}(i) a_{\bar{G}}(i)  + \frac{\eta p_{\t}}{1-\eta} \geq v a_{\bar{G}}(i) + \eta p_{\bar{G}}(i) + \v(\bar{G},g_{\bar{G}}(i), G_{\t}) a_{\t} \]
 Since $a_{\t}, a_{\bar{G}}(i) \in [0,1]$, we have 
 \begin{align*}
 p_{\t} &\geq (1-\eta)p_{\bar{G}}(i) + \frac{1-\eta}{\eta}\left((v - g_{\bar{G}}(i))a_{\bar{G}}(i) + \left(\v(\bar{G}, g_{\bar{G}}(i), G_{\t}) - \v(\bar{G}, v, G_{\t})\right)a_{\t}\right)\\
&\geq (1-\eta)p_{\bar{G}}(i) + \frac{1}{\eta}\left(g_{\bar{G}}(i-1) - g_{\bar{G}}(i)\right)a_{\bar{G}}(i)  \tag{$g_{\bar{G}}(i) > v \geq g_{\bar{G}}(i-1)$}\\
&\geq (1-\eta)p_{\bar{G}}(i) - L\eta \tag{$g_{\bar{G}}$ is $L$-Lipschitz} \
 \end{align*}

\end{proof}

 \paragraph{The Dual}
 Consider the following Lagrangian relaxation of \autoref{eq:primaldisc}, where we Lagrangify the constraints \autoref{eq:c1} using the variables $\Gamma_{G}(i)$,  the constraints \autoref{eq:c2} using the variables $\lambda^+_G(i)$, and the constraints \autoref{eq:c3} using the variables $\lambda^-_G(i)$. We use  the convention that $\Gamma_C(i) = -\Gamma_A(i) - \Gamma_B(i)$
 \begin{subequations}
\label{eq:dualdisc}
\begin{alignat}{4} 
&  \text{minimize} & \D_{\delta}(\I)(\lambda, \Gamma) &= \sum_{\bar{G}} \sum_{i=1}^{H'} q_{\bar{G}}\hat{f}_{\bar{G}}(i)\max\left(0,\Phi_{\bar{G}}^{\lambda,\Gamma}(i)\right) &\\
&  \text{subject to} &&&\notag\\
&   & \lambda^+_{\bar{G}}(i) + \lambda^-_{\bar{G}}(i)  - \Gamma_{\bar{G}}(i) &= q_{\bar{G}}\hat{f}_{\bar{G}}(i) + \lambda^+_{\bar{G}}(i-1) + \lambda^-_{\bar{G}}(i+1) \quad&,\forall i \label{eq:flowg}\\
  & &  \lambda^+_{\bar{G}}(i), \lambda^-_{\bar{G}}(i) &\geq 0 \quad&,\forall i\\
  &&  \gamma_{G}(i) &\geq 0 \quad&,\forall i
 \end{alignat}
 \end{subequations}
where terms like $\lambda^+{\bar{G}}(H+1)$ are defined to be $0$ and 
\[\Phi_{\bar{G}}^{\lambda,\gamma}(i) = \left(g_{\bar{G}}(i) - \frac{1}{q_{\bar{G}}\hat{f}_{\bar{G}}(i)}\left(\lambda^+_{\bar{G}}(i-1)\left(g_{\bar{G}}(i-1) - g_{\bar{G}}(i) \right)+ \lambda^-_{\bar{G}}(i+1)\left(g_{\bar{G}}(i+1) - g_{\bar{G}}(i)\right)\right) \right)\]
 
 \begin{theorem}\label{thm:dualapprox} The optimal value $\D_{\delta}(\I) $ of \autoref{eq:dualdisc} satisfies $\D_{\delta}(\I) \geq \D_{X}(\I)$ where $X = \{i \delta \mid i \in \mathbb{Z} \cap [0,H']\}$.
\end{theorem}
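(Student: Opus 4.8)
\textbf{Proof plan for \autoref{thm:dualapprox}.} Since the continuous $X$-dual \autoref{eq:dual} is a minimization problem, it suffices to show that \emph{every} feasible solution $(\lambda^+_{\bar G},\lambda^-_{\bar G},\Gamma_G)$ of the discrete dual \autoref{eq:dualdisc} can be converted into a feasible solution $(\lambda^c_{\bar G},\gamma^c_G,\Gamma^c_G)$ of \autoref{eq:dual} with $X=\{i\delta : 0\le i\le H'\}$ whose objective value $\D_X$ equals the discrete objective value $\D_\delta$ of the solution we started from; applying this to an optimal discrete solution then gives $\D_X(\I)\le\D_\delta(\I)$. The plan is to build the continuous solution so that its virtual value $\Phi^{\lambda^c,\gamma^c,\Gamma^c,X}_{\bar G}$ is \emph{constant} on each of the value-intervals $I^{\bar G}_i:=[\,g_{\bar G}((i-1)\delta),\,g_{\bar G}(i\delta)\,]$ (which, since $g_{\bar G}$ is increasing and invertible with $g_{\bar G}(0)=0$, $g_{\bar G}(H)=H$, cover $[0,H]$ with disjoint interiors), and equals there the discrete virtual value $\Phi^{\lambda,\gamma}_{\bar G}(i)$.

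\emph{The construction.} Set $\gamma^c_G\equiv 0$ and $\Gamma^c_G(i\delta):=\Gamma_G(i)$ for $G\in\{A,B\}$ (so $\Gamma^c_C(i\delta)=-\Gamma^c_A(i\delta)-\Gamma^c_B(i\delta)$ as in the convention of \autoref{eq:dual}); non-negativity of $\Gamma^c_G$ is inherited directly. With $\gamma^c=0$, the sum $\sum_{x\in X,\,x>g_{\bar G}^{-1}(v)}\Gamma^c_{\bar G}(x)$ is a step function, constant on the interior of each $I^{\bar G}_i$ with value $\Sigma^{\bar G}_i:=\sum_{j\ge i}\Gamma^c_{\bar G}(j\delta)$. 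Let $\lambda^c_{\bar G}$ be the continuous, piecewise-$C^1$ function with $\lambda^c_{\bar G}(H)=0$ whose derivative on the interior of $I^{\bar G}_i$ is $q_{\bar G}f_{\bar G}(v)\varphi_{\bar G}(v)-\Sigma^{\bar G}_i-\Phi^{\lambda,\gamma}_{\bar G}(i)\,q_{\bar G}f_{\bar G}(v)$. A one-line substitution into the definition of $\Phi^{\lambda,\gamma,\Gamma,X}_{\bar G}$ then shows $\Phi^{\lambda^c,\gamma^c,\Gamma^c,X}_{\bar G}(v)=\Phi^{\lambda,\gamma}_{\bar G}(i)$ for $v$ in the interior of $I^{\bar G}_i$.

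\emph{Matching the objective.} Since $\Phi^{\lambda^c,\gamma^c,\Gamma^c,X}_{\bar G}$ is constant on each $I^{\bar G}_i$ and $\int_{I^{\bar G}_i}f_{\bar G}=\hat f_{\bar G}(i)$, we get $\int_{I^{\bar G}_i}q_{\bar G}f_{\bar G}\max(0,\Phi^{\lambda^c,\gamma^c,\Gamma^c,X}_{\bar G})=q_{\bar G}\hat f_{\bar G}(i)\max(0,\Phi^{\lambda,\gamma}_{\bar G}(i))$; summing over $i$ and $\bar G$ yields $\D_X(\lambda^c,\gamma^c,\Gamma^c)=\D_\delta(\lambda^+,\lambda^-,\Gamma)$. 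It is essential here that $\Phi^c$ be \emph{constant} rather than, say, linearly interpolated across $I^{\bar G}_i$: $\max(0,\cdot)$ is convex, so by Jensen any non-constant $\Phi^c$ with the same interval mean would only make $\int q f\max(0,\Phi^c)$ larger, and the constant choice is exactly the one that lands on the discrete value.

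\emph{Feasibility --- the crux, and the main obstacle.} The only non-obvious feasibility requirement for \autoref{eq:dual} is $\lambda^c_{\bar G}(v)\ge 0$ for all $v$ ($\gamma^c_G,\Gamma^c_G\ge 0$ and $\lambda^c_{\bar G}(H)=0$ are built in). For the breakpoint values $\Lambda_i:=\lambda^c_{\bar G}(g_{\bar G}(i\delta))$ I would integrate the defining ODE across $I^{\bar G}_i$, telescope, and substitute the discrete flow-conservation identity \autoref{eq:flowg}; the point is that the discrete $\lambda^\pm$ jointly play the role that in the continuous problem is split between Myerson's $\varphi$-reduction and the ironing variable, and after unwinding \autoref{eq:flowg} the $\Lambda_i$ come out as non-negative combinations of the $\lambda^+_{\bar G}(\cdot),\lambda^-_{\bar G}(\cdot)$. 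To rule out a negative dip inside an interval between non-negative endpoints, I would first pass to the discrete analogue of proper monotonicity --- that $q_{\bar G}\hat f_{\bar G}(i)\Phi^{\lambda,\gamma}_{\bar G}(i)$ is non-decreasing in $i$, which is without loss of generality for the optimal discrete dual by the usual ironing argument and is all we need since only the optimal discrete value is required --- and then invoke the Proper Ironing theorem of \citepalias{FGKK} applied to $v\mapsto q_{\bar G}f_{\bar G}(v)\varphi_{\bar G}(v)-\Sigma^{\bar G}_{i(v)}$: this identifies $\lambda^c_{\bar G}$ (up to the same piecewise-constant virtual value) with a least-concave-upper-bound gap, hence automatically non-negative. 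I expect this step --- the $\lambda^\pm$-to-$(\varphi,\lambda^c)$ bookkeeping and verifying $\lambda^c_{\bar G}\ge 0$ inside the intervals --- to be the only real difficulty; the rest is routine, and with feasibility in hand the objective identity above immediately gives $\D_X(\I)\le\D_\delta(\I)$, which is \autoref{thm:dualapprox}.
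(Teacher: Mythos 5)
Your overall route is the same as the paper's: take the optimal discrete dual, set $\gamma^c\equiv 0$ and $\Gamma^c_G(i\delta)=\hat\Gamma_G(i)$, choose $\lambda^c_{\bar G}$ so that the continuous virtual value is (sign-)constant on each interval $[g_{\bar G}(i\delta),g_{\bar G}((i+1)\delta)]$ with the right interval integral, and conclude $\D_X(\I)\le \D_\delta(\I)$ because $\max(0,\cdot)$ commutes with integration against a constant-sign integrand. Your objective-matching step is fine (arguably cleaner than the paper's, which only makes $q_{\bar G}f_{\bar G}\Phi$ constant on each interval and then needs the flow-conservation identity \eqref{eq:flowg} to evaluate the interval integral; your normalization makes $\Phi^c$ itself constant and equal to $\hat\Phi(i)$ by construction, so \eqref{eq:flowg} gets pushed entirely into the feasibility check).

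The gap is exactly the step you flag as the crux, and your proposed patch for it does not work. You define $\lambda^c_{\bar G}$ only through its derivative and the boundary condition $\lambda^c_{\bar G}(H)=0$, and then propose to get $\lambda^c_{\bar G}\ge 0$ by (a) passing WLOG to a properly monotone optimal discrete dual and (b) invoking the Proper Ironing theorem of \citetalias{FGKK}. But that theorem produces the $\lambda$ that makes $f_{\bar G}\Phi_{\bar G}$ monotone (the concave-envelope gap of $\Gamma_{\bar G}$); it does not produce the specific $\lambda^c$ you need, namely the one forcing $\Phi^c_{\bar G}$ to equal the prescribed piecewise-constant function $\hat\Phi_{\bar G}(i)$. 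If you swap in the ironing-theorem $\lambda$, you lose the identity $\D_X(\lambda^c,\gamma^c,\Gamma^c)=\D_\delta$ that your whole argument rests on; if you keep your $\lambda^c$, the ironing theorem says nothing about its sign. Step (a) is also an unjustified detour: nothing about non-negativity of $\lambda^c$ requires proper monotonicity of the discrete dual. The actual fix is elementary and is what the paper does: integrate your ODE in closed form. One gets, for $v$ in the $i$-th interval,
\[
\lambda^c_{\bar G}(v)\;=\;vq_{\bar G}\bigl(F_{\bar G}(v)-\hat F_{\bar G}(v)\bigr)\;+\;(v-g_{\bar G}(i))\,\lambda^+_{\bar G}(i)\;+\;(g_{\bar G}(i+1)-v)\,\tfrac{g_{\bar G}(i)-g_{\bar G}(i-1)}{g_{\bar G}(i+1)-g_{\bar G}(i)}\,\lambda^+_{\bar G}(i-1),
\]
where $\hat F_{\bar G}(v)=\sum_{j:\,g_{\bar G}(j)<v}\hat f_{\bar G}(j)$; verifying this requires the flow-conservation constraint \eqref{eq:flowg} (this is where the discrete $\lambda^\pm$ bookkeeping you anticipate actually lands). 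From this closed form, non-negativity is immediate: each summand is a product of non-negative quantities, since $F_{\bar G}(v)\ge \hat F_{\bar G}(v)$ pointwise and $\lambda^\pm_{\bar G}\ge 0$. Continuity at the breakpoints and $\lambda^c_{\bar G}(H)=0$ are also read off directly. Without this (or an equivalent explicit computation), your proof of feasibility is incomplete.
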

\begin{proof} We proceed by defining a feasible solution of \autoref{eq:dual} from the optimal solution of \autoref{eq:dualdisc} with the same value. Throughout this proof, we denote the variables of the discrete linear program using a `hat'. Let $( \hat{\lambda}^+_{\bar{G}}, \hat{\lambda}^-_{\bar{G}}, \hat{\Gamma}_{G})$ be the optimal solution to \autoref{eq:dualdisc}. 
Define $\hat{F}_{\bar{G}}(v) = \sum_{i : g_{\bar{G}}(i) < v} \hat{f}_{\bar{G}}(i)$. Define an $X$-dual solution $\d = (\lambda_{\bar{G}}, \gamma_G, \Gamma_G)$ for $\I$ as follows

\begin{align*}
\lambda_{\bar{G}}(v) &= vq_{\bar{G}}(F_{\bar{G}}(v) - \hat{F}_{\bar{G}}(v) ) +(v-g_{\bar{G}}(i))\lambda^+_{\bar{G}}(i) + (g_{\bar{G}}(i+1) - v)\frac{g_{\bar{G}}(i) - g_{\bar{G}}(i-1)}{g_{\bar{G}}(i+1) - g_{\bar{G}}(i)}\lambda^+_{\bar{G}}(i-1)\\
\gamma_{G}(v) & = 0\\
\Gamma_{G}(i \delta) &= \hat{\Gamma}_{G}(i)
\end{align*}
where $i$ is the largest integer such that $g_{\bar{G}}(i) < v$.
Observe that $\d$ is feasible and
\begin{align*}
q_{\bar{G}}f_{\bar{G}}(v)\Phi_{\bar{G}}^{\d, X}(v) &= - q_{\bar{G}}\frac{g_{\bar{G}}(i+1)\left(1 - \hat{F}_{\bar{G}}(v) \right) - g_{\bar{G}}(i)\left(1 + \hat{f}_{\bar{G}}(i) - \hat{F}_{\bar{G}}(v) \right)}{g_{\bar{G}}(i+1) - g_{\bar{G}}(i)} \\
&- \lambda^+_{\bar{G}}(i)+ \frac{g_{\bar{G}}(i) - g_{\bar{G}}(i-1)}{g_{\bar{G}}(i+1) - g_{\bar{G}}(i)}\lambda^+_{\bar{G}}(i-1) - \sum_{j = i+1}^{H/\epsilon} \Gamma_{{\bar{G}}}(j) 
\end{align*}
Since $\Phi_{\bar{G}}^{\d, X}(v)$ is constant in $[g_{\bar{G}}(i), g_{\bar{G}}(i+1)]$, we have:

\[\int_{g_{\bar{G}}(i)}^{g_{\bar{G}}(i+1)} q_{\bar{G}}f_{\bar{G}}(t) \max\left(0, \Phi_{\bar{G}}^{\d, X}(t)\right)dt  =  \max\left(0, \int_{g_{\bar{G}}(i)}^{g_{\bar{G}}(i+1)} q_{\bar{G}}f_{\bar{G}}(t)\Phi_{\bar{G}}^{\d, X}(t)dt \right)\]

But, 
\begin{align*}
&\int_{g_{\bar{G}}(i)}^{g_{\bar{G}}(i+1)} q_{\bar{G}}f_{\bar{G}}(t)\Phi_{\bar{G}}^{\d, X}(t)dt \\
&= g_{\bar{G}}(i) q_{\bar{G}}\hat{f}_{\bar{G}}(i) - \lambda^+_{\bar{G}}(i-1)\left(g_{\bar{G}}(i-1) - g_{\bar{G}}(i)\right) - \int_{g_{\bar{G}}(i)}^{g_{\bar{G}}(i+1)}\left( \lambda^+_{\bar{G}}(i) + q_{\bar{G}}\left(1 - \hat{F}_{\bar{G}}(t) \right) + \sum_{j=i+1}^{H'}\Gamma_{{\bar{G}}}(j)\right) dt\\
&= g_{\bar{G}}(i) q_{\bar{G}}\hat{f}_{\bar{G}}(i) - \lambda^+_{\bar{G}}(i-1)\left(g_{\bar{G}}(i-1) - g_{\bar{G}}(i)\right) - \lambda^-_{\bar{G}}(i+1)\left(g_{\bar{G}}(i+1) - g_{\bar{G}}(i)\right)  \tag{\autoref{eq:flowg}}\\
&= q_{\bar{G}}\hat{f}_{\bar{G}}(i) \hat{\Phi}^{\hat{\lambda},\hat{\Gamma}}_{\bar{G}}(i) 
\end{align*}
Thus,  
$\int_{g_{\bar{G}}(i)}^{g_{\bar{G}}(i+1)} q_{\bar{G}}f_{\bar{G}}(t)\max\left(0, \Phi_{\bar{G}}^{\d, X}(t)\right)dt  = q_{\bar{G}}\hat{f}_{\bar{G}}(i)\max\left(0,  \hat{\Phi}^{\hat{\lambda},\hat{\Gamma}}_{\bar{G}}(i) \right)$. Summing over $i$, $\bar{G}$, we get that $\D_{\delta}(\I) = \D_X(\d)\geq \D_X(\I)$.
\end{proof}

\subsection{Proof of \autoref{thm:MT}} \label{app:MT}

We need the following technical lemma:

\begin{lemma} \label{lemma:curves} Let $H>1$ and $R(x) : [1,H] \to \mathbb{R}$ be a function such that $R(1) = 1$ and for all $x,y \in [1,H]$ such that $x < y$, we have $-\frac{y-x}{2H} < R(y) - R(x) < \frac{y-x}{2H}$. Then, there exists a distribution $F$ supported on $[0,H]$ such that 

\[x ( 1- F(x)) = \begin{cases} x &, 0 \leq x < 1\\R(x) &, 1 \leq x \leq H\\\end{cases}.\] 
\end{lemma}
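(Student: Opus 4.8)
\textbf{Proof plan for Lemma~\ref{lemma:curves}.}

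The plan is to set $F(x) := 1 - R(x)/x$ on the interval $[1,H]$, with $F(x) = 0$ for $x < 1$ and $F(x) = 1$ for $x > H$, exactly as in the proof of Lemma~\ref{lem:fromRtoF}. With this definition the stated identity $x(1-F(x)) = R(x)$ for $x \in [1,H]$ (and $x(1-F(x)) = x$ for $x < 1$) is immediate, so the only real work is to verify that $F$ is a valid CDF, i.e. that it is nondecreasing, takes values in $[0,1]$, and is right-continuous with the correct limits at $0$ and $\infty$.

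First I would record the elementary consequence of the Lipschitz-type hypothesis that for every $x \in [1,H]$ we have $|R(x) - R(1)| = |R(x) - 1| \le \frac{x-1}{2H} \le \frac{H-1}{2H} < \frac12$, hence $R(x) > \frac12 > 0$ and $R(x) < \frac32$. In particular $0 \le 1 - R(x)/x < 1$ on $[1,H]$ (using $R(x) > 0$ for the upper bound and $R(x) < \frac32 \le x$... wait, $x$ may be as small as $1$, so more carefully: $1 - R(x)/x \le 1 - \frac{1}{2H x} < 1$ and $1 - R(x)/x \ge 1 - \frac{3}{2x}$ which could be negative near $x=1$; I will instead argue monotonicity directly and deduce the range from the endpoint values). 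Concretely, $F(1) = 1 - R(1)/1 = 0$ and $F(H) = 1 - R(H)/H$, and since $0 < R(H) < \frac32 \le H$ (as $H > 1$, in fact $H \ge 1$; if $H=1$ the interval is trivial) we get $F(H) \in (0,1)$; combined with monotonicity this pins the range to $[0,1]$.

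The key step is monotonicity of $F$ on $[1,H]$. Following Lemma~\ref{lem:fromRtoF}, for a point of differentiability $F'(x) = \frac{-xR'(x) + R(x)}{x^2}$, so it suffices that $R(x) - xR'(x) \ge 0$; since $|R'(x)| \le \frac{1}{2H}$ (the pointwise bound follows from the difference quotient hypothesis by taking limits) and $x \le H$ we get $xR'(x) \le \frac12 < R(x)$, so $F' \ge 0$ wherever defined. To handle the non-differentiable points cleanly without invoking any regularity I did not assume, I would instead argue monotonicity from the hypothesis directly: for $1 \le x < y \le H$,
\[
F(y) - F(x) = \frac{R(x)}{x} - \frac{R(y)}{y} = \frac{y R(x) - x R(y)}{xy} = \frac{y(R(x) - R(y)) + (y-x)R(y)}{xy},
\]
and using $R(x) - R(y) > -\frac{y-x}{2H}$ and $R(y) > \frac12$ the numerator exceeds $-\frac{y(y-x)}{2H} + \frac{y-x}{2} = (y-x)\big(\frac12 - \frac{y}{2H}\big) \ge 0$ since $y \le H$. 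Hence $F$ is nondecreasing on $[1,H]$; combined with the flat pieces $F \equiv 0$ on $(-\infty,1)$ and $F \equiv 1$ on $(H,\infty)$ and the matching boundary values $F(1)=0$, $F(H)<1$ (with the jump up to $1$ at $H$), $F$ is a nondecreasing function from $0$ to $1$, i.e. a valid CDF of a distribution supported on $[0,H]$. Right-continuity holds on $(1,H)$ by continuity of $R$ there and is clear at the two flat pieces; at $x=1$, $F(1^-) = 0 = F(1)$.

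There is no substantial obstacle here: this lemma is essentially a restatement of Lemma~\ref{lem:fromRtoF} with the normalization $R(1)=1$ and the extension of $R$ by the identity below $1$, and the mild extra care is just in deriving the pointwise derivative bound (or sidestepping it as above) from the two-sided difference-quotient hypothesis and in checking the behavior at the endpoint $x=1$. The only place to be slightly attentive is ensuring $R(x) > 0$ throughout, which is where the hypothesis $R(1)=1$ together with the slow variation $|R(y)-R(x)| < \frac{y-x}{2H} \le \frac12$ is used.
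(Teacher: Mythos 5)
Your proposal is correct and follows essentially the same route as the paper: define $F(x)=1-R(x)/x$ on $[1,H]$ (extended by $0$ below and $1$ above) and verify monotonicity from the two-sided difference-quotient bound; your rearrangement of the numerator $yR(x)-xR(y)$ differs only cosmetically from the paper's, and both hinge on $R$ staying within $\frac{x-1}{2H}$ of $R(1)=1$. Your extra attention to the value at $x=H$ and to right-continuity is a minor refinement the paper glosses over, not a different argument.
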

\begin{proof}
Define:
\[F(x) = \begin{cases} 0 &, 0 \leq x < 1\\1 - \frac{R(x)}{x} &, 1 \leq x < H\\1 &, x= H\\\end{cases}.\]
Observe that $F$ satisfies the requirements of the theorem. To prove that $F$ is a valid distribution, it is sufficient to show $F(x) < F(y)$ for all $x< y$. We first note that $1 - \frac{R(x)}{x} \in [0,1]$. Thus, the only case left is when $x, y \in [1,H]$. In this case,
\begin{align*}
xy\left(F(y) - F(x)\right) &= yR(x)- xR(y) = yR(x) - xR(x) + x(R(x)- R(y))\\
&> yR(x) - xR(x) - x\frac{y-x}{2H}\\
&= (y -x)\left(R(x) - \frac{x}{2H}\right) =(y -x)\left(R(x) - 1 +\frac{2H-x}{2H}\right) \\
&> (y -x)\left(R(x)  - R(1) + \frac{x-1}{2H}\right) > 0.
\end{align*}
\end{proof}

\begin{proof}[Proof of \autoref{thm:MT}] Let $\I = (q, f_{\bar{G}}, g_G)$ be such that $q_{\bar{G}} = \frac{1}{3}$ and $g_G(v) = g'_G(v)$ for all $v$. In order to define $f_{\bar{G}}$, define
\begin{align*}
Q_{\bar{G}}(v) &= \begin{cases}
1 + \frac{v  - 1}{100H} &, 1 \leq v < \underline{\rho}_G\\
1 + \frac{\underline{\rho}_G  - 1}{100H} &, \underline{\rho}_G \leq v < \overline{\rho}_G\\
1 + \frac{\overline{\rho}_G + \underline{\rho}_G - 1 - v}{100H} &, \overline{\rho}_G \leq v \leq H\\
\end{cases}\\
\Lambda_G(v) &= \frac{1}{100H^2}\sum_i  \min\left((v - \underline{y}_{G,i})(v - \overline{y}_{G,i}), 0\right)
\end{align*}

Let $\Lambda_C(v) = 0$ throughout. Also, define:
\begin{align*}
Z_{G,i}(v) = \begin{cases}
\frac{1}{4^i} &, v < \underline{z}_{G,i}\\
\frac{1}{4^i} \frac{\overline{z}_{G,i} - v}{\overline{z}_{G,i} - \underline{z}_{G,i}} &, v \in [\underline{z}_{G,i}, \overline{z}_{G,i}]\\
0 &, v > \overline{z}_{G,i}\\
\end{cases}\quad\quad&\quad\quad Y_{G,i}(v) = \begin{cases}
\frac{1}{4^i} &, v < \underline{x}_{G,i}\\
0 &, v > \overline{x}_{G,i}\\
\end{cases}\\
P_{1,G}(v) = \frac{1}{100H}\sum_iZ_{G,i}(v)\quad\quad\quad\quad\quad\quad\quad&\quad\quad P_{2,G}(v) = \frac{1}{100H}\sum_iY_{G,i}(v)
\end{align*}

Finally, define $P_{1,C}(v) = -P_{1,A}(v) - P_{1,B}(v)$ and $P_{2,C}(v) = -P_{2,A}(v) - P_{2,B}(v)$ and ,
\begin{align*}
R_{\bar{G}}(v) &= Q_{\bar{G}}(v) + \frac{1}{q_{\bar{G}}}\left[ \Lambda_{\bar{G}}(v) - \int_1^v P_{1,\bar{G}}(g_{\bar{G}}^{-1}(t)) dt - \int_1^v P_{2,\bar{G}}(g_{\bar{G}}^{-1}(t)) dt \right]\\
\end{align*}

Observe that the function $R_{\bar{G}}$ satisfies all the requirements of \autoref{lemma:curves}. Thus, there exist distributions $f_{\bar{G}}$ such that 
\[x ( 1- F_{\bar{G}}(x)) = \begin{cases} x &, 0 \leq x < 1\\R_{\bar{G}}(x) &, 1 \leq x \leq H\\\end{cases}.\] 

Set $\lambda_{\bar{G}} = -\Lambda_{\bar{G}}$, $\gamma_{G}(v) = -P'_{1,G}(v)$ and $\Gamma_{G}(\cdot):X_G \to \mathbb{R}$ to be the unique function such that $\sum_{ x > v} \Gamma(x) = P_{2,G}(v)$.

Observe that the dual defined by $(\lambda_{\bar{G}}, \gamma_G, \Gamma_G)$ satisfies all the requirements of the theorem.

\end{proof}

\subsection{Omitted Proofs in \autoref{sec:RRSLB1}} \label{app:lb1}

\begin{proof}[Proof of \autoref{lemma:compslack1}]  We verify each of the constraints in \autoref{eq:cs} and leave verifying feasibility using \autoref{eq:primal} to the reader. The constraints \eqref{eq:pos}, \eqref{eq:neg} are verified easily. The constraint \eqref{eq:iron} is true because $\lambda_G(v) = 0$ throughout. The constraint \eqref{eq:flowa} holds because for all $v \in (\rho, H+2)$, we have
\begin{align*}
\int_0^v a^*_C(t) dt &= v - \rho\\
\int_0^{g_B(v)} a^*_B(t) dt &=  \int_0^{v} a^*_B(t) dt  = v - \rho\\
\int_0^{g_A(v)} a^*_A(t) dt =  \int_2^{g_A(v)} \mathfrak{a}(t-2) dt &= \int_0^{g_A(v) - 2} \mathfrak{a}(t) dt = \mathfrak{A}(g_A(v) - 2) = v- \rho.
\end{align*}
and the three quantities are equal. Finally, the constraint \eqref{eq:flowadisc} is satisfied because $\Gamma$ is zero throughout.

\end{proof}

\subsection{Omitted Proofs in \autoref{sec:RRSLB2}} \label{app:lb2}

Recall that $x_1 = \frac{4H}{5} + 1$, $y_1 = \frac{3H}{4} + 1$, and $x_i = \frac{8}{3}\frac{H}{2^i} + 1$ , $y_i = \frac{8}{5} \frac{H}{2^i}  + 1$ for $i > 1$.  This implies the equations  
\begin{subequations} \label{eq:xy}
\begin{align}
\forall i > 1: x_i  - y_i &= \frac{16H}{15\cdot 2^i}.  \label{eq:xy1}\\
\forall i > 1: y_i - x_{i+1} &= \frac{4H}{15\cdot 2^i}.  \label{eq:xy2}\\
\forall i > 1: \frac{1}{2} \left( x_{i+1} - y_{i+1}\right) + 2 \left( y_{i} - x_{i+1}\right) &= \left( y_{i} - y_{i+1}\right) \label{eq:xy3}\\
 5y_2 + 16y_1 &= 21x_2. \label{eq:xy4}
\end{align}
\end{subequations}

We will need the following lemma:
\begin{lemma}\label{lemma:doublearrow} For all $i > 0$ and $v \in [y_{i+1}, y_i]$, 
\[\int_{g_G(y_{i+1})}^{g_G(v)} a^*_G(x) dx \leq \int_{y_{i+1}}^{v} a^*_C(x) dx.\]
Moreover, equality holds if $v = y_i$.
\end{lemma}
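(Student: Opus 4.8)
The plan is to verify the inequality for $a^*_G$ by direct computation, leveraging the explicit piecewise-constant forms of $a^*_A$, $a^*_B$, $a^*_C$ together with the arithmetic identities \eqref{eq:xy}. The key observation is that for $v$ in the range $[y_{i+1}, y_i]$ with $i$ large enough, all three allocation functions are constant on the relevant subintervals: $a^*_A$ equals $\frac{40}{13 \cdot 4^i}$ on $[g_A(x_{2\lfloor\cdot\rfloor+2}), g_A(x_{2\lfloor\cdot\rfloor}))$, $a^*_B$ equals $\frac{40}{13 \cdot 4^i}$ on analogous $B$-intervals, and $a^*_C$ equals $\frac{20}{13 \cdot 2^i}$ on $[y_{i+1}, y_i)$. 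So the integrals become products of a constant allocation value and an interval length, and the inequality reduces to comparing these products using the fact that $g_A$ and $g_B$ are linear with known slopes (slope $\tfrac12$ for $g_A$ on the middle segment, slope $1$ for $g_B$) on the relevant ranges.

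First I would split into cases according to whether $G = A$ or $G = B$, and according to the parity of $i$ (since the ironed intervals for $A$ and $B$ are interleaved, the index $i$ lines up with an $A$-ironed interval or a $B$-ironed interval depending on parity). For each case, I would write $\int_{g_G(y_{i+1})}^{g_G(v)} a^*_G(x)\,dx$ as a sum over the ironed sub-blocks of $G$ that fall between $g_G(y_{i+1})$ and $g_G(v)$, each contributing (constant allocation) $\times$ ($g_G$-image length of the block). Because $g_G$ is affine on these ranges, the $g_G$-image length is (slope of $g_G$) times the $v$-length. Similarly $\int_{y_{i+1}}^v a^*_C(x)\,dx$ is (constant value of $a^*_C$) times ($v - y_{i+1}$), with a possible correction near $v = y_i$ where $a^*_C$ jumps. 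The identity \eqref{eq:xy3}, which reads $\tfrac12(x_{i+1} - y_{i+1}) + 2(y_i - x_{i+1}) = y_i - y_{i+1}$, is exactly the balancing identity needed so that, using the allocation ratios $\pi_A = 2\pi_B$ and $\pi_A = 2 a^*_C(\cdot)\cdot(\text{appropriate factor})$ baked into the definitions (the $\frac{40}{13\cdot4^i}$ vs. $\frac{20}{13\cdot2^i}$ relationship), the two sides coincide at $v = y_i$; for $v$ strictly inside, monotonicity of $a^*_G$ (which increases as $v$ grows) against the constant slope makes the left side no larger.

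The ``moreover'' clause — equality at $v = y_i$ — is really the heart of the matter and I would prove it first, since it is essentially the complementary-slackness condition \eqref{eq:flowadisc}/\eqref{eq:steputil} that the primal $\p$ was engineered to satisfy; indeed it is just \autoref{eq:steputil} restricted to the specific primal $a^*$, so it follows by plugging the explicit values of $x_j, y_j$ and the slopes of $g_A, g_B$ into \eqref{eq:xy3} and \eqref{eq:xy4} (the latter handling the boundary block near $x_2, y_1, y_2$). Once equality at the endpoint $v = y_i$ is established, the inequality for intermediate $v$ follows because as $v$ decreases from $y_i$ to $y_{i+1}$, the left integral decreases at rate $a^*_G(g_G(v)) \cdot g_G'(v)$ while the right decreases at rate $a^*_C(v)$; since $a^*_G$ is non-decreasing in its argument and the ratio $a^*_G \cdot g_G' / a^*_C$ is exactly $1$ on the constant blocks (by the chosen constants) and $\le 1$ at jump points, the left side loses at least as much, so it stays below.

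The main obstacle I anticipate is bookkeeping: getting the interleaving of the $A$- and $B$-ironed intervals exactly right relative to the index $i$, handling the two ``irregular'' top blocks (near $x_1, x_2, y_1, y_2$) separately from the uniform tail $i > i^*$, and making sure the boundary term in $\int_{y_{i+1}}^{v} a^*_C$ when $v$ crosses a jump of $a^*_C$ is accounted for. None of this is conceptually hard, but it requires careful case analysis; I would organize it by first disposing of $i = 1$ using \eqref{eq:xy4} explicitly, then handling all $i \ge 2$ uniformly via \eqref{eq:xy1}–\eqref{eq:xy3}.
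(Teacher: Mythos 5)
Your proposal takes essentially the same route as the paper: both verify the lemma by explicitly integrating the piecewise-constant allocations over the sub-blocks of $[y_{i+1},y_i]$, splitting on the parity of $i$ (so that one of $A,B$ trivially coincides with $C$ and only the other needs checking), treating $i=1$ separately via \eqref{eq:xy4}, and using \eqref{eq:xy3} to get equality at $v=y_i$. One sentence of yours is inaccurate, though not fatally: for the non-trivial item the ratio $a^*_G(g_G(v))\,g_G'(v)/a^*_C(v)$ is not $1$ on the constant blocks but $1/2$ below $x_{i+1}$ and $2$ above it; the inequality for intermediate $v$ instead follows from the argument you state earlier, namely that the left integrand is non-decreasing in $v$ while $a^*_C$ is constant on $[y_{i+1},y_i)$ and the two totals over the full interval agree.
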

\begin{proof} We calculate the three quantities:

\[\int_{g_A(y_{i+1})}^{g_A(v)} a^*_A(t) dt = \begin{cases}
\frac{10}{13}\left( g_A(v) - g_A(y_{2})\right)   &, i = 1, v \leq x_2\\
\frac{10}{13}\left( g_A(x_{2}) - g_A(y_{2})\right) + \left( g_A(v) - g_A(x_{2})\right)  &, i = 1, v > x_2\\
\frac{40}{13\cdot 2^{i}}\left( g_A(v) - g_A(y_{i+1})\right) &, i \text{ is even}\\
\frac{40}{13\cdot 2^{i+1}}\left( g_A(v) - g_A(y_{i+1})\right)  &, i \text{ is odd}, v \leq x_{i+1}\\
\frac{40}{13\cdot 2^{i+1}}\left( g_A(x_{i+1}) - g_A(y_{i+1})\right) + \frac{40}{13\cdot 2^{i-1}}\left( g_A(v) - g_A(x_{i+1})\right)  &, i \text{ is odd}, v > x_{i+1}\\
\end{cases}\]

\[\int_{g_B(y_{i+1})}^{g_B(v)} a^*_B(t) dt = \begin{cases}
\frac{40}{13\cdot 2^{i+2}}\left( v - y_{i+1}\right)  &, i \text{ is even}, v \leq x_{i+1}\\
\frac{40}{13\cdot 2^{i+2}}\left( x_{i+1} - y_{i+1}\right) + \frac{40}{13\cdot 2^{i}}\left( v - x_{i+1}\right)  &, i \text{ is even}, v > x_{i+1}\\
\frac{40}{13\cdot 2^{i+1}}\left(v- y_{i+1}\right) &, i \text{ is odd}\\
\end{cases}\]

\[\int_{y_{i+1}}^{v} a^*_C(t) dt = \frac{20}{13\cdot 2^{i}}\left( v- y_{i+1}\right)\]
We now prove the result for $i=1$. In this case, as the expressions for $B$ and $C$ are the same, it is sufficient to show that
\begin{align*}
\int_{g_A(y_{i+1})}^{g_A(v)} a^*_A(t) dt  &= \begin{cases}
\frac{5}{13}\left(v - y_{2}\right) &, v \leq  x_2\\
\frac{5}{13}\left(x_2 - y_{2}\right) +  2v - 2x_2  &, v > x_2\\
\end{cases}\\
&= \begin{cases}
\frac{5}{13}\left(v - y_{2}\right) &, v \leq  x_2\\
2v - \frac{10y_2}{13} - \frac{16y_1}{13}  &, v > x_2\\
\end{cases}\tag{\autoref{eq:xy4}}\\
&\leq \frac{10}{13}\left(v - y_{2}\right)  = \int_{y_{i+1}}^{v} a^*_C(t) dt \tag{As $v < y_1$}
\end{align*}

 We now prove for even $i  > 1$. The case for odd $i$ is similar. Note that $g_A(x_i) = \frac{x_i+1}{2}$ and $g_A(y_i) = \frac{y_i+1}{2}$ for all $i > 1$.  In this case, as the expressions for $A$ and $C$ are the same, it is sufficient to show that 
 \begin{align*}
\int_{g_B(y_{i+1})}^{g_B(v)} a^*_B(t) dt  &= \begin{cases}
\frac{40}{13\cdot 2^{i+2}}\left( v - y_{i+1}\right)  &, v \leq x_{i+1}\\
\frac{40}{13\cdot 2^{i+2}}\left( x_{i+1} - y_{i+1}\right) + \frac{40}{13\cdot 2^{i}}\left( v - x_{i+1}\right)  &, v > x_{i+1}\\
\end{cases}\\
&= \begin{cases}
\frac{10}{13\cdot 2^{i}}\left( v - y_{i+1}\right)  &, v \leq x_{i+1}\\
\frac{20}{13\cdot 2^{i}}\left( y_{i} - y_{i+1}\right) + \frac{40}{13\cdot 2^{i}}\left( v - y_{i}\right)  &, v > x_{i+1}\\
\end{cases}\tag{\autoref{eq:xy3}}\\
&\leq \frac{20}{13}\left(v - y_{i}\right)  = \int_{y_{i+1}}^{v} a^*_C(t) dt \tag{As $v < y_1$}
\end{align*}

The moreover part can be observed by putting $v = y_i$ in our equations.
\end{proof}

\begin{proof}[Proof of \autoref{lemma:compslack2}]~\vspace{-0.5cm}
\paragraph{Feasibility} We verify the feasibility constraints in \autoref{eq:primal}. It is straightforward to verify that $a^*_{\bar{G}}$ are monotone (constraint \eqref{eq:allocmonotone}) and take values in $[0,1]$ (constraint \eqref{eq:allocrange}). Finally, the constraint \eqref{eq:util} holds because of  \autoref{lemma:doublearrow}.\vspace{-0.5cm}
\paragraph{Optimality}
We verify all the constraints in \autoref{eq:cs}. The constraints \eqref{eq:pos}, \eqref{eq:neg} are straightforward to verify from the definition of $a^*_G$. The constraint \eqref{eq:iron} follows from the fact that $\lambda_C(v)$ is $0$ throughout and 
\begin{align*}
\lambda_A(v) > 0 &\implies \exists i > 0 : v \in (g_A(x_{2i+1}),g_A(x_{2i-1})) \implies a'^*_A(v) = 0\\
\lambda_B(v) > 0 &\implies \exists i > 0 : v \in (g_B(x_{2i+2}),g_B(x_{2i})) \implies a'^*_B(v) = 0
\end{align*}
The constraint \eqref{eq:flowa} holds because $\gamma_{G}(v)$ is $0$ throughout. For the constraint \eqref{eq:flowadisc}, we need to prove that for all $y_i$

\[\int_0^{g_A(y_i)}a^*_A(x)dx = \int_0^{g_B(y_i)}a^*_B(x)dx = \int_0^{y_i}a^*_C(x)dx\]

which holds because of (the moreover part of) \autoref{lemma:doublearrow}
 
 \end{proof}

\end{document}